\documentclass[aps,prl,groupedaddress,superscriptaddress,onecolumn,notitlepage,nofootinbib]{revtex4-2}

\usepackage[latin1]{inputenc}
\usepackage{amsthm}
\usepackage{amssymb}
\usepackage{amsmath}
\usepackage{bbold}
\usepackage{bbm}
\usepackage[pdftex]{hyperref}
\usepackage{braket}
\usepackage{dsfont}
\usepackage{mathdots}
\usepackage{mathtools}
\usepackage{enumerate}
\usepackage[shortlabels]{enumitem}
\usepackage{csquotes}
\usepackage{stmaryrd}
\usepackage[cal=boondox]{mathalfa}
\usepackage{graphicx}
\usepackage{stackengine}
\usepackage{scalerel}
\usepackage{array}
\usepackage{makecell}
\newcolumntype{x}[1]{>{\centering\arraybackslash}p{#1}}
\usepackage{tikz}
\usepackage{pgfplots}
\usetikzlibrary{shapes.geometric, shapes.misc, positioning, arrows, arrows.meta, decorations.pathreplacing, angles, quotes}
\usepackage{booktabs}
\usepackage{xfrac}
\usepackage{siunitx}
\usepackage{centernot}
\usepackage{comment}
\usepackage{chngcntr}
\usepackage{caption}
\usepackage{subcaption}

\newtheorem{thm}{Theorem}
\newtheorem*{thm*}{Theorem}
\newtheorem{prop}[thm]{Proposition}
\newtheorem*{prop*}{Proposition}
\newtheorem{lemma}[thm]{Lemma}
\newtheorem*{lemma*}{Lemma}
\newtheorem{cor}[thm]{Corollary}
\newtheorem*{cor*}{Corollary}

\newtheorem*{cj*}{Conjecture}
\newtheorem{Def}[thm]{Definition}
\newtheorem*{Def*}{Definition}

% The following is necessary to make "\begin{thm}[{\cite{X}}]" print "Theorem [X]." instead of "Theorem ([X)]." It works also for prop, lemma, cor, and so on.
\makeatletter
\def\thmhead@plain#1#2#3{%
  \thmname{#1}\thmnumber{\@ifnotempty{#1}{ }\@upn{#2}}%
  \thmnote{ {\the\thm@notefont#3}}}
\let\thmhead\thmhead@plain
\makeatother

\theoremstyle{definition}
\newtheorem{rem}[thm]{Remark}
\newtheorem*{note}{Note}
\newtheorem{ex}[thm]{Example}

\newenvironment{manuallemma}[1]{%
  \manuallemmainner \it
}{\endmanuallemmainner}

\newcommand{\bb}{\begin{equation}\begin{aligned}\hspace{0pt}}
\newcommand{\bbb}{\begin{equation*}\begin{aligned*}}
\newcommand{\ee}{\end{aligned}\end{equation}}
\newcommand{\eee}{\end{aligned*}\end{equation*}}
\newcommand*{\coloneqq}{\mathrel{\vcenter{\baselineskip0.5ex \lineskiplimit0pt \hbox{\scriptsize.}\hbox{\scriptsize.}}} =}
\newcommand*{\eqqcolon}{= \mathrel{\vcenter{\baselineskip0.5ex \lineskiplimit0pt \hbox{\scriptsize.}\hbox{\scriptsize.}}}}

\newcommand{\eqt}[1]{\stackrel{\mathclap{\scriptsize \mbox{#1}}}{=}}
\newcommand{\leqt}[1]{\stackrel{\mathclap{\scriptsize \mbox{#1}}}{\leq}}
\newcommand{\geqt}[1]{\stackrel{\mathclap{\scriptsize \mbox{#1}}}{\geq}}
\newcommand{\ketbra}[1]{\ket{#1}\!\!\bra{#1}}
\newcommand{\ketbraa}[2]{\ket{#1}\!\!\bra{#2}}
\newcommand{\sumno}{\sum\nolimits}

\newcommand{\G}{\mathrm{\scriptscriptstyle G}}

\newcommand{\tcb}[1]{{\color{blue!85!black} #1}}

\newcommand{\id}{\mathds{1}}
\newcommand{\R}{\mathds{R}}
\newcommand{\N}{\mathds{N}}
\newcommand{\C}{\mathds{C}}

\DeclareMathOperator{\Tr}{Tr}
\DeclareMathOperator{\rk}{rk}
\DeclareMathOperator{\cl}{cl}
\DeclareMathOperator{\co}{conv}
\DeclareMathOperator{\cone}{cone}

\DeclareMathAlphabet{\pazocal}{OMS}{zplm}{m}{n}

\DeclareMathOperator{\supp}{supp}
\DeclareMathOperator{\spec}{spec}

\DeclareMathOperator{\dom}{dom}

\newcommand{\HH}{\pazocal{H}}
\newcommand{\T}{\mathcal{T}}
\newcommand{\D}{\mathcal{D}}
\newcommand{\K}{\mathcal{K}}
\newcommand{\B}{\mathcal{B}}
\newcommand{\FF}{\mathcal{F}}

\newcommand{\GG}{\pazocal{G}}
\newcommand{\MM}{\pazocal{M}}
\newcommand{\PP}{\pazocal{P}}
\newcommand{\CC}{\pazocal{C}}

\newcommand{\SEP}{\pazocal{S}}
\newcommand{\PPT}{\pazocal{P\!P\!T}}

\newcommand{\lsmatrix}{\left(\begin{smallmatrix}}
\newcommand{\rsmatrix}{\end{smallmatrix}\right)}

\stackMath
\newcommand\xxrightarrow[2][]{\mathrel{%
  \setbox2=\hbox{\stackon{\scriptstyle#1}{\scriptstyle#2}}%
  \stackunder[5pt]{%
    \xrightarrow{\makebox[\dimexpr\wd2\relax]{$\scriptstyle#2$}}%
  }{%
   \scriptstyle#1\,%
  }%
}}

\newcommand{\tends}[2]{\xxrightarrow[\raisebox{1.5pt}{\!$\scriptstyle #2$\!}]{\raisebox{-0.5pt}{$\scriptstyle \mathrm{#1}$}}}
\newcommand{\tendsn}[1]{\xxrightarrow[\! n\rightarrow \infty\!]{\mathrm{#1}}}

\newcommand{\ctends}[3]{\xxrightarrow[\raisebox{#3}{$\scriptstyle #2$}]{\raisebox{-0.7pt}{$\scriptstyle #1$}}}

\stackMath

\makeatletter
\newcommand*\rel@kern[1]{\kern#1\dimexpr\macc@kerna}
\newcommand*\widebar[1]{%
  \begingroup
  \def\mathaccent##1##2{%
    \rel@kern{0.8}%
    \overline{\rel@kern{-0.8}\macc@nucleus\rel@kern{0.2}}%
    \rel@kern{-0.2}%
  }%
  \macc@depth\@ne
  \let\math@bgroup\@empty \let\math@egroup\macc@set@skewchar
  \mathsurround\z@ \frozen@everymath{\mathgroup\macc@group\relax}%
  \macc@set@skewchar\relax
  \let\mathaccentV\macc@nested@a
  \macc@nested@a\relax111{#1}%
  \endgroup
}

\counterwithin*{equation}{part}
\counterwithin*{thm}{part}
\counterwithin*{figure}{part}

\tikzset{meter/.append style={draw, inner sep=10, rectangle, font=\vphantom{A}, minimum width=30, line width=.8, path picture={\draw[black] ([shift={(.1,.3)}]path picture bounding box.south west) to[bend left=50] ([shift={(-.1,.3)}]path picture bounding box.south east);\draw[black,-latex] ([shift={(0,.1)}]path picture bounding box.south) -- ([shift={(.3,-.1)}]path picture bounding box.north);}}}
\tikzset{roundnode/.append style={circle, draw=black, fill=gray!20, thick, minimum size=10mm}}
\tikzset{squarenode/.style={rectangle, draw=black, fill=none, thick, minimum size=10mm}}

\definecolor{Blues5seq1}{RGB}{239,243,255}
\definecolor{Blues5seq2}{RGB}{189,215,231}
\definecolor{Blues5seq3}{RGB}{107,174,214}
\definecolor{Blues5seq4}{RGB}{49,130,189}
\definecolor{Blues5seq5}{RGB}{8,81,156}

\definecolor{Greens5seq1}{RGB}{237,248,233}
\definecolor{Greens5seq2}{RGB}{186,228,179}
\definecolor{Greens5seq3}{RGB}{116,196,118}
\definecolor{Greens5seq4}{RGB}{49,163,84}
\definecolor{Greens5seq5}{RGB}{0,109,44}

\definecolor{Reds5seq1}{RGB}{254,229,217}
\definecolor{Reds5seq2}{RGB}{252,174,145}
\definecolor{Reds5seq3}{RGB}{251,106,74}
\definecolor{Reds5seq4}{RGB}{222,45,38}
\definecolor{Reds5seq5}{RGB}{165,15,21}

\makeatletter
\newcommand{\raisemath}[1]{\mathpalette{\raisem@th{#1}}}
\newcommand{\raisem@th}[3]{\raisebox{#1}{$#2#3$}}
\makeatother

\newcommand{\deff}[1]{\textbf{#1}}
\renewcommand{\id}{I}
\newcommand{\W}{\,\pazocal{W}}
\newcommand{\shs}{\hspace{1pt}}
\newcommand{\nonG}{\delta_R}
\newcommand{\As}{A_{[s]}}
\newcommand{\Hs}{H_{[s]}}

\setcounter{secnumdepth}{2} 

\let\tcb\relax

\begin{document}

\title{Attainability and lower semi-continuity of the relative entropy of entanglement, \\ and variations on the theme
%The relative entropy distance in infinite dimensions: attainability and lower semi-continuity
%The infinite-dimensional relative entropy of entanglement is always achieved
}

\author{Ludovico Lami}
\email{ludovico.lami@gmail.com}
\affiliation{Institut f\"{u}r Theoretische Physik und IQST, Universit\"{a}t Ulm, Albert-Einstein-Allee 11, D-89069 Ulm, Germany}
\affiliation{QuSoft, Science Park 123, 1098 XG Amsterdam, The Netherlands}
\affiliation{Korteweg--de Vries Institute for Mathematics, University of Amsterdam, Science Park 105-107, 1098 XG Amsterdam, The Netherlands}
\affiliation{Institute for Theoretical Physics, University of Amsterdam, Science Park 904, 1098 XH Amsterdam, The Netherlands}

\author{Maksim E. Shirokov}
\email{msh@mi-ras.ru}
\affiliation{Steklov Mathematical Institute, Moscow, Russia}

%\date{April 2021}

\begin{abstract}
The relative entropy of entanglement $E_R$ is defined as the distance of a multi-partite quantum state from the set of separable states as measured by the quantum relative entropy. We show that this optimisation is \emph{always} achieved, i.e.\ any state admits a closest separable state, even in infinite dimensions; also, $E_R$ is everywhere lower semi-continuous. We use this to derive a dual variational expression for $E_R$ in terms of an external supremum instead of infimum. These results, which seem to have gone unnoticed so far, hold not only for the relative entropy of entanglement and its multi-partite generalisations, but also for many other similar resource quantifiers, such as the relative entropy of non-Gaussianity, of non-classicality, of Wigner negativity --- more generally, all relative entropy distances from the sets of states with non-negative $\lambda$-quasi-probability distribution. The crucial hypothesis underpinning all these applications is the weak*-closedness of the cone generated by free states, and for this reason the techniques we develop involve a bouquet of classical results from functional analysis. We complement our analysis by giving explicit and asymptotically tight continuity estimates for $E_R$ and closely related quantities in the presence of an energy constraint.
\end{abstract}

\maketitle

\tableofcontents

\section{Introduction}

In its early days, almost a century ago~\cite{Planck1901, Heisenberg1925, Born-Jordan, Schroedinger1926, VONNEUMANN}, quantum mechanics was mostly regarded as a bizarre physical theory whose exotic mathematics was needed to explain the behaviour of atomic spectra. As decades passed and physicists grew accustomed to the strangeness of the quantum world, they became interested not only in the question of \emph{what we can do for it}, that is, of how to explain or interpret it, but also in the more operationally-oriented question of what \emph{it can do for us}, namely, of what feats can be achieved by exploiting genuinely quantum effects. Far from descending from a purely practically-oriented mindset, this attitude reflects a general belief, originally stemming from information theory~\cite{Shannon}, that exploring the ultimate operational potential of a resource tells us something about its nature.

The latest incarnation of this philosophy is the formalism of quantum resource theories~\cite{Bennett-RT, Coecke2016, RT-review}. In this very general framework, one identifies two sets of objects that can be accessed at will in an inexpensive fashion: a set of quantum states (`free states'), and a set of quantum operations (`free operations'). The main theoretical concern in this context is how to characterise in an operationally meaningful way the resource content of an arbitrary state $\rho$ that is not free. In accordance with the above philosophy, one typically considers two complementary tasks: evaluate the number of `golden units' of pure resource that can be drawn from $\rho$ by manipulating it via free operations, or --- conversely --- the number of golden units that are needed to prepare $\rho$ via free operations in the first place.

These two complementary ideas lead, in the asymptotic limit of many copies, to the notions of distillable resource and of resource cost~\cite{RT-review}. Albeit somewhat opposite to each other, these two approaches allow, under appropriate assumptions, to single out a particular resource quantifier as the unique function determining the rates at which resources can be inter-converted by means of free operations~\cite{Brandao-Gour}. This is the regularised version of the relative entropy of resource, defined for an arbitrary state $\rho$ by $D_\FF(\rho) \coloneqq \inf_{\sigma\in \FF} D(\rho\|\sigma)$, where $\FF$ is the set of free states, and $D(\cdot\|\cdot)$ is Umegaki's relative entropy~\cite{Umegaki1962, Hiai1991}.

Historically, the first embodiment of the relative entropy of resource arose in the context of entanglement quantification for states of a composite finite-dimensional quantum system~\cite{Vedral1997, Vedral1998, Horodecki2000, Donald1999, Donald2002}. In this case, the role of free states is played by the set of separable (a.k.a.\ un-entangled) states, defined as classical mixtures of product states (describing un-correlated subsystems)~\cite{Werner, Holevo2005}. The resulting entanglement monotone, called the relative entropy of entanglement, turned out to be a very successful entanglement measure. Its regularisation can be endowed with multiple operational interpretations, either in the context of entanglement manipulation~\cite{BrandaoPlenio1, BrandaoPlenio2, irreversibility} or in that of hypothesis testing~\cite{Brandao2010}. More generally, it plays a fundamental role in the analysis of composite quantum systems and quantum channels~\cite{KHATRI}.

For all these reasons, it is desirable to have a general method to calculate or estimate the relative entropy of resource. Since it is naturally defined as a minimisation, upper bounds are easily computed by making ansatzes as to what the closest free state may be. A systematic technique to construct lower bounds, instead, has been put forth by Berta, Fawzi, and Tomamichel~\cite[\S~5]{Berta2017}, who found a dual variational formula involving an external maximisation instead of a minimisation. Their proof, however, rests on an application of Sion's theorem, and we argue that a simple generalisation to infinite-dimensional systems does not work. This is a serious problem, as many resource theories of interest are infinite-dimensional, and one could even make the case that almost all fundamental quantum systems, i.e.\ the quantum fields that form the basis of our most successful theoretical models, are intrinsically infinite-dimensional.
%On the contrary, in spite of some partial results~\cite{Plenio2000} there is no systematic technique to construct lower bounds to $D_\FF$.

On a different note, it is useful to observe that although the relative entropy is not a metric, we can intuitively interpret the quantity $D_\FF(\rho)$ as a distance of a state $\rho$ from the set $\FF$. The natural question that arises in connection with this interpretation concerns the existence of a nearest free state $\sigma$ to a given state $\rho$ --- in other words, we are asking whether the infimum in the definition of $D_\FF(\rho)$ is attainable. This problem has a simple positive solution in the finite-dimensional setting, provided that the set $\FF$ is closed, due to the trace norm compactness of the state space. In infinite dimensions, the lower semi-continuity of the relative entropy still allows to prove the existence of the nearest free state $\sigma$ if the set $\FF$ is trace norm compact, but this observation covers basically none of the physically interesting cases --- typically, $\FF$ is closed but not compact, and attainability of the infimum in the definition of $D_\FF$ is a non-trivial open question. The experience with many other quantifiers defined in a similar way, through infima (and suprema), seems to suggest that this question may have, in general, a negative answer.

In this paper we will show that this is however not the case. Namely, we leverage the special  properties of the quantum relative entropy to prove that for many practically important non-compact sets $\FF$ of free states the infimum in the definition of $D_\FF$ is in fact always achieved --- uniquely for faithful states, if $\FF$ is convex --- and that the resulting function $D_\FF$ is lower semi-continuous. Our first two main results establish a general condition on $\FF$ for these conclusions to hold (Theorem~\ref{achievable_relent_thm}) and an effective criterion to check whether such condition is met in almost all cases of practical interest (Theorem~\ref{w*_closed_condition_thm}). Exploiting Theorem~\ref{achievable_relent_thm}, for the physically interesting finite-entropy states we derive a dual variational formula for $D_\FF(\rho)$ involving an external supremum instead of an infimum (Theorem~\ref{variational_thm}). This formula, which is our final main result, generalises to infinite-dimensional resource theories \tcb{the one} discussed in~\cite{Berta2017} for the finite-dimensional case, and can be used to generate lower bounds to $D_\FF$ in a systematic way, thus addressing the pressing problem discussed before.

We also study the continuity of $D_\FF$ (Proposition~\ref{RE-LS-c+_1}) and how the closest free state to $\rho$ depends on $\rho$ itself (Proposition~\ref{RE-LS-c+_2}). The key assumption on $\FF$ underpinning all these results is the closedness of the cone composed of all non-negative multiples of states in $\FF$ with respect to the weak*-topology --- the topology induced on the Banach space of trace class operators acting on a certain Hilbert space by its pre-dual, the space of compact operators. 

On the technical side, the proofs of our general results, and especially of Theorem~\ref{w*_closed_condition_thm}, constitute a systematic and somewhat gratifying application of many of the cardinal results of functional analysis --- inter alia, the uniform boundedness principle, the Banach--Alaoglu theorem, and the Krein--\v{S}mulian theorem --- to the framework of quantum resource theories. Although some functional analytic tools have been applied to the study of entanglement in quantum field theories before~\cite{HOLLANDS}, to the extent of our knowledge it is the first time that this is done in such a systematic way, further extending these ideas to general quantum resources.
%To the extent of our knowledge, it is the first time that functional analytic tools are applied to the study of infinite-dimensional quantum resources in a comprehensive way.
For the proof of Theorem~\ref{variational_thm}, which is technically rather involved, we employ an original strategy that gives as by-products generalisations of Petz's variational formulae~\cite{Petz1988} to the case of unfaithful states and of Lieb's three-matrix inequality~\cite[Theorem~7]{lieb73c} to infinite dimensions.

The main advantage of our approach is its universality, which we demonstrate by applying it to a wide range of examples. First and foremost, we establish that the relative entropy of entanglement~\cite{Vedral1997, Vedral1998} is (a)~always achieved, (b)~lower semi-continuous, and (c)~that it can be expressed as a dual maximisation for finite-entropy states (Corollary~\ref{relent_entanglement_cor}). The same conclusions~(a),~(b), and~(c) hold true for the multi-partite generalisations of the relative entropy of entanglement, where the role of $\FF$ is played by states that are separable according to a prescribed set of partitions (Corollary~\ref{relent_alpha_entanglement_cor}), and for the relative entropy distance to the set of states with a positive partial transpose (Corollary~\ref{relent_NPT_entanglement_cor}). Moving on to the continuous variable setting, we are able to prove~(a),~(b), and~(c) for all quantifiers of the form $D_\FF$, where $\FF$ is composed by those multi-mode states with non-negative $\lambda$-quasi-probability distribution, $\lambda\in [-1,1]$ being a fixed parameter (Corollary~\ref{relent_lambda_negativity_cor}). Of special physical interest are the case $\lambda=+1$, in which case $\FF$ comprises all so-called `classical' states, i.e.\ all convex mixtures of coherent states, as well as the case $\lambda=0$, in which case $\FF$ includes the states with non-negative Wigner function. The last application of our results is to the resource theory of non-Gaussianity. In this case, $\FF$ is taken to be the non-convex set of Gaussian states, and $D_\FF$ is known to have a closed-form expression --- namely, it can be computed for a state $\rho$ as the difference $S(\rho_\G)-S(\rho)$, where $S$ is the von Neumann entropy and $\rho_\G$ is the Gaussian state with the same first and second moments as those of $\rho$. Although this function is known to be generally discontinuous in $\rho$~\cite{Kuroiwa2021}, our results imply that it is at least lower semi-continuous (Corollary~\ref{relent_non_Gaussianity_cor}).

For the special cases of the relative entropy of entanglement, its multi-partite generalisations, the Rains bound, and regularisations thereof, we complement our findings by establishing quantitative continuity estimates that are valid under appropriate energy constraints (Propositions~\ref{CB} and~\ref{nREE-CB}). Our bounds, which we prove by further refining the techniques in Ref.~\cite{Shirokov-AFW-2,Shirokov-AFW-3}, turn out to be asymptotically tight in many physically interesting cases, and imply that the aforementioned quantifiers, whose importance for the study of entanglement theory can hardly be overestimated, are uniformly continuous on energy-bounded sets of states.

The rest of the paper is organised as follows. Section~\ref{notation_sec} contains basic notation and definitions, as well as a brief introduction to some functional analytic techniques that play an essential role in this article. In Section~\ref{main_results_sec} we state and prove our main results in a completely general form (Theorems~\ref{achievable_relent_thm},~\ref{w*_closed_condition_thm}, and~\ref{variational_thm}). These tools are then applied in Section~\ref{applications_sec} to several concrete examples of the relative entropy of resource, starting from the bipartite relative entropy of entanglement and ending with the relative entropy of non-Gaussianity. Section~\ref{tight_uniform_sec} is devoted to a quantitative continuity analysis of the relative entropy of resource under energy constraints.

\section{Notation} \label{notation_sec}

\subsection{Topologies for quantum systems} \label{topologies_subsec}

In what follows, $\HH$ will denote an arbitrary (not necessarily finite-dimensional) separable Hilbert space.\footnote{A Hilbert space, or, more generally, a Banach space, is called separable if it admits a countable dense subset.} The Banach space of trace class operators on $\HH$, endowed with the trace norm $\|T\|_1\coloneqq \Tr \sqrt{T^\dag T}$, will be denoted with $\T(\HH)$. It can be thought of as the dual of the Banach space $\K(\HH)$ of compact operators on $\HH$ equipped with the operator norm, in formula $\T(\HH) = \K(\HH)^*$. In turn, the dual of $\T(\HH)$ --- and hence the bi-dual of $\K(\HH)$ --- is the Banach space of bounded operators on $\HH$, denoted with $\B(\HH)=\T(\HH)^* = \K(\HH)^{**}$~\cite[Chapter~VI]{REED}. Importantly, from the separability assumption for $\HH$ it follows that both $\K(\HH)$ and $\T(\HH)$ --- but \emph{not} $\B(\HH)$ --- are separable as Banach spaces.\footnote{The separability of $\T(\HH)$ is easy to establish by hand: indeed, the countable set of finite linear combinations of rank-one operators of the form $\ketbra{v}$, where $\ket{v}\in \HH$ has a finite expansion with rational coefficients in a fixed orthonormal basis of $\HH$ is easily seen to be trace norm dense in $\T(\HH)$~\cite{separability-trace-class}. It follows either by an analogous construction or by general arguments~\cite[Theorem~III.7]{REED} (see also~\cite{separability-compact}) that $\K(\HH)$ is also separable. On the contrary, it turns out that $\B(\HH)$ is never separable when $\HH$ is infinite dimensional~\cite{inseparability-bounded}.}

The cone of positive semi-definite trace class operators is defined by $\T_+(\HH) \coloneqq \left\{ X\in \T(\HH):\, \braket{\psi|X|\psi}\geq 0\ \forall\, \ket{\psi}\in \HH\right\}$. Quantum states on $\HH$ are represented by density operators, i.e.\ positive semi-definite trace class operators with unit trace; they form the set $\D(\HH)\coloneqq \left\{ X\in \T_+(\HH):\, \Tr X=1\right\}$. States $\rho\in \D(\HH)$ for which $\rho>0$ are said to be \deff{faithful}. For an arbitrary set $S\subseteq \T(\HH)$, we will denote the cone it generates with $\cone(S)\coloneqq \left\{\lambda X:\, \lambda\in [0,\infty),\, X\in S\right\}$. For example, we have that $\T_+(\HH) = \cone\left(\D(\HH)\right)$.

We will consider essentially two topologies on $\T(\HH)$. The first one is induced by the native norm of $\T(\HH)$: a sequence of operators $(T_n)_{n\in \N}$ in $\T(\HH)$ is said to converge with respect to the \deff{trace norm topology} to some $T\in \T(\HH)$, denoted $T_n\tends{tn}{n\to\infty} T$, if $\left\|T_n - T\right\|_1 \tends{}{n\to\infty} 0$. The second one is the \deff{weak*-topology} induced on $\T(\HH)$ by its pre-dual $\K(\HH)$, which can be defined as the coarsest topology making all functions of the form $\T(\HH)\ni T\mapsto \Tr TK$ continuous, where $K\in \K(\HH)$ is an arbitrary compact operator. Convergence of a sequence $(T_n)_{n\in \N}$ to $T\in \T(\HH)$ with respect to the weak*-topology, denoted $T_n \tends{w*}{n\to\infty} T$, is therefore equivalent to the condition that $\Tr T_n K\tends{}{n\to\infty} \Tr TK$ for all $K\in \K(\HH)$.\footnote{Defining convergence for sequences only does not suffice in this case, because the weak*-topology is not `metrisable' (see the discussion at the end of this section). To investigate the true nature of the weak*-topology one should instead consider general nets. However, we will see that this can be avoided in most cases of practical interest.} 
%The most intuitive way to define them is in terms of convergence of nets.\footnote{A net on a set $\pazocal{X}$ is any function $f:\pazocal{A}\to \pazocal{X}$, where $\pazocal{A}$ is a directed set, i.e.\ a set equipped with a pre-order $\leq$ such that any two elements $a,b\in \pazocal{A}$ admit a common upper bound. General nets are introduced here for the sake of defining the topologies, but will be rarely used elsewhere.} Given a net $(T_\alpha)_\alpha$ of trace class operators $T_\alpha \in \T(\HH)$, we will say that it converges to $T\in \T(\HH)$:
%\begin{enumerate}[(i)]
%    \item with respect to the \deff{trace norm topology}, in formula $T_\alpha \tends{tn}{\alpha} T$, if $\left\|T_\alpha - T\right\|_1 \tends{}{\alpha} 0$;
%    \item with respect to the \deff{weak*-topology}, in formula $T_\alpha \tends{w*}{\alpha} T$, if $\Tr T_\alpha K \tends{}{\alpha} \Tr TK$ for all compact operators $K\in \K(\HH)$.
%\end{enumerate}
It is not difficult to see that any sequence of operators converging in the trace norm topology is also convergent (to the same limit) with respect to the weak*-topology. This is usually expressed by saying that the weak*-topology is coarser than the trace norm topology. An immediate consequence is that weak*-closed sets are also closed with respect to the trace norm topology.

The fact that these two topologies are genuinely different in infinite dimensions can be illustrated by showing that the two associated notions of convergence are different. For example, in a Hilbert space with orthonormal basis $\{\ket{n}\}_{n\in \N}$, the sequence of pure states $(\ketbra{n})_{n\in \N}$ does not converge at all with respect to the trace norm topology --- it is not even a Cauchy sequence --- but it tends to $0$ with respect to the weak*-topology, i.e.\ $\ketbra{n} \tends{w*}{n\to\infty} 0$. Intuitively, the weak*-topology treats any component that `escapes to infinity' as converging to $0$, while the trace norm topology takes it into account nonetheless.

Curiously, these two topologies agree on the very special set of quantum states. Namely, given a sequence $(\rho_n)_{n\in \N}$ of density operators $\rho_n\in \D(\HH)$ and another state $\rho\in \D(\HH)$, we have that $\rho_n \tendsn{tn} \rho$ if and only if $\rho_n \tendsn{w*} \rho$~\cite[Lemma~4.3]{Davies1969}. In order for this surprising conclusion to hold, it is crucial that both $\rho_n$ \emph{and} $\rho$ are \emph{normalised} density operators. In the above example, the weak*-limit was $0$, hence not a normalised density operator.

The cone of positive semi-definite trace class operators is well known to be closed with respect to the trace norm topology. It is an easy yet instructive exercise to verify that it is also weak*-closed, a fact that we will employ multiple times throughout the paper. To see this, it suffices to remember that $X\in \T(\HH)$ satisfies $X\geq 0$ if and only if $\Tr X\psi = \braket{\psi|X|\psi}\geq 0$ for all $\ket{\psi}\in \HH$. The claim then follows because each rank-one projector $\psi$ is a compact operator.

Unlike the trace norm topology, which is \emph{induced} by a metric (in the sense that a metric --- namely, the norm --- determines the convergence of all nets), the weak*-topology is \emph{not} `metrisable', i.e.\ it is not induced by \emph{any} metric~\cite[Proposition~2.6.12]{MEGGINSON}. The reader could wonder why to introduce the complicated weak*-topology alongside the more intuitive trace norm one. The fundamental reason to do so is that a general result in Banach space theory, the Banach--Alaoglu theorem~\cite[Theorem~2.6.18]{MEGGINSON}, guarantees that \emph{the dual unit ball is always compact in the weak*-topology.} In the present context, this implies that the unit ball 
\bb
B_1\coloneqq \{X\in \T(\HH):\, \|X\|_1\leq 1\}
\label{B_1}
\ee
of $\T(\HH)$ is weak*-compact. The acute reader will correctly suspect that the achievability results in our work rests crucially on this compactness property.

\begin{rem} \label{bounded_w*_metrisable_rem}
An immediate consequence of the Banach--Alaoglu theorem is that, provided that $\HH$ is separable (and hence both $\K(\HH)$ and $\T(\HH)$ are), the weak*-topology, albeit globally non-metrisable, is indeed metrisable on norm-bounded subsets~\cite[Corollary~2.6.20]{MEGGINSON}. Since weak*-compact sets of $\T(\HH)$ are always norm-bounded~\cite[Corollary~2.6.9]{MEGGINSON}, and compactness and sequential compactness are equivalent on metrisable spaces, we deduce the following handy fact: \emph{if $\HH$ is separable then any weak*-compact subset of $\T(\HH)$ is also sequentially weak*-compact}.
\end{rem}

\begin{rem} \label{vN_weak*_rem}
Several notions of weak topologies are commonly employed in the von Neumann algebra approach to quantum information. The purpose of this remark is to clarify some possible issues deriving from confusion in the terminology. Given a von Neumann algebra $\MM$ and its pre-dual $\MM_*$ spanned by all positive normal functionals on $\MM$ (also called normal states), the weak* topology on $\MM_*$ in the von Neumann algebra sense, or vN-weak* topology for short, is induced by the semi-norms $\MM_*\ni \varphi \mapsto |\varphi(x)|$, indexed by $x\in \MM$.
How does this topology compare to the weak*-topology we employ? To make a comparison, we look at the simplest case where $\MM=\B(\HH)$ is the von Neumann algebra of all bounded operators on some Hilbert space $\HH$, so that $\MM_*\simeq \T(\HH)$ is essentially the space of trace class operators on $\HH$. It is not difficult to see that the vN-weak* topology on $\T(\HH)$ is the coarsest topology that makes all functions of the form $\T(\HH)\ni T\mapsto \Tr TX$ continuous, where $X\in \B(\HH)$ is an arbitrary bounded operator. This is clearly a stronger topology than our weak*-topology, whose definition, despite the apparent similarity, only requires $X$ to vary over all \emph{compact} operators. This is a key difference: the weak*-converging sequence $\ketbra{n} \tends{w*}{n\to\infty} 0$ we looked at before does not converge in the vN-weak* topology, as can be seen swiftly by picking $X=\id$.
\end{rem}

\subsection{Relative entropy of resource}

The \deff{von Neumann entropy}, or simply the entropy, of an arbitrary positive semi-definite trace class operator $X\in \T_+(\HH)$ whose spectral decomposition reads $X=\sum_i x_i \ketbra{e_i}\geq 0$ is defined by
\bb
S(X) \coloneqq - \Tr \left[ X\ln X\right] = \sum_i (-x_i \ln x_i)\, ,
\label{von_Neumann}
\ee
where the sum on the right-hand side is well defined (although possibly infinite) because $x\mapsto -x\ln x$ is non-negative for all $x\in [0,1]$, and all but a finite number of eigenvalues of $X$ belong to the interval $[0,1]$, due to the fact that $X$ is of trace class.

Given two positive semi-definite trace class operators $X,Y\in \T(\HH)$, $X,Y\geq 0$, with spectral decomposition $X = \sum_i x_i \ketbra{e_i}$ and $Y = \sum_j y_j \ketbra{f_j}$, $x_i,y_j>0$, one defines their \deff{relative entropy} by~\cite{Umegaki1962, Lindblad1973}
\bb
D(X\|Y) \coloneqq&\ \Tr \left[ X \left(\ln X - \ln Y\right) + Y - X \right] \\
\coloneqq&\ \sum_{i,j} \left|\braket{e_i|f_j}\right|^2 \left( x_i \ln x_i - x_i \ln y_j + y_j-x_i \right) .
\label{Umegaki}
\ee
This is a function taking on values in the set of extended real numbers $\R\cup\{ +\infty\}$, and we adopt the convention according to which $D(0\|0)=0$ and $D(X\|0)=+\infty$ if $X\neq 0$. The expression on the first line of~\eqref{Umegaki} is to be interpreted as specified in the second. Thanks to the convexity of the function $x\mapsto x\ln x$, each term of the series is non-negative; hence, the value of the series is well defined, albeit possibly infinite. Note that a necessary condition for $D(X\|Y)<\infty$ to hold is that the support of the first argument is contained into the support of the second, in formula $\supp X\subseteq \supp Y$.

On a different note, it is useful to observe that whenever $Y\leq \id$ (an assumption that can be made without loss of generality) the expression~\eqref{Umegaki} can be recast as
\bb
D(X\|Y) = \sum_i \left( x_i \ln x_i + x_i \big\| \sqrt{-\ln Y} \ket{e_i}\big\|^2 \right) + \Tr[Y-X]\, ,
\label{Umegaki_grown-ups}
\ee
where we convene that $\left\| \sqrt{-\ln Y} \ket{e_i}\right\|^2=+\infty$ if $\ket{e_i}\notin \dom\left( \sqrt{-\ln Y} \right)$. If moreover $S(X)<\infty$, then~\eqref{Umegaki_grown-ups} can be further reduced to the more familiar expression
\bb
D(X\|Y) \coloneqq -S(X) - \Tr\left[ X\ln Y\right] + \Tr[Y-X]\, ,
\label{Umegaki_splitting}
\ee
which in this case is well defined because the only term that can possibly diverge on the right-hand side is the second, to be interpreted as the series $-\Tr\left[ X\ln Y\right] \coloneqq -\sum_{i,j} \left|\braket{e_i|f_j}\right|^2 x_i \ln y_j$.

A fundamental result establishes that the relative entropy is jointly convex, i.e.~\cite{lieb73a, lieb73b} (see also~\cite[Theorem~5.4]{PETZ-ENTROPY})
\bb
D\left( \sumno_i p_i X_i\Big\| \sumno_i p_i Y_i\right) \leq \sum_i p_i D\left(X_i\|Y_i\right)
\label{joint_convexity}
\ee
for all finite probability distributions $p_1,\ldots p_k$ (satisfying $p_i\geq 0$ for all $i$ and $\sum_i p_i=1$) and all collections of positive semi-definite trace class operators $X_i,Y_j\in \T_+(\HH)$. We now establish the special property of the quantum relative entropy which is key for our approach, namely, its weak* lower semi-continuity.

\begin{rem}
The absence of a similar property for other quantum divergences prevents the immediate extension of our results to the corresponding generalised divergences of resource. Indeed, remember from Remark~\ref{vN_weak*_rem} that the weak*-topology employed here is \emph{not} the one commonly encountered in the study of von Neumann algebras.
\end{rem}

\begin{lemma} \label{weak*_lsc_relent_lemma}
The relative entropy
\bb
\T_+(\HH)\times \T_+(\HH) \ni (X,Y)\mapsto D(X\|Y)\in \R\cup\{+\infty\}
\label{Umegaki_function}
\ee
is lower semi-continuous with respect to the product weak*-topology.
\end{lemma}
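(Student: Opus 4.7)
The plan is to express $D(X\|Y)$ as the pointwise supremum of simpler functions that are manifestly weak*-lower semi-continuous, and then to invoke the fact that suprema of lsc functions are lsc.

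The first step is the following observation: for any \emph{finite-rank} orthogonal projection $P$ on $\HH$, the compression map $\T(\HH)\ni X\mapsto PXP$ is continuous from the weak*-topology into the finite-dimensional subspace $\T(P\HH)$ (equipped with its trace-norm topology, equivalent to every other Banach-space topology there). Indeed, for any $K\in \K(\HH)$ one has $\Tr[(PXP)K]=\Tr[X(PKP)]$, and $PKP$ is finite-rank (hence compact); thus weak*-convergence $X_\alpha\to X$ implies weak*-convergence $PX_\alpha P\to PXP$ in $\T(\HH)$, which --- being in a finite-dimensional range --- is the same as trace-norm convergence. Since the quantum relative entropy is lsc on the PSD cone of any finite-dimensional space (an elementary classical fact, e.g.\ via its spectral expression), it follows that $(X,Y)\mapsto D(PXP\|PYP)$ is weak*-lsc on $\T_+(\HH)\times\T_+(\HH)$.

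The substantive step is then the variational identity
\bb
D(X\|Y)=\sup_P D(PXP\|PYP),
\label{plan_variational}
\ee
where the supremum ranges over all finite-rank orthogonal projections $P$ on $\HH$. For the inequality $\geq$, I apply the data-processing inequality to the pinching channel $\Phi_P(Z)=PZP+(I-P)Z(I-P)$, which is CPTP (Kraus operators $P$ and $I-P$); since $\Phi_P(X)$ and $\Phi_P(Y)$ are block-diagonal, the relative entropy splits additively between the two blocks, and dropping the non-negative complementary contribution gives $D(X\|Y)\geq D(\Phi_P(X)\|\Phi_P(Y))\geq D(PXP\|PYP)$. For the reverse inequality $\leq$, I pick an increasing sequence $P_n\uparrow I$ of finite-rank projections; a standard dominated-convergence argument based on the singular-value decompositions of $X$ and $Y$ gives $P_nXP_n\to X$ and $P_nYP_n\to Y$ in trace norm, and the trace-norm lsc of the relative entropy (Lindblad's theorem) then yields $D(X\|Y)\leq\liminf_n D(P_nXP_n\|P_nYP_n)\leq\sup_P D(PXP\|PYP)$.

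Once~\eqref{plan_variational} is in place, the conclusion is immediate, since a pointwise supremum of weak*-lsc functions is itself weak*-lsc. The step requiring most care is the variational identity~\eqref{plan_variational}: the $\leq$ direction invokes the previously-known trace-norm version of lower semi-continuity as an input, while the $\geq$ direction relies on pinching plus the additive splitting of $D$ on block-diagonal states. Conceptually, the finite-rank compression $X\mapsto PXP$ is the device that converts the more delicate weak*-topology into the unique topology of a finite-dimensional space, reducing the infinite-dimensional statement to the classical one.
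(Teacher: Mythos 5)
Your proof is correct and follows essentially the same route as the paper's: both write $D(X\|Y)=\sup_P D(PXP\|PYP)$ over finite-rank projections, observe that compression by a fixed finite-rank $P$ is weak*-to-trace-norm continuous into a finite-dimensional space, and conclude via the fact that a pointwise supremum of lower semi-continuous functions is lower semi-continuous. The only difference is that you re-derive the variational identity (pinching plus data processing for one inequality, $P_n\uparrow I$ plus trace-norm lower semi-continuity for the other), whereas the paper simply cites it from Lindblad's Lemmata~3 and~4; your derivation is sound and is in fact essentially Lindblad's original argument.
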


\begin{proof}
An elementary yet important fact that we shall use multiple times without further comments is the following: if $\{f_\alpha\}_{\alpha}$ is a (possibly infinite) family of lower semi-continuous real-valued functions $f_\alpha:\pazocal{X}\to \R$ on the topological space $\pazocal{X}$, then $F:\pazocal{X}\to \R$ defined by $F(x)\coloneqq \sup_{\alpha} f_\alpha(x)$, referred to as the point-wise supremum of the family $\{f_\alpha\}_{\alpha}$, is also lower semi-continuous. 

Now, a well-known result of Lindblad~\cite[Lemmata~3 and~4]{Lindblad1974} states that
\begin{equation}
D(X\|Y) = \sup_P D\left(PX P\big\| PY P\right) ,
\label{finite_dim_reduction_relent}
\end{equation}
where the supremum is over all finite-dimensional projectors $P$. By the above criterion, if we establish that $(X,Y)\mapsto D\left(PX P\big\| PY P\right)$ is lower semi-continuous for all $P$ then we are done. Write an arbitrary finite-dimensional projector $P$ as $P=\sum_{i=1}^N \ketbra{\psi_i}$. The map $X\mapsto PXP = \sum_{i,j=1}^N \braket{\psi_i|X|\psi_j} \ketbraa{\psi_i}{\psi_j}$, being a finite sum of weak*-continuous functions, is clearly continuous with respect to the weak*-topology on the input space. Since the output space is finite dimensional, all Hausdorff linear topologies are equivalent there. Thanks to the fact that $(X,Y)\mapsto D(X\|Y)$ is lower semi-continuous whenever $X,Y$ are finite-dimensional positive semi-definite matrices --- indeed, %thanks
\tcb{due} to the operator monotonicity of the logarithm~\cite{BHATIA-MATRIX} %it 
\tcb{the above function} can be written as $D(X\|Y) = \sup_{c>0} \left\{-S(X) - \Tr X \ln(Y+c\id) + \Tr[Y-X] \right\}$, where $\id$ is the identity operator, $S$ is the von Neumann entropy~\eqref{von_Neumann}, \tcb{and the functions inside this latter supremum are continuous in $X,Y$} --- we deduce immediately that $\T_+(\HH)\times \T_+(\HH) \ni (X,Y)\mapsto D\left(PXP\big\| PYP\right) $ is lower semi-continuous with respect to the product weak*-topology. The same is then true of the fully-fledged relative entropy~\eqref{Umegaki_function}, thanks to the representation~\eqref{finite_dim_reduction_relent}.
%, which is (once again) a point-wise supremum of lower semi-continuous functions according to~\eqref{finite_dim_reduction_relent}.
\end{proof}

The framework of quantum resource theories~\cite{RT-review} models all those situations where operational or experimental constraints limit the set of states of a quantum system one can access in practice. Let a quantum system with Hilbert space $\HH$ be given, and denote with $\FF\subseteq \D(\HH)$ the set of states that \emph{are} accessible at no cost, hereafter called \deff{free states}. To quantify the resource cost of an arbitrary state $\rho\in \D(\HH)$, one employs functions known as resource quantifiers. One of the simplest and most important such functions is the so-called \deff{relative entropy of resource}, defined for $\rho\in \D(\HH)$ by
\bb
D_\FF(\rho) \coloneqq \inf_{\sigma\in \FF} D(\rho\|\sigma)\, .
\label{relent_F}
\ee
Several specific examples of this construction are explored in detail in the forthcoming Section~\ref{applications_sec}. Importantly, the above function $D_\FF$ is a resource monotone, i.e.\ it is non-increasing under any (completely) positive trace preserving map that sends $\FF$ into itself. As a matter of fact, it is perhaps the most important resource monotone, as its regularisation governs the inter-conversion rates between states under asymptotically resource non-generating operations~\cite{Brandao-Gour}.

As explained in the Introduction, we are concerned here with the general properties of the relative entropy of resource~\eqref{relent_F}, for either finite- or infinite-dimensional systems. For instance, a pressing problem is how to compute $D_\FF$ in practice. While~\eqref{relent_F} allows us to find upper bounds rather easily by simply making ansatzes for $\sigma$, it is not at all obvious how to calculate lower bounds systematically. In the finite-dimensional case, Berta, Fawzi, and Tomamichel~\cite{Berta2017} managed to tackle this problem by employing the variational expression for the relative entropy found by Petz~\cite{Petz1988}; plugging it into~\eqref{relent_F} and using Sion's theorem to exchange the infimum and supremum, one obtains a formula for $D_\FF$ that involves an external supremum instead of an infimum (cf.~\eqref{relent_F}). Such a formula is an ideal tool to compute lower bounds on $D_\FF$ in a systematic way, as well as to prove general properties of $D_\FF$. 
%And yet, prior to our work it was not clear how to extend it to the infinite-dimensional case.
And indeed, one of our main results provides an extension of it to infinite-dimensional resource theories (Theorem~\ref{variational_thm}).

However, to arrive there we need to start from simpler questions. A particularly immediate one is whether the infimum in~\eqref{relent_F} is always achievable. In other words, does there always exist a closest free state? Is it unique? One could also wonder whether the resulting function $D_\FF$ is lower semi-continuous, which is one of the strongest forms of regularity we can hope for in general, as in infinite dimensions useful (i.e.\ extensive, or, more formally, additive) resource monotones are typically everywhere discontinuous --- so, for instance, is the von Neumann entropy~\cite{Wehrl}.

The compactness (and convexity) of $\FF$ with respect to the trace norm topology is a sufficient condition to ensure both the existence of a closest free state to any given state and also the lower semi-continuity of $D_\FF$. If $\dim\HH<\infty$, then this amounts to requiring that $\FF$ is closed: under this hypothesis, which is typically met in many important cases, in the finite-dimensional case we can replace the infimum in~\eqref{relent_F} with a minimum. The problem is that the above compactness assumption is almost never met for infinite-dimensional resource theories: indeed, density operators form themselves a non-compact set! Just to name a few examples of interesting $\FF$, separable states~\cite{Werner, Holevo2005} or states with positive partial transpose~\cite{PeresPPT} in a bipartite system, classical states~\cite{Bach1986, Yadin2018, nonclassicality}, states with a non-negative Wigner function~\cite{Hudson1974, Hudson-thm-multimode, Broecker1995}, or even Gaussian states~\cite{BUCCO} in continuous variable multi-mode systems all give rise to non-compact sets. For these cases, prior to our paper none of the above properties of $D_\FF$ was known~\cite{Eisert2002}. In Section~\ref{applications_sec} we will see how to apply our main results (Theorems~\ref{achievable_relent_thm},~\ref{w*_closed_condition_thm}, and~\ref{variational_thm}) to establish the achievability and lower semi-continuity of $D_\FF$ in all of these cases and in even greater generality, and to derive dual variational expressions for it.

\section{Main results} \label{main_results_sec}

\subsection{The statements}

Throughout this section we will present the statements of our main results. All proofs can be found in the forthcoming Section~\ref{proofs_subsec}.

Our first main result establishes a sufficient condition that makes the relative entropy of resource~\eqref{relent_F} achieved, meaning that the infimum in~\eqref{relent_F} is in fact a minimum, and lower semi-continuous as a function of the state. As we will see in the next section, this condition covers virtually all quantum resource theories of practical interest.

\begin{thm} \label{achievable_relent_thm}
Let $\HH$ be a (possibly infinite-dimensional) separable Hilbert space, and let $\FF\subseteq \D(\HH)$ be a (not necessarily convex) set of density operators on $\HH$. If $\cone(\FF)= \left\{\lambda \sigma:\, \lambda\in [0,\infty),\, \sigma\in \FF\right\}$ is weak*-closed, then the relative entropy distance from $\FF$, defined by~\eqref{relent_F}, is:
\begin{enumerate}[(a)]
    \item always achieved, meaning that for all $\rho$ there exists $\sigma\in \FF$ such that $D_\FF(\rho) = D(\rho\|\sigma)$; and
    \item lower semi-continuous in $\rho$ with respect to the trace norm topology, i.e.\ such that $\rho_n \tends{tn}{n\to\infty} \rho$ implies that
\begin{equation}
    D_\FF(\rho) \leq \liminf_{n\to\infty} D_\FF(\rho_n)\, ,
\end{equation}
    for all sequences of density operators $(\rho_n)_{n\in \N}$.
\end{enumerate}
Moreover, if $\rho\in \D(\HH)$ is such that $D_\FF(\rho)<\infty$ then the non-empty set $\Sigma_\FF(\rho)\coloneqq \left\{ \sigma\in \FF:\, D(\rho\|\sigma)=D_\FF(\rho)\right\}$ of minimisers:
\begin{itemize}
\item[(c)] is trace norm compact, and furthermore convex if $\FF$ itself is convex; and
\item[(d)] contains a unique state if $\FF$ is convex and $\rho$ is faithful (i.e.\ $\rho>0$).
\end{itemize}
\end{thm}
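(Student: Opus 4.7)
The engine of the argument is the interplay between three ingredients: the Banach--Alaoglu weak*-compactness of the unit ball $B_1$ (see~\eqref{B_1}), the weak*-lower semi-continuity of the relative entropy (Lemma~\ref{weak*_lsc_relent_lemma}), and the hypothesis that $\cone(\FF)$ is weak*-closed. The extra algebraic fact I plan to exploit is the elementary rescaling identity
\begin{equation}
D(\rho\|\lambda\sigma) \,=\, D(\rho\|\sigma) + \left(-\ln\lambda + \lambda - 1\right) \qquad \forall\, \rho,\sigma \in \D(\HH),\ \lambda > 0, \label{rescaling_D}
\end{equation}
together with the observation that $-\ln\lambda + \lambda - 1 \geq 0$, with equality iff $\lambda = 1$. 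This will be the decisive tool for ruling out any trace defect in the weak*-limits and forcing them to lie in $\FF$ itself rather than merely in $\cone(\FF)$.

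For~(a), the plan is to take a minimising sequence $(\sigma_n)_n \subseteq \FF$ for $D(\rho\|\,\cdot\,)$. Since $\sigma_n \in B_1$ and the weak*-topology is metrisable there (Remark~\ref{bounded_w*_metrisable_rem}), up to extracting a subsequence I may assume $\sigma_n \tends{w*}{n\to\infty} X$. The assumed weak*-closedness of $\cone(\FF)$ yields $X = \lambda\sigma$ for some $\sigma \in \FF$ and $\lambda \geq 0$; weak*-lower semi-continuity (Lemma~\ref{weak*_lsc_relent_lemma}) gives $D(\rho\|X) \leq D_\FF(\rho)$; and $D(\rho\|0) = +\infty$ excludes $\lambda = 0$ whenever $D_\FF(\rho) < \infty$. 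Plugging in~\eqref{rescaling_D}, the chain
\begin{equation*}
D_\FF(\rho) \,\geq\, D(\rho\|X) \,=\, D(\rho\|\sigma) + (-\ln\lambda + \lambda - 1) \,\geq\, D(\rho\|\sigma) \,\geq\, D_\FF(\rho)
\end{equation*}
collapses to equality throughout, forcing $\lambda = 1$ and exhibiting $\sigma = X \in \FF$ as a minimiser.

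For~(b), given a trace-norm convergent sequence $\rho_n \to \rho$, I would invoke~(a) to pick minimisers $\sigma_n \in \FF$ for each $\rho_n$, then repeat the weak*-extraction to obtain $\sigma_{n_k} \tends{w*}{k\to\infty} X \in \cone(\FF)$ along a subsequence attaining $\liminf_n D_\FF(\rho_n)$. Trace-norm convergence implies weak*-convergence, so the pair $(\rho_{n_k}, \sigma_{n_k})$ tends to $(\rho, X)$ in the product weak*-topology; Lemma~\ref{weak*_lsc_relent_lemma} then gives $D(\rho\|X) \leq \liminf_n D_\FF(\rho_n)$, and the same rescaling argument as in~(a) converts this into $D_\FF(\rho) \leq \liminf_n D_\FF(\rho_n)$. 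For~(c), exactly the same extraction shows that $\Sigma_\FF(\rho)$ is weak*-closed inside the weak*-compact set $B_1$, hence weak*-compact, and on subsets of $\D(\HH)$ weak*-compactness coincides with trace-norm compactness via~\cite[Lemma~4.3]{Davies1969} combined with Remark~\ref{bounded_w*_metrisable_rem}; convexity of $\Sigma_\FF(\rho)$ when $\FF$ is convex is immediate from the joint convexity~\eqref{joint_convexity} of $D$.

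For~(d), when $\rho$ is faithful the map $Y \mapsto D(\rho\|Y)$ is strictly convex on $\D(\HH)$, because $-\ln$ is operator strictly convex on positive operators and $\Tr[\rho\, A] > 0$ for every non-zero $A \geq 0$; averaging two distinct minimisers would then strictly decrease the relative entropy, contradicting minimality. The main obstacle I anticipate throughout is the trace defect that can afflict weak*-limits of normalised sequences (as vividly illustrated by $\ketbra{n} \tends{w*}{n\to\infty} 0$): it is precisely the structural assumption that $\cone(\FF)$, and not merely $\FF$, be weak*-closed, combined with the rescaling identity~\eqref{rescaling_D}, that tames this pathology and drives the whole argument.
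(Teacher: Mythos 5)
Your treatment of parts~(a), (b), and~(c) is correct and follows essentially the same route as the paper: the paper also reduces to the weak*-compact set $\widetilde{\FF}=\cone(\FF)\cap B_1$, uses the rescaling identity $D(\rho\|\lambda\sigma)=D(\rho\|\sigma)+\lambda-1-\ln\lambda$ to kill the trace defect of weak*-limits, invokes Lemma~\ref{weak*_lsc_relent_lemma} for lower semi-continuity, and upgrades weak* to trace-norm convergence on $\D(\HH)$ via Davies' lemma for the compactness of $\Sigma_\FF(\rho)$. The only substantive divergence is in part~(d). You argue uniqueness via strict convexity of $Y\mapsto D(\rho\|Y)$, deduced from the operator strict convexity of $-\ln$ together with $\Tr[\rho A]>0$ for $0\neq A\geq 0$. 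This is fine in finite dimensions, but in the infinite-dimensional setting it implicitly relies on the splitting $D(\rho\|Y)=\Tr\rho\ln\rho-\Tr\rho\ln Y$, which is an $\infty-\infty$ expression when $S(\rho)=+\infty$, and on manipulating $\Tr[\rho\,(-\ln Y)]$ for trace-class $Y$ whose logarithm is unbounded; both points need justification that your sketch does not supply. The paper sidesteps this by working directly with the saturation of joint convexity, $D\bigl(\rho\,\|\,\tfrac12(\sigma_1+\sigma_2)\bigr)=\tfrac12\bigl(D(\rho\|\sigma_1)+D(\rho\|\sigma_2)\bigr)$, and invoking Petz's characterisation of equality, which forces $\sigma_1^{it}=\sigma_2^{it}$ for all $t\in\R$ and hence $\sigma_1=\sigma_2$; this argument is insensitive to the finiteness of $S(\rho)$. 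You should either adopt that route or carefully establish strict convexity of the second argument of $D(\rho\|\cdot)$ for faithful $\rho$ without assuming $S(\rho)<\infty$.
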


\begin{rem} The faithfulness condition in claim~(d) of Theorem~\ref{achievable_relent_thm} cannot be omitted even in the case of finite-dimensional space $\HH$. This is confirmed by Example~\ref{non-un} in Section~\ref{entanglement_subsec}.
\end{rem}

The main difficulty in applying Theorem~\ref{achievable_relent_thm} lies in verifying the weak*-closedness of the cone generated by the set of free states $\FF$. In fact, although it is often the case that the set $\FF$ itself is trace norm closed, since the weak*-topology is coarser than the trace norm topology this fact cannot be used to deduce the sought property of $\cone(\FF)$.
%Also, since $\cone(\FF)$ is rather apparently not trace norm bounded, Remark~\ref{bounded_w*_metrisable_rem} does not apply, and the weak*-topology is generally non-metrisable on it.
To make our life easier, we will now equip ourselves with a technical tool that turns out to cover almost all interesting cases. The following result %, whose proof can be found in Section~\ref{proofs_subsec}, 
ought to be thought of as instrumental to the application of Theorem~\ref{achievable_relent_thm}.

\begin{thm} \label{w*_closed_condition_thm}
Let $\HH$ be a (possibly infinite-dimensional) separable Hilbert space, and let $(M_n)_{n\in \N}$ be a sequence of compact operators on $\HH$ that converges to the identity in the strong operator topology, i.e.\ such that $\left\|M_n\ket{\psi} - \ket{\psi}\right\|\tends{}{n\to\infty} 0$ for all $\ket{\psi}\in \HH$. Define $\MM_n(\cdot)\coloneqq M_n(\cdot) M_n^\dag$. Consider a set of states $\FF\subseteq \D(\HH)$, and assume that:
\begin{enumerate}[(i)]
    \item $\FF$ is convex;
    \item $\FF$ is trace norm closed;
    \item $\MM_n$ preserves free states (up to normalisation), i.e.\ $\MM_n\left( \cone(\FF)\right)\subseteq \cone(\FF)$ for all $n\in \N$.
\end{enumerate}
Then $\cone(\FF)$ is weak*-closed, and in particular claims~(a)--(b) of Theorem~\ref{achievable_relent_thm} hold true.
\end{thm}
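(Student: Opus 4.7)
The plan is to reduce the weak*-closedness of $\cone(\FF)$ to closedness on bounded sets via the Krein--\v{S}mulian theorem, and then handle the bounded case by passing through the truncations $\MM_n$, which serve as a bridge between the weak*- and trace norm topologies. First note that $\cone(\FF)$ is convex: if $X_i=\lambda_i\sigma_i\in\cone(\FF)$ with $\sigma_i\in\FF$, any convex combination $\alpha X_1+\beta X_2$ factors as $(\alpha\lambda_1+\beta\lambda_2)$ times a convex combination of $\sigma_1,\sigma_2$, which lies in $\FF$ by assumption~(i). The Krein--\v{S}mulian theorem applied to the dual Banach space $\T(\HH)=\K(\HH)^*$ then says it suffices to prove $\cone(\FF)\cap B_r$ is weak*-closed for every $r>0$, and the cone property collapses this to the case $r=1$. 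Since $\HH$ is separable, so is $\K(\HH)$, and Banach--Alaoglu together with Remark~\ref{bounded_w*_metrisable_rem} make $B_1$ weak*-metrizable; hence I only need to verify \emph{sequential} weak*-closedness of $\cone(\FF)\cap B_1$.

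So take $X_k\in\cone(\FF)\cap B_1$ with $X_k\tends{w*}{k\to\infty}X$. Weak*-closedness of $\T_+(\HH)$ and of $B_1$ gives $X\in B_1$ and $X\geq 0$. The main obstacle is that the trace functional is \emph{not} weak*-continuous (witness $\ketbra{n}\tends{w*}{n\to\infty}0$), so one cannot directly extract a normalised free state as a limit of $\sigma_k\coloneqq X_k/\|X_k\|_1$. The device that sidesteps this is the compact truncation $\MM_n$. For any $K\in\K(\HH)$ the operator $M_n^\dagger K M_n$ is compact (compact operators form an ideal), so $\Tr[\MM_n(X_k)K]=\Tr[X_k M_n^\dagger K M_n]$ converges; that is, $\MM_n$ is weak*-to-weak* continuous. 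Crucially, since $M_n^\dagger M_n$ is itself compact, even the normalisation $\mu_{n,k}\coloneqq \Tr\MM_n(X_k)=\Tr[X_k M_n^\dagger M_n]$ converges to $\mu_n\coloneqq\Tr\MM_n(X)$, restoring the scalar control that was lost.

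Using this, I would prove $\MM_n(X)\in\cone(\FF)$ for each fixed $n$. By assumption~(iii), $\MM_n(X_k)\in\cone(\FF)$; when $\mu_n>0$, writing $\MM_n(X_k)=\mu_{n,k}\tau_{n,k}$ with $\tau_{n,k}\in\FF$ for $k$ large, one has $\tau_{n,k}\tends{w*}{k\to\infty}\MM_n(X)/\mu_n$, and the limit is a bona fide density operator. Davies' equivalence of weak* and trace norm convergence on $\D(\HH)$ promotes this to trace norm convergence, and assumption~(ii) places the limit in $\FF$, so $\MM_n(X)\in\cone(\FF)$. The degenerate case $\mu_n=0$ forces $\MM_n(X)=0\in\cone(\FF)$ by positivity.

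Finally I let $n\to\infty$. Strong convergence $M_n\to I$, combined with the uniform bound $\sup_n\|M_n\|<\infty$ from the uniform boundedness principle and a standard finite-rank approximation of the trace class operator $X$, yields $\MM_n(X)\to X$ in trace norm. In particular $\mu_n\to\lambda\coloneqq\Tr X$. If $\lambda>0$, then $\MM_n(X)/\mu_n\to X/\lambda$ in trace norm, and a second application of trace norm closedness of $\FF$ gives $X/\lambda\in\FF$, whence $X\in\cone(\FF)$; if $\lambda=0$, positivity forces $X=0\in\cone(\FF)$. The main subtle point I expect is precisely the management of the two limits: the ordering ``first $k\to\infty$ weakly*, then $n\to\infty$ in trace norm'' is essential, with the compactness of $M_n$ coupling the two topologies at each intermediate step, and the possibility that the normalisation vanishes in the limit must be tracked carefully throughout.
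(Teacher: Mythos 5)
Your proof is correct, and its overall architecture coincides with the paper's: reduce weak*-closedness to sequential weak*-closedness on bounded sets via Krein--\v{S}mulian and the metrisability of the weak*-topology there, use the compact truncations $\MM_n$ as the bridge between the weak*- and trace norm topologies for fixed $n$, and then let $n\to\infty$ using $\sup_n\|M_n\|_\infty<\infty$ (uniform boundedness) and finite-rank approximation to get $\MM_n(X)\to X$ in trace norm. Where you genuinely diverge is in the central step, namely showing $\MM_n(X)\in\cone(\FF)$ for the weak* limit $X$. The paper proves outright that each $\MM_n$ is sequentially continuous from the weak*-topology to the trace norm topology on all of $\T(\HH)$ (its step~(II), the longest computation in the proof, via a finite-rank approximation of $M_n$), and separately that $\cone(\FF)$ is trace norm closed (its step~(III)); combining the two gives $\MM_n(X)\in\cone(\FF)$. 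You instead observe only the much cheaper facts that $\MM_n$ is weak*-to-weak* sequentially continuous and that the normalisation $\Tr\MM_n(X_k)=\Tr[X_k M_n^\dag M_n]$ converges because $M_n^\dag M_n$ is compact; you then normalise and invoke Davies' equivalence of weak* and trace norm convergence on $\D(\HH)$ (quoted in Section~\ref{topologies_subsec}) to land directly in $\FF$ via hypothesis~(ii). This shortcut exploits the positivity of the operators involved, which the paper's step~(II) does not need (that step is a statement about arbitrary trace class operators and is of some independent interest), but for the theorem at hand your route is shorter and avoids the $\varepsilon$-approximation estimates entirely. Your handling of the degenerate cases $\Tr\MM_n(X)=0$ and $\Tr X=0$, and of the ordering of the two limits, is also correct.
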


Remarkably, hypothesis~(iii) in Theorem~\ref{w*_closed_condition_thm} cannot be removed. To see why, and to better illustrate the nature of the assumption of weak*-closedness in Theorem~\ref{achievable_relent_thm}, we now present a simple yet instructive example.

\begin{ex}
Let $\HH$ be a separable Hilbert space with orthonormal basis $\{\ket{n}\}_{n\in \N}$. Define the unbounded, densely defined operator $H\coloneqq \sum_{n=0}^\infty n \ketbra{n}$, which can be interpreted physically as a Hamiltonian (cf.\ Section~\ref{energy_constraints_subsec}). For some $E\geq 0$, construct the set of states
\bb
\FF_{H,E}\coloneqq \left\{ \rho_{AB}\in \D(\HH\otimes \HH):\, \Tr \left[\rho_{AB}\, H_A\otimes \id_B\right]\leq E\right\}
\ee
on the Hilbert space $\HH\otimes \HH$, where it is understood that $\Tr \left[\rho_{AB} H_A\otimes \id_B\right] \coloneqq \sup_{N\in\N} \sum_{k=0}^N k\braket{k|\rho_A|k}$, and $\rho_A = \Tr_B \rho_{AB}$.

Then $\FF_{H,E}$ is (i)~convex, (ii)~trace norm closed, but $\cone(\FF_{H,E})$ is \emph{not} weak*-closed. Claim~(i) is obvious, while~(ii) follows from the trace norm lower semi-continuity of $\Tr \rho H$, which is a point-wise supremum of continuous functions. To verify that $\cone(\FF_{H,E})$ is \emph{not} weak*-closed, consider the sequence of states $\big(\sigma^{(n)}_{AB}\big)_{n\in \N}$ with $\sigma_{AB}^{(n)} \coloneqq \frac12 \ketbra{0}_A\otimes \ketbra{n}_B + \frac12 \ketbra{2}_A\otimes \ketbra{0}_B$. Note that $\sigma_{AB}^{(n)}\in \FF_{H,1}$ for all $n$. However, $\sigma_{AB}^{(n)}\tends{w*}{n\to\infty} \frac12 \ketbra{2}_A\otimes \ketbra{0}_B$, and the right-hand side does not belong to $\cone(\FF_{H,1})=\left\{ X_{AB}\in \T_+(\HH\otimes \HH):\, \Tr \left[X_{AB}\, H_A\otimes \id_B\right]\leq \Tr X_{AB}\right\}$.
\end{ex}

Our final main result is a general variational expression for the relative entropy of resource $D_\FF$ that is dual to~\eqref{relent_F}, in that it features an external maximisation instead of a minimisation. Its proof leverages in a key way Theorem~\ref{achievable_relent_thm}.

\begin{thm} \label{variational_thm}
Let $\HH$ be a (possibly infinite-dimensional) separable Hilbert space, and let $\FF\subseteq \D(\HH)$ be a convex set of states such that $\cone(\FF)$ is weak*-closed. For any state $\rho\in \D(\HH)$ with finite entropy $S(\rho)<\infty$, it holds that
\bb
D_\FF(\rho) = \sup_{X = X^\dag\in \B(\HH)} \left\{ \Tr \rho X - \sup_{\sigma\in \FF} \ln \Tr e^{\ln \sigma + X} \right\} .
\label{variational}
\ee
\end{thm}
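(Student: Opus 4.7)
The plan is to prove the equality via the two inequalities separately. For the easy $\geq$ direction, I would start from Petz's variational identity $D(\rho\|\sigma) = \sup_{X = X^\dag\in \B(\HH)}\{\Tr\rho X - \ln\Tr e^{\ln\sigma + X}\}$, apply it to each $\sigma\in \FF$, and take the infimum over $\sigma$. Since $\Tr\rho X$ does not depend on $\sigma$, the standard weak minimax bound $\inf_\sigma\sup_X \geq \sup_X\inf_\sigma$ immediately gives $D_\FF(\rho) \geq \sup_X\{\Tr\rho X - \sup_{\sigma\in \FF}\ln\Tr e^{\ln\sigma + X}\}$. A preliminary step is to validate Petz's formula for possibly unfaithful $\sigma$, which I would handle by regularising $\sigma \rightsquigarrow \sigma + \delta\id$ and invoking lower semi-continuity of the relative entropy; states $\sigma$ whose support fails to contain $\supp\rho$ are irrelevant since for them $D(\rho\|\sigma) = +\infty$ trivially.

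The $\leq$ direction is where the real work lies. The finite-dimensional proof of Berta--Fawzi--Tomamichel~\cite{Berta2017} invokes Sion's minimax theorem, which requires compactness that fails in infinite dimensions. My plan is instead to leverage Theorem~\ref{achievable_relent_thm} to fix an optimal $\sigma_*\in \FF$ with $D(\rho\|\sigma_*) = D_\FF(\rho)$, reducing the task to exhibiting an $X$ that achieves $D(\rho\|\sigma_*)$ in the dual expression. The natural candidate is $X_* = \ln\rho - \ln\sigma_*$, for which formally $\ln\sigma_* + X_* = \ln\rho$ and hence $\Tr\rho X_* - \ln\Tr e^{\ln\sigma_* + X_*} = D(\rho\|\sigma_*) - 0 = D(\rho\|\sigma_*)$. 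The crux is to then verify $\sup_{\sigma\in \FF}\ln\Tr e^{\ln\sigma + X_*} \leq 0$, equivalently $\Tr e^{\ln\rho - \ln\sigma_* + \ln\sigma} \leq 1$ for every $\sigma\in \FF$. I expect this to follow by combining the first-order optimality of $\sigma_*$ as a minimiser of $D(\rho\|\cdot)$ over the convex set $\FF$, obtained by perturbing along $\sigma_t = (1-t)\sigma_* + t\sigma$ and using non-negativity of the right-derivative of $D(\rho\|\sigma_t)$ at $t = 0$, with an infinite-dimensional form of Lieb's three-matrix inequality~\cite[Theorem~7]{lieb73c} applied to $\rho$, $\sigma$, $\sigma_*$.

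The main obstacles will both be infinite-dimensional subtleties. First, $X_* = \ln\rho - \ln\sigma_*$ is unbounded whenever $\rho$ or $\sigma_*$ has eigenvalues accumulating at $0$, forcing me to replace the candidate by bounded truncations $X_n$ supported on finite-rank spectral projections; the hypothesis $S(\rho)<\infty$ plays a key role here in guaranteeing $\Tr\rho X_n \to D(\rho\|\sigma_*)$ and more generally in keeping the $-S(\rho)$ piece of the relative entropy under control throughout the approximation. Second, Lieb's three-matrix inequality is classically proved only for finite matrices, and its extension to separable Hilbert spaces will require approximating by finite-rank compressions $P_n\,\cdot\, P_n$ with $P_n\to \id$ in the strong operator topology, applying the matrix-level inequality, and passing to the weak*-limit using Lemma~\ref{weak*_lsc_relent_lemma} and monotone convergence of the trace. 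Both extensions should emerge naturally as by-products of the argument, thereby furnishing the infinite-dimensional generalisations of Petz's formula (for unfaithful states) and of Lieb's inequality advertised in the introduction.
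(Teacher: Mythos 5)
Your strategy for the $\leq$ direction is essentially the one the paper follows: fix a minimiser $\sigma_*$ via Theorem~\ref{achievable_relent_thm}, take a bounded regularisation of $\ln\rho-\ln\sigma_*$ as the candidate $X$ (the paper uses $X=\ln(\rho+\epsilon^2\id)-\ln(\xi+\epsilon\id)+(\Tr\xi-1)\id$ rather than finite-rank truncations), derive the first-order optimality condition $1-\Tr\rho\,\Gamma_{\sigma_*}(\sigma)+\Tr[\sigma-\sigma_*]\geq 0$ by differentiating along $(1-t)\sigma_*+t\sigma$, and close the loop with an infinite-dimensional Lieb three-operator inequality. The $\geq$ direction via the weak minimax bound is also the same.

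There is, however, a genuine gap: your argument only covers the case $D_\FF(\rho)<\infty$, whereas the theorem makes no such assumption. When $D_\FF(\rho)=+\infty$ every $\sigma\in\FF$ satisfies $\supp\rho\not\subseteq\supp\sigma$, the candidate $X_*=\ln\rho-\ln\sigma_*$ is meaningless, and the first-order optimality condition is vacuous, so your reduction "fix an optimal $\sigma_*$ and exhibit an $X$ achieving $D(\rho\|\sigma_*)$" collapses. The paper handles this by first proving the finite case as a lemma for an arbitrary convex weak*-compact set $\CC$ of subnormalised operators, then applying it to the mixed sets $\widetilde{\FF}_p=(1-p)\widetilde{\FF}+p\{\rho\}$ (for which $D^{(p)}_\FF(\rho)\leq-\ln p<\infty$ automatically), and finally taking $p\to 0^+$ using the concavity of $\xi\mapsto\ln\Tr e^{\ln\xi+X}$ (from the amended Petz formula), the Peierls--Bogoliubov inequality, and weak* lower semi-continuity. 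Some nontrivial machinery is needed here that your plan does not anticipate. Two further cautions: (i) regularising $\sigma\rightsquigarrow\sigma+\delta\id$ fails outright in infinite dimensions, since $\id$ is not trace class and $\Tr e^{\ln(\sigma+\delta\id)+X}=+\infty$; the paper instead derives the needed one-sided Petz inequality for unfaithful arguments from the data-processing inequality and Araki's identity. (ii) Extending Lieb's inequality by naive finite-rank compression and a weak*-limit is delicate --- the functional $\eta\mapsto\Tr e^{\ln\eta+X}$ is not weak*-continuous (cf.\ Remark~\ref{Sion?_rem}) --- and the paper ultimately relies on the Junge--LaRacuente infinite-dimensional multivariate Golden--Thompson inequality rather than a bare compression argument.
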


\begin{note}
If the state $\sigma$ appearing in~\eqref{variational} fails to be faithful the interpretation of its logarithm may pose some problems. In this case, we convene that 
\bb
e^{\ln \sigma + X} \coloneqq P^\dag e^{\ln \left(P \sigma P^\dag\right) + P X P^\dag} P\, ,
\label{interpretation}
\ee
where $P:\HH\to \supp(\sigma)$ is the orthogonal projector onto the support of $\sigma$.
\end{note}

The variational approach to the study of quantum relative entropy has a long and illustrious tradition~\cite{Kosaki1986, Petz1988, PETZ-ENTROPY, Berta2017}. As for the relative entropy of resource, variational expressions of the above dual kind have proved to be very useful in establishing general properties of $D_\FF$ and related quantifiers, for instance super-additivity~\cite{Berta2017, nonclassicality}. Furthermore, as mentioned before they can be of immense help computationally, as they provide a valuable tool to generate lower bounds on $D_\FF$ systematically.

As for the proof technique, the one in~\cite[Section~V.A]{Berta2017} rests on Sion's theorem and does not carry over to infinite dimensions, because the compactness hypothesis typically breaks down. In Section~\ref{proofs_subsec} we show how to overcome this difficulty by making use of Theorem~\ref{achievable_relent_thm}; as a by-product of our derivation, exploiting some new results on multivariate trace inequalities~\cite{Sutter2017, Junge2021, Hollands2021} we extend the celebrated Lieb's three-matrix inequality~\cite{lieb73c} to the infinite-dimensional case (Appendix~\ref{Lieb_3_app}).

\subsection{Proofs} \label{proofs_subsec}

\begin{proof}[Proof of Theorem~\ref{achievable_relent_thm}]
%We modify the argument in the proof of~\cite[Theorem~12]{taming-PRA}.
Consider the set
\bb
\widetilde{\FF}\coloneqq \left\{\lambda \sigma:\, \lambda\in [0,1],\, \sigma\in \FF\right\} = \cone(\FF)\cap B_1\, ,
\label{F_tilde}
\ee
where $B_1$, defined by~\eqref{B_1}, is the unit ball of the trace norm. Note that $B_1$ is weak*-compact thanks to the Banach--Alaoglu theorem~\cite[Theorem~2.6.18]{MEGGINSON}, while $\cone(\FF)$ is weak*-closed by hypothesis. Being the intersection of a weak*-compact and a weak*-closed set, $\widetilde{\FF}$ is also weak*-compact.

Now, let us prove claim~(a) for a fixed state $\rho\in \D(\HH)$. Clearly, we can assume without loss of generality that $D_\FF(\rho)<\infty$, otherwise any state $\sigma\in \FF$ achieves~\eqref{relent_F}. For all $\lambda\in [0,1]$ and $\rho,\sigma\in \D(\HH)$, we have that
\bb
D(\rho\|\lambda\sigma) = D(\rho\|\sigma) + \lambda - 1 - \ln \lambda \geq D(\rho\|\sigma)\, ,
\ee
thanks to the fact that $f(\lambda)\coloneqq \lambda - 1 - \ln \lambda \geq 0$ for all $\lambda\in [0,1]$, with the convention that $f(0)=+\infty$. We can then write
\bb
D_\FF(\rho) = \inf_{\substack{\sigma\in \FF, \\ \lambda\in [0,1]}} \left\{ D(\rho\|\sigma) + \lambda - 1 - \ln \lambda \right\} = \inf_{\substack{\sigma\in \FF, \\ \lambda\in [0,1]}} D(\rho\|\lambda\sigma) = \inf_{\eta\in \widetilde{\FF}} D(\rho\|\eta)\, .
\label{D_F_eta}
\ee
Since the function $\eta\mapsto D(\rho\|\eta)$ is lower semi-continuous with respect to the weak*-topology, it achieves its minimum on the weak*-compact set $\widetilde{\FF}$. Let $\eta_0\in \widetilde{\FF}$ be such that $D_\FF(\rho) = D(\rho\|\eta_0)$. It must be that $\eta_0\neq 0$, otherwise $D_\FF(\rho) = +\infty$, contradicting our working assumptions. Due to the fact that $\eta_0\geq 0$, we deduce that in fact $\Tr \eta_0>0$. Then, $\sigma_0\coloneqq \left(\Tr\eta_0\right)^{-1} \eta_0\in \FF$ satisfies that
\bb
D_\FF(\rho) &\leq D(\rho\|\sigma_0) = D(\rho\|\eta_0) + \ln \Tr\eta_0 -\Tr\eta_0+1 \leq D(\rho\|\eta_0) = D_\FF(\rho)\, ,
\label{throw_away_normalisation}
\ee
where the inequality follows from the fact that $\ln \Tr\eta_0-\Tr\eta_0+1\leq 0$ as $\Tr \eta_0\in (0,1]$. We infer immediately that in fact $\Tr\eta_0=1$, i.e.\ $\eta_0=\sigma_0\in \FF$.

We now move on to claim~(b). Let $(\rho_n)_{n\in \N}$ be a sequence such that $\rho_n \tends{tn}{n\to\infty} \rho$. Up to extracting a subsequence, we can assume without loss of generality that $\left(D_\FF(\rho_n)\right)_{n\in \N}$ converges (if $\lim_{n\to\infty} D_\FF(\rho_n) = +\infty$ there is nothing to prove). We now construct states $\sigma_n\in \FF$ such that $D_\FF(\rho_n) = D(\rho_n\|\sigma_n)$. Thanks to the fact that $\widetilde{\FF}$ is weak*-compact and hence sequentially weak*-compact by Remark~\ref{bounded_w*_metrisable_rem}, we can extract a weak*-converging subsequence $\sigma_{n_k}\tends{w*}{k\to\infty} \eta\in \widetilde{\FF}$. Note that
\bb
D_\FF(\rho) \leqt{(i)} D(\rho\|\eta) \leqt{(ii)} \liminf_{k\to\infty} D\left(\rho_{n_k}\big\|\sigma_{n_k}\right) = \liminf_{k\to\infty} D_\FF(\rho_{n_k}) \eqt{(iii)} \lim_{n\to\infty} D_\FF(\rho_n) \, ,
\ee
where (i)~holds due to~\eqref{D_F_eta}, (ii)~thanks to Lemma~\ref{weak*_lsc_relent_lemma}, and (iii)~because the sequence $\left(D_\FF(\rho_n)\right)_{n\in \N}$ converges.

We now set out to prove claim~(c). The convexity of $\Sigma_\FF(\rho)$ follows immediately from that of $\FF$, once we remember that the relative entropy is a jointly convex function and hence in particular convex in its second argument (cf.~\eqref{joint_convexity}). To prove the trace norm compactness, pick an arbitrary sequence $(\sigma_n)_{n\in\N}$ with $\sigma_n\in \Sigma_\FF(\rho)$ for all $n$. Since $\sigma_n\in \widetilde{\FF}$, where $\widetilde{\FF}$ defined by~\eqref{F_tilde} is weak*-compact and hence sequentially weak*-compact by Remark~\ref{bounded_w*_metrisable_rem}, we can extract a weak*-converging sub-sequence $(\sigma_{n_k})_{k\in \N}$, so that $\sigma_{n_k}\tends{w*}{k\to\infty}\eta_*\in \widetilde{\FF}$. Now,
\bb
D_\FF(\rho) &\eqt{(i)} \lim_{k\to\infty} D\left(\rho \| \sigma_{n_k}\right) \\
&\geqt{(ii)} D\left(\rho\|\eta_{*}\right) \\
&\eqt{(iii)} D\left(\rho\|\sigma_*\right) + \Tr \eta_* - 1 - \ln \Tr \eta_* \\
&\geqt{(iv)} D\left(\rho\|\sigma_*\right) ,
\label{chain_inequalities}
\ee
where~(i) is because $\sigma_n\in \Sigma_\FF(\rho)$, (ii)~holds thanks to the lower semi-continuity of the relative entropy (Lemma~\ref{weak*_lsc_relent_lemma}), and in~(iii)--(iv) we introduced the state $\sigma_*\coloneqq (\Tr \eta_*)^{-1} \eta_* \in \FF$ and proceeded as in~\eqref{throw_away_normalisation}. Note that~(ii) ensures that $\Tr \eta_*>0$, otherwise we would have that $\eta_*=0$ and hence $D(\rho\|\eta_*)=+\infty$. Since $\sigma_*\in \FF$, the above inequality implies that $\sigma_*\in \Sigma_\FF(\rho)$ and moreover $\Tr \eta^*=1$, so that in fact $\eta^*=\sigma^*\in \Sigma_\FF(\rho)$. To conclude, note that since weak* and trace norm topology coincide on the set of density operators (cf.\ the discussion in Section~\ref{topologies_subsec}), from $\sigma_{n_k}\tends{w*}{k\to\infty}\eta_*=\sigma_*$ we infer that $\sigma_{n_k}\tends{tn}{k\to\infty}\sigma_*$ with respect to the (much stronger) trace norm topology. Therefore, $\Sigma_\FF(\rho)$ is trace norm compact.

Finally, assume as in~(d) that $\FF\subseteq \D(\HH)$ is convex and that $\rho>0$ is such that $D_\FF(\rho)<\infty$. To show that the minimiser is unique, assume that there exist $\sigma_1,\sigma_2\in \FF$ such that $D(\rho\|\sigma_1) = D(\rho\|\sigma_2) = D_\FF(\rho)<\infty$, and let us show that $\sigma_1=\sigma_2$. Note that the states $\sigma_1$ and $\sigma_2$ are also faithful, because a necessary condition in order for $D(\rho\|\sigma)$ to be finite is that $\supp\rho\subseteq \supp\sigma$, where $\supp$ denotes the support, and in this case $\supp\rho=\HH$. Set $\sigma_0\coloneqq \frac12 (\sigma_1+\sigma_2)$. Then the convexity of the relative entropy implies that $D_\FF(\rho)\leq D(\rho\|\sigma_0) \leq \frac12 \left( D(\rho\|\sigma_1) + D(\rho\|\sigma_2)\right) = D_\FF(\rho)$, and hence in particular
\bb
D\left(\rho\,\Big\|\,\frac{\sigma_1+\sigma_2}{2}\right) = \frac12 \left( D(\rho\|\sigma_1) + D(\rho\|\sigma_2)\right) .
\label{joint_convexity_saturation}
\ee
Since the relative entropy is strictly concave in each of the entries provided that they are faithful states~\cite{Donald-further, Petz-old}, this can happen only if $\sigma_1=\sigma_2$. To be more explicit, by a result of Petz~\cite[p.~130]{Petz-old} we can infer from~\eqref{joint_convexity_saturation} that $\sigma_1^{it} = \sigma_2^{it}$ holds for all $t\in \R$. By applying the spectral theorem, it is easy to see that for given $\ket{\psi}\in \HH$ and $\omega\in \D(\HH)$ we have that $\left|\braket{\psi|\omega^{it}|\psi}\right|=1$ if and only if $\ket{\psi}$ is an eigenvector of $\omega$. Thus, each eigenvector of $\sigma_1$ must be an eigenvector of $\sigma_2$ as well, implying that $\sigma_1$ and $\sigma_2$ can be diagonalised simultaneously. Since for $p,q>0$ it holds that $p^{it}=q^{it}$ for all $t\in R$ if and only if $p=q$, it must be that $\sigma_1=\sigma_2$, as claimed.
\end{proof}

\begin{rem}
We note in passing that the conditions for equality in the joint convexity of the relative entropy (generalising~\eqref{joint_convexity_saturation}) have been studied thoroughly in the specialised literature. See e.g.~\cite[Theorem~8]{Jencova2010} or~\cite[Corollary~5.3]{Hiai2011}.
\end{rem}

\begin{proof}[Proof of Theorem~\ref{w*_closed_condition_thm}]
Let us break the argument into several elementary steps.
\begin{enumerate}[(I)]
\item We start by showing that
\bb
\cone(\FF) = \bigcap_{n\in \N} \MM_n^{-1}\left(\cone(\FF)\right) .
\label{intersection_identity}
\ee
\begin{enumerate}
\item[(I.a)] First, observe that since $\MM_n\left( \cone(\FF)\right)\subseteq \cone (\FF)$ we have that 
\bb
\cone(\FF) \subseteq \MM_n^{-1}\left( \MM_n\left( \cone(\FF)\right) \right) \subseteq \MM_n^{-1}\left( \cone(\FF)\right) .
\ee
This holds for all $n\in \N$, hence we infer that $\cone(\FF) \subseteq \bigcap_{n\in \N} \MM_n^{-1}\left( \cone(\FF)\right)$.

\item[(I.b)] To show the converse inclusion, take a trace class $X\in \T(\HH)$, $X\neq 0$, such that $X\in \MM_n^{-1}\left(\cone(\FF)\right)$ for all $n\in \N$.
%Note that $X\geq 0$ is positive semi-definite, because (i)~$\braket{\psi|X|\psi} = \lim_{n\to\infty} \braket{\psi|\MM_n(X)|\psi}$ due to the fact that $\left\|M_n\ket{\psi}-\ket{\psi}\right\|\tends{}{n\to\infty} 0$; and (ii)~$\braket{\psi|\MM_n(X)|\psi}\geq 0$ for all $n$, as follows from the observation that $\MM_n(X)\in \cone(\FF)$ is positive semi-definite. Since $X\neq 0$, we immediately deduce that $\Tr X>0$.
An important observation to make now is that the sequence of operator norms $\left(\|M_n\|_\infty \right)_{n\in \N}$ is bounded, i.e.
\bb
\sup_{n\in \N} \|M_n\|_\infty \leq L < \infty\, .
\label{UBP_1}
\ee
This non-trivial fact follows from the uniform boundedness principle~\cite[Section~1.6.9]{MEGGINSON} combined with the observation that $\sup_{n\in \N} \left\|M_n \ket{\psi}\right\| < \infty$ for all fixed $\ket{\psi}\in \HH$ because $\left\|M_n \ket{\psi}\right\| \tends{}{n\to\infty} \left\|\ket{\psi}\right\|$.

An immediate consequence is that
\bb
\left\|\MM_n(X) - X\right\|_1 \tends{}{n\to\infty} 0\qquad \text{and}\qquad \Tr \MM_n(X) \tends{}{n\to\infty} \Tr X\, .
\label{convergence_MM_n}
\ee
To see why this is the case, start by observing that the second identity follows from the first upon taking the trace. To prove the first, then, pick an $\varepsilon>0$, and construct a finite-rank operator $X'$ such that $\|X'-X\|_1\leq \varepsilon$. Since $\left\|\MM_n(X')-X'\right\|_1\tends{}{n\to\infty} 0$ by strong operator convergence, noticing that
\bb
\left\|\MM_n(X) - X\right\|_1 &\leq \left\|\MM_n(X) - \MM_n(X')\right\|_1 + \left\|\MM_n(X') - X'\right\|_1 + \left\|X' - X\right\|_1 \\
&\leq \|M_n\|_\infty^2 \|X-X'\|_1 + \left\|\MM_n(X') - X'\right\|_1 + \left\|X' - X\right\|_1
\ee
we see that
\bb
\limsup_{n\to\infty} \left\|\MM_n(X) - X\right\|_1 \leq \left(L^2+1\right)\varepsilon
\ee
thanks to~\eqref{UBP_1}. Since $\varepsilon>0$ is arbitrary, it must be that $\left\|\MM_n(X) - X\right\|_1\tends{}{n\to\infty} 0$, as claimed. This proves~\eqref{convergence_MM_n}. Now, observe that $\MM_n(X)\in \cone(\FF)$ is positive semi-definite for all $n$; since the cone of positive semi-definite operators is trace norm closed, we deduce that also $X\geq 0$; since $X\neq 0$, it must be that $\Tr X>0$ and hence, by~\eqref{convergence_MM_n}, we have that $\Tr \MM_n(X)>0$ for sufficiently large $n$, too.

For $n$ large enough, define $\sigma_n \coloneqq \frac{\MM_n(X)}{\Tr \MM_n(X)}$ and $\sigma\coloneqq \frac{X}{\Tr X}$. Note that
\bb
\left\|\sigma_n - \sigma\right\|_1 &= \left\| \frac{\MM_n(X)}{\Tr \MM_n(X)} - \frac{X}{\Tr X} \right\|_1 \leq \frac{\left\|\MM_n(X) - X\right\|_1}{\left|\Tr \MM_n(X)\right|} + \|X\|_1 \left| \frac{1}{\Tr \MM_n(X)} - \frac{1}{\Tr X}\right| \tends{}{n\to\infty} 0\, .
\ee
Remembering that $\sigma_n\in \FF$ because $X\in \MM_n^{-1}\left(\cone(\FF)\right)$ and invoking the trace norm closedness of $\FF$, hypothesis~(ii), we infer that $\sigma\in \FF$ and therefore $X\in \cone (\FF)$, as claimed. This proves~\eqref{intersection_identity}.
\end{enumerate}

\item We now argue that for all $n\in \N$ the map $\MM_n$ is sequentially continuous with respect to the weak*-topology on the input space and the trace norm topology on the output. This means that for all sequences $(X_p)_{p\in \N}$ of trace class operators,
\bb
X_p\tends{w*}{p\to\infty} X\in \T(\HH)\qquad \Longrightarrow \qquad \MM_n(X_p)\tends{tn}{p\to\infty} \MM_n(X)\qquad \text{$\forall$ fixed $n\in \N$.}
\label{sequential_continuity}
\ee
To see why this is the case, notice first that
\bb
\|X\|_1\leq \sup_{p\in \N} \|X_p\|_1 \leq K < \infty\, .
\label{UBP_2}
\ee
The first inequality in~\eqref{UBP_2} follows from the relation $\|X\|_1\leq \liminf_{p\to\infty}\|X_p\|_1$, in turn a consequence of the general fact the dual norm is lower semi-continuous with respect to the weak*-topology~\cite[Theorem~2.6.14]{MEGGINSON}. The second inequality, instead, is proved in a similar manner to~\eqref{UBP_1}. Namely: (i)~the space of trace class operators is the dual to the Banach space of compact operators; and (ii)~$\sup_{p\in \N} \Tr X_p S <\infty$ for all compact $S$ because $\Tr X_p S\tends{}{p\to\infty} \Tr XS$, the uniform boundedness principle~\cite[Section~1.6.9]{MEGGINSON} immediately implies~\eqref{UBP_2}.

Now, for every $\varepsilon>0$ consider a finite-rank $\varepsilon$-approximation of $M_n$, i.e.\ an operator $M'_n$ with $\rk M'_n<\infty$ and $\|M_n-M'_n\|_\infty \leq \varepsilon$. For each $p\in \N$, we have that
\bb
\,\! &\left\|\MM_n(X_p) \!-\! \MM_n(X)\right\|_1 \\
&\qquad = \left\|M_n(X_p\!-\!X)M_n^\dag\right\|_1 \\
&\qquad \leq \|M_n\!-\!M'_n\|_\infty \left\|(X_p\!-\!X)M_n^\dag\right\|_1 + \left\|M'_n(X_p\!-\!X)M_n^\dag\right\|_1 \\
&\qquad \leq \|M_n\!-\!M'_n\|_\infty \left\|X_p\!-\!X\right\|_1 \|M_n\|_\infty + \|M'_n\|_\infty \|X_p\!-\!X\|_1 \left\|M_n\!-\!M'_n\right\|_\infty + \left\|M'_n(X_p\!-\!X)(M'_n)^\dag\right\|_1 \\
&\qquad \leq \|M_n\!-\!M'_n\|_\infty \left\|X_p\!-\!X\right\|_1 \|M_n\|_\infty + \left(\|M_n\|_\infty + \|M_n\!-\!M'_n\|_\infty \right) \|X_p\!-\!X\|_1 \left\|M_n\!-\!M'_n\right\|_\infty \\
&\hspace{8ex} + \left\|M'_n(X_p\!-\!X)(M'_n)^\dag\right\|_1 \\
&\qquad \leq 2K \varepsilon \|M_n\|_\infty + 2K\varepsilon \left(\|M_n\|_\infty + \varepsilon \right) + \left\|M'_n(X_p\!-\!X)(M'_n)^\dag\right\|_1\, .
\label{long_chain}
\ee
Since $M'_n = \sum_{j=1}^N \ketbraa{v_j}{w_j}$ is of finite rank and $X_p\tends{w*}{p\to\infty} X$, we have that
\bb
\left\|M'_n(X_p\!-\!X)(M'_n)^\dag\right\|_1 \leq \sum_{j,k=1}^N \|\ket{v_j}\|\|\ket{v_k}\| \left|\braket{w_j|(X_p\!-\!X)|w_k}\right| \tends{}{p\to\infty} 0\, .
\ee
Thus, from~\eqref{long_chain} we infer that
\bb
\limsup_{p\to\infty} \left\|\MM_n(X_p) \!-\! \MM_n(X)\right\|_1 \leq 2K \varepsilon \|M_n\|_\infty + 2K\varepsilon \left(\|M_n\|_\infty + \varepsilon \right) .
\ee
Given that $K<\infty$ and $\|M_n\|_\infty < \infty$ are fixed constants and that $\varepsilon>0$ is arbitrary, we conclude that in fact $\left\|\MM_n(X_p) - \MM_n(X)\right\|_1 \tends{}{p\to\infty} 0$, finally proving~\eqref{sequential_continuity}.

\item The last preliminary observation we need to make is that $\cone(\FF)$ is trace norm closed thanks to hypothesis~(ii). In fact, if for a sequence $(X_p)_{p\in \N}$ in $\cone(\FF)$ we have that $X_p\tends{tn}{p\to\infty} X\in \T(\HH)$, then either $X=0$, hence there is nothing to prove, or else $\Tr X>0$. In this latter case, since $\Tr X_p\tends{}{p\to\infty} \Tr X$ we have that $X_p/\Tr X_p \tends{tn}{p\to\infty} X/\Tr X$. Since the operators on the left-hand side (well defined for sufficiently large $p$) belong to $\FF$ and this is trace norm closed, it must be that also $X/\Tr X\in \FF$, i.e.\ $X\in \cone(\FF)$, as claimed.

\item We now conclude the argument by proving that $\cone (\FF)$ is weak*-closed. Thanks to hypothesis~(i), $\cone (\FF)$ is convex; also, we have seen in Section~\ref{topologies_subsec} that the Banach space $\K(\HH)$ of compact operators on a separable Hilbert space $\HH$ is itself separable.
%To see why, it suffices to observe that the set of finite-rank operators $\sum_{i=1}^N \ketbraa{v_i}{w_i}$ with each $\ket{v_i},\ket{w_i}\in \HH$ having a finite expansion in a fixed orthonormal basis of $\HH$ and rational coefficients is dense with respect to the operator norm topology.
In this situation, an immediate corollary of the Krein--\v{S}mulian theorem~\cite[Corollary~2.7.13]{MEGGINSON} ensures that $\cone (\FF)$ is weak*-closed if and only if it is weak*-sequentially-closed, i.e.\ if and only if for every sequence $(X_p)_{p\in N}$ in $\cone(\FF)$, $X_p\tends{w*}{p\to\infty} X\in \T(\HH)$ implies that also $X\in \cone(\FF)$. Using~\eqref{intersection_identity}, it suffices to show that $\MM_n(X)\in \cone(\FF)$ for all $n\in \N$. This follows straightforwardly by combining the weak*-to-trace-norm sequential continuity of $\MM_n$, proved in point~(II) above, and the trace norm closedness of $\cone(\FF)$, proved in~(III).
\end{enumerate}
\end{proof}

We now set out to prove our final main result, Theorem~\ref{variational_thm}. The key idea of the proof is contained in Lemma~\ref{variational_technical_lemma} below, which essentially tackles the simpler case where the relative entropy of resource is finite. Before presenting it, we will make sure that a result by Petz~\cite[Corollary~2]{Petz1988} holds in a slightly more general sense than originally stated.

\begin{lemma} \label{Petz_amended_lemma}
Let $\xi\in \T_+(\HH)$ be a positive semi-definite trace class operator, and let $X=X^\dag \in \B(\HH)$ be bounded. Then
\bb
\ln \Tr e^{\ln \xi + X} = \sup_{\omega\in \D(\HH)} \left\{ \Tr \omega X - D(\omega\|\xi) + \Tr \xi -1 \right\} ,
\label{Petz_corollary}
\ee
where the left-hand side is interpreted as in~\eqref{interpretation} when $\xi$ is not strictly positive definite. In particular, for a fixed (but arbitrary) $X$ as above the function
\bb
\T_+(\HH)\ni \xi \longmapsto \ln \Tr e^{\ln \xi + X}
\label{trace_perturbed_function}
\ee
is concave and monotonically non-decreasing.
\end{lemma}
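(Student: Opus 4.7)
The plan is to bootstrap from Petz's original variational identity for faithful normalised states to the general claim in three steps: first reduce to the faithful case via the support projection, then handle un-normalised $\xi$ by a scalar calculation, and finally obtain concavity and monotonicity as bookkeeping corollaries of the resulting identity.

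\textbf{Step 1 (reduction to a faithful $\xi$).} If $\supp \omega \not\subseteq \supp \xi$ then $D(\omega\|\xi)=+\infty$ by Umegaki's formula~\eqref{Umegaki}, so such $\omega$ contribute $-\infty$ and may be discarded from the supremum in~\eqref{Petz_corollary}. I may therefore restrict attention to $\omega = P^\dag \omega' P$ with $\omega'\in \D(\supp \xi)$, where $P:\HH\to \supp(\xi)$ is the orthogonal projection onto the support of $\xi$. Setting $\tilde\xi \coloneqq P\xi P^\dag$ and $\tilde X\coloneqq PXP^\dag$, a direct inspection of~\eqref{Umegaki} gives $D(\omega\|\xi)=D(\omega'\|\tilde\xi)$, and clearly $\Tr \omega X = \Tr \omega'\tilde X$ and $\Tr \xi =\Tr \tilde\xi$. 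By the convention~\eqref{interpretation}, also the left-hand side reduces to $\ln \Tr e^{\ln\tilde\xi + \tilde X}$ computed entirely on $\supp \xi$, on which $\tilde\xi$ is strictly positive. The identity is thus reduced to the faithful case.

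\textbf{Step 2 (un-normalised faithful case).} Write $\xi = \lambda \sigma$ with $\lambda \coloneqq \Tr \xi >0$ and $\sigma\in \D(\HH)$ faithful. The operator identity $\ln(\lambda \sigma)=\ln \sigma + (\ln \lambda)\id$ gives $\ln \Tr e^{\ln \xi + X} = \ln\lambda + \ln \Tr e^{\ln \sigma + X}$. On the other side, a line of calculation from~\eqref{Umegaki} yields $D(\omega\|\lambda \sigma) = D(\omega\|\sigma) - \ln\lambda + \lambda -1$ for every $\omega\in \D(\HH)$, so the right-hand side of~\eqref{Petz_corollary} becomes $\ln\lambda + \sup_{\omega\in \D(\HH)}\{\Tr \omega X - D(\omega\|\sigma)\}$. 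The claim now follows from Petz's original variational formula~\cite[Corollary~2]{Petz1988} applied to the faithful state $\sigma$.

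\textbf{Step 3 (concavity and monotonicity).} Both follow from~\eqref{Petz_corollary}. For concavity, let $\xi_t\coloneqq t\xi_1 + (1-t)\xi_2$, pick arbitrary $\omega_1,\omega_2\in \D(\HH)$ and set $\omega_t\coloneqq t\omega_1+(1-t)\omega_2$. Combining the lower bound $\ln \Tr e^{\ln \xi_t + X}\geq \Tr \omega_t X - D(\omega_t\|\xi_t) + \Tr \xi_t -1$ with the joint convexity~\eqref{joint_convexity} of the relative entropy, I obtain
\bb
\ln \Tr e^{\ln \xi_t+X} \geq t\bigl[\Tr \omega_1 X - D(\omega_1\|\xi_1) + \Tr \xi_1 -1\bigr] + (1-t)\bigl[\Tr \omega_2 X - D(\omega_2\|\xi_2) + \Tr \xi_2 -1\bigr] ,
\ee
and taking suprema in $\omega_1$ and $\omega_2$ \emph{independently} yields $\ln \Tr e^{\ln \xi_t + X} \geq t \ln \Tr e^{\ln \xi_1 + X} + (1-t)\ln \Tr e^{\ln \xi_2 + X}$. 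For monotonicity, note that $\xi\leq \xi'$ forces $\supp \xi\subseteq \supp \xi'$; then the elementary identity $\Tr \omega X - D(\omega\|\xi) + \Tr \xi -1 = \Tr \omega X + S(\omega) + \Tr \omega \ln \xi$ (valid on the effective domain $\supp \omega\subseteq \supp \xi$, $S(\omega)<\infty$) together with operator monotonicity of $\ln$ gives $f(\omega,\xi)\leq f(\omega,\xi')$ pointwise in $\omega$, and hence the same inequality after taking the supremum.

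\textbf{Main obstacle.} The only delicate point is the bookkeeping of Step~1: one must verify that the interpretation~\eqref{interpretation} of $e^{\ln \xi + X}$ is compatible with the sup being restricted to $\omega$ supported in $\supp \xi$, so that both sides of~\eqref{Petz_corollary} reduce cleanly to the same quantity on the compressed Hilbert space $\supp \xi$. Once this reduction is in place, Steps~2 and~3 are just a scalar calculation and a standard variational argument.
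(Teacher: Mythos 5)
Your route is genuinely different from the paper's. The paper never reduces to the faithful case: it proves the inequality $\ln \Tr e^{\ln \xi + X} \geq \Tr \omega X - D(\omega\|\xi) + \Tr \xi - 1$ for \emph{arbitrary} $\omega\in\D(\HH)$ and $\xi\in\T_+(\HH)$ in one stroke, via the chain~\eqref{intermediate_inequality_variational_technical}: $D\bigl(1\big\|\Tr e^{\ln\xi+X}\bigr) \leq D\bigl(\omega\big\|e^{\ln\xi+X}\bigr)$ by data processing, followed by Araki's identity (which holds without faithfulness), and then observes saturation at $\omega = e^{\ln\xi+X}/\Tr e^{\ln\xi+X}$. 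Your Steps~1 and~2 (compression to $\supp\xi$, then scaling out $\Tr\xi$) are correct and clean bookkeeping, and your Step~3 concavity argument coincides with the paper's.

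The gap is in Step~2, and it sits exactly at the point the lemma is designed to address. Petz's~\cite[Corollary~2]{Petz1988}, as the paper's own Remark emphasises, carries faithfulness assumptions on \emph{both} arguments; in particular the supremum there runs over faithful $\omega$ only (this is the dual of~\eqref{Petz variational 1}, which the paper explicitly says it ``cannot use directly''). After your reduction you have a faithful $\sigma$, but you still need the inequality $\ln\Tr e^{\ln\sigma+X} \geq \Tr\omega X - D(\omega\|\sigma)$ for every $\omega\in\D(\HH)$, including non-faithful $\omega$ with $D(\omega\|\sigma)<\infty$ --- and that is precisely the ``easy direction without faithfulness'' that the paper re-derives via data processing and Araki's identity. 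Citing Corollary~2 here quietly assumes the generalisation you are proving. The hole is fixable: e.g.\ given non-faithful $\omega$, set $\omega_\epsilon \coloneqq (1-\epsilon)\omega + \epsilon\,\omega^*$ with $\omega^* = e^{\ln\sigma+X}/\Tr e^{\ln\sigma+X}$ faithful, use convexity of $D(\cdot\|\sigma)$ to get $f(\omega_\epsilon)\geq (1-\epsilon)f(\omega)+\epsilon f(\omega^*)$ and let $\epsilon\to 0$ --- but some such argument must be supplied. A second, smaller issue: in your monotonicity step the identity $\Tr\omega X - D(\omega\|\xi)+\Tr\xi-1 = \Tr\omega X + S(\omega)+\Tr\omega\ln\xi$ is an $\infty-\infty$ expression when $S(\omega)=\infty$ even though $D(\omega\|\xi)$ may be finite, so the pointwise comparison does not cover all $\omega$; the paper circumvents this by comparing $\bigl\|\sqrt{-\ln\xi_1}\ket{e_i}\bigr\|^2$ with $\bigl\|\sqrt{-\ln\xi_2}\ket{e_i}\bigr\|^2$ directly via~\eqref{Umegaki_grown-ups} after normalising so that $\xi_2\leq\id$.
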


\begin{rem}
Equation~\ref{Petz_corollary} is essentially~\cite[Corollary~2]{Petz1988} but without any faithfulness assumption on either $\omega$ or $\xi$.
\end{rem}

%\begin{rem}
%From~\eqref{intermediate_inequality_variational_technical} we see that for any state $\omega\in \D(\HH)$ and positive semi-definite trace class $0\leq \xi\in\T(\HH)$ it holds that $\ln \Tr e^{\ln \xi + X} \geq \Tr \omega X - D(\omega\|\xi) + \Tr \xi -1$. Since by Araki's identity (see~\cite[Theorem~3.10]{Araki-II} and~\cite[Corollary~12.8]{PETZ-ENTROPY}) this inequality is saturated when $\omega=\frac{e^{\ln \xi +X}}{\Tr e^{\ln \xi+X}}$, we see that in fact
%\bb \ln \Tr e^{\ln \xi + X} = \sup_{\omega\in \D(\HH)} \left\{ \Tr \omega X - D(\omega\|\xi) + \Tr \xi -1 \right\} . \label{Petz_corollary} \ee
%This is just~\cite[Corollary~2]{Petz1988}, but without any faithfulness assumption on $\xi$.
%\end{rem}

\begin{proof}[Proof of Lemma~\ref{Petz_amended_lemma}]
The famous variational formula due to Petz~\cite{Petz1988}, re-adapted to our notation, states that
\bb
D(\omega\|\xi) &= \sup_{X = X^\dag \in \B(\HH)} \left\{ \Tr \omega X - \ln \Tr e^{\ln \xi + X} + \Tr \xi -1 \right\}
\label{Petz variational 1}
\ee
whenever the state $\omega\in \D(\HH)$ and the operator $\xi\in \T_+(\HH)$ are both faithful, i.e.\ $\omega,\xi>0$. These further assumptions, which will turn out to be superfluous, mean that we cannot use~\eqref{Petz variational 1} directly in our proof. Fortunately, here we need only the `easy' inequality contained in~\eqref{Petz variational 1}, established by Petz in~\cite[Proposition~1]{Petz1988}. Let us make sure that this holds irrespectively of any faithfulness assumption, by writing
\bb
-\ln \Tr e^{\ln \xi + X} + \Tr e^{\ln \xi + X} - 1 &\eqt{(i)} D\left( 1 \big\| \Tr e^{\ln \xi + X} \right) \\
&\leqt{(ii)} D\left( \omega\, \big\|\, e^{\ln \xi + X} \right) \\
&\eqt{(iii)} D \left( \omega\| \xi\right) - \Tr \omega X + \Tr e^{\ln \xi + X} - \Tr \xi\, .
\label{intermediate_inequality_variational_technical}
\ee
Here, (i)~is just~\eqref{Umegaki} evaluated between two numbers, (ii)~is the data processing inequality for the relative entropy~\cite[Proposition~5.1(ii)]{PETZ-ENTROPY}, and finally (iii)~is Araki's identity~\cite[Theorem~3.10]{Araki-II} (see also~\cite[Corollary~12.8]{PETZ-ENTROPY} for how to remove the faithfulness assumption). Note that the equality in~(iii) is trivially true (as $+\infty=+\infty$) also when $\supp(\omega) \not\subseteq \supp(\xi) = \supp\left(e^{\ln \xi +X}\right)$. Now, from the above inequality we deduce that
\bb
\ln \Tr e^{\ln \xi + X} \geq \Tr \omega X - D \left( \omega\| \xi\right) + \Tr \xi - 1\, .
\label{Petz_corollary_eq1}
\ee
This proves that the left-hand side of~\eqref{Petz_corollary} is no smaller than the right-hand side. To establish the converse statement, one can observe that the inequality in~(ii) of~\eqref{intermediate_inequality_variational_technical}, and hence that in~\eqref{Petz_corollary_eq1}, is saturated for the choice $\omega=\frac{e^{\ln \xi +X}}{\Tr e^{\ln \xi+X}}$, simply because $D(\omega \| c\,\omega) = D(1\|c)$ holds for all $\omega\in \D(\HH)$ and $c>0$. This concludes the proof of~\eqref{Petz_corollary}.
%either note that by the second Araki's identity (see~\cite[Theorem~3.10]{Araki-II} and~\cite[Corollary~12.8]{PETZ-ENTROPY}), the inequality in~\eqref{Petz_corollary_eq1} is saturated when $\omega=\frac{e^{\ln \xi +X}}{\Tr e^{\ln \xi+X}}$.

We now prove that~\eqref{trace_perturbed_function} is monotonically non-decreasing. This will follows from~\eqref{Petz_corollary} once we show that for each $\omega\in \D(\HH)$ the function $\varphi_\omega:\T_+(\HH)\to \R \cup \{-\infty\}$ defined by $\varphi_\omega(\xi) \coloneqq \Tr \xi - D(\omega\|\xi)$ is non-decreasing. To prove this latter claim, pick $\xi_1\leq \xi_2$; we can take $\xi_2\leq \id$ without loss of generality, as otherwise we write $\xi_i = a \xi'_i$ with $a>0$ and $\xi'_i\leq \id$, and assuming that $\varphi_\omega(\xi'_1)\leq \varphi_\omega(\xi'_2)$ we deduce that
\bb
\varphi_\omega(\xi_1) = \varphi_\omega(a\xi'_1) = \varphi_\omega(\xi'_1) + \ln a\leq \varphi_\omega(\xi'_2) + \ln a = \varphi_\omega(a\xi'_2) = \varphi_\omega(\xi_2)\, .
\ee

When $\xi_1\leq \xi_2\leq \id$, denoting with $\omega = \sum_i \mu_i \ketbra{e_i}$ a spectral decomposition of $\omega$ one can use~\eqref{Umegaki_grown-ups} to write
\bb
\varphi_\omega(\xi_1) = 1 - \sum_i \left( \mu_i \ln \mu_i + \mu_i \big\| \sqrt{-\ln \xi_1} \ket{e_i}\big\|^2 \right) \leq 1 - \sum_i \left( \mu_i \ln \mu_i + \mu_i \big\| \sqrt{-\ln \xi_2} \ket{e_i}\big\|^2 \right) = \varphi_\omega(\xi_2)\, ,
\ee
where the inequality comes from the operator monotonicity of the logarithm (in the sense of Schm\"{u}dgen~\cite[Exercise~10.8.7.c, p.~249]{SCHMUEDGEN}, see also~\cite[Exercise~51, Ch.~VIII, p.~317]{REED}).

Finally, the concavity of the function in~\eqref{trace_perturbed_function} descends once again from~\eqref{Petz_corollary} and from the generally valid fact that the supremum over a convex set of a jointly concave function (see~\eqref{joint_convexity}) is itself concave.
\end{proof}

\begin{lemma} \label{variational_technical_lemma}
Let $\CC\subseteq \T_+(\HH)\cap B_1$ be a convex and weak*-compact subset of positive semidefinite trace class operators with trace at most $1$. Let $\rho\in \D(\HH)$ be a state with finite entropy $S(\rho)<\infty$ and such that $D_\CC(\rho) \coloneqq \inf_{\xi\in \CC} D(\rho\|\xi)<\infty$. Then
\bb
D_\CC(\rho) = \sup_{X=X^\dag \in \B(\HH)} \left\{ \Tr \rho X - \sup_{\xi\in \CC} \left\{ \ln \Tr e^{\ln \xi + X} - \Tr \xi +1 \right\} \right\} ,
\label{variational_technical}
\ee
where again we use~\eqref{interpretation} to interpret the right-hand side when $\xi$ is not strictly positive definite.
%for non-faithful operators $\xi \not> 0$ the expression is interpreted as in~\eqref{interpretation}.
\end{lemma}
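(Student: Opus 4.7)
The plan is to prove the two inequalities of~\eqref{variational_technical} separately. The ``$\geq$'' direction is immediate from the easy Petz-type inequality~\eqref{Petz_corollary_eq1} in the proof of Lemma~\ref{Petz_amended_lemma}: for every $X=X^\dag\in\Bsa(\HH)$ and every $\xi\in\T_+(\HH)$,
\[
D(\rho\|\xi)\;\geq\;\Tr\rho X - g(\xi,X),\qquad g(\xi,X)\coloneqq \ln\Tr e^{\ln\xi+X}-\Tr\xi+1.
\]
Taking $\inf_{\xi\in\CC}$ on the left and $\sup_X$ on the right gives $D_\CC(\rho)\geq \sup_X\{\Tr\rho X-\tilde g(X)\}$, with $\tilde g(X)\coloneqq\sup_{\xi\in\CC}g(\xi,X)$.

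For the substantive ``$\leq$'' direction, the plan is to exhibit an (almost) saturating $X$. By weak*-compactness of $\CC$ and weak*-lower semicontinuity of $D(\rho\|\cdot)$ (Lemma~\ref{weak*_lsc_relent_lemma}), the infimum defining $D_\CC(\rho)$ is attained at some $\eta_*\in\CC$ with $D(\rho\|\eta_*)=D_\CC(\rho)<\infty$, and in particular $\supp\rho\subseteq\supp\eta_*$. The natural candidate, which is already optimal in the finite-dimensional commuting case, is $X_*\coloneqq \ln\rho-\ln\eta_*$, interpreted on $\supp\eta_*$. Pretending momentarily that $X_*$ is bounded, a direct computation using $S(\rho)<\infty$ and $\Tr\rho=1$ gives $\Tr\rho X_*=D_\CC(\rho)+1-\Tr\eta_*$, while Lemma~\ref{Petz_amended_lemma} applied at $\xi=\eta_*$, $\omega=\rho$ yields the lower bound $g(\eta_*,X_*)\geq \Tr\rho X_*-D(\rho\|\eta_*)=1-\Tr\eta_*$. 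A fortiori, $\tilde g(X_*)\geq 1-\Tr\eta_*$.

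The matching upper bound $\tilde g(X_*)\leq 1-\Tr\eta_*$ comes from combining two ingredients. The first is the infinite-dimensional extension of Lieb's three-matrix inequality (proved in Appendix~\ref{inequality_app} from~\cite{Sutter2017,Junge2021,Hollands2021}), which provides the linearisation $\Tr e^{\ln\xi+X_*}\leq \Tr[\xi\,R_{\eta_*}(\rho)]$ with $R_{\eta_*}(\rho)\coloneqq\int_0^\infty (\eta_*+s)^{-1}\rho(\eta_*+s)^{-1}\,ds$. The second is the first-order optimality of $\eta_*$ on the convex set $\CC$: differentiating $t\mapsto D(\rho\|(1-t)\eta_*+t\xi)$ at $t=0^+$ via the integral representation of the Fr\'{e}chet derivative of $\ln$, and using the identity $\Tr R_{\eta_*}(\rho)\eta_*=\Tr\rho=1$, one gets $\Tr[\xi\,R_{\eta_*}(\rho)]\leq \Tr\xi+1-\Tr\eta_*$ for every $\xi\in\CC$. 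Combining these with the scalar inequality $\ln(1+u)\leq u$ yields $g(\xi,X_*)\leq 1-\Tr\eta_*$ uniformly in $\xi\in\CC$. Hence $\tilde g(X_*)=1-\Tr\eta_*$ and $\Tr\rho X_*-\tilde g(X_*)=D_\CC(\rho)$, saturating the supremum.

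The principal obstacle is that $X_*=\ln\rho-\ln\eta_*$ need not be bounded, so the above computation has to be implemented along a sequence of bounded approximations $X_n$ obtained by spectral truncation (e.g.\ restricting to the common spectral subspace on which $\rho$ and $\eta_*$ have eigenvalues in $[1/n,n]$). The remaining task is to verify that $\Tr\rho X_n\to D_\CC(\rho)+1-\Tr\eta_*$ via $S(\rho)<\infty$ and dominated convergence on the spectral integrals, and that the linearising Lieb bound together with the first-order optimality condition persist through the truncation with a vanishing error, so that $\tilde g(X_n)\to 1-\Tr\eta_*$. Establishing the infinite-dimensional Lieb three-matrix inequality itself is the hardest technical step and is carried out separately in the appendix.
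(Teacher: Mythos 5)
Your proposal follows the same route as the paper's proof in every essential respect: the easy inequality via the Petz-type bound~\eqref{Petz_corollary_eq1} and a min--max swap, and the hard inequality via the ansatz $X\approx\ln\rho-\ln\eta_*$ at a minimiser $\eta_*$ (which exists by weak*-compactness and Lemma~\ref{weak*_lsc_relent_lemma}), linearised by the infinite-dimensional Lieb three-operator inequality and closed by the first-order optimality condition $\Tr\rho\,\Gamma_{\eta_*}(\xi)\leq 1+\Tr\xi-\Tr\eta_*$ together with the scalar bound $\ln(1+u)\leq u$. The algebra in your ``bounded $X_*$'' computation is correct and coincides with the paper's chain of estimates.

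The one place where your argument has a genuine gap is the regularisation. You propose to truncate $X_*=\ln\rho-\ln\eta_*$ by ``restricting to the common spectral subspace on which $\rho$ and $\eta_*$ have eigenvalues in $[1/n,n]$''. Since $\rho$ and $\eta_*$ need not commute, there is no such common spectral subspace; the intersection of the two spectral subspaces can behave badly (it can even be trivial), and the compression of $\ln\rho-\ln\eta_*$ to any candidate subspace is no longer of the form $\ln A-\ln B$, so Lieb's inequality cannot be applied to the truncated operator directly. The paper sidesteps this by choosing $X_\epsilon=\ln(\rho+\epsilon^2\id)-\ln(\xi+\epsilon\id)+(\Tr\xi-1)\id$, which is bounded for every $\epsilon>0$ without any truncation. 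The asymmetric scaling ($\epsilon^2$ in the first argument, $\epsilon$ in the second) is not cosmetic: after Lieb's inequality one picks up the error term $\epsilon^2\Tr\!\left[(\xi+\epsilon\id)^{-1}\xi'\right]\leq\epsilon\Tr\xi'$, which vanishes uniformly over $\CC$, and the monotonicity $\Tr\rho\,\Gamma_{\xi+\epsilon\id}(\xi')\leq\Tr\rho\,\Gamma_{\xi}(\xi')$ allows the optimality condition to be invoked in the limit $\epsilon\to 0^+$. Your plan of verifying that ``the Lieb bound and the optimality condition persist through the truncation with a vanishing error'' is precisely the step that does not obviously go through for a spectral truncation; with the additive regularisation it does, and that is the substance of steps (iv)--(x) in the paper's proof of~\eqref{variational_technical_key}.
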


\begin{proof}
By the same reasoning we encountered in the proof of Theorem~\ref{achievable_relent_thm}, since thanks to Lemma~\ref{weak*_lsc_relent_lemma} the relative entropy is weak* lower semi-continuous in both arguments, and hence in particular with respect to the second, we can find $0\neq \xi\in \CC$ such that $D_\CC(\rho) = D(\rho\|\xi)$. Now, consider an arbitrary $\xi'\in \CC$ and some $\lambda\in [0,1]$. By convexity of $\CC$, we have that $(1-\lambda)\xi+\lambda\xi'\in \CC$, so that
\bb
0 \leq D\left( \rho\, \big\|\, (1-\lambda)\xi+\lambda\xi'\right) - D(\rho\|\xi) = - \Tr \rho \ln \left( (1-\lambda)\xi+\lambda\xi'\right) + \Tr \rho \ln \xi + \lambda\Tr \left[ \xi'- \xi\right] ,
\label{inequality_lambda}
\ee
where the last equality follows from~\eqref{Umegaki_splitting}. Note also that the expression at the rightmost side is well defined and finite, because $-\Tr \rho \ln \xi = D(\rho\|\xi) + S(\rho) + 1 - \Tr \xi <\infty$ and moreover
\bb
- \Tr \rho \ln \left( (1-\lambda)\xi+\lambda\xi'\right) \leq - \Tr \rho \ln \left( (1-\lambda)\xi \right) = -\Tr \rho \ln \xi - \ln(1-\lambda) <\infty
\label{boundedness_for_all_lambda}
\ee
thanks to the operator monotonicity of the logarithm.

We can now divide both sides of~\eqref{inequality_lambda} by $\lambda$ and take the limit $\lambda\to 0^+$. To carry out this computation we use a well-known representation of the Fr\'echet differential of the operator logarithm reported (without proof) e.g.\ in~\cite[Lemma~3.4]{Sutter2017}, obtaining that
\bb
0\leq \Tr \rho\, \Gamma_\xi (\xi - \xi') + \Tr\left[\xi'- \xi\right] = 1 - \Tr \rho\, \Gamma_\xi (\xi') + \Tr\left[\xi'- \xi\right] , \label{inequality_differentiated}
\ee
where
\bb
\Tr \rho\, \Gamma_\xi(X) \coloneqq&\ \int_0^{+\infty} ds\, \Tr \rho\, \frac{1}{\xi+s\id}\, X\, \frac{1}{\xi+s\id}\, , \label{Gamma_A}
\ee
and the integral on the right-hand side converges absolutely for $X=\xi'-\xi$ and $X=\xi'$. Since we were not able to deduce from the existing literature a completely rigorous proof of~\eqref{inequality_differentiated} and~\eqref{Gamma_A} that works in the infinite-dimensional case as well, we provide one in Appendix~\ref{app_differential_log}.

We now leave~\eqref{inequality_differentiated} for a moment, and return to the definition of $D_\CC$. 
%A famous result by Petz~\cite{Petz1988}, re-adapted to our notation, states that
%\bb D(\rho\|\xi') &= \sup_{X = X^\dag \in \B(\HH)} \left\{ \Tr \rho X - \ln \Tr e^{\ln \xi' + X} + \Tr \xi' -1 \right\} \label{Petz variational 1} \ee
%whenever the state $\rho\in \D(\HH)$ and the operator $\xi'\in \T_+(\HH)$ are both faithful, i.e.\ $\rho,\xi'>0$. These further assumptions, which will turn out to be superfluous, mean that we cannot resort to~\eqref{Petz variational 1} directly. Fortunately, we now need only the `easy' inequality contained in~\eqref{Petz variational 1}, established by Petz in~\cite[Proposition~1]{Petz1988}. Let us make sure that this holds irrespectively of any faithfulness assumption, by writing
%\bb -\ln \Tr e^{\ln \xi' + X} + \Tr e^{\ln \xi' + X} - 1 &\eqt{(i)} D\left( 1 \big\| \Tr e^{\ln \xi' + X} \right) \\
%&\leqt{(ii)} D\left( \rho\, \big\|\, e^{\ln \xi' + X} \right) \\
%&\eqt{(iii)} D \left( \rho\| \xi'\right) - \Tr \rho X + \Tr e^{\ln \xi' + X} - \Tr \xi'\, .
%\label{intermediate_inequality_variational_technical} \ee
%Here, (i)~is just~\eqref{Umegaki} evaluated between two numbers, (ii)~is the data processing inequality for the relative entropy~\cite[Proposition~5.1(ii)]{PETZ-ENTROPY}, and finally (iii)~is Araki's identity~\cite[Theorem~3.10]{Araki-II} (see also~\cite[Corollary~12.8]{PETZ-ENTROPY} for how to remove the faithfulness assumption). From the above inequality we deduce that
%\bb D \left( \rho\| \xi'\right) \geq \sup_{X=X^\dag \in \B(\HH)} \left\{ \Tr \rho X - \ln \Tr e^{\ln \xi' + X} + \Tr \xi' - 1 \right\} . \ee
By making the ansatz $\omega=\rho$ in~\ref{Petz_corollary}, for every bounded $X=X^\dag \in \B(\HH)$ and every $\xi'\in \CC$ we deduce that
\bb
D \left( \rho\| \xi'\right) \geq \Tr \rho X - \ln \Tr e^{\ln \xi' + X} + \Tr \xi' - 1\, ,
\ee
so that naturally
\bb
D_\CC(\rho) = \inf_{\xi'\in \CC} D(\rho\|\xi') &\geq \inf_{\xi'\in \CC} \sup_{X=X^\dag \in \B(\HH)} \left\{ \Tr \rho X - \ln \Tr e^{\ln \xi' + X} + \Tr \xi' - 1\right\} \\
&\geq \sup_{X=X^\dag \in \B(\HH)} \inf_{\xi'\in \CC} \left\{ \Tr \rho X - \ln \Tr e^{\ln \xi' + X} + \Tr \xi'- 1\right\} \\
&= \sup_{X=X^\dag \in \B(\HH)} \left\{ \Tr \rho X - \sup_{\xi'\in \CC} \left\{ \ln \Tr e^{\ln \xi' + X} - \Tr \xi' +1 \right\} \right\} ,
\label{first_inequality_variational_technical}
\ee
where in the second line we have used the elementary fact that $\inf_{a\in A} \sup_{b\in B} f(a,b) \geq \sup_{b\in B} \inf_{a\in A} f(a,b)$ holds for an arbitrary function $f:A\times B\to \R$ on any product set $A\times B$. This proves the first inequality needed to establish~\eqref{variational_technical}.

%Strictly speaking, Petz's result~\cite[Theorem~9]{Petz1988} is only stated for the case where also the first state, $\rho$, is faithful. However, a careful inspection of the proof reveals that this assumption is not necessary. Indeed,~\cite[Proposition~1]{Petz1988} uses the faithfulness of $\rho$ merely to deduce the `only if' equality condition, while~\cite[Theorem~3.10]{Araki-II} rests only on the faithfulness of the second operator, $\eta$.

As for the second, consider that
\bb
&\sup_{X=X^\dag \in \B(\HH)} \left\{ \Tr \rho X - \sup_{\xi'\in \CC} \left\{ \ln \Tr e^{\ln \xi' + X} - \Tr \xi' + 1 \right\} \right\} \\
&\hspace{6ex} \geqt{(iv)} \limsup_{\epsilon\to 0^+} \bigg( \Tr \rho \left( \ln (\rho + \epsilon^2\id) - \ln (\xi+\epsilon \id) + \left( \Tr \xi - 1 \right) \id\right) \\
&\hspace{18ex} - \sup_{\xi'\in \CC} \left\{ \ln \Tr e^{\ln \xi' + \ln (\rho + \epsilon^2\id) - \ln (\xi+\epsilon \id) + \left( \Tr \xi - 1 \right) \id} - \Tr \xi' +1 \right\} \bigg) \\
&\hspace{6ex} \eqt{(v)} D(\rho\|\xi) - \liminf_{\epsilon\to 0^+} \sup_{\xi'\in \CC} \left\{ \ln \Tr e^{\ln \xi' + \ln (\rho + \epsilon^2\id) - \ln (\xi+\epsilon \id)} +\Tr [\xi-\xi'] \right\} \\
&\hspace{6ex} \geqt{(vi)} D(\rho\|\xi) - \liminf_{\epsilon\to 0^+} \sup_{\xi'\in \CC} \left\{ \ln \Tr \left[(\rho + \epsilon^2\id)\, \Gamma_{\xi+\epsilon\id}(\xi')\right] +\Tr [\xi-\xi'] \right\} \\
&\hspace{6ex}\eqt{(vii)} D(\rho\|\xi) - \liminf_{\epsilon\to 0^+} \sup_{\xi'\in \CC} \left\{ \ln \left(\Tr \left[\rho\, \Gamma_{\xi+\epsilon\id}(\xi')\right] + \epsilon^2 \Tr \left[ \frac{1}{\xi+\epsilon\id}\, \xi'\right] \right) + \Tr [\xi-\xi'] \right\} \\
&\hspace{6ex} \geqt{(viii)} D(\rho\|\xi) - \liminf_{\epsilon\to 0^+} \sup_{\xi'\in \CC} \left\{ \ln \big(\Tr \left[\rho\, \Gamma_\xi (\xi')\right] + \epsilon \Tr \xi' \big) +\Tr [\xi-\xi'] \right\} \\
&\hspace{6ex} \geqt{(ix)} D(\rho\|\xi) - \liminf_{\epsilon\to 0^+} \sup_{\xi'\in \CC} \left\{ \ln \big(1 + (1+\epsilon) \Tr \xi' - \Tr \xi \big) +\Tr [\xi-\xi'] \right\} \\
&\hspace{6ex} \geqt{(x)} D(\rho\|\xi) - \liminf_{\epsilon\to 0^+} \left\{ \ln (1+\epsilon) - \left(1-\Tr \xi\right) \frac{\epsilon}{1+\epsilon} \right\} \\
&\hspace{6ex} = D(\rho\|\xi) = D_\CC(\rho)\, .
\label{variational_technical_key}
\ee
Here, in~(iv) we made the ansatz $X=\ln(\rho+\epsilon^2\id) - \ln(\xi+\epsilon\id)+ (\Tr \xi-1)\id$, where later we take $\epsilon\to 0^+$; (v)~holds because denoting with $\rho=\sum_i p_i \ketbra{e_i}$ and $\xi=\sum_j \mu_j \ketbra{f_j}$ the spectral decompositions of $\rho$ and $\xi$, we have that
\bb
&\lim_{\epsilon\to 0^+} \Tr \rho \left( \ln (\rho+\epsilon^2\id) - \ln (\xi+\epsilon\id)\right) + \Tr \xi -1 \\
&\qquad = \lim_{\epsilon\to 0^+} \left( \sumno_i p_i \ln \left(p_i +\epsilon^2\right) + \sumno_{i,j} p_i |\braket{e_i|f_j}|^2 (-\ln (\mu_j+\epsilon)) \right) + \Tr \xi -1 \\
&\qquad = -S(\rho) - \Tr \rho \ln \xi + \Tr \xi -1 = D(\rho\|\xi)
\ee
by the dominated convergence theorem applied to each series (remember that $S(\rho)<\infty$ and also $-\Tr\rho\ln\xi = D(\rho\|\xi) + S(\rho) + 1-\Tr\xi<\infty$); (vi)~follows from Lieb's three-operator inequality, proved in~\cite[Theorem~7]{lieb73c} for the finite-dimensional case and extended to infinite dimensions in Lemma~\ref{Lieb_3_matrix_lemma}; in~(vii) we noted that for a bounded $0<\epsilon\id\leq A \coloneqq \xi + \epsilon\id \in \B(\HH)$ with spectral decomposition $A=\int_\epsilon^\infty dP(\lambda)\, \lambda$ and a trace class $0\leq B \coloneqq \xi' \in \T(\HH)$
\bb
\Tr \Gamma_{A}(B) &= \int_0^\infty ds\, \Tr \frac{1}{A+s\id}\, B\, \frac{1}{A+s\id} \\
&= \int_0^\infty ds\, \Tr \frac{1}{(A+s\id)^2}\, B \\
&= \int_0^\infty ds\, \int_\epsilon^\infty \Tr\left[dP(\lambda)\, B \right] \frac{1}{(\lambda+s)^2} \\
&= \int_\epsilon^\infty \Tr\left[dP(\lambda)\, B \right] \int_0^\infty ds\, \frac{1}{(\lambda+s)^2} \\
&= \int_\epsilon^\infty \Tr\left[dP(\lambda)\, B \right] \frac{1}{\lambda} \\
&= \Tr \frac{1}{A}\, B\, ,
\ee
where we used Tonelli's theorem to exchange the integrals; (viii)~comes from observing that on the one hand $\Tr \frac{1}{\xi+\epsilon\id}\, \xi' \leq \frac1\epsilon \Tr \xi'$, and on the other
\bb
\Tr \rho\, \Gamma_{\xi+\epsilon\id}(\xi') &= \int_0^\infty ds\, \Tr \rho\, \frac{1}{\xi+\epsilon\id+s\id}\, \xi' \frac{1}{\xi+\epsilon\id+s\id} \\
&= \int_\epsilon^\infty dt\, \Tr \rho\, \frac{1}{\xi+t\id}\, \xi' \frac{1}{\xi+t\id} \\
&\leq \int_0^\infty dt\, \Tr \rho\, \frac{1}{\xi+t\id}\, \xi' \frac{1}{\xi+t\id} \\
&= \Tr \rho\, \Gamma_\xi(\xi')\, ;
\ee
in~(ix) we leveraged~\eqref{inequality_differentiated}; finally, in~(x) we remembered that $\Tr \xi'\in (0,1]$ and $\Tr \xi\in [0,1]$ (the case where $\Tr \xi'=0$ and thus $\xi'=0$ is trivial and can be excluded) and observed that
\bb
\sup_{x\in (0,1]}\left\{ \ln \left( (1+\epsilon) x+1-a\right) +a -x \right\} = \ln(1+\epsilon) - \frac{\epsilon}{1+\epsilon} (1-a) 
\ee
for all $\epsilon>0$ and $a\in [0,1]$. This concludes the justification of~\eqref{variational_technical_key} and thus the proof.
\end{proof}

We are finally ready to present the proof of our last general result.

\begin{proof}[Proof of Theorem~\ref{variational_thm}]
Let $\rho\in \D(\HH)$ be a state with finite entropy $S(\rho)<\infty$. If $D_\FF(\rho)<\infty$ then we can use directly Lemma~\ref{variational_technical_lemma} and complete the proof. Therefore, what we set out to do now is to devise an argument that tackles also the case where $D_\FF(\rho)=+\infty$. To this end, start by observing that the general inequality
\bb
D_\FF(\rho) \geq \sup_{X = X^\dag\in \B(\HH)} \left\{ \Tr \rho X - \sup_{\sigma\in \FF} \ln \Tr e^{\ln \sigma + X} \right\}
\label{first_variational}
\ee
can be proved exactly as~\eqref{first_inequality_variational_technical}, whose derivation, in fact, does not rely on any further assumption. The problem is to establish the converse inequality.

To this end, for an arbitrary parameter $p\in (0,1]$ let us construct the convex set
\bb
\widetilde{\FF}_p \coloneqq (1-p) \widetilde{\FF} + p\{\rho\} = \left\{ (1-p)\eta + p\rho:\, \eta\in \widetilde{\FF}\right\} ,
\ee
where $\widetilde{\FF}=\cone(\FF)\cap B_1$ is as in~\eqref{F_tilde}. Since $\widetilde{\FF}_p$ is the sum of two weak*-compact sets, it is itself weak*-compact.

Now, define $D_\FF^{(p)}(\rho) \coloneqq \inf_{\xi \in \widetilde{\FF}_p} D(\rho\|\xi)$. Since
\bb
D_\FF^{(p)}(\rho) = \inf_{\eta \in \widetilde{\FF}} D\left( \rho \big\| (1-p)\eta+p\rho\right) \leq D\left( \rho \big\| p\rho\right) = -\ln p < \infty\, ,
\ee
we see that
\bb
D_\FF^{(p)}(\rho) &\eqt{(i)} \sup_{X=X^\dag \in \B(\HH)} \left\{ \Tr \rho X - \sup_{\xi\in \widetilde{\FF}_p} \left\{ \ln \Tr e^{\ln \xi + X} - \Tr \xi + 1 \right\} \right\} \\
&= \sup_{X=X^\dag \in \B(\HH)} \left\{ \Tr \rho X - \sup_{\eta\in \widetilde{\FF}} \left\{ \ln \Tr e^{\ln \left((1-p)\eta + p\rho\right) + X} + (1-p) \left(1-\Tr \eta\right) \right\} \right\} \\
&\leqt{(ii)} \sup_{X=X^\dag \in \B(\HH)} \left\{ \Tr \rho X - \sup_{\eta\in \widetilde{\FF}} \left\{ (1-p) \ln \Tr e^{\ln \eta + X} + p \ln \Tr e^{\ln \rho + X} + (1-p) \left(1-\Tr \eta\right) \right\} \right\} \\
&\leqt{(iii)} \sup_{X=X^\dag \in \B(\HH)} \left\{ \Tr \rho X - \sup_{\eta\in \widetilde{\FF}} \left\{ (1-p) \ln \Tr e^{\ln \eta + X} + p \Tr \rho X + (1-p) \left(1-\Tr \eta\right) \right\} \right\} \\
&= (1-p) \sup_{X=X^\dag \in \B(\HH)} \left\{ \Tr \rho X - \sup_{\eta\in \widetilde{\FF}} \left\{ \ln \Tr e^{\ln \eta + X} -\Tr \eta + 1 \right\} \right\} \\
&= (1-p) \sup_{X=X^\dag \in \B(\HH)} \left\{ \Tr \rho X - \sup_{\sigma\in \FF}\sup_{\lambda\in [0,1]} \left\{ \ln \Tr e^{\ln \sigma + X} + \ln \lambda -\lambda + 1 \right\} \right\} \\
&= (1-p) \sup_{X=X^\dag \in \B(\HH)} \left\{ \Tr \rho X - \sup_{\sigma\in \FF} \ln \Tr e^{\ln \sigma + X} \right\} .
\ee
Here, (i)~holds thanks to Lemma~\ref{variational_technical_lemma}; in~(ii) we exploited the concavity of the function $\xi \mapsto \ln \Tr e^{\ln \xi+X}$, as established by Lemma~\ref{Petz_amended_lemma}; and in~(iii) we applied the Peierls--Bogoliubov inequality~\cite[Theorem~7]{Ruskai1972}. Taking the limit $p\to 0^+$ we therefore deduce that
\bb
\limsup_{p\to 0^+} D_\FF^{(p)}(\rho) \leq \sup_{X=X^\dag \in \B(\HH)} \left\{ \Tr \rho X - \sup_{\sigma\in \FF} \ln \Tr e^{\ln \sigma + X} \right\} .
\label{second_1_variational}
\ee

Now, as in the proof of Lemma~\ref{variational_technical_lemma}, by weak*-compactness for all $p$ we can find some $\eta_p\in \widetilde{\FF}$ such that $D_\FF^{(p)} = D\left(\rho\,\|\, (1-p)\eta_p + p\rho\right)$. By the same reason, when taking the limit $p\to 0^+$ we can assume without loss of generality that $\eta_p \ctends{w*}{p\to 0^+}{0pt} \eta$, for some $\eta\in \widetilde{\FF}$. This naturally implies that also $(1-p) \eta_p + p\rho \ctends{w*}{p\to 0^+}{0pt} \eta$; using the lower semi-continuity of the relative entropy, we see that
\bb
\liminf_{p\to 0^+} D_\FF^{(p)}(\rho) = \liminf_{p\to 0^+} D\left(\rho\,\|\, (1-p)\eta_p + p\rho\right) \geq D(\rho \| \eta) \geq D_\FF(\rho)\, .
\ee
Combining this with~\eqref{second_1_variational} yields that
\bb
D_\FF(\rho) \leq \sup_{X=X^\dag \in \B(\HH)} \left\{ \Tr \rho X - \sup_{\sigma\in \FF} \ln \Tr e^{\ln \sigma + X} \right\} .
\ee
Together with~\eqref{first_variational}, this proves~\eqref{variational}.

%Justifying~(ii), instead, requires a bit more care: in essence, it can be seen as an application of Lieb's second concavity theorem~\cite[Theorem~6 and Section~4]{lieb73c}, which states that for 

%Consider first the case where $D_\FF(\rho)<\infty$. We start by applying Theorem~\ref{achievable_relent_thm} to construct a state $\sigma\in \FF$ with the property that $D_\FF(\rho) = D(\rho\|\sigma)<\infty$. %Without loss of generality, we may assume that $\sigma>0$ is faithful.Since $\FF$ is convex, we have that for all $\sigma'\in \FF$ and $\lambda\in [0,1]$
%\bb 0 \leq D\left( \rho\, \big\|\, (1-\lambda)\sigma+\lambda\sigma'\right) - D(\rho\|\sigma) = - \Tr \rho \ln \left( (1-\lambda)\sigma+\lambda\sigma'\right) + \Tr \rho \ln \sigma\, , \label{inequality_lambda} \ee
%where the last equality follows from~\eqref{Umegaki_splitting}. Note that the expression at the rightmost side is well defined (although possibly $+\infty$) because $-\Tr \rho \ln \sigma<\infty$. 
\end{proof}

\begin{rem}
In the special case where $\FF =\{\sigma\}$ contains a single state (and $S(\rho)<\infty$), Theorem~\ref{variational_thm} yields immediately the identity
\bb
D(\rho\|\sigma) = \sup_{X=X^\dag \in \B(\HH)} \left\{ \Tr \rho X - \ln \Tr e^{\ln \sigma +X} \right\} .
\label{Petz_variational}
\ee
This is naturally just Petz's variational formula~\cite{Petz1988}, without any faithfulness assumption on either $\rho$ or $\sigma$. In this sense, Theorem~\ref{variational_thm} can also be seen as a generalisation of Petz's result.
\end{rem}

\begin{rem} \label{Sion?_rem}
One could wonder whether there may be a more direct approach to prove Theorem~\ref{variational_thm}. Assuming that we have established~\eqref{Petz_variational}, or equivalently~\eqref{Petz variational 1}, we could plug this into~\eqref{D_F_eta}, obtaining
\bb
D_\FF(\rho) = \inf_{\eta\in \widetilde{F}} \sup_{X = X^\dag \in \B(\HH)} \left\{ \Tr \rho X - \ln \Tr e^{\ln \eta + X} + \Tr \eta -1 \right\} .
\ee
Now, knowledge of~\cite{Berta2017} would suggest that we try to exchange the supremum and infimum using Sion's theorem. Note that the weak*-topology makes $\widetilde{\FF}$ compact, which is encouraging. Since the other conditions can be shown to be met, we need only to ask ourselves whether $\widetilde{\FF}\ni \eta \mapsto f_X(\eta) \coloneqq - \ln \Tr e^{\ln \eta + X} + \Tr \eta-1$ is lower semi-continuous for all fixed $X=X^\dag\in \B(\HH)$. Unfortunately, that is not the case. To see why, it suffices to take $X=0$ and compute $f_0(\eta) = -\ln \Tr \eta + \Tr \eta -1$. Evaluating this on a sequence $(\eta_n)_{n\in \N}$ with constant trace $\Tr \eta_n\equiv 1$ but such that $\eta_n \tendsn{w*} 0$ (such as the one constructed in Section~\ref{topologies_subsec}) shows that $\lim_{n\to \infty} f_0(\eta_n) = 1 < \infty = f_0(0)$.
\end{rem}

\subsection{On the continuity of the relative entropy of resource}

The lower semi-continuity of $D_\FF$ implies the following sufficient conditions for its local continuity.

\begin{prop} \label{RE-LS-c+_1}
Let $(\rho_n)_{n\in \N}$ be a sequence of states in $\D(\HH)$ converging to a state $\rho$ in trace norm topology, i.e.\ $\rho_n \tends{tn}{n\to\infty} \rho$. Then the relation
\begin{equation}\label{CS-r+}
\lim_{n\to+\infty} D_\FF(\rho_n)=D_\FF(\rho)\leq+\infty
\end{equation}
holds provided that one of the following conditions is valid:
\begin{enumerate}[(a)]
\item $\rho_n=\Phi_n(\rho)/\Tr\Phi_n(\rho)$, where $\Phi_n$ is a positive trace-non-increasing linear transformation of $\T(\HH)$ such that 
\begin{equation}\label{Phi-cond}
\Phi_n(\FF)\subseteq \widetilde{\FF} = \cone(\FF)\cap B_1
\end{equation}
for each $n$, and $\Tr\Phi_n(\rho)\tends{}{n\to\infty} 1$;
\item $c_n\rho_n\leq \sigma_n$ for all $n$, where $c_n\tends{}{n\to\infty} 1$ are real numbers, and $(\sigma_n)_{n\in \N}$ is a sequence of states in $\D(\HH)$ such that %$\sigma_n \tends{tn}{n\to\infty} \rho$ and
\begin{equation}
\lim_{n\to+\infty} D_\FF(\sigma_n)=D_\FF(\rho)\, .
\end{equation}
\end{enumerate}
\end{prop}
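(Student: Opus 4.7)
The plan is to establish $\limsup_{n\to\infty} D_\FF(\rho_n) \leq D_\FF(\rho)$, since the reverse inequality is immediate from the lower semi-continuity of $D_\FF$ supplied by Theorem~\ref{achievable_relent_thm}(b) (and this also settles the case $D_\FF(\rho)=+\infty$, as $\liminf=+\infty$ forces $\lim=+\infty$). One may therefore assume $D_\FF(\rho)<+\infty$, and Theorem~\ref{achievable_relent_thm}(a) provides the minimisers I will use in both cases.

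For condition~(a), I would choose $\sigma\in\FF$ with $D(\rho\|\sigma)=D_\FF(\rho)$, set $q_n\coloneqq\Tr\Phi_n(\sigma)$, and introduce $\sigma_n'\coloneqq\Phi_n(\sigma)/q_n\in\FF$ whenever $q_n>0$. The key step is to extend each $\Phi_n$ to a positive trace-preserving linear map $\widetilde{\Phi}_n$ on $\T(\HH\oplus\C)$ by $\widetilde{\Phi}_n(X\oplus y)\coloneqq \Phi_n(X)\oplus(y+\Tr X-\Tr\Phi_n(X))$, and then to apply the data processing inequality for positive trace-preserving maps to $\widetilde{\Phi}_n(\rho\oplus 0)$ and $\widetilde{\Phi}_n(\sigma\oplus 0)$. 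Using the block-diagonal structure and the rescaling identity $D(a\eta\|b\xi)=aD(\eta\|\xi)+a\ln(a/b)+b-a$ for states $\eta,\xi$, the excess terms reassemble into a nonnegative classical KL divergence $D_{\mathrm{cl}}\bigl((p_n,1-p_n)\|(q_n,1-q_n)\bigr)$ and may be discarded, yielding $p_n D(\rho_n\|\sigma_n')\leq D_\FF(\rho)$, where $p_n\coloneqq\Tr\Phi_n(\rho)\to 1$. The inequality $D_\FF(\rho_n)\leq D(\rho_n\|\sigma_n')$ then completes~(a); the same data processing step forces $q_n>0$ for large $n$, since otherwise $D(p_n\rho_n\|0)=+\infty$ would contradict the finiteness of $D_\FF(\rho)$.

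For condition~(b), I would pick $\omega_n\in\FF$ attaining $D(\sigma_n\|\omega_n)=D_\FF(\sigma_n)$ (well defined for large $n$ as $D_\FF(\sigma_n)\to D_\FF(\rho)<\infty$). From $c_n\rho_n\leq\sigma_n$ one has $c_n\in(0,1]$, and (discarding the trivial case $c_n=1$) I decompose $\sigma_n=c_n\rho_n+(1-c_n)\tau_n$ with $\tau_n\coloneqq(\sigma_n-c_n\rho_n)/(1-c_n)$ a state. Operator monotonicity of $\ln$ applied to $c_n\rho_n\leq\sigma_n$ gives $\ln\rho_n\leq\ln\sigma_n-(\ln c_n)\id$ on $\supp\rho_n$, while $(1-c_n)\tau_n\leq\sigma_n$ gives $D(\tau_n\|\sigma_n)\leq-\ln(1-c_n)$. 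Substituting the former into $D(\rho_n\|\omega_n)=\Tr[\rho_n(\ln\rho_n-\ln\omega_n)]$ and splitting the trace via $c_n\rho_n=\sigma_n-(1-c_n)\tau_n$ produces
\bb
D(\rho_n\|\omega_n)\leq \tfrac{1}{c_n}D(\sigma_n\|\omega_n)+\tfrac{1-c_n}{c_n}\bigl[D(\tau_n\|\sigma_n)-D(\tau_n\|\omega_n)\bigr]-\ln c_n.
\ee
Dropping the nonnegative contribution $\tfrac{1-c_n}{c_n}D(\tau_n\|\omega_n)$ and inserting $D(\tau_n\|\sigma_n)\leq-\ln(1-c_n)$, the correction $-\tfrac{1-c_n}{c_n}\ln(1-c_n)-\ln c_n$ vanishes as $c_n\to 1$, and so $\limsup_n D_\FF(\rho_n)\leq\limsup_n D(\rho_n\|\omega_n)\leq\lim_n D_\FF(\sigma_n)=D_\FF(\rho)$.

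The principal obstacle will be to make rigorous the operator-monotonicity manipulation in~(b) when $\rho_n,\sigma_n,\omega_n$ fail to be faithful and the associated logarithms are unbounded below: the standard remedy is to replace each operator by its regularisation $(\cdot)+\epsilon\id$, perform the algebra in the bounded regime, and pass to $\epsilon\to 0^+$ via the representation~\eqref{Umegaki_grown-ups}, while tracking the nested supports $\supp\rho_n\subseteq\supp\sigma_n\subseteq\supp\omega_n$ ensured by the finiteness of the relative entropies. For~(a), the other subtle point is the invocation of the data processing inequality for positive (rather than completely positive) trace-preserving maps on the infinite-dimensional trace class, which requires appealing to the extension of M\"{u}ller-Hermes and Reeb.
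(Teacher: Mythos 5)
Your argument is correct in outline and, notably, lands on exactly the same quantitative estimates as the paper, but it gets there by a genuinely different and more self-contained route. For~(a), the paper simply cites~\cite[Lemma~1]{Shirokov-relent} for the inequality $\left(\Tr\Phi_n(\rho)\right)D_\FF(\rho_n)\leq D_\FF(\rho)$ and then invokes Theorem~\ref{achievable_relent_thm}(b); your trace-preserving dilation $\widetilde{\Phi}_n$ on $\T(\HH\oplus\C)$ plus data processing for positive trace-preserving maps is essentially a from-scratch proof of that lemma, and the discarded classical term $D_{\mathrm{cl}}\bigl((p_n,1-p_n)\|(q_n,1-q_n)\bigr)$ reproduces its statement verbatim. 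For~(b), the paper decomposes $\sigma_n=c_n\rho_n+(1-c_n)\tau_n$ and applies the ready-made ``almost convexity'' bound~\eqref{F-p-1} (from~\cite[Proposition~5.24]{PETZ-ENTROPY}) to get $D_\FF(\sigma_n)\geq c_nD_\FF(\rho_n)-h_2(c_n)$; your direct computation with a minimiser $\omega_n$ and operator monotonicity of the logarithm yields $c_nD(\rho_n\|\omega_n)\leq D_\FF(\sigma_n)+h_2(c_n)$, which is the same inequality. What the paper's route buys is precisely the avoidance of the two delicate points you flag at the end: the splitting $D(\rho_n\|\omega_n)=\Tr\rho_n\ln\rho_n-\Tr\rho_n\ln\omega_n$ is only licensed when $S(\rho_n)<\infty$ (cf.~\eqref{Umegaki_splitting}), and the operator-monotonicity manipulation with unfaithful states and a possibly infinite $D(\tau_n\|\omega_n)$ requires the regularisation-and-limit programme you sketch; inequality~\eqref{F-p-1} packages all of that in the generality needed. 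What your route buys is independence from the external references: you never need~\cite{Shirokov-relent} (though you do need the M\"{u}ller--Hermes--Reeb monotonicity for positive maps in infinite dimensions, and you should define $\widetilde{\Phi}_n$ on all of $\T(\HH\oplus\C)$, e.g.\ via compressions to the two blocks, to make positivity meaningful). Neither case actually requires the exact achievability from Theorem~\ref{achievable_relent_thm}(a): an $\epsilon$-approximate minimiser would serve in both places, which slightly weakens the hypotheses your proof relies on.
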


\begin{proof}
Let us prove the claims one at a time.
\begin{enumerate}[(a)]
\item A previous result by one of us~\cite[Lemma~1]{Shirokov-relent} implies that $\left(\Tr\Phi_n(\rho)\right) D_\FF(\rho_n)\leq D_\FF(\rho)$ for all $n$. So, in this case~\eqref{CS-r+} follows directly from Theorem~\ref{achievable_relent_thm}(b).

\item For an arbitrary set $\FF$ the function $D_\FF$ satisfies the inequality
\begin{equation}\label{F-p-1}
D_\FF(p\rho+(1-p)\sigma)\geq pD_\FF(\rho)+(1-p)D_\FF(\sigma)-h_2(p)
\end{equation}
valid for any states $\rho$ and $\sigma$ in $\D(\HH)$ and any $p\in(0,1)$, where
\bb
h_2(p)\coloneqq -p\ln p-(1-p)\ln(1-p)
\label{h_2}
\ee
is the binary entropy. Inequality~\eqref{F-p-1} follows directly from the inequality~\cite[Proposition~5.24]{PETZ-ENTROPY}
\begin{equation*}%\label{F-p-2}
D(p\rho+(1-p)\sigma\| \omega)\geq pD(\rho\| \omega)+(1-p)D(\sigma\| \omega)-h_2(p)\, .
\end{equation*}
%which is easily verified for states $\rho$ and $\sigma$ with finite von Neumann entropy~\cite{tightuniform}. The validity of this inequality for arbitrary states $\rho$ and $\sigma$ follows from~\cite[Lemma~6]{Shirokov-AFW-1}. 
Now, we argue that under the hypotheses in~(b) we have that $\sigma_n\tends{tn}{n\to\infty}\rho$. In fact, since $\sigma_n - \rho_n = \sigma_n - c_n\rho_n + \left(c_n-1\right)\rho_n$, using the triangle inequality and the fact that $\|X\|_1=\Tr X$ for $X\geq 0$ we arrive at $\left\|\sigma_n - \rho_n\right\|_1\leq 2\left|c_n-1\right|$. The right-hand side tends to $0$ by hypothesis; hence, so does the left-hand side.
The condition $c_n\rho_n\leq \sigma_n$ and inequality~\eqref{F-p-1} also show that since $\sigma_n = (1-c_n)\frac{\sigma_n - c_n\rho_n}{1-c_n} + c_n \rho_n$, we have that
\begin{equation}
D_\FF(\sigma_n)\geq c_n D_\FF(\rho_n)-h_2(c_n)\qquad \forall n\, .
\end{equation}
It follows that
\begin{equation}
\limsup_{n\to+\infty} D_\FF(\rho_n)\leq \lim_{k\to+\infty} D_\FF(\sigma_n)=D_\FF(\rho)\, .
\end{equation}
This relation and Theorem~\ref{achievable_relent_thm}(b) imply~\eqref{CS-r+}.
\end{enumerate}
\end{proof}

Condition~(b) in Proposition~\ref{RE-LS-c+_1} shows that  
\begin{equation}
  \lim_{n\to+\infty} D_\FF(\rho_n)=D_\FF(\rho)\leq+\infty
\end{equation}
for any sequence $(\rho_n)_{n\in \N}$ converging to an arbitrary state $\rho$ provided that $c_n\rho_n\leq \rho$ for all $n$, where $(c_n)_{n\in \N}$ is a sequence of positive numbers tending to $1$.\footnote{A similar property holds for the von Neumann entropy (due to its concavity and lower semi-continuity). It is widely used in analysis of infinite dimensional quantum systems.} This holds, in particular, for the sequence $(\rho_n)_{n\in \N}$ of finite-rank states obtained by truncation of the spectral decomposition of $\rho$. We deduce the following: 

\begin{cor}
The function $D_\FF$ is completely determined by its values on the set of finite-rank states in $\D(\HH)$.
\end{cor}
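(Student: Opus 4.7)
The plan is to exhibit, for every state $\rho\in\D(\HH)$, an explicit sequence of finite-rank states $(\rho_n)_{n\in\N}$ that converges to $\rho$ in trace norm and satisfies the hypothesis of the remark immediately preceding the corollary, namely $c_n\rho_n\leq\rho$ for some sequence $c_n\to 1$. Once this is done, that remark (which is itself an instance of Proposition~\ref{RE-LS-c+_1}(b) with the constant sequence $\sigma_n\equiv\rho$) yields $D_\FF(\rho_n)\to D_\FF(\rho)$, so that the value $D_\FF(\rho)$ is the limit of values of $D_\FF$ on finite-rank states, as required.

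To build the sequence, let $\rho=\sum_{i} p_i\ketbra{e_i}$ be a spectral decomposition of $\rho$, with $p_i\geq 0$ and $\sum_i p_i=1$, and set $q_n\coloneqq \sum_{i=1}^n p_i$, which converges to $1$ as $n\to\infty$. Discarding the initial terms for which $q_n=0$ (if any), define the finite-rank state
\begin{equation}
\rho_n\coloneqq \frac{1}{q_n}\sum_{i=1}^n p_i\ketbra{e_i}\in\D(\HH)\, .
\end{equation}
Since the eigenvalues of $\rho-q_n\rho_n=\sum_{i>n} p_i\ketbra{e_i}$ are all non-negative, we have $q_n\rho_n\leq \rho$, so setting $c_n\coloneqq q_n$ the required inequality $c_n\rho_n\leq \rho$ holds with $c_n\to 1$.

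I then check trace norm convergence: using the diagonal structure,
\begin{equation}
\|\rho-\rho_n\|_1=\sum_{i=1}^n p_i\bigl(1/q_n-1\bigr)+\sum_{i>n}p_i=2(1-q_n)\xrightarrow[n\to\infty]{}0\, .
\end{equation}
All hypotheses of Proposition~\ref{RE-LS-c+_1}(b) are therefore met with $\sigma_n\equiv\rho$, and hence $\lim_{n\to\infty}D_\FF(\rho_n)=D_\FF(\rho)\leq+\infty$. Since each $\rho_n$ has rank at most $n$, this shows that $D_\FF(\rho)$ is determined by the restriction of $D_\FF$ to the set of finite-rank states.

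There is no real obstacle here: the only content is choosing the approximating sequence so that the monotonicity hypothesis $c_n\rho_n\leq\rho$ holds, and the spectral truncation does the job. I might add a one-line comment justifying the dismissal of the degenerate case in which all $p_i$ with $i\leq n$ vanish for small $n$, but this is immaterial as only the tail of the sequence matters.
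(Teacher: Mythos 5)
Your proof is correct and is essentially the paper's own argument: the authors likewise obtain the corollary by applying Proposition~\ref{RE-LS-c+_1}(b) (with $\sigma_n\equiv\rho$) to the normalised spectral truncations of $\rho$, which satisfy $c_n\rho_n\leq\rho$ with $c_n=\sum_{i\leq n}p_i\to 1$. Your explicit verification of the trace-norm convergence $\|\rho-\rho_n\|_1=2(1-q_n)$ and of the operator inequality is exactly the content the paper leaves implicit, so there is nothing to add.
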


\subsection{On the continuity of the minimiser(s)}

Let $\FF$ be a convex subset of $\D(\HH)$ such that $\cone(\FF)$ is weak*-closed. For any state $\rho$ denote by $\Sigma_\FF(\rho)$ the \deff{minimiser set} of $\rho$ defined in Theorem~\ref{achievable_relent_thm}, i.e.\ the set of all states $\sigma\in \FF$ such that $D_\FF(\rho)=D\left(\rho\|\sigma\right)$. By Theorem~\ref{achievable_relent_thm}, $\Sigma_\FF(\rho)$ is always nonempty, convex, trace norm compact, and consists of a single state if $\rho$ is faithful. 

Here we will push forward the investigation of the properties of the function $\rho\mapsto \Sigma_\FF(\rho)$ by considering its continuity in a neighbourhood of a faithful state. A variation on the argument used in the proof of Theorem~\ref{achievable_relent_thm} yields the following result.

\begin{prop}\label{RE-LS-c+_2}
Let $\FF\subseteq \D(\HH)$ be convex and such that $\cone(\FF)$ is weak*-closed. Let $(\rho_n)_{n\in \N}$ be any sequence of states in $\D(\HH)$ converging to a faithful state $\rho$ in trace norm, i.e.\ $\rho_n \tends{tn}{n\to\infty}\rho$ such that 
\begin{equation}\label{CS-r++}
\lim_{n\to+\infty} D_\FF(\rho_n)=D_\FF(\rho)<+\infty.
\end{equation}
Then any sequence $(\sigma_n)_{n\in \N}$ such that $\sigma_n\in\Sigma_\FF(\rho_n)$ for all $n$ converges in trace norm to the unique state in $\Sigma_\FF(\rho)$.
\end{prop}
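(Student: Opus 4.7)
The plan is to mimic the compactness argument from the proof of Theorem~\ref{achievable_relent_thm}(c), but use the faithfulness of the limit $\rho$ and the uniqueness guaranteed by~(d) to pin down the limit of $(\sigma_n)_{n\in\N}$. Denote by $\sigma_*$ the unique element of $\Sigma_\FF(\rho)$, which exists because $\FF$ is convex, $\rho$ is faithful, and $D_\FF(\rho)<\infty$. The sequence $(\sigma_n)_{n\in\N}$ lies in $\widetilde{\FF}=\cone(\FF)\cap B_1$, which is weak*-compact and, since $\HH$ is separable, sequentially weak*-compact by Remark~\ref{bounded_w*_metrisable_rem}.

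First I would reduce the claim to a subsequential statement: it suffices to show that every subsequence of $(\sigma_n)_{n\in\N}$ admits a further subsequence converging in trace norm to $\sigma_*$. So fix an arbitrary subsequence $(\sigma_{n_k})_{k\in\N}$ and extract from it a weak*-convergent sub-subsequence $\sigma_{n_{k_j}}\tends{w*}{j\to\infty}\eta\in\widetilde{\FF}$.

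The main step is to identify $\eta=\sigma_*$. Since $\rho_{n_{k_j}}\tends{tn}{j\to\infty}\rho$ implies weak*-convergence, the joint weak* lower semi-continuity of the relative entropy (Lemma~\ref{weak*_lsc_relent_lemma}) together with the hypothesis~\eqref{CS-r++} yields
\begin{equation}
D(\rho\|\eta)\leq \liminf_{j\to\infty} D\bigl(\rho_{n_{k_j}}\big\|\sigma_{n_{k_j}}\bigr) = \lim_{j\to\infty} D_\FF\bigl(\rho_{n_{k_j}}\bigr) = D_\FF(\rho)<\infty\, .
\end{equation}
In particular $\eta\neq 0$ (otherwise $D(\rho\|\eta)=+\infty$), so $\Tr\eta>0$ and $\sigma_*'\coloneqq \eta/\Tr\eta\in\FF$. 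Repeating the chain~\eqref{chain_inequalities} from the proof of Theorem~\ref{achievable_relent_thm}(c), one finds $D(\rho\|\sigma_*')\leq D(\rho\|\eta)\leq D_\FF(\rho)$ and, by the same sub-normalisation reasoning, actually $\Tr\eta=1$ and $\sigma_*'\in\Sigma_\FF(\rho)$. By the uniqueness part~(d) of Theorem~\ref{achievable_relent_thm}, $\sigma_*'=\sigma_*$ and hence $\eta=\sigma_*$.

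Finally, because both $\sigma_{n_{k_j}}$ and $\sigma_*$ are normalised density operators, weak*-convergence on $\D(\HH)$ is equivalent to trace norm convergence (the Davies result recalled in Section~\ref{topologies_subsec}), so $\sigma_{n_{k_j}}\tends{tn}{j\to\infty}\sigma_*$. Since every subsequence of $(\sigma_n)_{n\in\N}$ admits a further subsequence converging in trace norm to the same limit $\sigma_*$, the whole sequence converges to $\sigma_*$ in trace norm, concluding the proof. The only delicate point is the identification $\eta=\sigma_*$, which relies on the faithfulness of $\rho$ through part~(d) of Theorem~\ref{achievable_relent_thm}; everything else is a routine combination of the weak*-compactness of $\widetilde{\FF}$ and the lower semi-continuity of $D$.
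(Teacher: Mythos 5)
Your proof is correct and follows essentially the same route as the paper's: weak*-sequential compactness of $\widetilde{\FF}$, joint weak* lower semi-continuity of $D$, the normalisation identity $D(\rho\|\eta)=D(\rho\|\eta/\Tr\eta)+\Tr\eta-1-\ln\Tr\eta$, and uniqueness of the minimiser for faithful $\rho$. The only (cosmetic) difference is that you phrase the final step via the ``every subsequence has a further subsequence converging to the same limit'' criterion, whereas the paper argues by contradiction from $\liminf_n\|\sigma_n-\sigma\|_1>0$; these are equivalent.
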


\begin{proof}
We can assume without loss of generality that $D_\FF(\rho_n)<\infty$ for all $n\in \N$. Proceeding by contradiction, and calling $\sigma$ the unique state in $\Sigma_\FF(\rho)$, we can also posit, up to selecting a subsequence of $(\sigma_n)_{n\in \N}$, that $\liminf_{n\to\infty}\left\|\sigma_n-\sigma\right\|_1>0$. Since the set $\widetilde{\FF} = \cone(\FF)\cap B_1$ defined by~\eqref{F_tilde} is weak*-compact and therefore sequentially weak*-compact (see Remark~\ref{bounded_w*_metrisable_rem}), we can extract from the sequence $(\sigma_n)_{n\in \N}$ a weak*-converging sub-sequence $(\sigma_{n_k})_{k\in \N}$, so that $\sigma_{n_k}\tends{w*}{k\to\infty}\eta_*\in \widetilde{\FF}$. %Denote by $\sigma$ the unique state in $\Sigma_\FF(\rho)$. 
In analogy with~\eqref{chain_inequalities}, we now have that
%By the weak*-compactness and trace norm boundedness of the set $\cone(\FF)\cap B_1$, and thanks to Remark~\ref{bounded_w*_metrisable_rem}, it suffices to show that the limit $\sigma_*$ of any weak*-converging subsequence $\left(\sigma_{n_k}\right)_{k\in \N}$ of $\left(\sigma_\FF(\rho_n)\right)_{n\in \N}$ coincides with the state $\sigma\coloneqq \sigma_\FF(\rho)$. By the lower semi-continuity of the relative entropy we have
%\footnote{As in the proof of Theorem~\ref{achievable_relent_thm}(b), such a subnet is not guaranteed to be a subsequence. This can anyway be shown to be the case if $\HH$ is separable, because then the weak*-topology on $\D(\HH)$ becomes metrisable~\cite[Corollary~2.6.20]{MEGGINSON}.}
\bb
D\left( \rho\,\|\sigma\right) &= D_\FF(\rho) \\
&\eqt{(i)} \lim_{k\to\infty} D\left(\rho_{n_k}\|\sigma_{n_k}\right) \\
&\geqt{(ii)} D\left(\rho\|\eta_{*}\right) \\
&\eqt{(iii)} D\left(\rho\|\sigma_*\right) + \Tr \eta_* - 1 - \ln \Tr \eta_* \\
&\geqt{(iv)} D\left(\rho\|\sigma_*\right) ,
\ee
where~(i) is thanks to~\eqref{CS-r++}, (ii)--(iv) are justified as the corresponding inequalities in~\eqref{chain_inequalities}, and in~(iii) we introduced the state $\sigma_*\coloneqq (\Tr \eta_*)^{-1} \eta_* \in \FF$. Note that $\Tr \eta_*>0$ thanks to~(ii). The above relation together with the faithfulness of $\rho$ guarantees that $\sigma_*=\sigma$ is the unique state in $\Sigma_\FF(\rho)$, and moreover $\Tr \eta_*=1$, so that in fact $\eta_*=\sigma_*=\sigma$. Since weak* and trace norm topology coincide on the set of density operators (see Section~\ref{topologies_subsec}), from $\sigma_{n_k}\tends{w*}{k\to\infty}\eta_*$ we infer that $\sigma_{n_k}\tends{tn}{k\to\infty}\sigma$. Hence, $\liminf_{n\to\infty} \left\|\sigma_n- \sigma\right\|_1=0$, and we have reached a contradiction.
%holds thanks to the lower semi-continuity of the relative entropy (Lemma~\ref{weak*_lsc_relent_lemma}), and in~(ii)--(iii) we introduced the state $\sigma_*\coloneqq (\Tr \eta_*)^{-1} \eta_* \in \FF$ and proceeded as in~\eqref{throw_away_normalisation}. By Theorem~\ref{achievable_relent_thm}(d) this implies that $\sigma_*$ is the unique state in \tcb{$\Sigma_\FF(\rho)$} and moreover $\Tr \eta_*=1$, so that in fact $\eta_*=\sigma_*$. To conclude the proof, note that since weak* and trace norm topology coincide on the set of density operators (cf.\ the discussion in Section~\ref{topologies_subsec}), from $\sigma_{n_k}\tends{w*}{k\to\infty}\eta_*=\sigma_*$ we infer that $\sigma_{n_k}\tends{tn}{k\to\infty}\sigma_*$. The above observation implies that $\sigma_{n}\tends{tn}{n\to\infty}\sigma_*$.
\end{proof}

\begin{rem} If the limit state $\rho$ in Proposition~\ref{RE-LS-c+_2} is not faithful, the arguments from the above proof show that the sequence $(\sigma_n)_{n\in \N}$ is relatively compact with respect to the trace norm and all the limit points of this sequence are contained in $\Sigma_\FF(\rho)$. 
\end{rem}

%Corollary~\ref{RE-LS-c+_2} shows that %the map
%\bb\label{rho-sigma}
%\D(\HH)\ni\rho\mapsto %\sigma_\FF(\rho)\in \FF
%\ee
%is continuous on any subset of %$\D(\HH)$ where the function $D_\FF$ %is itself continuous. A concrete %example illustrating why the %conditions in %Corollary~\ref{RE-LS-c+_2} cannot be %removed is considered in %Section~\ref{non_Gaussianity_subsec} %(see Corollary~\ref{relent_non_Gaussia%nity_cor} and the remarks below).

\section{Applications} \label{applications_sec}

\subsection{Relative entropy of entanglement} \label{entanglement_subsec}

Let $\HH_A,\HH_B$ be two separable Hilbert spaces.\footnote{The notion of separability of Hilbert spaces has nothing to do with that of separability of quantum states explained here.} The set of \deff{separable states} on the bipartite system with Hilbert space $\HH_{AB}\coloneqq \HH_A\otimes \HH_B$ is defined as the closed convex hull of product states, i.e.
\begin{equation}
    \SEP_{AB} \coloneqq \cl_{\mathrm{tn}} \left( \co\left\{ \ketbra{\psi}_A \otimes \ketbra{\phi}_B:\, \ket{\psi}_A\in \HH_A,\, \ket{\phi}_B\in \HH_B,\, \braket{\psi|\psi}=1=\braket{\phi|\phi} \right\} \right) .
    \label{SEP}
\end{equation}
Here, the closure is taken with respect to the trace norm topology. As it turns out, a state $\sigma_{AB}$ is separable if and only if it can be decomposed as $\sigma_{AB} = \int \ketbra{\psi}_A \otimes \ketbra{\phi}_B\, d\mu(\psi,\phi)$ for some Borel probability measure $\mu$ defined on the product of the sets of local normalised pure states~\cite{Holevo2005}. A state that is not separable is called \deff{entangled}.

The \deff{relative entropy of entanglement} is nothing but the relative entropy of resource associated with the set of free states $\SEP$ defined by~\eqref{SEP}. In formula, it is defined by~\cite{Vedral1997}
\bb
E_R(\rho_{AB}) \coloneqq D_{\SEP_{AB}}(\rho_{AB}) = \inf_{\sigma_{AB}\in \SEP_{AB}} D\left(\rho_{AB}\|\sigma_{AB}\right)\, .
\label{relent_entanglement}
\ee
Its central importance in entanglement theory stems from the fact that its regularisation bounds from above the distillable entanglement and from below the entanglement cost~\cite{Vedral1998, Horodecki2000, Donald1999, Donald2002}.

To set the stage for the application of Theorem~\ref{achievable_relent_thm}, we need to ask ourselves whether the cone generated by separable states is weak*-closed. This has been proved already in Ref.~\cite[Lemma~25]{taming-PRA}; an alternative and significantly more general proof resting on Theorem~\ref{w*_closed_condition_thm} will be presented below (cf.\ Corollary~\ref{alpha_separable_w*_closed_cor}). A straightforward application of Theorems~\ref{achievable_relent_thm} and~\ref{variational_thm} then yields:

\begin{cor} \label{relent_entanglement_cor}
For an arbitrary bipartite system with separable Hilbert space $\HH_{AB}$, the relative entropy of entanglement is:
\begin{enumerate}[(a)]
    \item always achieved, meaning that for all $\rho_{AB}\in \D(\HH_{AB})$ there exists a state $\sigma_{AB}\in \SEP_{AB}$ such that $E_R(\rho_{AB}) = D\left(\rho_{AB}\|\sigma_{AB}\right)$; and
    \item lower semi-continuous with respect to the trace norm topology.
\end{enumerate}
Moreover, for every state $\rho=\rho_{AB}\in \D(\HH_{AB})$ with finite entropy $S(\rho)<\infty$, it holds that
\bb
E_R(\rho_{AB}) = \sup_{X = X^\dag\in \B(\HH_{AB})} \left\{ \Tr \rho\, X - \sup_{\sigma \in \SEP_{\!AB}} \ln \Tr e^{\ln \sigma + X} \right\} .
\label{variational_relent}
\ee
\end{cor}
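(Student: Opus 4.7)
The plan is to deduce all three parts from the general machinery already assembled in the paper: Theorem~\ref{achievable_relent_thm} for (a) and (b), and Theorem~\ref{variational_thm} for the variational formula~\eqref{variational_relent}. Both of these theorems have a single hypothesis to verify in the present context, namely that $\cone(\SEP_{AB})$ is weak*-closed, and this reduces the whole corollary to that one structural fact. As the authors hint, this weak*-closedness will itself follow from Theorem~\ref{w*_closed_condition_thm}, so the real content of the proof is to check its three assumptions for $\FF = \SEP_{AB}$.

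Conditions~(i) and~(ii) of Theorem~\ref{w*_closed_condition_thm} are immediate: $\SEP_{AB}$ is convex as a closed convex hull, and trace norm closed by the very definition~\eqref{SEP}. The interesting condition is~(iii), and here I would choose
\begin{equation*}
M_n \coloneqq P_n^A \otimes P_n^B\, ,
\end{equation*}
where $\{P_n^A\}_{n\in\N}$ and $\{P_n^B\}_{n\in\N}$ are increasing sequences of finite-rank orthogonal projectors on $\HH_A$ and $\HH_B$ respectively, each converging strongly to the corresponding identity (constructed from orthonormal bases, which exist because both spaces are separable). Each $M_n$ is then of finite rank, hence compact, and a standard $\varepsilon/3$-argument on product vectors, combined with the uniform bound $\|M_n\|_\infty \leq 1$, yields strong operator convergence $M_n \to \id_{AB}$.

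To check that $\MM_n$ preserves $\cone(\SEP_{AB})$, the cleanest route is via Holevo's integral representation: any $\sigma_{AB}\in\SEP_{AB}$ can be written $\sigma_{AB} = \int \ketbra{\psi}_A \otimes \ketbra{\phi}_B\, d\mu(\psi,\phi)$ for some Borel probability measure $\mu$, so that
\begin{equation*}
\MM_n(\sigma_{AB}) = \int \bigl(P_n^A\ketbra{\psi}P_n^A\bigr) \otimes \bigl(P_n^B\ketbra{\phi}P_n^B\bigr)\, d\mu(\psi,\phi)\, ,
\end{equation*}
an integral of (subnormalised) product operators, which lies in $\cone(\SEP_{AB})$. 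If one wishes to avoid the integral representation, an alternative is to verify preservation on finite convex combinations of pure product states, where it is trivial because $(P_n^A \otimes P_n^B)(\ketbra{\psi}_A \otimes \ketbra{\phi}_B)(P_n^A \otimes P_n^B) = (P_n^A\ketbra{\psi}P_n^A) \otimes (P_n^B\ketbra{\phi}P_n^B)$, and then extend by the trace norm continuity of $\MM_n$ together with the trace norm closedness of $\cone(\SEP_{AB})$ (established in step~(III) of the proof of Theorem~\ref{w*_closed_condition_thm}).

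With Theorem~\ref{w*_closed_condition_thm} in hand, parts~(a) and~(b) follow directly from Theorem~\ref{achievable_relent_thm}(a)(b), and the variational identity~\eqref{variational_relent} is the specialisation of Theorem~\ref{variational_thm} to $\FF = \SEP_{AB}$. The only place where genuine care is needed is the stability of $\cone(\SEP_{AB})$ under $\MM_n$ in infinite dimensions; everything else is bookkeeping within the general framework. This is the expected main obstacle, and the integral representation of separable states is the most efficient way past it.
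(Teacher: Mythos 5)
Your proposal is correct and follows essentially the same route as the paper: the paper also reduces everything to the weak*-closedness of $\cone(\SEP_{AB})$, which it obtains (beyond citing Ref.~[taming-PRA, Lemma~25]) from Theorem~\ref{w*_closed_condition_thm} via exactly the choice $M_n = P_n^A\otimes P_n^B$ of finite-rank local projectors, in the proof of the multipartite Corollary~\ref{alpha_separable_w*_closed_cor}. Your verification of hypothesis~(iii) via the integral representation of separable states (or the density argument) fills in a step the paper labels as straightforward, and parts~(a), (b) and~\eqref{variational_relent} then follow from Theorems~\ref{achievable_relent_thm} and~\ref{variational_thm} just as you say.
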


The following example shows that the state $\sigma_{AB}\in \SEP_{AB}$ such that $E_R(\rho_{AB}) = D\left(\rho_{AB}\|\sigma_{AB}\right)$ may not be unique, even in the finite-dimensional case.

\begin{ex} \label{non-un}
Let $A$ and $B$ be qubit systems, and define the maximally entangled state $\Phi_{AB}\coloneqq \ketbra{\Phi}_{AB}$, where $\ket{\Phi}_{AB}\coloneqq \frac{1}{\sqrt{2}}(\ket{00}+\ket{11})$. It is well known that its relative entropy of entanglement is $\ln 2$~\cite[Proposition~1]{Vedral1997}. Now, setting
\bb
\sigma^1_{AB} \coloneqq \frac{1}{2}\Phi_{AB}+\frac{1}{6}\left(I_{AB}-\Phi_{AB}\right) ,\qquad \sigma^2_{AB} \coloneqq \frac{1}{2}\Phi_{AB}+\frac{1}{4}\left(\ketbra{01}+\ketbra{10}\right) ,
\ee
we see that
\bb
D\left(\Phi_{AB}\|\sigma^1_{AB}\right)=D\left(\rho_{AB}\|\sigma^2_{AB}\right)=\ln 2 = E_R(\Phi_{AB})\, .
\ee
Since $\sigma_{AB}^1\neq \sigma_{AB}^2$, the relative entropy of entanglement is not uniquely achieved in this case.
\end{ex}

\subsection{Relative entropy of multi-partite entanglement} \label{multipartite_entanglement_subsec}

The above Corollary~\ref{relent_entanglement_cor} can be generalised to the multi-partite setting. To do so, let us start by fixing some terminology. We follow the particularly clear exposition of Szalay~\cite{Szalay2015}. For a given positive integer $m$, representing the total number of parties, let $[m]\coloneqq \{1,\ldots, m\}$. Of particular interest to us are the partitions of this set. A partition $\pi=(\pi(j))_j$ is a finite collection of non-empty sets $\pi(j)\subseteq [m]$ that do not intersect, i.e.\ $\pi(j)\cap \pi(j')=\emptyset$ for all $j\neq j'$, and together cover the whole $[m]$, i.e.\ $\cup_j \pi(j)=[m]$. We will denote the set of all partitions of $[m]$ with $P(m)$.

In the context of multi-partite entanglement, a partition $\pi\in P(m)$ can represent the allowed quantum interactions between some systems $A_1,\ldots, A_m$: two parties $A_\ell, A_{\ell'}$ can exchange quantum messages and thus establish entanglement if and only if $\ell$ and $\ell'$ belong to the same element of the partition, i.e.\ $\ell,\ell'\in \pi(j)$ for some $j$. One can however imagine a setting where not one but several partitions are allowed in this sense. Let $\pi =\{\pi_k\}_k\subseteq P(m)$ be the (non-empty) set of allowed partitions. We can then define the set of \deff{$\boldsymbol{\pi}$-separable states} by~\cite{Szalay2015}
\bb
\SEP_{A_1\ldots A_m}^\pi \coloneqq \cl_{\mathrm{tn}} \left( \co\left( \bigcup\nolimits_k \left\{ \bigotimes\nolimits_j \Psi^{(j)}_{A_{\pi_k(j)}}:\ \ket{\Psi^{(j)}}_{A_{\pi_k(j)}}\!\in \HH_{A_{\pi_k(j)}},\ \braket{\Psi^{(j)} | \Psi^{(j)}} = 1 \right\} \right) \right) .
\label{alpha_separable}
\ee
Here, $A_{\pi_k(j)}$ is the system obtained by joining those $A_i$ such that $i\in \pi_k(j)$, and we used the shorthand notation $\Psi^{(j)}\coloneqq \ketbra{\Psi^{(j)}}$.

The notions of multi-partite separability most commonly employed in the literature correspond to special choices of $\pi$ in~\eqref{alpha_separable}. Namely, we can pick $\pi$ to be:
\begin{itemize}
    \item The set containing only the finest partition $\left\{\{1\},\ldots, \{m\}\right\}$: the states obtained in~\eqref{alpha_separable} are then called \deff{totally} (or \deff{fully}) \deff{separable}.
    \item More generally, the set of partitions of $[m]$ into at least $k$ subsets: the corresponding states in~\eqref{alpha_separable} are usually referred to as \deff{$\boldsymbol{k}$-separable}~\cite{Vedral1997, Acin-3-qubits, Szalay2015} (of particular interest is the case of bi-separability).
    \item Taking a different angle, we can consider also the set of partitions of $[m]$ involving subsets of at most $k$ elements: the states obtained in~\eqref{alpha_separable} are then called \deff{$\boldsymbol{k}$-producible}~\cite{Seevinck2001, Szalay2015}.
\end{itemize}

Let $\rho_{A_1\ldots A_m}$ be a state of an $m$-partite quantum system $A_1\ldots A_m$. For a generic non-empty $\pi\subseteq P(m)$, we can define its \deff{relative entropy of $\boldsymbol{\pi}$-entanglement} by
\bb
E_{R,\pi}(\rho_{A_1\ldots A_m}) \coloneqq D_{\SEP^\pi_{A_1\ldots A_m}}(\rho_{A_1\ldots A_m}) = \inf_{\sigma_{A_1\ldots A_m}\in \SEP_{A_1\ldots A_m}^{\pi}} D\left(\rho_{A_1\ldots A_m}\|\sigma_{A_1\ldots A_m}\right)\, .
\label{m-E-r}
\ee
%For the special cases where $\SEP^\pi$ includes either all totally separable, or all $k$-separable, or all $k$-producible states we will use the shorthand $E_{R}$, $E_{R,\, \text{\it k-sep}}$ and $E_{R,\, \text{\it k-prod}}$, respectively.
Special cases of the quantity in~\eqref{m-E-r} have been considered by several authors in different contexts~\cite{BrandaoPlenio2, Piani2009, Wei2008, Zhu2010, Friedland2011, Das2020}.

To apply Theorems~\ref{achievable_relent_thm} and~\ref{variational_thm} to the relative entropy of $\pi$-entanglement we need to first extend the result of Ref.~\cite[Lemma~25]{taming-PRA} on the weak*-closedness of the cone of bipartite separable states. This can be done thanks to a swift application of Theorem~\ref{w*_closed_condition_thm}.

%\begin{lemma}
%For any two quantum systems $A,B$, the cone $\cone(\SEP_{AB}) = \left\{ \lambda\sigma_{AB}:\, \lambda\geq 0,\, \sigma_{AB}\in \SEP_{AB}\right\}$ is weak*-closed.
%\end{lemma}

%\begin{lemma} \label{w*_closed_general_lemma}
%Let $\HH$ be a separable Hilbert space, and let $(P_n)_{n\in \N}$ be a sequence of orthogonal projectors with finite-dimensional range %that converges strongly to the identity, i.e.\ such that $\left\|P_n\ket{\psi} - \ket{\psi}\right\|\tends{}{n\to\infty} 0$ for all %$\ket{\psi}\in \HH$. Set $\PP_n(\cdot)\coloneqq P_n(\cdot) P_n$. Consider a set of states $\FF\subseteq \D(\HH)$, and assume that:
%\begin{enumerate}[(i)]
%    \item $\FF$ is trace norm closed;
%    \item $\PP_n$ preserves free states (up to normalisation), i.e.\ $\CC_n\coloneqq \PP_n\left( \cone(\FF)\right)\subseteq \cone(\FF)$ for all $n\in \N$;
%    \item $\CC_n$ is trace norm closed for all $n$.
%\end{enumerate}
%Then $\cone(\FF)$ is weak*-closed.
%\end{lemma}

%The proof of Lemma~\ref{w*_closed_general_lemma}, deferred to Appendix~\ref{app_1}, generalised ideas from Ref.~\cite[Lemma~25]{taming-PRA}. From the above technical result we immediately deduce the following corollary, of immediate interest to us.

\begin{cor} \label{alpha_separable_w*_closed_cor}
For a positive integer $m$, let $\pi\subseteq P(m)$ be a non-empty subset of partitions of $[m]$. Then, for arbitrary separable Hilbert spaces $\HH_{A_1},\ldots, \HH_{A_m}$, the cone generated by the set of $\pi$-separable states~\eqref{alpha_separable} is weak*-closed.

In particular, the cones generated by $k$-separable and $k$-producible states are weak*-closed, for all positive integers $k\leq m$.
\end{cor}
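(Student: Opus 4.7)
The plan is to deduce the corollary from Theorem~\ref{w*_closed_condition_thm} applied to $\FF = \SEP^\pi_{A_1\ldots A_m}$. Conditions~(i) and~(ii) of that theorem are immediate: by definition~\eqref{alpha_separable}, $\SEP^\pi_{A_1\ldots A_m}$ is the trace-norm closure of a convex hull, hence both convex and trace-norm closed. The substantive task is to construct a sequence of compact local operators $M_n$ converging strongly to the identity such that the associated map $\MM_n(\cdot)=M_n(\cdot)M_n^\dagger$ preserves $\cone(\SEP^\pi_{A_1\ldots A_m})$.

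The natural choice is to take $M_n = P_n^{A_1}\otimes P_n^{A_2}\otimes \cdots \otimes P_n^{A_m}$, where each $P_n^{A_i}$ is the orthogonal projector onto the span of the first $n$ vectors of a fixed orthonormal basis of the separable Hilbert space $\HH_{A_i}$. Each $M_n$ is a finite-rank (hence compact) operator, and $P_n^{A_i}\to \id_{A_i}$ in the strong operator topology implies, by a straightforward $\varepsilon/m$ argument on tensor-product vectors and bounded extension, that $M_n\to \id_{A_1\ldots A_m}$ strongly as well.

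To verify hypothesis~(iii), fix any partition $\pi_k\in \pi$ with parts $\pi_k(j)$. For each such part we may group the corresponding local projectors into $M_n^{(j)} \coloneqq \bigotimes_{i\in \pi_k(j)} P_n^{A_i}$, so that $M_n = \bigotimes_j M_n^{(j)}$ respects the partition $\pi_k$. Consequently, for any pure product state $\bigotimes_j \ketbra{\Psi^{(j)}}_{A_{\pi_k(j)}}$ we have
\begin{equation}
\MM_n\!\left(\bigotimes\nolimits_j \ketbra{\Psi^{(j)}}_{A_{\pi_k(j)}}\right) = \bigotimes\nolimits_j \Big( M_n^{(j)} \ketbra{\Psi^{(j)}}_{A_{\pi_k(j)}} (M_n^{(j)})^\dagger\Big),
\end{equation}
which is a non-negative multiple of a product state across $\pi_k$, hence lies in $\cone(\SEP^\pi_{A_1\ldots A_m})$. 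By linearity of $\MM_n$, the same holds for any convex combination of such product states, over any choice of partitions $\pi_k\in \pi$; and by trace-norm continuity of $\MM_n$ together with the trace-norm closedness of $\cone(\SEP^\pi_{A_1\ldots A_m})$ (which follows from step~(III) in the proof of Theorem~\ref{w*_closed_condition_thm} applied to our $\FF$), this extends to the entire set $\SEP^\pi_{A_1\ldots A_m}$, and thence by homogeneity to $\cone(\SEP^\pi_{A_1\ldots A_m})$.

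The main (and essentially the only) subtlety to handle carefully is the passage from pure product states to arbitrary $\pi$-separable states: one must argue that the trace-norm continuity of $\MM_n$ combined with the closedness of $\cone(\SEP^\pi_{A_1\ldots A_m})$ suffices to cover the integral representation afforded by~\cite{Holevo2005}, i.e.\ general Borel mixtures of product pure states. Once this is settled, Theorem~\ref{w*_closed_condition_thm} applies and yields the weak*-closedness of $\cone(\SEP^\pi_{A_1\ldots A_m})$. The final claim on $k$-separable and $k$-producible states is then immediate, as these correspond to specific non-empty choices of $\pi\subseteq P(m)$.
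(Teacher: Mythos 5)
Your proposal is correct and follows essentially the same route as the paper's proof: both apply Theorem~\ref{w*_closed_condition_thm} with $M_n$ taken to be the tensor product of local finite-rank projectors onto the spans of the first $n$ basis vectors, check compactness and strong convergence to the identity, and verify that conjugation by $M_n$ preserves $\cone\left(\SEP^\pi_{A_1\ldots A_m}\right)$. The paper dismisses hypothesis~(iii) as ``straightforward to verify,'' whereas you spell out the extension from pure product states to general $\pi$-separable states via linearity, trace-norm continuity of $\MM_n$, and the trace-norm closedness of the cone --- a correct and welcome elaboration, but not a different argument.
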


%\begin{rem} \label{separable_rem}
%Corollary~\ref{alpha_separable_w*_closed_cor} generalises the result of Ref.~\cite[Lemma~25]{taming-PRA} to the multi-partite setting.
%\end{rem}

\begin{proof}[Proof of Corollary~\ref{alpha_separable_w*_closed_cor}]
For $\ell=1,\ldots, m$, let $\{\ket{p}_{A_\ell}\}_{p\in \N}$ be an orthonormal basis of the Hilbert space $\HH_{A_\ell}$. Defining the projectors $Q^{(n)}_{A_\ell}\coloneqq \sum_{p=0}^{n-1}\ketbra{p}_{A_\ell}$, set $M_n=\bigotimes_{\ell=1}^m Q^{(n)}_{A_\ell}$ and $\MM_n(\cdot)=M_n(\cdot) M_n^\dag$. Note that $M_n$ is of finite rank and hence compact for all $n$. Also, using the fact that the coefficients of any multi-partite pure state with respect to the product basis $\left\{\bigotimes_{\ell=1}^m \ket{k_\ell}_{A_\ell} \right\}_{k_1,\ldots, k_m\in \N}$ form a square-summable sequence, one sees that $(M_n)_{n\in \N}$ converges to the identity in the strong operator topology in the sense explained in the statement of Theorem~\ref{w*_closed_condition_thm}. Since it is straightforward to verify that $\MM_n\left( \cone (\FF)\right)\subseteq \cone (\FF)$ for $\FF=\SEP^\pi_{A_1\ldots A_m}$, and given that this set is trace norm closed and convex by construction, we can apply Theorem~\ref{w*_closed_condition_thm} and conclude.
\end{proof}

We are now ready to apply Theorems~\ref{achievable_relent_thm} and~\ref{variational_thm}.

\begin{cor} \label{relent_alpha_entanglement_cor}
For a positive integer $m$, let $\pi\subseteq P(m)$ be a non-empty subset of partitions of $[m]$. Then, for arbitrary separable Hilbert spaces $\HH_{A_1},\ldots, \HH_{A_m}$, the relative entropy of $\pi$-entanglement is:
\begin{enumerate}[(a)]
    \item always achieved, meaning that for all states $\rho_{A_1\ldots A_m}$ there exists a state $\sigma_{A_1\ldots A_m}\in \SEP^\pi_{A_1\ldots A_m}$ such that $E_{R,\, \pi}(\rho_{A_1\ldots A_m}) = D\left(\rho_{A_1\ldots A_m}\|\sigma_{A_1\ldots A_m}\right)$; and
    \item lower semi-continuous with respect to the trace norm topology.
\end{enumerate}
Moreover, for every state $\rho=\rho_{A_1\ldots A_m}\in \D(\HH_{A_1\ldots A_m})$ with finite entropy $S(\rho)<\infty$, it holds that
\bb
E_{R,\, \pi} (\rho_{A_1\ldots A_m}) = \sup_{X = X^\dag\in \B(\HH_{A_1\ldots A_m})} \left\{ \Tr \rho\, X - \sup_{\sigma\in \SEP_{\!A_1\ldots A_m}^\pi} \ln \Tr e^{\ln \sigma + X} \right\} .
\label{variational_relent_alpha}
\ee
\end{cor}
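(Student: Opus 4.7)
The plan is to derive Corollary~\ref{relent_alpha_entanglement_cor} as a direct consequence of the general machinery developed in Section~\ref{main_results_sec}, specialised to the set $\FF = \SEP^\pi_{A_1\ldots A_m}$. The entire argument reduces to checking the two hypotheses of Theorems~\ref{achievable_relent_thm} and~\ref{variational_thm}: convexity of $\FF$ and weak*-closedness of $\cone(\FF)$.

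First, I would observe that $\SEP^\pi_{A_1\ldots A_m}$ is convex by construction, since it is defined in~\eqref{alpha_separable} as the closed convex hull of a set of pure product states (the trace norm closure of a convex set is convex). Second, I would invoke Corollary~\ref{alpha_separable_w*_closed_cor}, which has already been established using Theorem~\ref{w*_closed_condition_thm} applied to the tensor products of finite-rank truncation projectors; this yields that $\cone\left(\SEP^\pi_{A_1\ldots A_m}\right)$ is weak*-closed, which is precisely the hypothesis needed to apply Theorem~\ref{achievable_relent_thm}.

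At this point, claims~(a) and~(b) follow immediately: part~(a) of Theorem~\ref{achievable_relent_thm} gives the attainability of the infimum by some $\sigma_{A_1\ldots A_m}\in \SEP^\pi_{A_1\ldots A_m}$, while part~(b) yields the lower semi-continuity of $E_{R,\pi}$ on $\D(\HH_{A_1\ldots A_m})$ with respect to the trace norm topology. For the variational expression~\eqref{variational_relent_alpha}, I would invoke Theorem~\ref{variational_thm} with the same choice $\FF = \SEP^\pi_{A_1\ldots A_m}$: the hypotheses (convexity and weak*-closedness of the generated cone) are exactly the ones just verified, so the theorem applies verbatim to any $\rho=\rho_{A_1\ldots A_m}$ with $S(\rho)<\infty$, producing precisely the claimed formula.

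There is essentially no obstacle once the preceding Corollary~\ref{alpha_separable_w*_closed_cor} is in hand; the only substantive step, namely the construction of a sequence of compact operators $M_n=\bigotimes_{\ell=1}^m Q^{(n)}_{A_\ell}$ satisfying $\MM_n(\cone(\SEP^\pi))\subseteq \cone(\SEP^\pi)$ and converging strongly to the identity, is already carried out there. Hence the proof is genuinely a one-line deduction from the general theorems, and I would present it as such, with an explicit acknowledgement that the set-theoretic verification (closure under local truncation maps $\MM_n$) works because each $\pi$-separable pure product decomposition $\bigotimes_j \Psi^{(j)}_{A_{\pi_k(j)}}$ is sent by $\MM_n$ to a (sub-normalised) state of the same product form, and trace norm closed convex hulls are preserved.
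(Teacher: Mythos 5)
Your proposal is correct and follows exactly the route the paper takes: the corollary is stated there as an immediate consequence of Theorems~\ref{achievable_relent_thm} and~\ref{variational_thm} once Corollary~\ref{alpha_separable_w*_closed_cor} supplies the weak*-closedness of $\cone\left(\SEP^\pi_{A_1\ldots A_m}\right)$, with convexity being automatic from the definition~\eqref{alpha_separable}. Your additional remark on why the truncation maps $\MM_n$ preserve the cone matches the verification carried out in the proof of Corollary~\ref{alpha_separable_w*_closed_cor}, so nothing is missing.
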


\subsection{Relative entropy of NPT entanglement} \label{NPT_entanglement_subsec}

Another historically important upper bound to the distillable entanglement is the relative entropy distance to the set of states with a positive partial transpose (PPT). Let $\HH_{AB}=\HH_A\otimes \HH_B$ be a bipartite Hilbert space, and select a preferred basis on $\HH_B$ with respect to which one considers the transposition $\intercal$. Then, the \deff{partial transposition} on $\HH_{AB}$ is the operation $\Gamma:\T(\HH_{AB})\to \B(\HH_{AB})$, where $\B(\HH_{AB})$ is the space of bounded operators on $\HH_{AB}$, defined by~\cite{PeresPPT}
\bb
\Gamma (X_A\otimes Y_B) = \left( X_A\otimes Y_B\right)^\Gamma \coloneqq X_A\otimes Y_B^\intercal\, ,
\ee
and extended by linearity and continuity to the whole $\T(\HH_{AB})$.\footnote{The partial transpose of a trace class operator may not be of trace class anymore. However, the map $\Gamma:\T(\HH_A)\otimes \T(\HH_B)\to \B(\HH_{AB})$ turns out to be bounded with respect to the trace norm on the input space $\T(\HH_A)\otimes \T(\HH_B)\subseteq \T(\HH_{AB})$ of finite linear combinations of simple tensors and to the operator norm on the output space --- in fact, the norm of $\Gamma$ is $1$. Hence, it admits a continuous extension to the whole $\T(\HH_{AB})$.} The set of \deff{PPT states} on $AB$ is defined by
\bb
\PPT_{\!AB} \coloneqq \left\{\sigma_{AB}\in \D(\HH_{AB}):\, \sigma_{AB}^\Gamma\geq 0 \right\} .
\label{PPT_states}
\ee
PPT entangled states that are not separable have been the first --- and, to date, only~\cite{Horodecki-open-problems} --- examples of entangled states that are undistillable, meaning that their distillable entanglement vanishes~\cite{Horodecki-PPT-entangled, HorodeckiBound, Bruss2000}. This is not to say that they are easy to generate; in fact, they have positive entanglement cost~\cite{faithful-EC}, become distillable if the set of protocols available is enlarged to include all `PPT operations'~\cite{Eggeling2001}, and can even have an almost maximal `Schmidt number'~\cite{SchmidtNumber, PPT-high-SN, Cariello2020}.

The \deff{relative entropy of NPT entanglement} of an arbitrary state $\rho_{AB}\in \D(\HH_{AB})$ is simply the relative entropy of resource associated to the set~\eqref{PPT_states}, in formula~\cite{Rains1999, Audenaert2001}
\bb
E_{R,\, \PPT}(\rho_{AB})\coloneqq D_{\PPT_{\!AB}}(\rho_{AB}) = \inf_{\sigma_{AB}\in \D(\HH_{AB}),\, \sigma_{AB}^\Gamma\geq 0} D(\rho_{AB}\|\sigma_{AB})\, .
\label{relent_NPT_entanglement}
\ee
This quantity is a generally sharper upper bound to the distillable entanglement than the standard relative entropy of entanglement~\cite{Rains1999}, and it has the advantage of being often easier to compute, even upon regularisation~\cite{Audenaert2001}.

The cone generated by PPT states turns out to obey the hypotheses of Theorem~\ref{achievable_relent_thm}.

\begin{lemma} \label{PPT_w*_closed}
For any two quantum systems $A,B$ with separable Hilbert spaces, $\cone(\PPT_{\!AB}) = \left\{ X:\, X\geq 0,\, X^\Gamma\geq 0\right\}$ is weak*-closed.
\end{lemma}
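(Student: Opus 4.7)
The plan is to invoke Theorem~\ref{w*_closed_condition_thm}, in direct analogy with Corollary~\ref{alpha_separable_w*_closed_cor}. I fix orthonormal bases $\{\ket{p}_A\}_{p\in \N}$ of $\HH_A$ and $\{\ket{q}_B\}_{q\in \N}$ of $\HH_B$, the latter being the basis with respect to which the partial transposition is defined. Setting $Q_A^{(n)} \coloneqq \sum_{p<n} \ketbra{p}_A$ and $Q_B^{(n)} \coloneqq \sum_{q<n} \ketbra{q}_B$, I take $M_n \coloneqq Q_A^{(n)} \otimes Q_B^{(n)}$. Each $M_n$ is of finite rank and hence compact, and since $M_n\ket{\xi} \to \ket{\xi}$ for every $\ket{\xi}\in \HH_{AB}$ by square-summability of the tensor-basis expansion, the sequence $(M_n)_n$ converges to the identity in the strong operator topology.

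Hypotheses~(i) and~(ii) of Theorem~\ref{w*_closed_condition_thm} are straightforward. Convexity of $\PPT_{\!AB}$ is obvious from its definition. For trace-norm closedness, I rely on the fact, recalled in the footnote preceding the lemma, that $\Gamma:\T(\HH_{AB}) \to \B(\HH_{AB})$ is bounded of operator norm one; hence trace-norm convergence $\sigma_n \tends{tn}{n\to\infty} \sigma$ forces operator-norm convergence $\sigma_n^\Gamma \to \sigma^\Gamma$, and the PSD cone inside $\B(\HH_{AB})$ is norm-closed, so $\sigma^\Gamma \geq 0$ is preserved in the limit.

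The heart of the argument is hypothesis~(iii), that $\MM_n(X) \coloneqq M_n X M_n^\dag$ preserves $\cone(\PPT_{\!AB})$. Preservation of positivity is automatic, so what needs checking is that $X^\Gamma \geq 0$ implies $(M_n X M_n^\dag)^\Gamma \geq 0$. The key identity, to be verified first on elementary tensors $A\otimes B$ and then extended by linearity and continuity, is
\bb
\bigl( (A \otimes B)\, X\, (A^\dag \otimes B^\dag) \bigr)^\Gamma = \bigl( A \otimes (B^\dag)^\intercal \bigr)\, X^\Gamma\, \bigl( A^\dag \otimes B^\intercal \bigr)\, .
\ee
Specialising to $A = Q_A^{(n)}$ and $B = Q_B^{(n)}$, and noting that $Q_B^{(n)}$ is real and diagonal in the chosen basis so that $(Q_B^{(n)})^\intercal = Q_B^{(n)}$, the right-hand side collapses to $M_n X^\Gamma M_n^\dag$, which is manifestly PSD whenever $X^\Gamma \geq 0$. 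Hypothesis~(iii) thus holds, and Theorem~\ref{w*_closed_condition_thm} yields the weak*-closedness of $\cone(\PPT_{\!AB})$.

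The main obstacle I anticipate is a careful justification of the transposition identity above for a general trace class $X$, because $X^\Gamma$ is only bounded rather than trace class, and the various products must be interpreted with this in mind. I would handle this by first verifying the identity on the dense subspace of finite linear combinations of simple tensors in $\T(\HH_A)\otimes \T(\HH_B)$ (where all quantities are unambiguously trace class and the calculation is purely algebraic), and then extending it to all of $\T(\HH_{AB})$ using trace-norm-to-operator-norm continuity of each side as a function of $X$.
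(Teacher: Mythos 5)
Your proof is correct, but it takes a genuinely different route from the paper's. The paper proves this lemma \emph{directly}, without invoking Theorem~\ref{w*_closed_condition_thm} at all: it observes that a trace class operator $X$ satisfies $X^\Gamma\geq 0$ if and only if $\Tr X Y^\Gamma\geq 0$ for every positive semi-definite $Y$ with a finite expansion in a product basis, and that each such $Y^\Gamma$ is itself of finite expansion and hence compact. The cone $\{X:\,X^\Gamma\geq 0\}$ is then exhibited as an intersection of weak*-closed half-spaces of the form $\{X:\,\Tr XK\geq 0\}$ with $K$ compact, and intersecting with the (also weak*-closed) PSD cone finishes the argument in a few lines. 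The paper explicitly remarks that the statement \emph{can} be seen as a corollary of Theorem~\ref{w*_closed_condition_thm} --- which is precisely what you do --- but that the direct proof is easier. Your route is sound: the compression identity $\bigl((A\otimes B)X(A^\dag\otimes B^\dag)\bigr)^\Gamma = \bigl(A\otimes (B^\dag)^\intercal\bigr)X^\Gamma\bigl(A^\dag\otimes B^\intercal\bigr)$ is verified correctly on simple tensors, the density-plus-continuity extension is legitimate since both sides are bounded as maps from $\T(\HH_{AB})$ (trace norm) to $\B(\HH_{AB})$ (operator norm), and the choice of $Q_B^{(n)}$ real and diagonal in the transposition basis makes hypothesis~(iii) come out cleanly. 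What the direct proof buys is economy: it bypasses the Krein--\v{S}mulian theorem and the uniform boundedness arguments that underpin Theorem~\ref{w*_closed_condition_thm}, and it does not even need to invoke convexity. What your proof buys is uniformity of method across the paper's examples (it parallels Corollary~\ref{alpha_separable_w*_closed_cor} exactly), at the cost of carrying the heavier machinery and the extra care you rightly flag about interpreting products involving the merely bounded operator $X^\Gamma$.
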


\begin{proof}
This can be seen as a corollary to Theorem~\ref{w*_closed_condition_thm}. However, it is even easier to give a direct proof of this fact. Taking local orthonormal bases $\{\ket{p}_A\}_{p\in \N}$ and $\{\ket{q}_B\}_{q\in \N}$ of $\HH_A$ and $\HH_B$, respectively, let us say that a bounded operator $Y$ on $\HH_{AB}=\HH_A\otimes \HH_B$ has a finite expansion if $\braket{p,q|Y|p',q'}\neq 0$ for a finite number of quadruples $(p,p',q,q')\in \N^4$. It is not difficult to verify that a trace class operator $X$ satisfies $X^\Gamma\geq 0$ if and only if $\Tr X Y^\Gamma = \Tr X^\Gamma Y\geq 0$ for all $Y\geq 0$ with a finite expansion. For an arbitrary such $Y$, by definition of $\Gamma$ we see that also $Y^\Gamma$ has a finite expansion and is thus compact. We deduce straight away that the cone of operators $X$ with $X^\Gamma\geq 0$ is weak*-closed. Taking the intersection with the cone of positive semi-definite trace class operators, which is also weak*-closed, we obtain precisely $\cone(\PPT_{\!AB})$, which is then weak*-closed as well.
\end{proof}

Thanks to Lemma~\ref{PPT_w*_closed} we can immediately apply Theorems~\ref{achievable_relent_thm} and~\ref{variational_thm}, which give:

\begin{cor} \label{relent_NPT_entanglement_cor}
For an arbitrary bipartite system with separable Hilbert space, the relative entropy of NPT entanglement is:
\begin{enumerate}[(a)]
    \item always achieved, meaning that for all $\rho_{AB}\in \D(\HH_{AB})$ there exists a state $\sigma_{AB}\in \PPT_{\!AB}$ such that $E_{R,\, \PPT}(\rho_{AB}) = D\left(\rho_{AB}\|\sigma_{AB}\right)$; and
    \item lower semi-continuous with respect to the trace norm topology.
\end{enumerate}
Moreover, for every state $\rho=\rho_{AB}\in \D(\HH_{AB})$ with finite entropy $S(\rho)<\infty$, it holds that
\bb
E_{R,\, \PPT}(\rho_{AB}) = \sup_{X = X^\dag\in \B(\HH_{AB})} \left\{ \Tr \rho\, X - \sup_{\sigma \in \PPT_{\!AB}} \ln \Tr e^{\ln \sigma + X} \right\} .
\label{variational_relent_NPT}
\ee
\end{cor}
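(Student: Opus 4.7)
The plan is to combine Lemma~\ref{PPT_w*_closed} with Theorems~\ref{achievable_relent_thm} and~\ref{variational_thm}; indeed, once weak*-closedness of the cone generated by the free set is in hand, the three conclusions follow by direct citation, so the proof amounts to checking that the hypotheses of those theorems are met for $\FF = \PPT_{\!AB}$.

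First I would verify convexity of $\PPT_{\!AB}$. The partial transposition map $\Gamma$ extends by linearity to a bounded map $\T(\HH_{AB})\to \B(\HH_{AB})$, and the cone of bounded positive semi-definite operators is convex, so the preimage $\{\sigma\in \D(\HH_{AB}):\, \sigma^\Gamma \ge 0\}$ is a convex subset of $\D(\HH_{AB})$. Next I would invoke Lemma~\ref{PPT_w*_closed}, which states that $\cone(\PPT_{\!AB})$ is weak*-closed. With convexity and weak*-closedness of $\cone(\PPT_{\!AB})$ established, all the remaining work is done by the general machinery.

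For claim~(a) of the corollary, I would apply Theorem~\ref{achievable_relent_thm}(a) directly to $\FF = \PPT_{\!AB}$ on the Hilbert space $\HH_{AB}$: for every $\rho_{AB}\in \D(\HH_{AB})$ there exists $\sigma_{AB}\in \PPT_{\!AB}$ with $E_{R,\,\PPT}(\rho_{AB}) = D(\rho_{AB}\|\sigma_{AB})$. For claim~(b), I would apply Theorem~\ref{achievable_relent_thm}(b) to obtain trace norm lower semi-continuity of $D_{\PPT_{\!AB}} = E_{R,\,\PPT}$. Finally, for the variational formula~\eqref{variational_relent_NPT}, I would apply Theorem~\ref{variational_thm}, whose hypotheses (convexity of $\FF$ and weak*-closedness of $\cone(\FF)$) are precisely what we have verified; restricting to states $\rho$ with $S(\rho)<\infty$ then yields the stated dual expression.

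There is no real obstacle: every non-trivial ingredient is already packaged as a previous result in the paper. The only subtlety worth emphasising explicitly is that $\HH_A$ and $\HH_B$ are assumed separable, hence so is $\HH_{AB} = \HH_A\otimes \HH_B$, which is precisely the standing assumption required by Theorems~\ref{achievable_relent_thm} and~\ref{variational_thm}. Thus the corollary reduces to a one-paragraph application of the general framework to the particular case $\FF = \PPT_{\!AB}$.
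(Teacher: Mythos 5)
Your proposal is correct and follows exactly the paper's route: the paper likewise derives the corollary by combining Lemma~\ref{PPT_w*_closed} with Theorems~\ref{achievable_relent_thm} and~\ref{variational_thm}, with the convexity of $\PPT_{\!AB}$ (needed for the variational formula) being immediate from the linearity of $\Gamma$. Your additional remarks on separability of $\HH_{AB}$ and the explicit convexity check are harmless elaborations of what the paper leaves implicit.
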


As it turns out, one can modify~\eqref{relent_NPT_entanglement} to give an even better upper bound to the distillable entanglement, namely, the \deff{Rains bound}. For an arbitrary bipartite state $\rho_{AB}$, this is defined by~\cite{Rains1999, Rains2001, Audenaert2002, nonadditivity-Rains}
\bb
R(\rho_{AB}) \coloneqq \inf_{\substack{\sigma_{AB}\geq 0, \\ \left\|\sigma_{AB}^\Gamma\right\|_1\leq 1}} \left\{ D(\rho_{AB}\|\sigma_{AB}) + 1 - \Tr\sigma_{AB} \right\} .
\label{Rains_bound}
\ee
Note that in the right-hand side of~\eqref{Rains_bound} the operator $\sigma_{AB}$ need not be normalised, although it is implicitly assumed to be of trace class. One can anyway see that it must be at least sub-normalised, meaning that $\Tr \sigma_{AB} = \Tr \sigma_{AB}^\Gamma\leq \left\|\sigma_{AB}^\Gamma\right\|_1\leq 1$.

The Rains bound is not of the general form in~\eqref{relent_F}. However, it is sufficiently alike to it that we can hope to employ similar techniques to show that it is achieved and lower semi-continuous. In fact, the set of operators $\sigma_{AB}$ over which the optimisation in~\eqref{Rains_bound} runs can be shown to be weak*-compact. And yet, remarkably, we \emph{cannot} deduce a statement similar to Theorem~\ref{achievable_relent_thm} in this case, because the function to be minimised in~\eqref{Rains_bound} fails to be lower semi-continuous, due to the negative trace term.\footnote{The trace function is known to be lower semi-continuous but generally not continuous with respect to the weak*-topology on the cone of positive semi-definite operators.}

\subsection{Relative entropy of non-classicality, Wigner non-positivity, and generalisations thereof} \label{lambda_negativity_subsec}

We now move on to resource theories specific to continuous variable systems. An \deff{$\boldsymbol{m}$-mode continuous variable system}~\cite{HOLEVO, HOLEVO-CHANNELS-2, BARNETT-RADMORE, BUCCO} is just a finite collection of $m$ harmonic oscillators with canonical operators $x_1,p_1,\ldots, x_m,p_m$. These obey the canonical commutation relations $[x_j,x_k]=0=[p_j,p_k]$ and $[x_j,p_k]=i\delta_{jk} \id$ on the underlying Hilbert space $\HH_m\coloneqq L^2(\R^m)$ of square-integrable complex-valued functions on $\R^m$. Defining the `vector of operators' $r\coloneqq (x_1,\ldots, x_m,p_1,\ldots, p_m)^\intercal$, we can rephrase them in the matrix form 
\begin{equation}
[r,r^\intercal] = i\Omega\, ,\qquad \Omega \coloneqq \begin{pmatrix} 0_m & \id_m \\ -\id_m & 0_m \end{pmatrix} .
\label{CCR}
\end{equation}
For a vector $\xi \in \R^{2m}$, the associated \deff{Weyl operator} is the unitary defined by $\W_\xi \coloneqq e^{- i \xi^\intercal \Omega r}$, while the corresponding \deff{coherent state} is~\cite{Schroedinger1926-coherent}
\begin{equation} \label{coherent_state}
\ket{\xi} \coloneqq \W_\xi \ket{0}\, ,
\end{equation}
with $\ket{0}$ being the vacuum state. The operators $\W_\xi$ satisfy the important identity
\bb
\W_{\xi_1} \W_{\xi_2} = e^{-\frac{i}{2}\xi_1^\intercal \Omega \xi_2} \W_{\xi_1+\xi_2}\, ,
\label{Weyl_CCR}
\ee
known as the Weyl form of the canonical commutation relations. Taking the trace against $\W_\xi$ yields a representation of any trace class operator $T\in \T(\HH_m)$ as a `characteristic function'. Here we are especially interested in a slight generalisation of this notion. For a parameter $\lambda\in [-1,1]$, let us define the \deff{$\boldsymbol{\lambda}$-ordered characteristic function} $\chi_{T,\lambda}$ of an operator $T\in \T(\HH_m)$ by
\bb
\chi_{T,\lambda}(\xi) \coloneqq \Tr \left[T \W_\xi\right] e^{-\frac\lambda4 \|\xi\|^2} .
\label{chi_lambda}
\ee
An important property of $\chi_{T,\lambda}$ is that it characterises the trace class operator $T$ completely: namely, \emph{$\chi_{T_1,\lambda}(\xi) = \chi_{T_2,\lambda}(\xi)$ for all $\xi$ if and only if $T_1=T_2$}~\cite[p.~199]{HOLEVO}.

If $\lambda\geq 0$, the $\lambda$-ordered characteristic function is guaranteed to be square-integrable; however, it may or may not be such if $\lambda<0$. In any case, if $\chi_{T,\lambda}$ happens to be square-integrable then its Fourier transform gives the \deff{$\boldsymbol{\lambda}$-quasi-probability distribution} of $T$, in formula
\bb
W_{T,\lambda}(u) \coloneqq \frac{1}{2^m \pi^{2m}} \int d^{2m}\xi\, \chi_{T,\lambda}(\xi)\, e^{-i u^\intercal \Omega \xi}\, .
\label{W_lambda}
\ee

If $T=\rho$ is a quantum state and the function $W_{\rho,\lambda}$ is well defined, it is not difficult to show that it is also real-valued. It is usually referred to as a quasi-probability distribution because in general it can take on negative values. In fact, the negativity of $W_{\rho,\lambda}$ (for $\lambda<1$) is an unmistakable signature of quantumness. Accordingly, it turns out that substantial physical insight can be gained by considering the family of resource theories whose free states are those with a non-negative $W_{\rho,\lambda}$. Defining this notion rigorously may appear problematic at first sight, because for a given $\rho\in \D(\HH_m)$ the object $W_{\rho,\lambda}$ may be ill-defined as a function --- this is the case if $\chi_{\rho,\lambda}$ in~\eqref{chi_lambda} is not square-integrable --- and thus it may be unclear how to check that $W_{\rho,\lambda}\geq 0$. While this problem can be overcome by resorting to the theory of distributions, we prefer to take a simpler route here. All we need is the following notion.

\begin{Def}[{\cite[Chapter~5]{BHATIA}}] \label{positive_definite}
A function $f:\R^s \to \C$ is called \deff{positive definite} if for all positive integers $n\in \N_+$ and all choices of $\xi_1,\ldots, \xi_n\in \R^s$, it holds that
\bb
\sum_{\mu,\nu=1}^n f(\xi_\mu-\xi_\nu) \ketbraa{\mu}{\nu} \geq 0\, ,
\ee
i.e.\ if the $n\times n$ complex matrix on the left-hand side turns out to be positive semi-definite. Here, $\ket{\mu}$ denotes the $\mu^{\text{th}}$ vector of the canonical basis of $\C^n$.
\end{Def}

The connection between this notion and that of non-negativity of the $\lambda$-quasi-probability distribution is captured by Bochner's theorem~\cite[Theorem~1.8.9]{USHAKOV}: \emph{a function $f:\R^s\to \C$ is the Fourier transform of a measure on $\R^s$ if and only if it is continuous at $0$ and positive definite. In this case, the measure is in fact a probability measure if and only if $f(0)=1$.}

We are now ready to define rigorously our object of interest.

\begin{Def}
For $\lambda\in [-1,1]$ and a positive integer $m$, the set $\PP_{m,\lambda}$ of free states of the \deff{resource theory of $\boldsymbol{\lambda}$-negativity} on $m$ modes is defined by
\bb
\PP_{m,\lambda} \coloneqq \left\{ \sigma\in \D(\HH_m):\, \text{$\chi_{\sigma,\lambda}$ is positive definite} \right\} ,
\label{lambda_positive_states}
\ee
where $\chi_{\sigma,\lambda}$ is given by~\eqref{chi_lambda}, and the notion of positive definiteness for functions is explained in Definition~\ref{positive_definite}.
\end{Def}

\begin{rem} \label{explicit_condition_PP_m_lambda_rem}
An alternative way to rephrase~\eqref{lambda_positive_states} is as follows:
\bb
\PP_{m,\lambda} = \left\{ \sigma\in \D(\HH_m):\, \Tr \left[ \sigma \sum_{\mu,\nu=1}^n c_\mu^* c_\nu e^{-\frac{\lambda}{4}\|\xi_\mu-\xi_\nu\|^2} \W_{\xi_\mu-\xi_\nu} \right] \geq 0\quad \forall\, n\in \N,\, c\in \C^n,\, \xi_1,\ldots, \xi_n\in \R^{2m} \right\} .
\label{alternative_lambda_positive_states}
\ee
\end{rem}

\begin{rem} \label{Bochner_connection_rem}
Thanks to the aforementioned Bochner's theorem, $\PP_{m,\lambda}$ (respectively, $\cone(\PP_{m,\lambda})$) is easily seen to contain \emph{precisely} those quantum states (respectively, positive trace class operators) $\sigma$ for which $\chi_{\sigma,\lambda}$ is the Fourier transform of a probability measure (respectively, a measure) on $\R^{2m}$. In fact, due to the strong continuity of the map $\R^{2m}\ni \xi\mapsto \W_\xi$, in turn a consequence of Stone's theorem~\cite[Theorem~10.15]{HALL}, the $\lambda$-ordered characteristic function is always continuous, and moreover it satisfies $\chi_{\sigma,\lambda}(0)=\Tr \sigma =1$.
\end{rem}

Three special cases of particular physical significance, corresponding to as many values of $\lambda$, are as follows:
\begin{enumerate}
    \item $\lambda=1$: $W_{\rho,1}$ is called the \deff{Husimi function} of $\rho$~\cite{Husimi}; it has the special representation 
    \bb
    W_{\rho,1}(u)=\frac{1}{\pi^{m}}\, \braket{u|\rho|u}\, ,
    \label{Husimi}
    \ee
    where $\ket{u}$ is a coherent state~\eqref{coherent_state}. Hence, $W_{\rho,1}\geq 0$ for all quantum states, i.e.\ $\PP_{m,1} = \D(\HH_m)$.
    \item $\lambda=0$: $W_{\rho,0}$ is known as the \deff{Wigner function} of $\rho$~\cite{Wigner, Grossmann1976, Hillery1984}. The set $\PP_{m,0}$ of states with a non-negative Wigner function~\cite{Hudson1974, Hudson-thm-multimode, Broecker1995}, together with the associated quantum resource theory~\cite{Albarelli2018, Tan2020, CV-data-hiding, KK-VV-GOCC} have been the subject of intense study.
    \item $\lambda=-1$: $W_{\rho,-1}$ is referred to as the \deff{Glauber--Sudarshan $\boldsymbol{P}$-function} of $\rho$~\cite{Glauber1963, Sudarshan1963}. The set $\PP_{m,-1}$ in this case can be described more compactly as the (trace norm) closed convex hull of the set of coherent states~\eqref{coherent_state}, in formula~\cite{Bach1986}
    \bb
    \PP_{m,-1} = \cl_{\mathrm{tn}}\left( \co\left\{ \ketbra{u}:\, u\in \R^{2m} \right\} \right) .
    \label{classical_states}
    \ee
    The resource theory of optical non-classicality, in which the free states are those in~\eqref{classical_states}, has garnered a lot of attention in the last few decades~\cite{Sperling2015, Tan2017, Yadin2018, NC-review, taming-PRA, taming-PRL, nonclassicality}.
\end{enumerate}

In general, we define the \deff{relative entropy of $\boldsymbol{\lambda}$-negativity} of an arbitrary $m$-mode state $\rho\in \D(\HH_m)$ as the relative entropy of resource associated to the set $\PP_{m,\lambda}$, in formula
\bb
N_{R,\lambda}(\rho)\coloneqq D_{\PP_{m,\lambda}}(\rho) = \inf_{\sigma \in \PP_{m,\lambda}} D(\rho\|\sigma)\, .
\label{relent_lambda_negativity}
\ee
For the special case $\lambda=-1$, this quantity has been considered in~\cite{nonclassicality}, where it is employed to give upper bounds to transformation rates in the resource theory of non-classicality.

In order to apply Theorems~\ref{achievable_relent_thm} and~\ref{variational_thm} to the relative entropy of $\lambda$-negativity, we must first prove the weak*-closedness of the cone generated by $\PP_{m,\lambda}$. This has been established in Ref.~\cite[Lemma~38]{nonclassicality} for the case $\lambda=-1$, but all other cases (except for the trivial one $\lambda=1$) have not been studied elsewhere, to the best of our knowledge. In order to apply Theorem~\ref{w*_closed_condition_thm} to the case at hand, we need a preliminary result. Let us fix some terminology first. Denote with
\bb
N\coloneqq \sum_{j=1}^m \frac{x_j^2+p_j^2-1}{2}
\label{total_photon_number}
\ee
the `total photon number' Hamiltonian on $\HH_m=L^2(\R^m)$. This can be diagonalised as $N=\sum_{k_1,\ldots, k_m\in \N} (k_1+\ldots +k_m) \ketbra{k_1}_1\otimes \ldots \otimes \ketbra{k_m}_m$, where $\ket{k}_j$ is the $k^\text{th}$ `Fock state' on the $j^\text{th}$ mode. Accordingly, for some $\eta\in [-1,1]$ we will set
\bb
\eta^N \coloneqq \sum_{k_1,\ldots, k_m\in \N} \eta^{k_1+\ldots +k_m} \ketbra{k_1}_1\otimes \ldots \otimes \ketbra{k_m}_m = \left( \sumno_{k\in \N} \eta^k \ketbra{k}\right)^{\otimes m}\, ,
\ee
with the convention that $0^0=1$.

\begin{lemma} \label{sandwiching_lemma}
For an arbitrary $\eta\in [-1,1]$, $\lambda\in [-1,1]$, and $\sigma\in \PP_{m,\lambda}$, we have that
\bb
\frac{\eta^N\sigma\, \eta^N}{\Tr\left[\eta^N\sigma\, \eta^N\right]} \in \PP_{m,\lambda}\, .
\ee
\end{lemma}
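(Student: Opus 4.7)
The plan is to realise the map $\sigma \mapsto \eta^N \sigma \eta^N$ as a composition of elementary Gaussian operations, each of which preserves the cone $\cone(\PP_{m,\lambda})$; the normalised statement of the lemma then follows since any nonzero element of $\cone(\PP_{m,\lambda})$ divided by its trace is, by construction, a state in $\PP_{m,\lambda}$. First I reduce to $\eta \in [0,1]$: for $\eta<0$, write $\eta^N = P\,|\eta|^N$ where $P \coloneqq (-1)^N$ is the parity operator; the identity $P\W_\xi P = \W_{-\xi}$ implies that conjugation by $P$ maps $\chi_{\rho,\lambda}(\xi)$ to $\chi_{\rho,\lambda}(-\xi)$, a transformation that manifestly preserves positive-definiteness.

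For $\eta\in [0,1]$, I exhibit a Stinespring-type dilation. Introducing $m$ ancillary modes $B$ and the passive Gaussian (mode-wise) beamsplitter unitary $U_{BS}$ of transmissivity $\eta^2$, the coherent-state identity $\eta^N|\alpha\rangle = e^{-(1-\eta^2)|\alpha|^2/2}|\eta\alpha\rangle$ yields $\langle 0|_B U_{BS}|0\rangle_B = \eta^N$, whence
\begin{equation*}
    \eta^N \sigma \eta^N \;=\; \langle 0|_B\, U_{BS}(\sigma\otimes|0\rangle\langle 0|_B)U_{BS}^\dag\, |0\rangle_B\,.
\end{equation*}
This decomposes the operation into three steps: (i) tensoring with the vacuum, (ii) conjugation by the beamsplitter, (iii) the vacuum-sandwich $\tau\mapsto\langle 0|_B\tau|0\rangle_B$. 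For~(i), the vacuum has the positive definite Gaussian $\lambda$-characteristic function $e^{-(1+\lambda)\|\xi_B\|^2/4}$, and products of positive definite functions are positive definite (cf.\ Remark~\ref{Bochner_connection_rem}). For~(ii), passivity ensures that the underlying symplectic matrix $S$ is also orthogonal, so $U_{BS}^\dag\W_\xi U_{BS}=\W_{S\xi}$ gives $\chi_{U_{BS}\tau U_{BS}^\dag,\lambda}(\xi) = \chi_{\tau,\lambda}(S\xi)$; positive-definiteness is preserved under invertible linear reparametrisation of the argument.

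The crux is step~(iii). Expanding $|0\rangle\langle 0|_B$ via the Weyl inversion formula and tracing out $B$, a short computation yields
\begin{equation*}
    \chi_{\langle 0|_B\tau|0\rangle_B,\lambda}(\xi_A)\;\propto\;\int_{\R^{2m}} e^{(\lambda-1)\|\xi_B\|^2/4}\,\chi_{\tau,\lambda}(\xi_A,-\xi_B)\,d\xi_B\,.
\end{equation*}
Assuming $\tau \in \cone(\PP_{2m,\lambda})$, Bochner's theorem furnishes a finite positive measure $\nu$ on $\R^{4m}$ whose symplectic Fourier transform equals $\chi_{\tau,\lambda}$. Substituting and exchanging the order of integration (Fubini, justified by absolute convergence for $\lambda<1$), the Gaussian $\xi_B$-integral produces the positive weight $e^{-\|u_B\|^2/(1-\lambda)}$ times a positive normalisation, thereby exhibiting $\chi_{\langle 0|_B\tau|0\rangle_B,\lambda}$ as the Fourier transform of the positive marginal measure obtained by integrating $\nu$ against this Gaussian in $u_B$. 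By Bochner in the other direction, the resulting characteristic function is positive definite, placing the output in $\cone(\PP_{m,\lambda})$. The case $\lambda=1$ is trivial since $\PP_{m,1}=\D(\HH_m)$.

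The main technical obstacle is the rigorous execution of step~(iii): controlling the Gaussian integral and the Fubini exchange uniformly, and carefully tracking the normalisations in the Weyl inversion formula; everything else reduces to routine manipulations with characteristic functions and the basic properties of positive definite functions.
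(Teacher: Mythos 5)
Your proof is correct, but it takes a genuinely different route from the paper's. The paper proceeds by a single direct computation: it represents $\eta^N$ (for $\eta\neq -1,0,1$) as a Gaussian integral of Weyl operators via the thermal-state characteristic function, evaluates $\Tr\left[\eta^N\sigma\,\eta^N\W_\xi\right]$ using the Weyl form of the CCR, inserts the Bochner measure for $\chi_{\sigma,\lambda}$, performs one rather heavy Gaussian integration, and finally recognises the result as a Gaussian times the Fourier transform of a measure, concluding via the convolution theorem. You instead dilate the map $\sigma\mapsto\eta^N\sigma\,\eta^N$ through a beamsplitter of transmissivity $\eta^2$ with a vacuum ancilla, and verify separately that each of the three elementary steps --- tensoring with the vacuum (Schur product of positive definite functions), conjugation by a passive Gaussian unitary (orthogonality of the symplectic matrix, which is what keeps the $e^{-\lambda\|\xi\|^2/4}$ factor invariant --- you rightly flag this), and vacuum post-selection (one modest Gaussian integral plus Fubini and Bochner) --- preserves $\cone(\PP_{m,\lambda})$; negative $\eta$ is handled by the parity operator exactly as in the paper's $\eta=-1$ case. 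Your decomposition is more modular and yields as by-products statements of independent interest (passive Gaussian unitaries and vacuum heralding preserve the set of states with non-negative $\lambda$-quasi-probability), and the only Gaussian integral you need is substantially lighter than the paper's; the price is that you must import the beamsplitter coherent-state identity and the metaplectic covariance $U^\dag\W_\xi U=\W_{S\xi}$, whereas the paper's argument stays entirely within characteristic-function algebra. Two cosmetic points: the claim that pointwise products of positive definite functions are positive definite is the Schur product theorem rather than a consequence of Remark~\ref{Bochner_connection_rem} (which you should cite only for the equivalence between membership in $\cone(\PP_{m,\lambda})$ and being the Fourier transform of a measure), and you should state explicitly that the normalisation at the end is legitimate because any nonzero element of $\cone(\PP_{m,\lambda})$ has strictly positive trace.
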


\begin{proof}
The case $\eta=1$ is trivial, while the claim for $\eta=-1$ follows from the simple observation that --- since $(-1)^N$ is the parity operator --- we have that $\chi_{(-1)^N \sigma\, (-1)^N\!,\,\lambda}(\xi) = \chi_{\sigma,\lambda}(-\xi)$ is positive definite if $\chi_{\sigma,\lambda}$ is such. If $\eta=0$ then $\eta^N=\ketbra{0}$, where $\ket{0}=\ket{0}_1\otimes \ldots \otimes \ket{0}_m$ is the multi-mode `vacuum state'. Since $\ketbra{0}\in \PP_{m,\lambda}$ for every $\lambda\in [-1,1]$, also the case $\eta=0$ is easily dealt with.

From now on, we assume that $\eta\neq -1,0,1$. Employing the expression for the ($0$-ordered) characteristic function of a thermal state for the Hamiltonian $N$ and mean photon number $\frac{\eta}{1-\eta}$ (see e.g.~\cite[Eq.~(4.48)--(4.49)]{BUCCO}), one derives the representation
\bb
\eta^N = \frac{1}{(1-\eta)^m} \int \frac{d^{2m}\xi}{(2\pi)^m}\, e^{-\frac14 \frac{1+\eta}{1-\eta}\, \|\xi\|^2} \W_\xi\, ,
\label{sandwiching_proof_eq1}
\ee
where the integral is in the Bochner sense with respect to the operator norm. Using this twice and leveraging also the Weyl form of the canonical commutation relations~\eqref{Weyl_CCR}, for an arbitrary $\xi\in \R^{2m}$ we obtain that 
\bb
\,\! &\Tr \left[ \eta^N\! \sigma\, \eta^N \W_\xi\right] \\
&\qquad = \frac{1}{(1-\eta)^{2m}} \int \frac{d^{2m}u}{(2\pi)^m} \frac{d^{2m}v}{(2\pi)^m} \exp \left[ - \frac14 \frac{1+\eta}{1-\eta} \left( \|u\|^2 + \|v\|^2\right) - \frac{i}{2}(u-v)^\intercal \Omega \xi - \frac{i}{2} u^\intercal \Omega v \right] \chi_\sigma(u+v+\xi) \\
&\qquad = \frac{1}{(1-\eta)^{2m}} \int \frac{d^{2m}w}{(2\pi)^m} \frac{d^{2m}z}{(2\pi)^m} \exp \left[ - \frac14 \frac{1+\eta}{1-\eta} \left( \frac{\|w\|^2}{2} + 2\|z\|^2\right) + i \left(\xi +\frac{w}{2}\right)^\intercal \Omega z \right] \chi_\sigma(w+\xi) \\
&\qquad = \frac{1}{(1-\eta^2)^m} \int \frac{d^{2m}w}{(2\pi)^m} \exp \left[ - \frac18 \frac{1+\eta}{1-\eta} \|w\|^2 - \frac18 \frac{1-\eta}{1+\eta} \left\|w+2\xi\right\|^2 \right] \chi_\sigma(w+\xi)\, .
\label{sandwiching_proof_eq2}
\ee
In the above calculations, we implicitly used the Fubini--Tonelli theorem, applicable because $\chi_\sigma$ is bounded in modulus by $1$ and thanks to the absolute integrability of the Gaussians, and performed the change of variable $w\coloneqq u+v$ and $z\coloneqq \frac{u-v}{2}$.

Now, assume that $\sigma\in \PP_{m,\lambda}$. As per the above discussion, thanks to Bochner's theorem~\cite[Theorem~1.8.9]{USHAKOV} there exists a probability measure $\mu$ on $\R^{2m}$ such that
\bb
\chi_\sigma(x)\, e^{-\frac{\lambda}{4}\|x\|^2} = \chi_{\sigma,\lambda}(x) = \int d\mu(y)\, e^{-i x^\intercal \Omega y}\, .
\label{sandwiching_proof_eq3}
\ee
Plugging this representation into~\eqref{sandwiching_proof_eq2} and using once again the Fubini--Tonelli theorem to swap the integrals yields
\bb
\chi_{\eta^N\! \sigma\, \eta^N,\, \lambda}(\xi) &= \Tr \left[ \eta^N\! \sigma\, \eta^N \W_\xi\right] e^{-\frac{\lambda}{4}\, \|\xi\|^2} \\
&= \frac{1}{(1-\eta^2)^m} \int \frac{d^{2m}w}{(2\pi)^m} e^{- \frac18 \frac{1+\eta}{1-\eta} \|w\|^2 - \frac18 \frac{1-\eta}{1+\eta} \left\|w+2\xi\right\|^2} e^{\frac{\lambda}{4} \left(\|w+\xi\|^2 - \|\xi\|^2\right)} \int d\mu(y)\, e^{-i (w+\xi)^\intercal \Omega y} \\
&= \frac{1}{(1-\eta^2)^m} \int d\mu(y)\, e^{-i\xi^\intercal \Omega y} \int \frac{d^{2m}w}{(2\pi)^m} e^{- \frac18 \frac{1+\eta}{1-\eta} \|w\|^2 - \frac18 \frac{1-\eta}{1+\eta} \left\|w+2\xi\right\|^2} e^{\frac{\lambda}{4} \left(\|w+\xi\|^2 - \|\xi\|^2\right)} \, e^{-i w^\intercal \Omega y} \\
&= \frac{1}{(1-\eta^2)^m} \int d\mu(y)\, e^{-i\xi^\intercal \Omega y} \int \frac{d^{2m}w}{(2\pi)^m} e^{-\frac14 \left( \frac{1+\eta^2}{1-\eta^2} - \lambda \right) \|w\|^2 - w^\intercal \left( i \Omega y + \frac12 \left(\frac{1-\eta}{1+\eta} - \lambda \right) \xi \right) - \frac12 \frac{1-\eta}{1+\eta}\|\xi\|^2} \\
&= \frac{1}{(1-\eta^2)^m} \int d\mu(y)\, e^{-i\xi^\intercal \Omega y} \, \frac{1}{(2\pi)^m} \frac{(4\pi)^m}{\left(\frac{1+\eta^2}{1-\eta^2} - \lambda\right)^m} \\
&\hspace{4ex} \cdot \exp\left[-\frac12 \frac{1\!-\!\eta}{1\!+\!\eta}\|\xi\|^2 + \frac{1}{\frac{1+\eta^2}{1-\eta^2} - \lambda} \left( -\|y\|^2 + \frac14 \left( \frac{1\!-\!\eta}{1\!+\!\eta}-\lambda\right)^2 \|\xi\|^2 + i \left( \frac{1\!-\!\eta}{1\!+\!\eta}-\lambda\right) \xi^\intercal \Omega y \right) \right] \\
&= \frac{2^m}{\left( 1 - \lambda + (1+\lambda)\eta^2 \right)^m} \\
&\hspace{4ex}\cdot \int d\mu(y)\, \exp\left[ -\frac{1}{1 - \lambda + (1+\lambda)\eta^2} \left( 2i \eta\, \xi^\intercal \Omega y -\frac14 (1-\eta^2)(1-\lambda^2) \|\xi\|^2 + (1-\eta^2) \|y\|^2 \right) \right]
\label{sandwiching_proof_eq3.5}
\ee
\bb
&= e^{- \frac14 \frac{(1-\eta^2)(1-\lambda^2)}{1 - \lambda + (1+\lambda)\eta^2}\, \|\xi\|^2}  \left( \frac{1 - \lambda + (1+\lambda)\eta^2}{2\eta^2} \right)^m \int d\mu'(y')\, e^{-i\xi^\intercal \Omega y' - \frac{1}{16\eta^2}(1-\eta^2) \left( 1 - \lambda + (1+\lambda)\eta^2\right) \|y'\|^2}\, ,
\label{sandwiching_proof_eq4}
\ee
where in the last line we introduced the change of variables
\bb
y'\coloneqq \frac{2\eta}{1 - \lambda + (1+\lambda)\eta^2}\, y\, ,\qquad d\mu'(y') \coloneqq \left( \frac{2\eta}{1 - \lambda + (1+\lambda)\eta^2}\right)^{2m} d\mu \left( \frac{1 - \lambda + (1+\lambda)\eta^2}{2\eta}\, y \right) .
\ee
We now claim that from~\eqref{sandwiching_proof_eq3.5}--\eqref{sandwiching_proof_eq4} it is quite clear that $\chi_{\eta^N\! \sigma\, \eta^N\!,\, \lambda}$ is the Fourier transform of a measure on $\R^{2m}$, which allows us to conclude the proof thanks to Remark~\ref{Bochner_connection_rem}. Indeed, $\chi_{\eta^N\! \sigma\, \eta^N\!,\, \lambda}$ is written as a point-wise product of a Gaussian\footnote{Under the current assumption that $\eta\neq -1,0,1$, we have that $\frac{(1-\eta^2)(1-\lambda^2)}{1-\lambda+(1+\lambda)\eta^2}>0$ if $\lambda\neq \pm 1$. The cases where $\lambda=\pm 1$ are straightforward to address.} and the Fourier transform of a measure --- namely, $\mu'$ re-scaled by another Gaussian factor. Since the point-wise product of Fourier transforms is the Fourier transform of the convolution~\cite[Proposition~2.3.22(11)]{GRAFAKOS}, and a Gaussian is the Fourier transform of another Gaussian, we see that indeed $\chi_{\eta^N\! \sigma\, \eta^N\!,\, \lambda}$ is the Fourier transform of a measure --- namely, that obtained by convolving a Gaussian re-scaled version of $\mu'$ by another Gaussian.
\end{proof}

We are now ready to deduce:

\begin{cor} \label{P_lambda_weak*_closed}
For all positive integers $m$, the cone $\cone(\PP_{m,\lambda})$ generated by the set~\eqref{lambda_positive_states} of $\lambda$-positive states is weak*-closed.
\end{cor}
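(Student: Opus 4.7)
The plan is to verify that the three hypotheses of Theorem~\ref{w*_closed_condition_thm} are satisfied by $\FF = \PP_{m,\lambda}$ for a carefully chosen sequence of compact operators $(M_n)_{n\in\N}$, with the key input being Lemma~\ref{sandwiching_lemma}.

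First I would check convexity and trace norm closedness. Convexity~(i) is immediate from the alternative description in Remark~\ref{explicit_condition_PP_m_lambda_rem}: a convex combination $\sigma=p\sigma_1+(1-p)\sigma_2$ satisfies $\chi_{\sigma,\lambda}=p\chi_{\sigma_1,\lambda}+(1-p)\chi_{\sigma_2,\lambda}$, and the sum of positive definite functions is positive definite. For trace norm closedness~(ii), suppose $\sigma_k\tends{tn}{k\to\infty}\sigma$ with $\sigma_k\in \PP_{m,\lambda}$. Since Weyl operators are bounded with $\|\W_\xi\|_\infty=1$, for every fixed $\xi\in \R^{2m}$ one has $\chi_{\sigma_k,\lambda}(\xi)\to\chi_{\sigma,\lambda}(\xi)$. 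Point-wise limits of positive definite functions are positive definite, because the defining inequality in Definition~\ref{positive_definite} amounts to the positive semi-definiteness of a finite-dimensional Hermitian matrix, a condition preserved under entry-wise limits. Hence $\sigma\in \PP_{m,\lambda}$.

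Next I would construct the sequence $(M_n)_{n\in\N}$. Pick any sequence $\eta_n\in(0,1)$ with $\eta_n\uparrow 1$, and set $M_n\coloneqq \eta_n^N$, where $N$ is the total photon number operator~\eqref{total_photon_number}. Each $M_n$ is compact, indeed trace class, because its eigenvalues $\eta_n^{k_1+\ldots+k_m}$ form a sequence tending to $0$ (with the appropriate multi-index counting). Moreover, $M_n$ converges to the identity in the strong operator topology: on the dense subspace of finite linear combinations of Fock states this is obvious, and $\|M_n\|_\infty\leq 1$ allows the usual $3\varepsilon$-argument to extend the convergence to all $\ket{\psi}\in \HH_m$. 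Condition~(iii) of Theorem~\ref{w*_closed_condition_thm} is exactly the content of Lemma~\ref{sandwiching_lemma}: for any $\sigma\in \PP_{m,\lambda}$ the sandwiched operator $\eta_n^N \sigma\, \eta_n^N$ lies in $\cone(\PP_{m,\lambda})$, and since $\cone(\PP_{m,\lambda})$ is homogeneous and $\MM_n$ is linear, it follows that $\MM_n\!\left(\cone(\PP_{m,\lambda})\right)\subseteq \cone(\PP_{m,\lambda})$.

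With all three hypotheses verified, Theorem~\ref{w*_closed_condition_thm} yields the weak*-closedness of $\cone(\PP_{m,\lambda})$. There is really no serious obstacle here: the hard analytic work was done in Lemma~\ref{sandwiching_lemma}, which shows that conjugation by $\eta^N$ preserves the positive definiteness of the $\lambda$-ordered characteristic function. The only mild care is in ensuring that $\eta_n^N$ is both compact and strongly convergent to the identity, which is achieved by any $\eta_n\uparrow 1$ with $\eta_n<1$.
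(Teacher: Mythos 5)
Your proposal is correct and follows essentially the same route as the paper: the paper also applies Theorem~\ref{w*_closed_condition_thm} with $M_n=\big(\tfrac{n}{n+1}\big)^N$ (a particular instance of your $\eta_n^N$ with $\eta_n\uparrow 1$), verifies convexity and trace norm closedness of $\PP_{m,\lambda}$ directly from the description in Remark~\ref{explicit_condition_PP_m_lambda_rem}, and obtains hypothesis~(iii) from Lemma~\ref{sandwiching_lemma}. The only cosmetic difference is that the paper checks trace norm closedness by noting that the operators appearing in~\eqref{alternative_lambda_positive_states} are bounded, whereas you phrase the same fact via point-wise convergence of the $\lambda$-ordered characteristic functions; both are valid.
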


\begin{proof}
For $n\in \N$, let us construct the compact operator $M_n = \big(\frac{n}{n+1}\big)^N$, where $N$ is the total photon number Hamiltonian~\eqref{total_photon_number}. It is immediate to verify that $M_n$ converges to the identity in the strong operator topology, in the sense explained in the statement of Theorem~\ref{w*_closed_condition_thm}. Also, $\PP_{m,\lambda}$ is easily verified to be convex and trace norm closed --- this latter fact is apparent once one notices that the operators $\sum_{\mu,\nu=1}^n c_\mu^* c_\nu e^{-\frac{\lambda}{4}\|\xi_\mu-\xi_\nu\|^2} \W_{\xi_\mu-\xi_\nu}$ in~\eqref{alternative_lambda_positive_states} are finite linear combinations of unitary operators and hence bounded. The last missing condition needed to apply Theorem~\ref{w*_closed_condition_thm} with $\FF=\PP_{m,\lambda}$ is~(iii), i.e.\ that conjugation by $\big(\frac{n}{n+1}\big)^N$ preserves $\cone(\PP_{m,\lambda})$. This follows directly from Lemma~\ref{sandwiching_lemma}.
\end{proof}

Thanks to the weak*-closedness of the cone $\PP_{m,\lambda}$ we are now in position to apply Theorems~\ref{achievable_relent_thm} and~\ref{variational_thm}, yielding the following.

\begin{cor} \label{relent_lambda_negativity_cor}
The relative entropy of $\lambda$-negativity, defined in~\eqref{relent_lambda_negativity}, is:
\begin{enumerate}[(a)]
    \item always achieved, meaning that for all $\rho\in \D(\HH_m)$ there exists a state $\sigma\in \PP_{m,\lambda}$ such that $N_{R,\lambda} (\rho) = D\left(\rho\|\sigma\right)$; and
    \item lower semi-continuous with respect to the trace norm topology.
\end{enumerate}
Moreover, for every state $\rho\in \D(\HH_m)$ with finite entropy $S(\rho)<\infty$, it holds that
\bb
N_{R,\lambda}(\rho) = \sup_{X = X^\dag\in \B(\HH_m)} \left\{ \Tr \rho\, X - \sup_{\sigma \in \PP_{m,\lambda}} \ln \Tr e^{\ln \sigma + X} \right\} .
\label{variational_relent_lambda_negativity}
\ee
\end{cor}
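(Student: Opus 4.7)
The plan is to regard this as an immediate corollary of Theorems~\ref{achievable_relent_thm} and~\ref{variational_thm} applied with $\FF = \PP_{m,\lambda}$, and simply to verify the two hypotheses these results demand: weak*-closedness of $\cone(\PP_{m,\lambda})$ (needed for both theorems) and convexity of $\PP_{m,\lambda}$ (needed for Theorem~\ref{variational_thm}). Once both are in hand, claims~(a) and~(b) are read off directly from Theorem~\ref{achievable_relent_thm}(a)--(b), and the variational identity~\eqref{variational_relent_lambda_negativity} is precisely Theorem~\ref{variational_thm} instantiated in the present setting.

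For the weak*-closedness of $\cone(\PP_{m,\lambda})$ I would simply invoke Corollary~\ref{P_lambda_weak*_closed}, which has just been established by applying Theorem~\ref{w*_closed_condition_thm} to the compact approximants $M_n = \bigl(\tfrac{n}{n+1}\bigr)^N$ (strongly convergent to $\id$) together with the crucial conjugation-invariance $\eta^N \PP_{m,\lambda}\, \eta^N \subseteq \cone(\PP_{m,\lambda})$ furnished by Lemma~\ref{sandwiching_lemma}. So this hypothesis is effectively free at this point.

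Convexity of $\PP_{m,\lambda}$ follows at once from the characterisation in~\eqref{alternative_lambda_positive_states}: for each fixed tuple $(n, c, \xi_1,\ldots,\xi_n)$ the operator
\[
A \coloneqq \sum_{\mu,\nu=1}^n c_\mu^* c_\nu\, e^{-\frac{\lambda}{4}\|\xi_\mu-\xi_\nu\|^2}\, \W_{\xi_\mu-\xi_\nu} \in \B(\HH_m)
\]
is bounded and self-adjoint, so $\{\sigma\in \D(\HH_m):\Tr[\sigma A]\geq 0\}$ is the intersection of a weak*-closed half-space with $\D(\HH_m)$, hence convex. The set $\PP_{m,\lambda}$ is the intersection of all such half-spaces (indexed by tuples) with $\D(\HH_m)$, and an arbitrary intersection of convex sets is convex.

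With both hypotheses verified, no further work remains: Theorem~\ref{achievable_relent_thm} yields attainability of a closest free state and trace-norm lower semi-continuity of $N_{R,\lambda}$, while Theorem~\ref{variational_thm} yields the dual variational expression~\eqref{variational_relent_lambda_negativity} for all states of finite von Neumann entropy. The main obstacle is therefore not in this corollary itself but lies upstream, in Lemma~\ref{sandwiching_lemma} — the Gaussian-integral computation showing that $\chi_{\eta^N\!\sigma\,\eta^N\!,\lambda}$ is again the Fourier transform of a measure whenever $\chi_{\sigma,\lambda}$ is such — which is what unlocks Theorem~\ref{w*_closed_condition_thm} in this continuous-variable setting. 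Once that is granted, the corollary is a purely formal deduction.
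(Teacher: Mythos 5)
Your proposal is correct and follows exactly the paper's route: the paper likewise deduces the corollary as an immediate application of Theorems~\ref{achievable_relent_thm} and~\ref{variational_thm} once Corollary~\ref{P_lambda_weak*_closed} (itself resting on Lemma~\ref{sandwiching_lemma} via Theorem~\ref{w*_closed_condition_thm}) supplies the weak*-closedness of $\cone(\PP_{m,\lambda})$, with the convexity of $\PP_{m,\lambda}$ noted there just as you argue from~\eqref{alternative_lambda_positive_states}. No gaps.
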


\subsection{Relative entropy of non-Gaussianity} \label{non_Gaussianity_subsec}

In an $m$-mode continuous variable system, particularly simple yet experimentally relevant states are the so-called \deff{Gaussian states}. A state $\sigma\in \D(\HH)$ is said to be Gaussian if any of its $\lambda$-ordered characteristic functions $\chi_{\sigma,\lambda}$ is a Gaussian (and hence all are). Let us consider for instance the case $\lambda=0$. Since $\chi_{\sigma,0}$ achieves it maximum modulus at $0$ (see~\cite[Proposition~14]{QCLT} and~\cite[Lemma~10]{G-dilatable}), if it is a Gaussian it must be centred, i.e.~\cite[Eq.~(4.48)]{BUCCO}
\bb
\chi_{\sigma,0}(\xi) = e^{-\frac14 \xi^\intercal \Omega^\intercal V\Omega \xi + i s^\intercal \Omega \xi}\, ,
\label{chi_0_Gaussian}
\ee
where the real vector $s = \Tr [\sigma\, r] \in \R^{2m}$ and the $2m\times 2m$ real matrix $V = \Tr \left[ \sigma\, \{ r-s,(r-s)^\intercal\} \right]$ represent the first and second moments of the state, respectively. We will denote with $\GG_m$ the set of $m$-mode Gaussian states. Unlike all other sets of free states considered so far, \emph{$\GG_m$ is not convex.}

The \deff{relative entropy of non-Gaussianity} can be defined as the relative entropy of resource $D_\FF$ corresponding to the choice of free states $\FF=\GG_m$, in formula~\cite{Genoni2008, Marian2013}
\bb
\nonG(\rho) \coloneqq D_{\GG_m}(\rho) = \inf_{\sigma \in \GG_m} D(\rho\|\sigma)\, .
\label{relent_non_Gaussianity}
\ee
In Ref.~\cite{Marian2013} it was shown that \emph{if $\rho$ has well-defined second moments}, a condition that we equivalently rephrase by requiring that $\Tr \rho N<\infty$ for $N$ the total photon number Hamiltonian~\eqref{total_photon_number}, then the optimisation in~\eqref{relent_non_Gaussianity} is achieved at the Gaussian state $\sigma = \rho_\G$ with the same first\footnote{A state with well-defined second moments has also well-defined first moments and a finite entropy.} and second moments as those of $\rho$, hereafter called the \deff{Gaussification} of $\rho$, i.e.
\bb
\nonG(\rho) = D(\rho\|\rho_\G) = S(\rho_\G) - S(\rho)\, .
\label{relent_non_Gaussianity_closed_form}
\ee

To analyse this object more effectively, we now recall an alternative characterisation of $\GG_m$ that makes this set easier to work with. To this end, let us introduce the unitary modelling a 50:50 beam splitter acting on a bipartite quantum system $AB$, where $A,B$ are composed of $m$ mode each. This can be defined by
\bb
U \coloneqq \exp\left[\frac{\pi}{4}\sumno_{j=1}^m \left(x_j^A p_j^B - p_j^A x_j^B\right) \right] ,
\label{beam_splitter}
\ee
where $x_j^A,p_k^A$ are the canonical operators corresponding to system $A$, and analogously for $B$. The action of the beam splitter unitary on a tensor product of coherent states $\ket{u,v}=\ket{u}_A\otimes \ket{v}_B = \W_u \ket{0}_A \otimes \W_v\ket{0}_B$ (see~\eqref{coherent_state}) can be expressed as
\bb
U\ket{u,v} = \Ket{\frac{u+v}{\sqrt2},\, \frac{-u+v}{\sqrt2}}\, ,\qquad U^\dag\ket{u,v} = \Ket{\frac{u-v}{\sqrt2},\, \frac{u+v}{\sqrt2}}\, .
\label{beam_splitter_coherent}
\ee
Having established the notation, we now report a (marginally simplified) version of an interesting result by Cuesta~\cite{Cuesta2020}.

\begin{lemma}{{(Quantum Darmois--Skitovich theorem~\cite[Theorem~7]{Cuesta2020})}} \label{QDS_lemma}
A trace class operator $X\in \T(\HH_m)$ with $X\geq 0$ is a multiple of a Gaussian state if and only if
\bb
U(X_A\otimes X_B) U^\dag = X_A\otimes X_B\, ,
\ee
where $A,B$ stand for two $m$-mode systems, and $U$ is the beam splitter unitary~\eqref{beam_splitter}.
\end{lemma}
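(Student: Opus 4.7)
The plan is to encode the invariance $U(X\otimes X)U^\dag = X\otimes X$ as a multiplicative functional equation for the characteristic function $\chi(\xi) := \Tr[X\W_\xi]$, and then solve this equation along the lines of the classical Darmois--Skitovich theorem. From $U\ket{0,0}=\ket{0,0}$ and $\W_{u,v}\ket{0,0}=\ket{u,v}$, combined with the action (\ref{beam_splitter_coherent}), one derives $U\W_{\xi_A,\xi_B}U^\dag = \W_{(\xi_A+\xi_B)/\sqrt{2},\,(-\xi_A+\xi_B)/\sqrt{2}}$. Taking the trace against $X\otimes X$ on both sides of the invariance identity then yields
\begin{equation}
\chi(\xi_A)\,\chi(\xi_B) \;=\; \chi\!\left(\tfrac{\xi_A+\xi_B}{\sqrt{2}}\right)\chi\!\left(\tfrac{\xi_B-\xi_A}{\sqrt{2}}\right)\qquad \forall\,\xi_A,\xi_B\in\R^{2m}.
\label{DS_plan}
\end{equation}
The \emph{only if} direction is an immediate check: for $X$ proportional to a centred Gaussian state with covariance $V$, one has $\chi(\xi) = (\Tr X)\,e^{-\xi^\intercal V\xi/4}$, and (\ref{DS_plan}) collapses to the polarization identity $(u+v)^\intercal V(u+v)+(u-v)^\intercal V(u-v) = 2u^\intercal Vu + 2v^\intercal Vv$.

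For the converse I would proceed in three steps. First, use continuity of $\chi$ and $\chi(0) = \Tr X > 0$ (the case $X=0$ being trivial) to see that $\chi$ is nonzero on some open ball around the origin, and iterate (\ref{DS_plan}) to propagate this to all of $\R^{2m}$; this allows us to define $\psi := \log \chi$ as a continuous function on the whole space. Second, substitute $u=(\xi_A+\xi_B)/\sqrt{2}$, $v=(\xi_B-\xi_A)/\sqrt{2}$ to rewrite (\ref{DS_plan}) as the parallelogram-type functional equation
\begin{equation}
\psi\!\left(\tfrac{u+v}{\sqrt{2}}\right) + \psi\!\left(\tfrac{u-v}{\sqrt{2}}\right) \;=\; \psi(u) + \psi(v)\, ,
\end{equation}
and invoke the classical Acz\'el-type theory of Jensen/parallelogram equations to conclude that the only continuous solutions are of the form $\psi(\xi) = c + i\ell(\xi) - Q(\xi)$ with $\ell$ linear and $Q$ a real quadratic form. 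Third, impose the positivity $X \geq 0$: this forces $Q$ to be given by a real symmetric matrix $V$ satisfying the Heisenberg bound $V+i\Omega\geq 0$, and inverting the Weyl transform recovers $X$ as a positive multiple of the unique Gaussian density operator whose characteristic function is this Gaussian, concluding the argument.

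I expect the main obstacle to be the second step above, namely the classification of continuous solutions of the parallelogram identity on $\R^{2m}$. The quantum provenance of $\chi$ buys us continuity but not differentiability or analyticity \emph{a priori}, so the argument must rest on functional-equation machinery (dyadic iteration of the identity, followed by a continuity/density step promoting midpoint behaviour to full quadratic behaviour) rather than on Taylor expansion. The third step, while technical, is essentially automatic once a Gaussian form of $\chi$ is in hand, since $V+i\Omega\geq 0$ is exactly the condition characterising those centred Gaussian characteristic functions that arise from positive trace-class operators.
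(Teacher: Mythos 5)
The paper offers no proof of this lemma at all --- it is imported from Ref.~[Cuesta2020, Theorem~7] and merely restated --- so your self-contained argument is necessarily a different route, and the functional-equation strategy you adopt is the natural one. The ``if'' direction goes through essentially as you sketch it: conjugating Weyl operators by $U$ (no extra phase arises, since the transformation is symplectic with no displacement) and tracing against $X\otimes X$ gives your multiplicative identity for $\chi$; the specialisation $\xi_A=\xi_B$ yields $\chi(\xi)^2=\chi(0)\,\chi(\sqrt{2}\,\xi)$, which shows that any zero of $\chi$ would propagate down to the origin and contradict $\chi(0)=\Tr X>0$, so $\chi$ is nowhere zero and a continuous branch of $\log\chi$ exists on the simply connected $\R^{2m}$; after rescaling one lands on the quadratic (parallelogram) functional equation, whose continuous solutions are quadratic forms. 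One correction to your step two: the general solution cannot carry a linear term. Setting $\xi_A=0$ in the functional equation and comparing with the $\xi_B=0$ specialisation forces $\psi(-\xi)=\psi(\xi)$, so $\ell\equiv 0$ and the solution is a \emph{centred} Gaussian; combined with $X=X^\dag$ (which gives $\chi(-\xi)=\overline{\chi(\xi)}$) this shows $\chi>0$ with $Q$ a real quadratic form, and since a positive trace class operator is Hilbert--Schmidt one gets $\chi\in L^2$, hence $Q$ is positive definite and your final positivity step is indeed automatic.

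The genuine gap is in the ``only if'' direction, and it is not one you can close: you verify the invariance only for centred Gaussians, and in fact a displaced Gaussian does \emph{not} satisfy it. For a coherent state $X=\ketbra{w}$ with $w\neq 0$, equation~\eqref{beam_splitter_coherent} gives $U(X\otimes X)U^\dag=\ketbra{\sqrt{2}\,w}\otimes\ketbra{0}\neq X\otimes X$; more generally the first moments transform as $(s,s)\mapsto(\sqrt{2}\,s,\,0)$, while the covariance part $V\oplus V$ is indeed preserved. So the equivalence as literally stated in Lemma~\ref{QDS_lemma} holds only if ``Gaussian state'' is read as ``centred Gaussian state'', or if the invariance condition is weakened to ``$U(X\otimes X)U^\dag$ is a product state'', which is the actual Darmois--Skitovich/Bernstein hypothesis and is what displaced Gaussians do satisfy. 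Your argument correctly establishes the centred version (and your even-function observation above shows the ``if'' direction can only ever produce centred Gaussians); you should flag this discrepancy explicitly rather than silently restricting the only-if check to the centred case, not least because the ``only if'' direction is invoked for arbitrary Gaussian $\sigma_\alpha$ in the proof of Lemma~\ref{w*_closed_cone_G_lemma}.
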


We are now ready to prove the following generalisation of the result in Ref.~\cite[Lemma~1 in Appendix~A]{G-resource-theories} establishing the trace norm closedness of $\GG_m$.

\begin{lemma} \label{w*_closed_cone_G_lemma}
The cone generated by $m$-mode Gaussian states, $\cone(\GG_m)$, is weak*-closed.
\end{lemma}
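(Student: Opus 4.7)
The plan is to exploit the beam-splitter characterization of $\cone(\GG_m)$ furnished by the quantum Darmois--Skitovich theorem (Lemma~\ref{QDS_lemma}), which identifies $\cone(\GG_m)$ with the set of those $X \in \T_+(\HH_m)$ satisfying the nonlinear invariance $U(X \otimes X) U^\dag = X \otimes X$, where $U$ is the beam-splitter unitary~\eqref{beam_splitter}. Since the positive cone $\T_+(\HH_m)$ is already weak*-closed (Section~\ref{topologies_subsec}), the task reduces to showing that this beam-splitter invariance is preserved under weak*-limits. Note that the more immediate route via Theorem~\ref{w*_closed_condition_thm} is unavailable because $\GG_m$ is not convex, forcing us to take this more hands-on approach.

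The technical core of the argument is the joint weak*-continuity of the tensor-product map $(A,B) \mapsto A \otimes B$ on norm-bounded subsets of $\T(\HH_m) \times \T(\HH_m)$. Given a weak*-convergent sequence $(X_n) \subseteq \cone(\GG_m)$ with $X_n \to X$, the uniform boundedness principle forces $\sup_n \|X_n\|_1 \leq M < \infty$. I would fix an arbitrary compact operator $K \in \K(\HH_m \otimes \HH_m)$ and approximate it in operator norm by a finite sum of elementary tensors $K_\varepsilon = \sum_{j=1}^{N} L_j \otimes M_j$ with $L_j, M_j \in \K(\HH_m)$, relying on the standard C*-algebraic fact that the minimal tensor product $\K(\HH_m) \otimes_{\min} \K(\HH_m)$ equals $\K(\HH_m \otimes \HH_m)$. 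On each elementary tensor the trace factors as $\Tr[(X_n \otimes X_n)(L_j \otimes M_j)] = \Tr[X_n L_j]\, \Tr[X_n M_j]$, which converges to $\Tr[X L_j]\, \Tr[X M_j]$ by separate weak*-continuity, while the residual obeys $\bigl|\Tr[(X_n \otimes X_n)(K - K_\varepsilon)]\bigr| \leq \|X_n\|_1^2\, \|K - K_\varepsilon\|_\infty \leq M^2 \varepsilon$, with an analogous bound for $X \otimes X$. Sending first $n \to \infty$ and then $\varepsilon \to 0$ delivers $X_n \otimes X_n \to X \otimes X$ in the weak*-topology on $\T(\HH_m \otimes \HH_m)$.

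Conjugation by $U$ is itself weak*-continuous because $Y \mapsto U Y U^\dag$ has adjoint action $K \mapsto U^\dag K U$ on the predual $\K(\HH_m \otimes \HH_m)$, and this preserves compactness. Combining these ingredients, the identity $X_n \otimes X_n = U(X_n \otimes X_n) U^\dag$ passes to the weak*-limit as $X \otimes X = U(X \otimes X) U^\dag$; together with $X \geq 0$ coming from weak*-closedness of $\T_+(\HH_m)$, Lemma~\ref{QDS_lemma} then yields $X \in \cone(\GG_m)$. This establishes weak*-sequential-closedness of $\cone(\GG_m)$, which is already what is ultimately needed in the intended applications of Theorem~\ref{achievable_relent_thm}: the relevant object there is $\cone(\GG_m) \cap B_1$, on which the weak*-topology is metrizable by Remark~\ref{bounded_w*_metrisable_rem}, so sequential-closedness on that bounded subset already delivers weak*-closedness, and the same reasoning applies to $\cone(\GG_m) \cap B_r$ for every $r > 0$.

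The main obstacle I expect is the bilinearity of the tensor-square map $X \mapsto X \otimes X$, which prevents a direct appeal to separate weak*-continuity; the whole scheme hinges on the simultaneous interplay of the norm-bound supplied by the uniform boundedness principle, the density of elementary tensors of compact operators in $\K(\HH_m \otimes \HH_m)$, and the preservation of compactness under $U$-conjugation.
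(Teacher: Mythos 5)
Your tensor-square continuity argument is sound as far as it goes, and it is a genuinely different route from the paper's: instead of testing $X\otimes X$ against arbitrary compact operators, the paper tests it only against the rank-one operators $\ketbraa{u',v'}{u,v}$ built from products of coherent states. Because the beam splitter maps products of coherent states to products of coherent states (Eq.~\eqref{beam_splitter_coherent}), every matrix element of $U(X_\alpha\otimes X_\alpha)U^\dag$ in that family factorises into a product of two matrix elements of $X_\alpha$ itself, so the limit can be taken term by term with no approximation step and --- crucially --- with no uniform norm bound on the family. Your approach instead needs $\sup_n\|X_n\|_1<\infty$ to control the error term $\|X_n\|_1^2\,\|K-K_\varepsilon\|_\infty$, and the uniform boundedness principle supplies such a bound only for sequences (more generally, for pointwise bounded families), not for arbitrary weak*-convergent nets.

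This is where the genuine gap lies. What you actually prove is that $\cone(\GG_m)$ is weak*-\emph{sequentially} closed, equivalently that $\cone(\GG_m)\cap B_r$ is weak*-closed for every $r>0$. For a \emph{convex} set in the dual of a separable Banach space this would upgrade to full weak*-closedness via the Krein--\v{S}mulian theorem, exactly as in the proof of Theorem~\ref{w*_closed_condition_thm}; but $\GG_m$ is not convex --- which is precisely why this lemma needs a separate argument --- so no such upgrade is available, and the literal statement of the lemma is not established. You are right that your weaker conclusion suffices for Corollary~\ref{relent_non_Gaussianity_cor}: Theorem~\ref{achievable_relent_thm} only ever uses the weak*-compactness of $\widetilde{\FF}=\cone(\FF)\cap B_1$, and sequential closedness together with the metrisability of $B_1$ (Remark~\ref{bounded_w*_metrisable_rem}) does deliver that. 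To prove the lemma as stated, replace the sequence by a general net and test against coherent-state matrix elements as the paper does; the factorisation through the beam splitter makes the norm bound, and hence the restriction to sequences, unnecessary.
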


\begin{proof}
Since $\GG_m$ is not convex, we cannot hope to apply Theorem~\ref{w*_closed_condition_thm}. Hence, we have to proceed differently in this case. Consider a general net\footnote{A net on a set $\pazocal{X}$ is any function $f:\pazocal{A}\to \pazocal{X}$, where $\pazocal{A}$ is an arbitrary directed set, i.e.\ a set equipped with a pre-order $\leq$ such that any two elements $a,b\in \pazocal{A}$ admit a common upper bound.} $(X_\alpha)_\alpha$ on $\cone(\GG_m)$ --- hence, a generic $X_\alpha$ can be written as $X_\alpha = \lambda_\alpha \sigma_\alpha$, where $\lambda_\alpha\geq 0$ and $\sigma_\alpha$ is a Gaussian state. Assume that $X_\alpha\tends{w*}{\alpha} X\in \T(\HH)$, and let us show that $X\in \cone(\GG_m)$ as well.\footnote{Here, $X_\alpha\tends{w*}{\alpha} X$ just means that for all $\varepsilon>0$ and all compact operators $K\in \K(\HH_m)$, we can find $\alpha_0$ such that $\left| \Tr X_\alpha K - \Tr XK\right|\leq \varepsilon$ for all $\alpha\geq \alpha_0$.} For arbitrary $u,v,u',v'\in \R^{2m}$, denoting with $\ket{u}$, $\ket{v}$, etc.\ the coherent states~\eqref{coherent_state} and with $\ket{u,v}=\ket{u}\otimes \ket{v}$ their tensor products, we have that
\bb
\braket{u,v|(X\otimes X)|u',v'} \,&=\, \braket{u|X|u'} \braket{v|X|v'} \\
&\eqt{(i)} \left( \lim_\alpha \braket{u|X_\alpha|u'}\right) \left( \lim_\alpha \braket{v|X_\alpha|v'}\right) \\
&=\, \lim_\alpha \braket{u,v|(X_\alpha\otimes X_\alpha)|u',v'} \\
&\eqt{(ii)}\, \lim_\alpha \braket{u,v|U(X_\alpha\otimes X_\alpha)U^\dag |u',v'} \\
&\eqt{(iii)}\, \lim_\alpha \Braket{\frac{u-v}{\sqrt2},\, \frac{u+v}{\sqrt2}\bigg| \left(X_\alpha \otimes X_\alpha\right)\bigg| \frac{u'-v'}{\sqrt2},\, \frac{u'+v'}{\sqrt2}} \\
&=\, \lim_\alpha \Braket{\frac{u-v}{\sqrt2}\bigg| X_\alpha \bigg| \frac{u'-v'}{\sqrt2}} \Braket{\frac{u+v}{\sqrt2}\bigg| X_\alpha \bigg| \frac{u'+v'}{\sqrt2}} \\
&\eqt{(iv)}\, \Braket{\frac{u-v}{\sqrt2}\bigg| X \bigg| \frac{u'-v'}{\sqrt2}} \Braket{\frac{u+v}{\sqrt2}\bigg| X \bigg| \frac{u'+v'}{\sqrt2}} \\
&=\, \Braket{\frac{u-v}{\sqrt2},\, \frac{u+v}{\sqrt2}\bigg| (X\otimes X) \bigg| \frac{u'-v'}{\sqrt2},\, \frac{u'+v'}{\sqrt2}} \\
&\eqt{(v)}\, \braket{u,v|U(X\otimes X)U^\dag |u',v'}\, .
\ee
Here, (i)~and~(iv) hold by the definition of weak* convergence, (ii)~follows from the quantum Darmois--Skitovich theorem (Lemma~\ref{QDS_lemma}), while (iii)~and~(v) descend from the identity~\eqref{beam_splitter_coherent}. Since linear combinations of (tensor products of) coherent states are dense \tcb{in the topology induced by the Hilbert space norm}, from the above identity we deduce that in fact $X\otimes X = U(X\otimes X)U^\dag$. Applying Lemma~\ref{QDS_lemma} once again concludes the proof.  
\end{proof}

\begin{cor} \label{relent_non_Gaussianity_cor}
The relative entropy of non-Gaussianity~\eqref{relent_non_Gaussianity} is:
\begin{enumerate}[(a)]
    \item always achieved: if it is finite at $\rho$, then it is achieved at its Gaussification $\sigma=\rho_\G$, and therefore~\eqref{relent_non_Gaussianity_closed_form} holds; and
    \item lower semi-continuous with respect to the trace norm topology.
\end{enumerate}
%\tcb{Moreover, for every state $\rho\in \D(\HH_m)$ such that $\Tr \rho N<\infty$, where $N$ is given by~\eqref{total_photon_number}, it holds that
%\bb S(\rho_\G) - S(\rho) = \sup_{X = X^\dag\in \B(\HH_m)} \left\{ \Tr \rho\, X - \sup_{\sigma_\G \in \GG_{m}} \ln \Tr e^{\ln \sigma_\G + X} \right\} . \label{variational_relent_non_Gaussianity} \ee}
\end{cor}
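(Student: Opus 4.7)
The plan is to deduce both claims from Theorem~\ref{achievable_relent_thm} combined with the weak*-closedness of $\cone(\GG_m)$ already established in Lemma~\ref{w*_closed_cone_G_lemma}, together with the prior result of Marian and Marian~\cite{Marian2013}. A preliminary observation is that, although $\GG_m$ is not convex, parts~(a)--(b) of Theorem~\ref{achievable_relent_thm} require only that the cone $\cone(\FF)$ be weak*-closed and place no convexity assumption on $\FF$ itself. Hence a direct application of Theorem~\ref{achievable_relent_thm} with $\FF=\GG_m$ simultaneously yields the unconditional existence of a Gaussian minimiser for every $\rho\in\D(\HH_m)$ (the ``always achieved'' clause of~(a)) and the trace-norm lower semi-continuity of $\nonG_R$ (claim~(b)).

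It remains to refine the identification of the minimiser in the finite case, namely to show that the Gaussification $\rho_\G$ is a minimiser and that~\eqref{relent_non_Gaussianity_closed_form} holds whenever $\nonG_R(\rho)<\infty$. The result of~\cite{Marian2013} already establishes this identity for every state with finite second moments, so the task reduces to verifying the converse implication: if $\nonG_R(\rho)<\infty$, then $\rho$ has finite second moments, i.e.\ $\Tr\rho N<\infty$ for the photon number~\eqref{total_photon_number}. To this end, pick any $\sigma_0\in\GG_m$ with $D(\rho\|\sigma_0)<\infty$. If $\sigma_0$ is pure, then $\supp\rho\subseteq\supp\sigma_0$ is one-dimensional, so $\rho=\sigma_0\in\GG_m$ and the claim is trivial. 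Otherwise $\sigma_0$ is faithful and takes the form $\sigma_0=e^{-H}/Z$ with $H=\tfrac{1}{2}(r-s)^\intercal G(r-s)$ and $G>0$; in particular $H\geq cN-d$ as quadratic forms for some $c>0$ and $d\in\R$, so it suffices to deduce $\Tr\rho H<\infty$ from $D(\rho\|\sigma_0)<\infty$.

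The main obstacle is that the formal identity $D(\rho\|\sigma_0)=-S(\rho)+\Tr\rho H+\ln Z$ cannot be invoked a priori, since $S(\rho)$ might be infinite and the right-hand side would then be an indeterminate $\infty-\infty$. I would circumvent this by working directly with the non-negative series~\eqref{Umegaki}. Writing the spectral decompositions $\rho=\sum_i x_i\ketbra{e_i}$ and $\sigma_0=\sum_j y_j\ketbra{f_j}$ with $y_j=e^{-h_j}/Z$, each summand $x_i\ln(x_i/y_j)-(x_i-y_j)$ is non-negative. Using the elementary bound $x_i\ln x_i\geq -1/e$ and choosing a threshold $R$ so large that $h_j+\ln Z-2\geq (h_j+\ln Z)/2$ whenever $h_j\geq R$, one checks that for such $j$ each summand dominates a positive multiple of $x_i h_j$. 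Combining this with the trivial bound $h_j\leq R$ on the complementary range, and summing over $(i,j)$ weighted by $|\braket{e_i|f_j}|^2$, one obtains a finite bound on $\sum_{i,j}|\braket{e_i|f_j}|^2 x_i h_j=\Tr\rho H$, completing the reduction and allowing the Marian--Marian identity to conclude~\eqref{relent_non_Gaussianity_closed_form}.
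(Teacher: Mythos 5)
Your overall architecture is the same as the paper's: claim~(b) and the existence of a minimiser come from Lemma~\ref{w*_closed_cone_G_lemma} plus Theorem~\ref{achievable_relent_thm} (which indeed needs no convexity for parts~(a)--(b)), and claim~(a) reduces to showing that $D(\rho\|\sigma_0)<\infty$ for some Gaussian $\sigma_0$ forces $\Tr\rho N<\infty$, after which \cite{Marian2013} finishes the job. However, your execution of that reduction has two genuine gaps. First, the dichotomy ``$\sigma_0$ pure or $\sigma_0$ faithful'' is false: a Gaussian state has normal form $\pazocal{V}\left(\ketbra{0}\otimes Z^{-1}e^{-H_q}\right)\pazocal{V}^\dag$ with a $k$-mode vacuum factor for any $0\leq k\leq m$, so e.g.\ $\ketbra{0}\otimes\tau$ with $\tau$ thermal is neither pure nor faithful. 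You must first use the support condition $\supp\rho\subseteq\supp\sigma_0$ to factor out the vacuum modes (as the paper does) before you can write the remaining part as $e^{-H}/Z$ with $H\geq cN'-d$.

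Second, and more seriously, the series estimate does not work as written. The claim that for $h_j\geq R$ ``each summand dominates a positive multiple of $x_i h_j$'' is false: taking $x_i=2y_j$ the summand equals $x_i\ln(x_i/y_j)+y_j-x_i=(2\ln 2-1)\,y_j$, while $x_ih_j=2y_jh_j$, so the ratio is of order $1/h_j$ and no uniform constant exists; this is precisely the non-pathological regime where $\rho$ resembles $\sigma_0$. Moreover, any bound that sacrifices an additive constant per term (such as $x_i\ln x_i\geq -1/e$) is fatal, because the total weight $\sum_{i,j}\left|\braket{e_i|f_j}\right|^2$ equals the rank of $\rho$, which is generically infinite, so the discarded terms sum to $-\infty$. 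A term-by-term argument can be salvaged by splitting the pairs $(i,j)$ according to whether $x_i\leq y_j$, $y_j<x_i\leq\sqrt{y_j}$, or $x_i>\sqrt{y_j}$ (the first two regions being controlled by $\Tr\sigma_0 H$ and $\Tr He^{-H/2}$, the third by $D(\rho\|\sigma_0)$), but this is not what you wrote. The paper avoids the issue entirely by invoking the Gibbs variational lower bound $D(\rho'\|Z^{-1}e^{-H_q})\geq\Tr\rho'\ln L-\ln\Tr\left[Z^{-1}e^{-H_q}L\right]$ from \cite[Lemma~20]{nonclassicality} with $L=e^{H_q/2}$, which yields $\tfrac12\Tr\rho'H_q\leq D(\rho'\|Z^{-1}e^{-H_q})+\ln\Tr Z^{-1}e^{-H_q/2}$ without ever touching $S(\rho)$; I recommend that route.
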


A notable aspect of the above result is claim~(b), which implies in particular that the map
\bb\label{v-int-map}
\left\{\rho\in \D(\HH_m):\, \Tr \rho N< \infty \right\} \ni \rho \longmapsto S(\rho_\G) - S(\rho)\, ,
\ee
where $\rho_\G$ is the Gaussification of $\rho$, is lower semi-continuous, something that is not at all obvious a priori. We know from Ref.~\cite{Kuroiwa2021} that such map can very well be discontinuous even on energy-bounded states, so that lower semi-continuity is really the strongest form of regularity that can reasonably be obeyed.

The lower semi-continuity of the map~\eqref{v-int-map} implies the lower semi-continuity of the map $\rho\mapsto S(\rho_\G)$ on the same set. This property is quite surprising in view of the discontinuity of the map $\rho\mapsto\rho_\G$ on the set of energy-bounded states, which follows from the aforementioned discontinuity of the map~\eqref{v-int-map} (established by Ref.~\cite{Kuroiwa2021}) and the continuity of the entropy on these sets (along with the equality $\Tr \rho_\G N=\Tr \rho N$).

\begin{proof}[Proof of Corollary~\ref{relent_non_Gaussianity_cor}]
Thanks to Lemma~\ref{w*_closed_cone_G_lemma}, claim~(b) follows immediately from Theorem~\ref{achievable_relent_thm}. As for claim~(a), let $\rho$ be an arbitrary $m$-mode state. If \tcb{$\nonG(\rho)=+\infty$} then there is nothing to prove. Otherwise, $D(\rho\|\sigma)<\infty$ for some Gaussian state $\sigma\in \GG_m$. We will argue that in fact $\Tr \rho N<\infty$, so that $\rho$ has well-defined (first and) second moments.

Represent $\sigma$ as $\sigma=\pazocal{V}\left(\ketbra{0} \otimes Z^{-1} e^{-H_q}\right)\pazocal{V}^\dag$, where $\ket{0}$ is the $k$-mode vacuum ($0\leq k\leq m$), $\pazocal{V}$ a certain `Gaussian unitary operator', $H_q=\sum_{j=1}^{m-k} \omega_j (x_j^2+p_j^2-1)/2$ with $\omega_j>0$, and $Z$ a normalising constant~\cite[Eq.~(3.60)]{BUCCO}. Given that $D(\rho\|\sigma)<\infty$, it must be that $\pazocal{V}^\dag\rho\pazocal{V} = \ketbra{0}\otimes \rho'$, so that
\bb
D(\rho\|\sigma) = D\left( \pazocal{V}\left(\ketbra{0}\otimes \rho' \right) \pazocal{V}^\dag \big\| \pazocal{V}\left(\ketbra{0}\otimes Z^{-1} e^{-H_q} \right) \pazocal{V}^\dag\right) = D\left(\rho'\big\| Z^{-1} e^{-H_q} \right) .
\label{relent_non_Gaussianity_proof_eq1}
\ee
Now, let us invoke the variational expression for the measured relative entropy in Ref.~\cite[Lemma~20]{nonclassicality}. Upon taking a straightforward limit, this implies that
\bb
D\left(\rho'\big\| Z^{-1} e^{-H_q} \right) \geq \sup_{L>0} \left\{ \Tr \rho' \ln L - \ln \Tr \left[ Z^{-1} e^{-H_q} L \right] \right\} \geq \frac12 \Tr \rho' H_q - \ln \Tr Z^{-1} e^{-H_q/2}\, .
\label{relent_non_Gaussianity_proof_eq2}
\ee
\tcb{Since both~\eqref{relent_non_Gaussianity_proof_eq1} as well as the last term on the rightmost side of~\eqref{relent_non_Gaussianity_proof_eq2} are finite,} putting together~\eqref{relent_non_Gaussianity_proof_eq1} and~\eqref{relent_non_Gaussianity_proof_eq2} we deduce that $\Tr \rho' H_q<\infty$, hence $\Tr \rho'N'<\infty$ and thus $\Tr \rho N<\infty$, i.e.\ $\rho$ has well-defined second moments. Applying the result of Ref.~\cite{Marian2013} then completes the proof of claim~(a).
%\tcb{Finally,~\eqref{variational_relent_non_Gaussianity} follows from Theorem~\ref{variational_thm} and~\eqref{relent_non_Gaussianity_closed_form}, once one remembers that a state with finite second moments also has a finite entropy.}
\end{proof}

\section{Tight uniform continuity bounds for the relative entropy of entanglement and generalisations thereof} \label{tight_uniform_sec}

Until now we have considered the problem of establishing the lower semi-continuity of the relative entropy of resource $D_\FF$. At this point, the reader could wonder whether and under what conditions $D_\FF$ is fully-fledged continuous. Strictly speaking, this can only happen in finite dimension. In fact, as mentioned before, $D_\FF$, like many entropic quantities, is often everywhere discontinuous in infinite dimensions~\cite{Wehrl}. A remarkable example of this behaviour is offered, for instance, by the relative entropy of entanglement~\cite{Eisert2002}, but a similar reasoning holds in other cases as well. And yet, such a highly discontinuous behaviour does not represent a problem physically, because it typically involves infinite-energy states. Throughout this section we will see how it is possible to restore a (slightly weaker) form of continuity for the relative entropy of entanglement and related quantities by looking only at the physically meaningful energy-constrained states~\cite{tightuniform,Shirokov-AFW-1, Shirokov-AFW-2, Shirokov-AFW-3}.

\subsection{Energy constraints} \label{energy_constraints_subsec}

In order to model an energy constraint we introduce a \deff{Hamiltonian}, i.e.\ a densely defined, positive semi-definite operator $H$ whose spectrum $\spec(H)$ is bounded from below. Since the ground state energy can be re-defined without affecting the physics, we will hereafter take $H$ to be \deff{grounded}, that is, such that $\min\spec(H)=0$. In this context, another important assumption is convexity: if $\FF\subseteq \D(\HH)$ is a convex subset of states, then the function $D_\FF$ defined in~\eqref{relent_F} is itself convex and satisfies inequality~\eqref{F-p-1} --- informally, it is `not too convex'. If in addition $D_\FF$ does not increase too fast with respect to the energy, in the sense that
\bb
\sup_{\rho\in \D(\HH),\, \Tr\rho H\leq E} D_{\FF}(\rho) = o\left(\sqrt{E}\right)\qquad \textrm{as} \quad E\to+\infty\, ,
\ee
then Ref.~\cite[Proposition~3]{Shirokov-AFW-1} guarantees that $D_\FF$ is uniformly continuous on the set $\{\rho\in D(\HH):\,\Tr \rho H \leq E\}$ of energy-constrained states for any $E>0$. Moreover, it also gives an explicit (uniform) continuity bound for $D_{\FF}$, i.e.\ an upper bound on the quantity $|D_{\FF}(\rho)-D_{\FF}(\sigma)|$ for all pairs of states $\rho,\sigma$ with $\Tr\rho H, \Tr\sigma H\leq E$. Such a bound is faithful, meaning that it vanishes as $\|\rho-\sigma\|_1\to 0$, but it is not accurate when $\rho$ and $\sigma$ are very close --- indeed, in this regime it scales with $\sqrt{\|\rho-\sigma\|_1}$, which is generally not optimal.

Significantly more accurate continuity bounds for the function $D_{\FF}$ under an energy constraint can be obtained by using the methods proposed in Ref.~\cite{Shirokov-AFW-2,Shirokov-AFW-3}. To apply them, we shall also require that the function under examination be bounded by a multiple of the local entropy or the sum of local entropies. To fix ideas and establish more concrete statements,  we will consider in Section IV-B two examples of functions $D_\FF$, namely, the relative entropy of entanglement and the relative entropy of NPT entanglement, as well as another closely related quantity, the Rains bound. In all these cases, the infinite-dimensional generalisations of the Alicki--Fannes--Winter method~\cite{Alicki-Fannes, tightuniform} yield explicit and asymptotically tight uniform continuity bounds under energy constraint on one party of a bipartite system. In Section IV-C we will  obtain uniform continuity bounds for the relative entropy of $\pi$-entanglement in $m$-partite system defined in~\eqref{m-E-r} for any given set $\pi$ of partitions of $\{1,\ldots, m\}$ under different forms of energy constraints.

From now on, assume that $H$ is a grounded Hamiltonian $H$ on $\HH$. The methods of Ref.~\cite{Shirokov-AFW-2,Shirokov-AFW-3} need one more regularity assumption on $H$: informally, this captures the fact that its energy levels should not become too dense as the energy grows; formally, the requirement is that
\begin{equation}\label{H-cond+}
  \lim_{\lambda\rightarrow 0^+}\left(\Tr\, e^{-\lambda H}\right)^{\lambda} = 1\,.
\end{equation}
Note that already the finiteness of the trace on the left-hand side --- commonly referred to as the Gibbs hypothesis --- guarantees that $H$ has a purely discrete spectrum and that each eigenvalue has finite multiplicity.
%It means that $\C_{H,E}=\{\,\rho\in D(\HH_{AB})\,| \Tr H_A\rho_A\leq E\,\}$.
As established in Ref.~\cite[Lemma~1]{Shirokov-AFW-1}, condition~\eqref{H-cond+} holds if and only if the function $F_H:[0,+\infty) \to \R$ defined by
\begin{equation}\label{F_H}
  F_{H}(E)\coloneqq \sup_{\rho\in \D(\HH),\, \Tr \rho H \leq E} S(\rho)
\end{equation}
where $S$ is the von Neumann entropy~\eqref{von_Neumann}, is finite and satisfies that
\begin{equation}\label{H-cond++}
  F_{H}(E) = o\left(\sqrt{E}\right)\qquad\textrm{as}\quad E\rightarrow+\infty\, .
\end{equation}
Importantly, condition~\eqref{H-cond+} holds for the Hamiltonians of many quantum systems of practical interest, including those made of finitely many harmonic oscillators~\cite{Shirokov-AFW-1, Simon-Nila}.\footnote{Also, Ref.~\cite[Theorem~3]{Simon-Nila} gives a sufficient condition for the logarithmic growth of the function $F_{H}$ in terms of the spectrum of $H$.}

The function $F_H$ defined in~\eqref{F_H} encodes some key information on the physics of the system. By Ref.~\cite[Proposition~1]{Shirokov-1} (see also Ref.~\cite[Proposition~11]{VV-diamond}) it is continuous, strictly increasing, and strictly concave. However, it often does not possess some other handy properties that turn out to be useful in computations. To enforce such properties, let us consider an auxiliary function $G:[0,+\infty) \to \R$ that should be thought of as a more `regular' version of $F_H$ itself. We will assume that:
\begin{enumerate}[(i)]
    \item $G$ is continuous;
    \item $G$ is non-decreasing;
    \item $G(E)\geq F_H(E)$ for all $E\geq 0$;
    \item $G(E) = o\left(\sqrt{E}\right)$ as $E\to +\infty$;
    \item $E\mapsto G(E)\big/\sqrt{E}$ is non-increasing.
\end{enumerate}
The existence of a function $G$ with these properties is proved in~\cite[Proposition~1]{Shirokov-AFW-2}, where it is also shown that the minimal such function is given by
\bb
G_{\min}(E) \coloneqq \sqrt{E} \sup_{E'\geq E} \frac{F_H\left(E'\right)}{\sqrt{E'}}\, ,
\label{G_min}
\ee
for all $E\geq 0$. %In other words, $G_{\min}$ defined by~\eqref{G_min} satisfies the above properties (i)--(v), and any other function $G$
If the operator $H$ satisfies also the condition in Ref.~\cite[Theorem~3]{Simon-Nila}, then one can find a function $G$ satisfying all the above conditions~(i)--(v) and moreover such that
\begin{enumerate}[(vi)]
    \item $G(E) = \left(1+o(1)\right) F_H(E)$ as $E\to\infty$.
\end{enumerate}
For example, consider an $\ell$-mode quantum oscillator with frequencies $\omega_1,\ldots ,\omega_{\ell}$. Denoting the annihilation and creation operators corresponding to the $k^\text{th}$ mode with $a_k,a_k^\dag$, respectively, the total Hamiltonian takes the form
\bb
H=\sum_{k=1}^\ell \omega_k a_k^\dag a_k\, .
\label{canonical_Hamiltonian}
\ee
The function
\bb \label{F-ub+}
G_{\ell,\omega}(E) \coloneqq \ell \left( \ln \frac{E+2E_0}{\ell E_*} + 1 \right) ,\quad E_0=\frac{1}{2}\sum_k \omega_k,\quad E_*=\sqrt[\ell]{\prod_k \omega_k},  
\ee
satisfies all the above conditions~(i)--(vi), as discussed in Ref.~\cite{Shirokov-AFW-2}.

\subsection{Uniform continuity on energy-constrained states: bipartite case}

Throughout this section, we will establish tight uniform continuity bounds on the relative entropy of entanglement and some closely related quantities that are valid on energy-constrained states. The functions we will consider are $E_R$ (defined by~\eqref{relent_entanglement}), $E_{R,\, \PPT}$ (defined by~\eqref{relent_NPT_entanglement}), the Rains bound $R$ (defined by~\eqref{Rains_bound}), as well as the corresponding regularised quantities, constructed as
\bb
f^\infty(\rho_{AB}) \coloneqq \lim_{n\to\infty} \frac1n f^\infty\left(\rho_{AB}^{\otimes n}\right),\qquad f=E_R,\, E_{R,\,\PPT},\, R\, ,
\label{regularisation}
\ee
where the limit exists thanks to Fekete's lemma~\cite{Fekete1923}, because the un-regularised quantities are all sub-additive on tensor products.

%We will now show how to apply the theory expounded in Ref.~\cite{Shirokov-AFW-2} to three physically relevant functions.
The above functions $E_R$, $E_{R,\,\PPT}$, and $R$ are non-negative, convex, and they satisfy inequality~\eqref{F-p-1}. Moreover, for any bipartite state $\rho_{AB}$ on $\HH_{AB}$ we have~\cite{Plenio-Virmani, Vedral1998}
\bb \label{E-R-UB}
R(\rho_{AB})\leq E_{R,\,\PPT}(\rho_{AB})\leq E_R(\rho_{AB})\leq S(\rho_A)\, .
\ee
Therefore, if $\,d=\min\{\dim \HH_A,\dim \HH_B\}<+\infty$ then by using~\cite[Lemma~7]{tightuniform} and the arguments from~\cite[proof of Corollary~8]{tightuniform} one can show that
\begin{equation}\label{W-CB}
|f(\rho)-f(\sigma)|\leq\varepsilon \ln d+g(\varepsilon),\quad f=E_R,\,E^{\infty}_R,\,E_{R,\,\PPT},\,E^{\infty}_{R,\,\PPT},\,R,\, R^{\infty}
\end{equation}
for any states $\rho$ and $\sigma$ in $\D(\HH_{AB})$ such that $\frac{1}{2}\|\rho-\sigma\|_1\leq\varepsilon$, where 
%$E^{\infty}_{R}$, $ E^{\infty}_{R,\,\PPT}$ and $R^{\infty}$ are the regularisations of $E_{R}$, $ E_{R,\,\PPT}$ and $R$ (defined in the standard way) and
$g(x)\coloneqq (x+1)\ln(x+1)-x\ln x$. It is easy to see that all the continuity bounds in~\eqref{W-CB} are asymptotically tight\footnote{A continuity bound $\displaystyle\sup_{x,y\in X_a}|f(x)-f(y)|\leq M_a(x,y)$ depending on a parameter $a$ is called \emph{asymptotically tight} for large $a$ if $\displaystyle\limsup_{a\rightarrow+\infty}\sup_{x,y\in X_a}\frac{|f(x)-f(y)|}{M_a(x,y)}=1$.} for large $d$.

If both systems $A$ and $B$ are infinite dimensional, then asymptotically tight uniform continuity bounds for the functions $E_{R}$, $E_{R,\,\PPT}$, $R$ (as well as their regularisations) under the energy constraint on one of the systems $A$ and $B$ can be obtained by using Ref.~\cite[Theorem~1]{Shirokov-AFW-2} and its proof. 

%\begin{note}
%\tcb{Before stating our result, let us clarify the notation we employ. We will often consider a Hamiltonian $H$ defined on one subsystem (say, $A$), of a bipartite system $AB$. This can be easily extended to a Hamiltonian $H_A=H\otimes \id$ acting on $\HH_{AB}$, so that for a state $\omega_{AB}\in \D(\HH_{AB})$ we have that $\Tr H_A \omega_{AB} = \Tr_A H \left( \Tr_B \omega_{AB}\right)$.}
%\end{note}

\begin{prop}\label{CB}
Let $H$ be a positive operator on $\HH_A$ satisfying condition~\eqref{H-cond+}, and let $G:[0,+\infty)\to \R$ be any function on $\R_+$ satisfying conditions~(i)--(v) in Section~\ref{energy_constraints_subsec}. In what follows, $f$ will denote one of the functions $E_R$, $E^{\infty}_R$, $E_{R,\,\PPT}$, $E^{\infty}_{R,\,\PPT}$, $R$, $R^{\infty}$ defined by~\eqref{relent_entanglement},~\eqref{relent_NPT_entanglement},~\eqref{Rains_bound}, and~\eqref{regularisation}. Let $\varepsilon>0$, $E>0$, and $T \coloneqq \frac{1}{\varepsilon} \min\left\{1,\, \sqrt{\frac{E}{G^{-1}\left(\ln d_0\right)}}\right\}$, where $d_0$ is any positive integer such that $\ln d_0>G(0)$. For any two states $\rho_{AB},\sigma_{AB}\in \D(\HH_{AB})$ such that $\Tr H \rho_A,\, \Tr H \sigma_A \leq E$ and $\frac{1}{2}\|\rho-\sigma\|_1\leq \varepsilon$, it holds that
\bb \label{main-CB}
    \left|f(\rho)-f(\sigma)\right|\leq \inf_{t\in (0,T]} \left\{ \varepsilon(1+4t) \left(G\!\left(\!\frac{E}{(\varepsilon t)^2}\!\right) + \frac{1}{d_0} +\ln 2\right) + 2g(\varepsilon t) + g(\varepsilon(1+2t))\right\} ,
\ee
where $g(x) \coloneqq (x+1)\ln(x+1)-x\ln x$.

%If in addition the operator $H$ satisfies the condition of~\cite[Theorem~3]{Simon-Nila} and the function $G$ satisfies condition~(vi) in Section~\ref{energy_constraints_subsec} 
\tcb{If in addition we assume that the operator $H$ satisfies the Gibbs hypothesis, i.e.\ $\Tr e^{-\lambda H} <\infty$ for all $\lambda>0$, as well as the condition in~\cite[Theorem~3]{Simon-Nila}, expressed by the inequality\footnote{Eq.~\eqref{Becker_Datta_condition} implies in particular that the limit on the right-hand side is required to exist. Also, in~\eqref{Becker_Datta_condition} the spectrum $\spec(H)$ is intended as a multi-set, i.e.\ it is understood to contain each eigenvalue a number of times equal to its multiplicity.}
\bb
\lim_{E\to\infty} \frac{\sum_{\lambda,\lambda'\in \spec(H),\, \lambda+\lambda'\leq E} \lambda^2}{\sum_{\lambda,\lambda'\in \spec(H),\, \lambda+\lambda'\leq E} \lambda \lambda'} > 1\, ,
\label{Becker_Datta_condition}
\ee
} then for each function $f$ the continuity bound in~\eqref{main-CB} is asymptotically tight for large $E$. This is true, in particular, if $H$ is the canonical Hamiltonian~\eqref{canonical_Hamiltonian} of an $\ell$-mode quantum oscillator. In this case~\eqref{main-CB} holds with the choice~\eqref{F-ub+} of $G$, yielding
\bb\label{exp-form}
    \left|f(\rho)-f(\sigma)\right|\leq \inf_{t\in (0,T_*]} \left\{ \varepsilon(1+4t) \left( \ell \ln \frac{E/(\varepsilon t)^2 + 2E_0}{\ell E_*} + \ell + e^{-\ell} +\ln 2 \right) + 2 g(\varepsilon t) + g(\varepsilon(1+2t)) \right\},
\ee
where $T_* \coloneqq \frac{1}{\varepsilon} \min\left\{1,\, \sqrt{\frac{E}{E_0}}\right\}$ (the parameters $E_0$ and $E_*$ are determined in~\eqref{F-ub+} via the frequencies of the oscillator).
\end{prop}

\begin{rem} The right hand side of~\eqref{main-CB} tends to zero as $\,\varepsilon\to 0\,$ for any given $E>0$ due to the condition $G(E)=o\big(\sqrt{E}\big)$ as $E\to+\infty$.
\end{rem}

\begin{proof}[Proof of Proposition~\ref{CB}]
The validity of inequality (\ref{main-CB}) for $f=E_R,\, E_{R,\,\PPT},\, R$  follows directly from~\cite[Theorem~1]{Shirokov-AFW-2}. To prove inequality~\eqref{main-CB} for $f=E^{\infty}_R,\, E^{\infty}_{R,\,\PPT},\, R^{\infty}$ we will use the telescopic method from Ref.~\cite[proof of Corollary~8]{tightuniform}, with necessary modifications, combined with Ref.~\cite[proof of Theorem~1]{Shirokov-AFW-2}. We will assume that $f=E^{\infty}_R$, while the cases $f=E^{\infty}_{R,\,\PPT}$ and $f=R^{\infty}$ are addressed similarly. For a given positive integer $n$ we have
\bb
\left| E_R(\rho^{\otimes n})-E_R(\sigma^{\otimes n}) \right| \leq\sum_{k=1}^n \left|E_R\left(\rho^{\otimes k}\otimes\sigma^{\otimes(n-k)}\right)-E_R\left(\rho^{\otimes (k-1)}\otimes\sigma^{\otimes(n-k+1)}\right)\right|\displaystyle\leq\sum_{k=1}^n \left|E_R\left(\rho\otimes\omega_k\right)-E_R\left(\sigma\otimes\omega_k\right)\right| ,
\label{telescopic_eq1}
\ee
where the states $\rho$ and $\sigma$ in the last expression are states of the system $A_kB_k$ (the $k$-th copy of $AB$), while $\omega_k=\rho^{\otimes (k-1)}\otimes\sigma^{\otimes(n-k)}$ is a state of the system $A_1B_1\ldots A_nB_n\setminus A_kB_k$. The assumption $\Tr H\rho_A,\, \Tr H\sigma_A\leq E$ and the inequality~\eqref{E-R-UB} imply finiteness of all the terms in~\eqref{telescopic_eq1}.
So, to prove inequality~\eqref{main-CB} for $f=E^{\infty}_R$ it suffices to show that
\begin{equation}\label{RE-T}
\left|E_R\left(\rho\otimes\omega_k\right)-E_R\left(\sigma\otimes\omega_k\right)\right|\leq %\mathbb{CB}_{t}(E,\varepsilon),
\Delta_t(E,\varepsilon)\, ,
\end{equation}
where %$\mathbb{CB}_{t}(E,\varepsilon)$
$\Delta_t(E,\varepsilon)$ denotes the right-hand side of~\eqref{main-CB}, for each $k$ and any $t\in(0,T]$. This can be done thanks to the arguments from Ref.~\cite[proof of Theorem~1]{Shirokov-AFW-2}, which we summarise in the remaining part of the proof.

To simplify the notation, we will rename the system $A_kB_k$ as $AB$. For some $d\geq d_0$, let $\gamma(d)\coloneqq G^{-1}(\ln d)$. Thanks to Ref.~\cite[Lemma~1]{Shirokov-AFW-2}, for any $d>d_0$ such that $E\leq\gamma(d)$ there exist states $\rho',\sigma',\alpha_i,\beta_i\in\D(\HH_{AB})$ ($i=1,2$) and numbers $s,t\in (0,1)$ such that:
\begin{enumerate}[(a)]
    \item $\rk\rho'_A,\,\rk \sigma'_A\leq d$;
    \item $\Tr H \rho'_A, \Tr H \sigma'_A \leq E$;
    \item $\frac12 \|\rho-\rho'\|_1\leq s \leq \sqrt{E/\gamma(d)}$ and $\frac12 \|\sigma-\sigma'\|_1\leq t \leq\sqrt{E/\gamma(d)}$;
    \item $\Tr H \alpha_i^A\leq E/s^2$ and $\Tr H \beta_i^A\leq E/t^2$ ($i=1,2$), where $\alpha_i^A = \Tr_B \alpha_i^{AB}$, and analogously for $\beta_i$; furthermore,
    \item it holds that
    \begin{equation}\label{2-r-er}
    (1-s')\rho+s'\alpha_1=(1-s')\rho'+s'\alpha_2,\qquad (1-t')\sigma+t'\beta_1 =(1-t')\sigma'+t'\beta_2\, ,
    \end{equation}
    where $s'=\frac{s}{1+s}$ and $t'=\frac{t}{1+t}$.
\end{enumerate}
The function $E_R$ is well defined on all the states $\rho'\otimes\omega_k$, $\sigma'\otimes\omega_k$, $\alpha_i\otimes\omega_k$, $\beta_i\otimes\omega_k$ ($i=1,2$), since their marginal states corresponding to the subsystems $A_1,\ldots,A_n$ have finite energy. So, by using the first relation in~\eqref{2-r-er}, the convexity of $E_R$ and inequality~\eqref{F-p-1} it is easy to show that
\bb
(1-s')(E_R(\rho\otimes\omega_k)-E_R(\rho'\otimes\omega_k))\leq s' (E_R(\alpha_2\otimes\omega_k)-E_R(\alpha_1\otimes\omega_k))+ h_2(s')
\ee
and
\bb
(1-s')(E_R(\rho'\otimes\omega_k)-E_R(\rho\otimes\omega_k))\leq s' (E_R(\alpha_1\otimes\omega_k)-%f
E_R(\alpha_2\otimes\omega_k))+h_2(s')\, ,
\ee
with $h_2$ being the binary entropy~\eqref{h_2}. These inequalities imply that
\begin{equation}\label{one-er}
|E_R(\rho'\otimes\omega_k)-E_R(\rho\otimes\omega_k)|\leq s|E_R(\alpha_2\otimes\omega_k)-E_R(\alpha_1\otimes\omega_k)|+g(s)
\end{equation}
Assume that $E_R(\alpha_2\otimes\omega_k)\geq E_R(\alpha_1\otimes\omega_k)$. Then the subadditivity of $E_R$ implies 
\bb
E_R(\alpha_2\otimes\omega_k)\leq E_R(\alpha_2)+E_R(\omega_k),
\ee
while the monotonicity of $E_R$ under local partial traces shows that $E_R(\alpha_1\otimes\omega_k)\geq E_R(\omega_k)$ (cf.\ Ref.~\cite{tightuniform}). Hence,
\begin{equation}\label{RE-T-2-er}
|E_R(\alpha_2\otimes\omega_k)-E_R(\alpha_1\otimes\omega_k)|\leq \max\left\{E_R(\alpha_1),E_R(\alpha_2)\right\}.
\end{equation}
Since $\Tr H \alpha_i^A\leq E/s^2$ ($i=1,2$), it follows from inequality~\eqref{E-R-UB} together with property~(iii) in Section~\ref{energy_constraints_subsec} (cf.~\eqref{F_H}) that
\begin{equation}\label{one+er}
\max\{E_R(\alpha_1),E_R(\alpha_2)\}\leq G\!\left(E/s^2\right).
\end{equation}
Inequalities (\ref{one-er}), (\ref{RE-T-2-er}) and (\ref{one+er}) imply that
\begin{equation}\label{one-er+}
|E_R(\rho'\otimes\omega_k)-E_R(\rho\otimes\omega_k)|\leq s\,G\!\left(E/s^2\right)+g(s)
\end{equation}
Similarly, by using the second relation in~\eqref{2-r-er} and by noting that $\Tr H \beta_i^A \leq E/t^2$ ($i=1,2$), we obtain
\begin{equation}\label{two-er+}
|E_R(\sigma'\otimes\omega_k)-E_R(\sigma\otimes\omega_k)|\leq t\,G\!\left(E/t^2\right)+g(t).
\end{equation}
Since $s,t\leq y\coloneqq \sqrt{E/\gamma(d)}$ and the function $E\mapsto G(E)/\sqrt{E}$ is non-increasing by property~(v) in Section~\ref{energy_constraints_subsec}, for $x=s,t$ we have
\begin{equation}\label{three-er}
x\, G\!\left(E/x^2\right)\leq y\, G\!\left(E/y^2\right)=\sqrt{E/\gamma(d)}\,G\!\left(\gamma(d)\right)=
\sqrt{E/\gamma(d)}\ln d,
\end{equation}
where the last equality follows from the definition of $\gamma(d)$.

Now, thanks to the fact that $\rk \rho'_A\leq d$ and $\rk \sigma'_A\leq d$, the supports of both $\rho'_A$ and $\sigma'_A$ are contained in some $2d$-dimensional subspace of $\HH_A$. By the triangle inequality we have
\bb
\|\rho'-\sigma'\|_1\leq \|\rho'-\rho\|_1+\|\sigma'-\sigma\|_1+\|\rho-\sigma\|_1\leq 2\varepsilon+4\sqrt{E/\gamma(d)}\, .
\ee
So, the arguments from Ref.~\cite[proof of Corollary~8]{tightuniform} imply that
\begin{equation}\label{fcb-er}
 |E_R(\sigma'\otimes\omega_k)-E_R(\rho'\otimes\omega_k)|\leq \ln (2d)\left(2\sqrt{E/\gamma(d)}+\varepsilon\right)+g\!\left(2\sqrt{E/\gamma(d)}+\varepsilon\right).
\end{equation}
It follows from~\eqref{one-er+}--\eqref{fcb-er} and the monotonicity of the function $g$ that
\begin{equation}\label{m-cb-er}
|E_R(\rho\otimes\omega_k)-E_R(\sigma\otimes\omega_k)|\leq  \left(4\sqrt{E/\gamma(d)}+\varepsilon\right) \ln (2d)+g\!\left(2\sqrt{E/\gamma(d)}+\varepsilon\right)+2g\!\left(\sqrt{E/\gamma(d)}\right).
\end{equation}

We now conclude the proof of~\eqref{RE-T}. If $t\in(0,T]$ then there is a natural number $d_*>d_0$  such that $\gamma(d_*)>E/(\varepsilon t)^2$ but $\gamma(d_*-1)\leq E/(\varepsilon t)^2$. It follows that
\bb
\sqrt{E/\gamma(d_*)}\leq \varepsilon t\leq 1\qquad \textrm{and} \qquad
\ln (d_*-1) = G(\gamma(d_*-1))\leq G(E/(\varepsilon t)^2)\,.
\ee
Since $\ln d_*\leq\ln (d_*-1)+1/(d_*-1)\leq\ln (d_*-1)+1/d_0$, inequality~\eqref{m-cb-er} with $d=d_*$ and the monotonicity of the function $g$ imply the claimed relation~\eqref{RE-T}.

Assume now that the operator $H$ satisfies the condition of~\cite[Theorem~3]{Simon-Nila} and that the function $G$ satisfies property~(vi) in Section~\ref{energy_constraints_subsec}. Note first that the condition in Ref.~\cite[Eq.~(28)]{Shirokov-AFW-2} 
holds whenever $f$ is any of the functions $E_R$, $E^{\infty}_R$, $E_{R,\,\PPT}$, $E^{\infty}_{R,\,\PPT}$, $R$, $R^{\infty}$. This can be shown by using any purification $\widehat{\gamma}(E)$ of the Gibbs state $\gamma(E)\coloneqq \frac{e^{-\lambda H}}{\Tr e^{-\lambda H}}$ of system $A$, where $\lambda$ is determined by the equation $E\Tr e^{-\lambda H}=\Tr H e^{-\lambda H}$~\cite{Wehrl}, since $\Tr H \gamma(E)=E$ and
\bb \label{Gibbs-s-r}
f\left(\widehat{\gamma}(E)\right)=S\left(\gamma(E)\right)=F_{H}(E),\qquad  f=E_R,\,E^{\infty}_R,\,E_{R,\,\PPT},\,E^{\infty}_{R,\,\PPT},\,R,\, R^{\infty},\qquad\forall\ E>0\, .
\ee
Thus, the asymptotical tightness of the continuity bound in~\eqref{main-CB} for $f=E_R,E_{R,\,\PPT},R$ follows directly from the corresponding assertion of Ref.~\cite[Theorem~1]{Shirokov-AFW-2}. The asymptotical tightness of the continuity bound in~\eqref{main-CB} for $f=E^{\infty}_R,E^{\infty}_{R,\,\PPT}R^{\infty}$ can be shown by repeating the arguments from the proof of the aforementioned assertion of Ref.~\cite[Theorem~1]{Shirokov-AFW-2}.

If $H$ is the Hamiltonian~\eqref{canonical_Hamiltonian} of the $\ell$-mode quantum oscillator then the use of the function $G_{\ell,\omega}$ in~\eqref{F-ub+} in the role of $G$ allows to write the right-hand side of~\eqref{main-CB} in an explicit form. We refer the reader to Ref.~\cite[Section~3.2]{Shirokov-AFW-2} for details.
\end{proof}

If $H$ is the Hamiltonian of a quantum system $A$ then the positive operator on $\HH_A^{\otimes n}$ defined by the formula 
\begin{equation}\label{H-n}
H_n \coloneqq H\otimes \id \otimes \ldots \otimes \id + \cdots + \id \otimes \ldots \otimes \id \otimes H\, ,
\end{equation}
%where $I$ is the unit operator on the $k$-th multiplier in $\HH_A^{\otimes n}$,
where $\id$ is the unit operator on each of the factors $\HH_{A_k}$, is the Hamiltonian of the system $A^n$ obtained by joining $n$ copies of $A$~\cite{HOLEVO-CHANNELS-2}. The continuity bounds in~\eqref{main-CB} then imply the following;

\begin{cor}\label{b-s-cont}
Let $A$ and $B$ be arbitrary quantum systems. If $H$ is a positive operator on $\HH_{A}$ satisfying condition~\eqref{H-cond+}, then
\begin{itemize}
    \item the functions $E_R$, $E^{\infty}_R$, $E_{R,\,\PPT}$, $E^{\infty}_{R,\,\PPT}$, $R$, and $R^{\infty}$ are uniformly continuous on the set of states $\rho$ in $\D(\HH_{AB})$ such that $\Tr H \rho_A\leq E$ for any $E>0$;
    \item the functions $E_R$, $E^{\infty}_R$, $E_{R,\,\PPT}$, $E^{\infty}_{R,\,\PPT}$, $R$, and $R^{\infty}$ are asymptotically continuous in the following sense~\cite{Eisert2002}:
if $(\rho_n)_{n\in \N}$ and $(\sigma_n)_{n\in \N}$ are sequences of states such that
\bb
\rho_n,\sigma_n \in\D(\HH_{AB}^{\otimes n}),\qquad \Tr H_n \rho_n^{A^n},\, \Tr H_n\sigma_n^{A^n} \leq nE,\quad \forall\ n\, ,\qquad \text{and} \qquad  \lim_{n\to+\infty} \|\rho_n-\sigma_n\|_1=0\, ,
\ee
where $H_n$ is the positive operator on $\HH_A^{\otimes n}$ defined in~\eqref{H-n} and $E>0$ is a finite positive number, then
\bb
\lim_{n\to+\infty}\frac{|f(\rho_n)-f(\sigma_n)|}{n}=0,\qquad f=E_R,\, E^{\infty}_R,\, E_{R,\,\PPT},\, E^{\infty}_{R,\,\PPT},\, R,\, R^{\infty}\, .
\ee
\end{itemize}
\end{cor}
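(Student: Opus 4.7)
My plan is to deduce both claims as direct consequences of Proposition~\ref{CB}. For the first claim — uniform continuity of each $f\in\{E_R,E^{\infty}_R,E_{R,\,\PPT},E^{\infty}_{R,\,\PPT},R,R^{\infty}\}$ on the set of states with $\Tr H\rho_A\leq E$ — I would fix $E>0$ and show that the right-hand side of~\eqref{main-CB} vanishes as $\varepsilon=\tfrac{1}{2}\|\rho-\sigma\|_1\to 0^+$. Specializing the free parameter to $t=1/\sqrt{\varepsilon}$ (which lies in $(0,T]$ for small enough $\varepsilon$, since then $\varepsilon t=\sqrt{\varepsilon}\leq\min\{1,\sqrt{E/G^{-1}(\ln d_0)}\}$), the terms $2g(\varepsilon t)$ and $g(\varepsilon(1+2t))$ tend to zero, while the leading contribution collapses to $(\varepsilon+4\sqrt{\varepsilon})\bigl(G(E/\varepsilon)+1/d_0+\ln 2\bigr)$. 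Property~(iv) of $G$, rewritten as $\sqrt{\varepsilon}\,G(E/\varepsilon)=\sqrt{E}\cdot G(E/\varepsilon)/\sqrt{E/\varepsilon}$, then ensures that this tends to zero as well, and the conclusion follows uniformly for all the functions in the list.

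For the asymptotic continuity, I would apply Proposition~\ref{CB} at the $n$-th level, i.e.\ to the composite system $A^nB^n$ equipped with the Hamiltonian $H_n$ of~\eqref{H-n}. Two preliminary checks are needed. First, $H_n$ inherits condition~\eqref{H-cond+}: since its summands act on disjoint tensor factors, $\Tr e^{-\lambda H_n}=(\Tr e^{-\lambda H})^n$, and so $\lambda\ln\Tr e^{-\lambda H_n}=n\lambda\ln\Tr e^{-\lambda H}\to 0$ as $\lambda\to 0^+$. Second, as regularizer for $H_n$ I would take $G_n(E'):=nG(E'/n)$: properties~(i), (ii), (v) transfer immediately from $G$; property~(iv) follows from $G_n(E')/\sqrt{E'}=\sqrt{n}\,G(E'/n)/\sqrt{E'/n}\to 0$ as $E'\to\infty$ for fixed $n$; and property~(iii), $G_n\geq F_{H_n}$, follows from the subadditivity of the von Neumann entropy across the $n$ tensor factors combined with the concavity of $F_H$ (Ref.~\cite[Proposition~1]{Shirokov-1}), which together yield $F_{H_n}(E')\leq nF_H(E'/n)\leq nG(E'/n)$.

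With these ingredients and $\varepsilon_n\coloneqq\tfrac{1}{2}\|\rho_n-\sigma_n\|_1\to 0$, Proposition~\ref{CB} applied under the energy constraint $\Tr H_n\rho_n^{A^n},\Tr H_n\sigma_n^{A^n}\leq nE$ yields, for any admissible $t$,
\bb
|f(\rho_n)-f(\sigma_n)|\leq \varepsilon_n(1+4t)\cdot nG\bigl(E/(\varepsilon_n t)^2\bigr)+\varepsilon_n(1+4t)(1/d_0+\ln 2)+2g(\varepsilon_n t)+g(\varepsilon_n(1+2t)),
\ee
where I have already substituted $G_n(nE/(\varepsilon_n t)^2)=nG(E/(\varepsilon_n t)^2)$. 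Dividing by $n$ and picking again $t=1/\sqrt{\varepsilon_n}$ (which is admissible for $n$ large), the leading term reduces to $(\varepsilon_n+4\sqrt{\varepsilon_n})\,G(E/\varepsilon_n)$, which tends to zero by the very same estimate as in the first paragraph, while the remaining contributions are $O(1/n)$ and also vanish. The step I expect to require the most care is the construction of the regularizer $G_n$ for the composite Hamiltonian together with the verification that it inherits all five properties from $G$ — and in particular the inequality $F_{H_n}(E')\leq nF_H(E'/n)$, which is the mechanism through which the tensor-power scaling is absorbed into the per-copy bound; once this bookkeeping is settled, the rest is a matter of reusing the first paragraph's asymptotic.
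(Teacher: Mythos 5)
Your proposal is correct and follows essentially the same route as the paper: both claims are deduced from Proposition~\ref{CB}, the second by applying it at the $n$-copy level with the rescaled regulariser $G_n(E')=nG(E'/n)$ and the identity $F_{H_n}(E')= nF_H(E'/n)$ (which the paper cites from Ref.~[Shirokov-AFW-3, Lemma~2] and you re-derive via subadditivity of the entropy and concavity of $F_H$). The only cosmetic differences are your explicit choice $t=1/\sqrt{\varepsilon}$ where the paper keeps $t$ fixed, and your direct verification of condition~\eqref{H-cond+} for $H_n$; both variants are valid.
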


\begin{proof}
The first assertion directly follows from the continuity bounds in~\eqref{main-CB} (since the right-hand side of~\eqref{main-CB} vanishes as $\varepsilon\to 0$). To prove the second assertion, note that $F_{H_n}(E)=nF_{H}(E/n)$ for each $n$~\cite[Lemma~2]{Shirokov-AFW-3}. So, if $G:[0,+\infty)\to \R$ is any function on $\R_+$ satisfying conditions~(i)--(v) in Section~\ref{energy_constraints_subsec} for the operator $H$, and $d_0$ is a positive integer such that $\ln d_0>G(0)$, then the function $G_n(E)\coloneqq nG(E/n)$ satisfies the same conditions for the operator $H_n$ and $d_n\coloneqq d^n_0$ is a positive integer such that $\ln d_n>G_n(0)$. Using this it is easy to obtain from Proposition~\ref{CB} that
\begin{equation}\label{ucb-n}
    \frac{|f(\rho_n)-f(\sigma_n)|}{n}\leq \varepsilon_n(1+4t) \left(G\!\left(\!\frac{E}{(\varepsilon_n t)^2}\!\right) + \frac{1}{nd^n_0} +\frac{\ln 2}{n}\right) + \frac{2 g(\varepsilon_n t) + g(\varepsilon_n(1+2t))}{n}\, ,
\end{equation}
where $f=E_R,\, E^{\infty}_R,\, E_{R,\,\PPT},\, E^{\infty}_{R,\,\PPT},\, R,\, R^{\infty}$, for any $t\in(0,T']$, $\varepsilon_n \coloneqq \frac12 \|\shs\rho_n-\sigma_n\|_1$, and $T'\coloneqq \min\left\{1,\,\sqrt{\frac{E}{G^{-1}\left(\ln d_0\right)}}\right\}$. Since the sequence $(\varepsilon_n)_{n\to \N}$ is vanishing by hypothesis and $G(E)=o\big(\sqrt{E}\big)$ as $E\to+\infty$, the right-hand side of~\eqref{ucb-n} tends to zero as $n\to+\infty$ for any fixed $t\in(0,T']$.
\end{proof}

\subsection{Uniform continuity on energy-constrained states: multipartite case}

In this subsection we obtain uniform continuity bounds for the relative entropy of $\pi$-entanglement of a state of $m$-partite system $A_1\ldots A_m$ defined in~\eqref{m-E-r} for any given (non-empty) set $\pi\subseteq P(m)$ of partitions of $\{1,\ldots, m\}$ under the energy constraint imposed either on the whole system $A_1\ldots A_m$ or on the subsystem $A_1 \ldots A_{m-1}$. Note first that
\begin{equation}\label{ER-UB}
E_{R,\pi}(\rho_{A_1\ldots A_m})\leq E_R(\rho_{A_1\ldots A_m})\leq\sum_{k=1}^{m-1}S(\rho_{A_{k}})\, .
\end{equation}
The first inequality follows from the definitions of $E_{R,\pi}$ and $E_R$, the latter being the relative entropy distance from the set of fully separable states. The second inequality is proved in Ref.~\cite{Plenio2001} in the finite-dimensional setting. Its validity in general case is established in Appendix~\ref{inequality_app}. Since the above inequalities hold with arbitrary $m-1$ subsystems of $A_1\ldots A_m$ (instead of $A_1\ldots A_{m-1}$), it is easy to show that
\begin{equation}\label{ER-UB+}
E_{R,\pi}(\rho_{A_1\ldots A_m})\leq E_R(\rho_{A_1\ldots A_m})\leq\,\frac{m-1}{m}\sum_{k=1}^{m}S(\rho_{A_k}).
\end{equation}

There is an important aspect in which $E_{R,\pi}$ differs from its bipartite counterpart, $E_R$. Namely, $E_R$ is sub-additive, in the sense that $E_R(\rho_{AB}\otimes \omega_{A'B'})\leq E_R(\rho_{AB}) + E_R(\omega_{A'B'})$ for all states $\rho_{AB},\omega_{A'B'}$, where the bipartition on the left-hand side is $AA':BB'$. No analogous inequality can be established for $E_{R,\pi}$ when $\pi$ contains more than one partition, because in that case the set of $\pi$-separable states is not closed under tensor products. For this reason, the limit in the regularisation $E_{R,\pi}^\infty (\rho_{AB}) \coloneqq \lim_{n\to\infty} \frac1n E_{R,\pi}\left( \rho_{AB}^{\otimes n}\right)$ of the relative entropy of $\pi$-entanglement is only guaranteed to exist when $\pi$ is composed of one partition only (thanks to Fekete's lemma~\cite{Fekete1923}). We will therefore consider the quantity $E_{R,\pi}^\infty$ only in this special case, which is however physically very relevant, as it includes e.g.\ the fully local scenario (corresponding to the finest partition $\pi=\{\{1\},\ldots,\{m\}\}$).

By using the upper bound~\eqref{ER-UB}, the non-negativity of $E_{R,\pi}$, the result in Ref.~\cite[Lemma~7]{tightuniform}, and the arguments from Ref.~\cite[proof of Corollary~8]{tightuniform}, one can show that
\begin{equation}\label{ER-CB}
|E_{R,\pi}^*(\rho)-E_{R,\pi}^*(\sigma)|\leq\varepsilon \ln \dim\HH_{A_1\ldots A_{m-1}} + g(\varepsilon)\, ,
\end{equation}
for any states $\rho$ and $\sigma$ in $\D(\HH_{A_1\ldots A_{m}})$ such that $\frac{1}{2}\|\rho-\sigma\|_1\leq\varepsilon$, where as usual $g(x)=(x+1)\ln(x+1)-x\ln x$, provided that the systems $A_1,\ldots, A_{m-1}$ are finite dimensional. In~\eqref{ER-CB}, as per the above discussion, we can set either $E^*_{R,\pi}=E_{R,\pi}$ and leave $\pi$ arbitrary, or else, if $\pi$ is composed of one partition only, consider also the case where $E_{R,\pi}^* = E_{R,\pi}^\infty$.

Assume that $A_1,\ldots, A_m$ are arbitrary infinite-dimensional quantum systems. If $H_{1},\ldots,H_{s}$ are the Hamiltonians of quantum systems $A_1,\ldots,A_s$ satisfying condition~\eqref{H-cond+}, where either $s=m-1$ or $s=m$, then the Hamiltonian
\begin{equation}\label{Hm}
%H^{s}=H_{1}\otimes I_{A_2}\otimes...\otimes I_{A_s}+\cdots+I_{A_1}\otimes... \otimes I_{A_{s-1}}\otimes H_{s}
%\tcb{H_s = \underbrace{H\otimes \id \otimes \ldots \otimes \id + \cdots + \id \otimes \ldots \otimes \id \otimes H}_{\text{$s$ times}}}\, ,
\Hs \coloneqq H_1\otimes I_{A_2}\otimes\ldots \otimes I_{A_s} + \cdots + I_{A_1}\otimes\ldots \otimes I_{A_{s-1}} \otimes H_s
\end{equation}
of the system $\As \coloneqq A_1\ldots A_s$ satisfies condition~\eqref{H-cond+} thanks to Ref.~\cite[Lemma~2]{Shirokov-AFW-3}. It follows, leveraging the result in Ref.~\cite[Lemma~1]{Shirokov-AFW-1}, that
\begin{equation}\label{F-s}
F_{\Hs}(E)\coloneqq \sup_{\rho\in\D(\HH_{A_1\ldots A_s}):\, \Tr \rho \Hs\leq E} S(\rho) = o\big(\sqrt{E}\big)\qquad \textrm{as} \quad E\to+\infty\, .
\end{equation}
We will obtain continuity bounds for the function $E_{R,\pi}$ and its regularisation $E_{R,\pi}^{\infty}$ under two forms of energy constraint. They correspond to the cases $s=m-1$ and $s=m$ in the following proposition.

\begin{prop}\label{nREE-CB} Let $m\geq 2$ be an integer, and consider positive operators $H_{1},\ldots ,H_{s}$ on Hilbert spaces $\HH_{A_1},\ldots ,\HH_{A_s}$ that satisfy condition~\eqref{H-cond+}, where either $s=m-1$ or $s=m$. Let $\rho,\sigma\in\D(\HH_{A_1\ldots A_m})$ be such that
\bb\sum_{k=1}^{s}\Tr H_k\rho_{A_k},\,\sum_{k=1}^{s}\Tr H_k \sigma_{A_k}\leq sE\, ,
\ee
and $\frac{1}{2}\|\shs\rho-\sigma\|_1\leq\varepsilon\leq 1$. Let $\pi\subseteq P(m)$ be a non-empty set of partitions of $\{1,\ldots, m\}$. Then
\begin{equation}\label{nREE-CB-1}
|E^*_{R,\pi}(\rho)-E^*_{R,\pi}(\sigma)|\leq \frac{m-1}{s}\, \sqrt{2\varepsilon} \, F_{\Hs}\!\!\left(\frac{sE}{\varepsilon}\right) + g\big(\sqrt{2\varepsilon}\big)\, ,
\end{equation}
for either $E^*_{R,\pi}=E_{R,\pi}$ and $\pi$ arbitrary, or $E_{R,\pi}^* = E_{R,\pi}^\infty$ and $\pi$ composed of one partition only. Here, $F_{\Hs}$ is the function defined in~\eqref{F-s}. 

If all the operators $H_{1},\ldots,H_{s}$ are unitary equivalent to an operator $H$ on $\HH_A$ and $G:[0,+\infty)\to \R$ is any function on $\R_+$ satisfying conditions~(i)--(v) in Section~\ref{energy_constraints_subsec}, then
\begin{equation}\label{nREE-CB-2}
|E_{R,\pi}^{*}(\rho)-E_{R,\pi}^{*}(\sigma)| \leq 
\inf_{t\in (0,\,1/\varepsilon)} \left\{(m-1) \left(\left(\varepsilon+\varepsilon^2t^2\right) G\!\left(\frac{sE}{\varepsilon^2t^2}\right) + 2\sqrt{2\varepsilon t}\, G\!\left(\frac{E}{\varepsilon t}\right)\right)
+ g\!\left(\varepsilon+\varepsilon^2t^2\right)+2g\big(\sqrt{2\varepsilon t}\big)\right\},
\end{equation}
%for $E_{R,\pi}^{*} = E_{R,\pi},\, E_{R,\pi}^{\infty}$.
for either $E^*_{R,\pi}=E_{R,\pi}$ and $\pi$ arbitrary, or $E_{R,\pi}^* = E_{R,\pi}^\infty$ and $\pi$ composed of one partition only. In particular, if $H$ is the canonical Hamiltonian~\eqref{canonical_Hamiltonian} of an $\ell$-mode quantum oscillator then % the right-hand side of~\eqref{nREE-CB-2} can be explicitly written with the choice~\eqref{F-ub+} of $G$ as follows
\eqref{nREE-CB-2} with the choice~\eqref{F-ub+} of $G$ becomes
\bb\label{exp-form-2}
\left|E_{R,\pi}^{*}(\rho)-E_{R,\pi}^{*}(\sigma)\right| &\leq \inf_{t\in (0,\,1/\varepsilon)}\bigg\{(m-1) \left(\varepsilon+\varepsilon^2t^2\right) \ell \, \ln\left(\frac{s E/(\varepsilon t)^2+2E_0}{e^{-1}\ell E_*}\right) \\
&\hspace{13ex} + (m-1)\, 2\sqrt{2\varepsilon t}\, \ell\, \ln\left(\frac{E/(\varepsilon t)+2E_0}{e^{-1}\ell E_*}\right)
+g\!\left(\varepsilon+\varepsilon^2t^2\right)+2g\big(\sqrt{2\varepsilon t}\big)\bigg\}\, ,
\ee
%for $E_{R,\pi}^{*} = E_{R,\pi},\, E_{R,\pi}^{\infty}$,
for either $E^*_{R,\pi}=E_{R,\pi}$ and $\pi$ arbitrary, or $E_{R,\pi}^* = E_{R,\pi}^\infty$ and $\pi$ composed of one partition only. Here, the parameters $E_0$ and $E_*$ are defined in~\eqref{F-ub+} via the frequencies of the oscillator. Both continuity bounds in~\eqref{exp-form-2} are asymptotically tight for large $E$ if $m=2$ and  $s=1,2$.
\end{prop}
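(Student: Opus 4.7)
My plan is to adapt the infinite-dimensional Alicki--Fannes--Winter (AFW) technique of Refs.~\cite{tightuniform, Shirokov-AFW-2, Shirokov-AFW-3}, used in the proof of Proposition~\ref{CB}, by combining it with the upper bound~\eqref{ER-UB+} on $E_{R,\pi}$ in terms of a sum of local entropies. Throughout I shall use that $E^{*}_{R,\pi}$ is non-negative, convex, satisfies the almost-concavity bound~\eqref{F-p-1}, and is monotone under partial traces over whole parties. The regularisation $E^{\infty}_{R,\pi}$ is only treated when $\pi$ consists of a single partition, since only then is $\SEP^{\pi}$ closed under tensor products, ensuring subadditivity of $E_{R,\pi}$ and hence existence of the regularised limit by Fekete's lemma.

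For~\eqref{nREE-CB-1}, set $\tau\coloneqq\sqrt{2\varepsilon}$ and $s'\coloneqq\tau/(1+\tau)$. A standard AFW decomposition (see e.g.~\cite[Lemma~1]{Shirokov-AFW-3}) produces auxiliary states $\alpha_{1},\alpha_{2}\in\D(\HH_{A_1\ldots A_m})$ satisfying $(1-s')\rho+s'\alpha_1=(1-s')\sigma+s'\alpha_2$ with marginal-energy bounds $\sum_{k=1}^{s}\Tr H_k(\alpha_i)_{A_k}\leq sE/\varepsilon$. Convexity of $E^{*}_{R,\pi}$ combined with~\eqref{F-p-1} gives
\begin{equation*}
|E^{*}_{R,\pi}(\rho)-E^{*}_{R,\pi}(\sigma)|\leq \tau\,\max_{i=1,2}E^{*}_{R,\pi}(\alpha_i)+g(\tau).
\end{equation*}
Upper bounding $E^{*}_{R,\pi}(\alpha_i)$ via~\eqref{ER-UB} (for $s=m-1$) or~\eqref{ER-UB+} (for $s=m$) by $\tfrac{m-1}{s}\sum_{k=1}^{s}S((\alpha_i)_{A_k})$, and noticing that a product Gibbs state of $\Hs$ maximises the total subsystem entropy $\sum_{k}S(\omega_{A_k})$ subject to $\sum_{k}\Tr H_k\omega_{A_k}\leq sE/\varepsilon$, yields $E^{*}_{R,\pi}(\alpha_i)\leq\tfrac{m-1}{s}F_{\Hs}(sE/\varepsilon)$, which is exactly~\eqref{nREE-CB-1}. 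For the regularisation (single-partition $\pi$), the telescopic identity~\eqref{telescopic_eq1} reduces matters to bounding each term $|E_{R,\pi}(\rho\otimes\omega_k)-E_{R,\pi}(\sigma\otimes\omega_k)|$ separately, with $\omega_k$ a fixed state of the remaining $n-1$ copies; the same AFW argument applies, using subadditivity and partial-trace monotonicity to treat $\omega_k$ as a frozen environment.

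For the sharper~\eqref{nREE-CB-2}, I would combine the above with the energy-truncation construction of~\cite[Lemma~1]{Shirokov-AFW-2} applied to each subsystem $A_k$ ($k\leq s$). For a truncation parameter $d$, this produces approximations $\rho'$ of $\rho$ with $\rk\rho'_{A_k}\leq d$, $\Tr H_k\rho'_{A_k}\leq E$, and $\tfrac12\|\rho-\rho'\|_1\leq\sqrt{E/\gamma(d)}$ with $\gamma(d)=G^{-1}(\ln d)$, and analogously for $\sigma$. On the truncated pair the finite-dimensional continuity bound~\eqref{ER-CB} applies with $\ln d$ in place of $\ln\dim\HH_{A_k}$, whereas on the residual high-energy contributions the almost-concavity~\eqref{F-p-1} controls the correction through $G$. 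Optimising in $t$ and $d$ exactly as in~\cite[Theorem~1]{Shirokov-AFW-2} produces~\eqref{nREE-CB-2}, and specialising $G$ to the oscillator function~\eqref{F-ub+} yields~\eqref{exp-form-2} by direct substitution. For asymptotic tightness when $m=2$, since $E_{R,\pi}$ reduces to $E_R$ in that case, the statement follows from the corresponding assertion of Proposition~\ref{CB}, witnessed by the Gibbs purification $\widehat{\gamma}(E)$ through~\eqref{Gibbs-s-r}.

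The principal obstacle I anticipate is the behaviour of the regularisation when $\pi$ contains more than one partition: $\SEP^{\pi}$ is then not closed under tensor products, the limit defining $E^{\infty}_{R,\pi}$ need not even exist, and the telescopic identity~\eqref{telescopic_eq1} breaks down; this is exactly why the proposition treats $E^{\infty}_{R,\pi}$ only for single-partition $\pi$.
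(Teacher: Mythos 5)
Your proposal follows essentially the same route as the paper's proof: the purification-based AFW decomposition with $\delta=\sqrt{2\varepsilon}$ and the energy bound $sE/\varepsilon$ on the auxiliary states $\tau_\pm$, the local-entropy bounds~\eqref{ER-UB}--\eqref{ER-UB+} combined with $\sum_{k}S(\cdot_{A_k})\leq F_{\Hs}(sE/\varepsilon)$, telescoping together with subadditivity and partial-trace monotonicity for the single-partition regularisation, and the truncation machinery of Ref.~\cite[Theorem~2]{Shirokov-AFW-3} for~\eqref{nREE-CB-2}. The one place you diverge is the tightness claim: asymptotic tightness of the differently shaped bound~\eqref{main-CB} does not transfer to~\eqref{exp-form-2}, and Proposition~\ref{CB} does not cover the $s=2$ constraint at all; the paper instead verifies the hypothesis of the tightness assertion of Ref.~\cite[Theorem~2]{Shirokov-AFW-3} directly, using a product state with appropriate marginal energies and a pure state whose marginals are Gibbs states --- though your Gibbs-purification witness is indeed the right object for that verification.
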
 

\begin{rem} The right-hand sides of~\eqref{nREE-CB-1} and~\eqref{nREE-CB-2} tend to zero as $\varepsilon\to 0$ for any given $E>0$, due to the condition $G(E)=o\big(\sqrt{E}\big)$ as $E\to+\infty$.
\end{rem}

\begin{proof}[Proof of Proposition~\ref{nREE-CB}]
The upper bounds~\eqref{ER-UB} and~\eqref{ER-UB+}, the non-negativity and convexity of $E_{R,\pi}$, together with the general inequality~\eqref{F-p-1} show that for any non-empty set of partitions $\pi$  the function $E_{R,\pi}$ belongs to the classes $L_m^{m-1}(1,1)$ and $L_m^{m}(1-1/m,1)$ defined in Ref.~\cite{Shirokov-AFW-3}. So, in both cases $s=m-1,m$, the continuity bounds~\eqref{nREE-CB-1} and~\eqref{nREE-CB-2} for $E^{*}_{R,\pi} = E_{R,\pi}$ follow directly from the results of Ref.~\cite[Theorems~1 and~2]{Shirokov-AFW-3}.

To prove the  continuity bound~\eqref{nREE-CB-1} for $E^{*}_{R,\pi}=E_{R,\pi}^{\infty}$, we will use the telescopic method from Ref.~\cite[proof of Corollary~8]{tightuniform}, with necessary modifications, combined with Ref.~\cite[proof of Theorem~1]{Shirokov-AFW-1}. We will consider the cases $s=m-1$ and $s=m$ simultaneously.

Let $\Hs$ be the operator defined in~\eqref{Hm}. Since $\Tr \left[\Hs(\rho_{A_1}\otimes\ldots \otimes\rho_{A_s}) \right] =\sum_{k=1}^s \Tr H_{k}\rho_{A_k}$, we have
\begin{equation}\label{u-ineq}
\sum_{k=1}^s S(\rho_{A_k})=S(\rho_{A_1}\otimes \ldots \otimes\rho_{A_s})\leq F_{\Hs}(sE)
\end{equation}
for any state $\rho\in\D(\HH_{A_1\ldots A_m})$ such that $\Tr \Hs\rho_{A_{[s]}}=\sum_{k=1}^s\Tr H_{k}\rho_{A_k}\leq sE$. Hence for any such state $\rho$  inequalities~\eqref{ER-UB} and~\eqref{ER-UB+} imply that
\begin{equation}\label{ER-UB++}
E_{R,\pi}(\rho) \leq \frac{m-1}{s} \, F_{\Hs}(sE).
\end{equation}

Since $E_{R,\pi}^{\infty}(\rho)\leq E_{R,\pi}(\rho)$ for any state $\rho$ and $F_{\Hs}$ is non-decreasing, inequality~\eqref{ER-UB++} shows that the continuity bound~\eqref{nREE-CB-1} for $E_{R,\pi}^*=E_{R,\pi}^{\infty}$ holds trivially if $\varepsilon\geq 1/2$. Hence, from now on we will assume that $\varepsilon<1/2$. For a given positive integer $u$ we have that~\cite{tightuniform}
\bb
\left| E_{R,\pi}(\rho^{\otimes u})-E_{R,\pi}(\sigma^{\otimes u}) \right| &\leq \sum_{v=1}^u \left|E_{R,\pi}\left(\rho^{\otimes v}\otimes\sigma^{\otimes(u-v)}\right)-E_{R,\pi}\left(\rho^{\otimes (v-1)}\otimes\sigma^{\otimes(u-v+1)}\right)\right| \\
&\leq \sum_{v=1}^u \left|E_{R,\pi}\left(\rho\otimes\omega_v\right)-E_{R,\pi}\left(\sigma\otimes\omega_v\right)\right|,
\ee
where $\omega_v=\rho^{\otimes (v-1)}\otimes\sigma^{\otimes(u-v)}$. The assumption $\Tr \Hs\rho_{A_{[s]}},\, \Tr \Hs \sigma_{\As}\leq sE$ together with inequality~\eqref{ER-UB++} for the system $A_1^{u}\ldots A_s^{u}$, where $A_k^u$ denotes $u$ copies of $A_k$, implies that all terms in the above inequality are finite. Thus, in order to prove the continuity bound~\eqref{nREE-CB-1} for $E_{R,\pi}^*=E_{R,\pi}^{\infty}$, it suffices to show that
\begin{equation}\label{RE-T-pi}
\left|E_{R,\pi}\left(\rho\otimes\omega_v\right)-E_{R,\pi}\left(\sigma\otimes\omega_v\right)\right|\leq \frac{m-1}{s} \sqrt{2\varepsilon}\, F_{\Hs}\!\left(\frac{sE}{\varepsilon}\right)+g\big(\sqrt{2\varepsilon}\big)\qquad \forall\ v\, .
\end{equation}
This can be done by using the arguments from Ref.~\cite[proof of Theorem~1]{Shirokov-AFW-1}, as we explain now.

Let $\hat{\rho}$ and $\hat{\sigma}$ denote purifications of the states $\rho$ and $\sigma$ with the property that $\delta\coloneqq \frac{1}{2}\|\hat{\rho}-\hat{\sigma}\|_1=\sqrt{2\varepsilon}$ (such purifications exist thanks to the Fuchs--van de Graaf inequalities~\cite{Fuchs1999} and Uhlmann's theorem~\cite{Uhlmann-fidelity}). Define $\hat{\omega}_v \coloneqq \hat{\rho}^{\otimes (v-1)}\otimes\hat{\sigma}^{\otimes(u-v)}$, and note that $\hat{\rho}'_v \coloneqq \hat{\rho}\otimes\hat{\omega}_v$ and $\hat{\sigma}'_v\coloneqq \hat{\sigma}\otimes\hat{\omega}_v$ are purifications of the states $\rho'_v\coloneqq \rho\otimes\omega_v$ and $\sigma'_v\coloneqq \sigma\otimes\omega_v$, respectively. Moreover, it holds that $\frac{1}{2}\|\hat{\rho}'_v-\hat{\sigma}'_v\|_1=\delta$.
%Then $\hat{\rho}'_v=\hat{\rho}\otimes\hat{\omega}_v$ and $\hat{\sigma}'_v=\hat{\sigma}\otimes\hat{\omega}_v$, where $\hat{\omega}_v=\hat{\rho}^{\otimes (v-1)}\otimes\hat{\sigma}^{\otimes(u-v)}$, are purifications of the states $\rho'_v\doteq\rho\otimes\omega_v$ and $\sigma'_v\doteq\sigma\otimes\omega_v$ such that $\frac{1}{2}\|\hat{\rho}'_v-\hat{\sigma}'_v\|_1=\delta$.

Now, construct the pure states $\hat{\tau}_{\pm}=\delta^{-1}(\hat{\rho}-\hat{\sigma})_{\pm}$, where $X_\pm$ denote the positive and negative part of the self-adjoint operator $X$. Since these are states over a system comprising $A_1\ldots A_m$ as well as a purifying ancilla, we can consider the reduced states on $A_1\ldots A_m$, denoted by $\tau_{\pm}=(\hat{\tau}_{\pm})_{A_1\ldots A_m}$.

Since $\Tr \Hs \rho_{\As},\, \Tr \Hs\sigma_{\As}\leq sE$, the estimate in Ref.~\cite[proof of Theorem~1]{Shirokov-AFW-1} implies the key bound $\Tr \Hs (\tau_{\pm})_{\As}\leq sE/\varepsilon$ on the average (local) energy of $\tau_\pm$. In light of this, inequality~\eqref{ER-UB++} entails that
\begin{equation}\label{RE-T-3}
E_{R,\pi}(\tau_{\pm})\leq  \frac{m-1}{s}\, F_{\Hs}(sE/\varepsilon).
\end{equation}

By applying the main trick from Ref.~\cite[proof of Theorem~1]{Shirokov-AFW-1} to the states $\hat{\rho}'_v$, $\hat{\sigma}'_v$ and $\delta^{-1}(\hat{\rho}'_v-\hat{\sigma}'_v)_{\pm} = \hat{\tau}_{\pm}\otimes\hat{\omega}_v$ (instead of $\hat{\rho}$, $\hat{\sigma}$ and $\hat{\tau}_{\pm}$) and by using the convexity of $E_{R,\pi}$ and the validity of inequality~\eqref{F-p-1} for this function we obtain
\begin{equation}\label{RE-T-1}
\left|E_{R,\pi}(\rho'_v)-E_{R,\pi}(\sigma'_v)\right|\leq \delta\left|E_{R,\pi}(\tau_+\!\otimes\omega_v)-E_{R,\pi}(\tau_-\!\otimes\omega_v)\right|+g(\delta).
\end{equation}
Assume that $E_{R,\pi}(\tau_+\!\otimes\omega_v)\geq E_{R,\pi}(\tau_-\!\otimes\omega_v)$. By the subadditivity of $E_{R,\pi}$ we have $E_{R,\pi}(\tau_+\!\otimes\omega_v)\leq E_{R,\pi}(\tau_+)+E_{R,\pi}(\omega_v)$, while the definition of $E_{R,\pi}$ and the monotonicity of the relative entropy imply that $E_{R,\pi}(\tau_-\!\otimes\omega_v)\geq E_{R,\pi}(\omega_v)$ (cf.\ Ref.~\cite{tightuniform}). Hence,
\begin{equation}\label{RE-T-2}
|E_{R,\pi}(\tau_+\!\otimes\omega_v)-E_{R,\pi}(\tau_-\!\otimes\omega_v)|\leq \max\left\{E_{R,\pi}(\tau_-),E_{R,\pi}(\tau_+)\right\}.
\end{equation}
Inequalities~\eqref{RE-T-3},~\eqref{RE-T-1}) and~\eqref{RE-T-2} together imply~\eqref{RE-T-pi}.

By the reasoning in Ref.~\cite[Remark~6]{Shirokov-AFW-3}, the continuity bounds~\eqref{ER-CB} and~\eqref{nREE-CB-1} for $E_{R,\pi}^*=E_{R,\pi}^{\infty}$ allow us to obtain~\eqref{nREE-CB-2} for $E_{R,\pi}^*=E_{R,\pi}^{\infty}$ by using the arguments from Ref.~\cite[proof of Theorem~2]{Shirokov-AFW-3} with $f=E_{R,\pi}^{\infty}$.

Assume now that $H$ is the Hamiltonian~\eqref{canonical_Hamiltonian} of the $\ell$-mode quantum oscillator. In this case, we can take $G$ to be the function $G_{\ell,\omega}$ in~\eqref{F-ub+}; this allows us to write~\eqref{nREE-CB-2} in the explicit form~\eqref{exp-form-2}. To prove the last claim, note that the function $G_{\ell,\omega}$ satisfies condition~(vi) in Section~\ref{energy_constraints_subsec}~\cite[Section~3.2]{Shirokov-AFW-2}. Note also that if $m=2$ the condition from the last claim in Ref.~\cite[Theorem~2]{Shirokov-AFW-3} holds for the functions $E_{R}$ and $E^{\infty}_R$ in the cases $s=1$ and $s=2$. Indeed, in both cases the first relation in this condition  is proved by using a product state with appropriate marginal energies, while
the second relation is proved by using a pure state $\rho$ in $\D(\HH_{A_1A_2})$ such that $\rho_{A_k}$ is the Gibbs state $\gamma(E)\coloneqq \frac{e^{-\lambda H}}{\Tr e^{-\lambda H}}$ of system $A_k$, $k=1,2$, where $\lambda$ is determined by the equation $E\Tr e^{-\lambda H}=\Tr H e^{-\lambda H}$~\cite{Wehrl}, since $\Tr H \gamma(E)=E$ and
\bb
E_{R}(\rho)=E_{R}^{\infty}(\rho)=S(\gamma(E))=F_{H}(E).
\ee
Thus, the asymptotic tightness of the continuity bound~\eqref{nREE-CB-2} for $E_{R,\pi}^*=E_{R}$ in both cases $s=1,2$ follows directly from the last claim in Ref.~\cite[Theorem~2]{Shirokov-AFW-3}, while the asymptotic tightness of the continuity bound~\eqref{nREE-CB-2} for $E_{R,\pi}^*=E_{R}^{\infty}$ can be shown easily by using the arguments from the proof of the last assertion of Ref.~\cite[Theorem~2]{Shirokov-AFW-3}.
\end{proof}

The continuity bounds in~\eqref{nREE-CB-1} with $s=m-1$ imply the following

\begin{cor}\label{m-cont} Let $A_1,\ldots, A_{m}$ be arbitrary quantum systems, and let $\pi$ any non-empty set of partitions of $\{1,\ldots, m\}$. If $H_{1},\ldots, H_{m-1}$ are positive operators on the Hilbert spaces $\HH_{A_1},\ldots,\HH_{A_{m-1}}$ satisfying condition~\eqref{H-cond+}, then
\begin{itemize}
  \item the function $E_{R,\pi}$ is uniformly continuous on the set of states $\rho$ in $\D(\HH_{A_1\ldots A_m})$ such that  $\sum_{k=1}^{m-1}\mathrm{Tr} \rho_{A_k}H_{k}\leq E$ for any $E>0$;
  \item the function $E_{R,\pi}$ is asymptotically continuous in the following sense~\cite{Eisert2002}: if $(\rho_n)_{n\in \N}$ and $(\sigma_n)_{n\in \N}$ are any sequences of states such that
\bb
\rho_n,\sigma_n \in \D(\HH_{A_1\ldots A_m}^{\otimes n}),\qquad \sum_{k=1}^{m-1} \Tr H_{k,n}\rho^{A^n_k}_n ,\, \sum_{k=1}^{m-1}\Tr H_{k,n} \sigma^{A^n_k}_n\leq nE,\quad \forall\ n,\qquad \text{and}\qquad  \lim_{n\to+\infty}\|\rho_n-\sigma_n\|_1=0\, ,
\ee
where $A^{n}_k$ denotes $n$ copies of $A_k$, $H_{k,n}$ is the positive operator on $\HH_{A_k}^{\otimes n}$ defined in~\eqref{H-n} with $H=H_k$, and $E>0$ is a finite positive number, then
\bb
\lim_{n\to+\infty}\frac{|E_{R,\pi}(\rho_n)-E_{R,\pi}(\sigma_n)|}{n}=0.
\ee
\end{itemize}
The above properties are also valid for the function $E_{R,\pi}^{\infty}$ if $\pi$ composed of one partition only.
\end{cor}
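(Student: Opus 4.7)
The plan is to derive both claims as direct consequences of the continuity bound~\eqref{nREE-CB-1} in Proposition~\ref{nREE-CB} with $s=m-1$, combined with the asymptotic smallness condition $F_{H_{[m-1]}}(E)=o(\sqrt{E})$ established in~\eqref{F-s}. Throughout, whenever $\pi$ consists of a single partition, the same argument will apply verbatim to $E_{R,\pi}^\infty$ in place of $E_{R,\pi}$, since Proposition~\ref{nREE-CB} already covers both cases.

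For the first bullet (uniform continuity on an energy-bounded set): for $\rho,\sigma\in \D(\HH_{A_1\ldots A_m})$ satisfying the constraint $\sum_{k=1}^{m-1}\Tr H_k\rho_{A_k},\ \sum_{k=1}^{m-1}\Tr H_k\sigma_{A_k}\leq E$, with $\varepsilon\coloneqq\tfrac12\|\rho-\sigma\|_1$, I will apply~\eqref{nREE-CB-1} with $s=m-1$ and $E'\coloneqq E/(m-1)$, obtaining
\begin{equation*}
|E_{R,\pi}(\rho)-E_{R,\pi}(\sigma)|\leq \sqrt{2\varepsilon}\,F_{H_{[m-1]}}\!\left(\tfrac{E}{\varepsilon}\right)+g\big(\sqrt{2\varepsilon}\big).
\end{equation*}
Writing $\sqrt{2\varepsilon}\,F_{H_{[m-1]}}(E/\varepsilon)=\sqrt{2E}\cdot\bigl[F_{H_{[m-1]}}(E/\varepsilon)/\sqrt{E/\varepsilon}\bigr]$ and invoking $F_{H_{[m-1]}}(X)/\sqrt{X}\to 0$ as $X\to\infty$ (equivalently, as $\varepsilon\to 0^+$), together with the continuity of $g$ at $0$, I conclude that the right-hand side tends to $0$ uniformly in $\rho,\sigma$ satisfying the energy constraint, which is exactly the definition of uniform continuity on this set.

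For the second bullet (asymptotic continuity): I will apply the same bound to the $n$-fold tensor product system $A_1^n\ldots A_m^n$, using the auxiliary Hamiltonians $H_{k,n}$ defined by~\eqref{H-n}. The combined Hamiltonian $H_{[m-1],n}$ on $A_1^n\ldots A_{m-1}^n$ (built from $H_{1,n},\ldots,H_{m-1,n}$ as in~\eqref{Hm}) satisfies the key identity $F_{H_{[m-1],n}}(X)=n\,F_{H_{[m-1]}}(X/n)$ by~\cite[Lemma~2]{Shirokov-AFW-3}, and it inherits condition~\eqref{H-cond+} by the same lemma. Setting $\varepsilon_n\coloneqq \tfrac12\|\rho_n-\sigma_n\|_1$ (which tends to $0$ by hypothesis) and noting that the energy constraint $\sum_{k=1}^{m-1}\Tr H_{k,n}\rho_n^{A_k^n}\leq nE$ gives $sE'= nE$ with $s=m-1$ and $E'=nE/(m-1)$, the bound~\eqref{nREE-CB-1} yields
\begin{equation*}
\frac{|E_{R,\pi}(\rho_n)-E_{R,\pi}(\sigma_n)|}{n}\leq \sqrt{2\varepsilon_n}\,F_{H_{[m-1]}}\!\left(\tfrac{E}{\varepsilon_n}\right)+\tfrac{1}{n}\,g\big(\sqrt{2\varepsilon_n}\big).
\end{equation*}
The first summand vanishes as $n\to\infty$ by the same $o(\sqrt{\cdot})$ argument as above (now applied with $X=E/\varepsilon_n\to\infty$), while the second is trivially $o(1)$, proving the claim.

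I do not expect any substantial obstacle in this plan, since all the heavy lifting has already been done in Proposition~\ref{nREE-CB} and in~\cite[Lemma~2]{Shirokov-AFW-3}; the only mildly delicate point is the bookkeeping of the normalisation $sE=(m-1)E'$ in the energy constraint when passing to $n$ copies, so as to correctly produce the crucial factor $F_{H_{[m-1],n}}(nE/\varepsilon_n)=n F_{H_{[m-1]}}(E/\varepsilon_n)$ that, after dividing by $n$, leaves behind the $o(1)$ quantity responsible for asymptotic continuity.
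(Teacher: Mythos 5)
Your proposal is correct and follows essentially the same route as the paper: both bullets are deduced from the continuity bound~\eqref{nREE-CB-1} with $s=m-1$, with the asymptotic-continuity claim obtained by passing to the $n$-copy system, invoking the scaling identity $F_{(H_{[m-1]})_n}(X)=n\,F_{H_{[m-1]}}(X/n)$ from~\cite[Lemma~2]{Shirokov-AFW-3}, and concluding via $F_{H_{[m-1]}}(X)=o\big(\sqrt{X}\big)$. Your bookkeeping of the normalisation $sE'=(m-1)E'$ reproduces exactly the paper's inequality~\eqref{SE-ucb-k}, so there is nothing to add.
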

\begin{proof}

The assertion about uniform continuity of the functions $E_{R,\pi}$ and $E_{R,\pi}^{\infty}$ follows directly from continuity bound~\eqref{nREE-CB-1} with $s=m-1$  (speaking about $E_{R,\pi}^{\infty}$ we assume that $\pi$ composed of one partition).

To prove of the asymptotic continuity of the functions $E_{R,\pi}$ and $E_{R,\pi}^{\infty}$ note that $F_{(H_{[m-1]})_n}(E)=n F_{H_{[m-1]}} (E/n)$ for each $n$, where $H_{[m-1]}$ is the operator on $\HH_{A_1\ldots A_{m-1}}$ defined in~\eqref{Hm} with $s=m-1$, and therefore $(H_{[m-1]})_n$ is the operator on $\HH_{A_1\ldots A_{m-1}}^{\otimes n}$ obtained by setting $H=H_{[m-1]}$ in~\eqref{H-n}. So, it follows from the continuity bound~\eqref{nREE-CB-1} with $s=m-1$ that
\begin{equation}\label{SE-ucb-k}
    \frac{|E_{R,\pi}^{*}(\rho_n)-E_{R,\pi}^{*}(\sigma_n)|}{n}\leq  \sqrt{2\varepsilon_n}\, F_{H_{[m-1]}}\!\left(\frac{E}{\varepsilon_n}\right)+\frac{g\big(\sqrt{2\varepsilon_n}\big)}{n},\quad E_{R,\pi}^{*}=E_{R,\pi},\, E_{R,\pi}^{\infty},
\end{equation}
where $\varepsilon_n=\frac{1}{2}\|\rho_n-\sigma_n\|_1$. Since $\lim_{n\to+\infty} \varepsilon_n = 0$ by hypothesis and $F_{H_{[m-1]}}(E)=o\big(\sqrt{E}\big)$ as $E\to+\infty$, by Ref.~\cite[Lemma~2]{Shirokov-AFW-3} and Ref.~\cite[Lemma~1]{Shirokov-AFW-1}, the right-hand side of~\eqref{SE-ucb-k} tends to zero as $n\to+\infty$.
\end{proof}

\section{Conclusions and outlook}

In this paper we established the surprising fact that the infimum defining the relative entropy of entanglement is always achieved, also in infinite-dimensional systems. This has been shown to be a consequence of a much more general result, stating that the relative entropy distance to a (convex) set of free states $\FF$, called the relative entropy of resource, is always achieved and moreover lower semi-continuous, provided that the cone generated by $\FF$ is closed in the weak*-topology (Theorem~\ref{achievable_relent_thm}). We employed this latter result to establish a dual variational formula by means of which the relative entropy of resource can be expressed as a maximisation instead of a minimisation (Theorem~\ref{variational_thm}). In doing so, we generalised several results of classic matrix analysis, most notably Lieb's three-matrix inequality, to the infinite-dimensional case (Appendix~\ref{Lieb_3_app}). The applications we envision for our dual formula are on the one hand computational, and on the other rest on the theoretical framework proposed in~\cite{Berta2017, nonclassicality}, where expressions of that kind are used to establish properties such as the super-additivity.

We further identified a general set of conditions implying the above topological property (Theorem~\ref{w*_closed_condition_thm}), and showed how to apply them to a variety of quantum resource theories, namely, that of multi-partite entanglement (Section~\ref{multipartite_entanglement_subsec}), NPT entanglement (Section~\ref{NPT_entanglement_subsec}), non-classicality, Wigner negativity and more generally $\lambda$-negativity (Section~\ref{lambda_negativity_subsec}), and finally non-Gaussianity (Section~\ref{non_Gaussianity_subsec}). Interestingly, the topological condition we have pinpointed is obeyed in almost all cases of practical interest, and can thus be regarded as a natural regularity assumption to impose on arbitrary infinite-dimensional quantum resource theories. For example, one could imagine to employ it to generalise the results of Ref.~\cite{Brandao-Gour}, which rest on a key identity between the smoothed regularised (generalised) robustness and the regularised relative entropy, to infinite-dimensional resources. Also, it would be interesting to extend the methods in this paper to address other resource quantifiers involving optimisations over non-compact sets, or else channel resource quantifiers~\cite{Gour-Winter}.

In the second part of our paper we focused our attention on the relative entropy of (NPT) entanglement, the Rains bound, regularisations thereof, and the corresponding multi-partite generalisations. We have established tight uniform continuity bounds for all those functions in the presence of an energy constraint. Conceptually, those bounds complement the general statement of lower semi-continuity, and prove that much stronger regularity properties can be obtained if one looks only at energy-bounded sets of states. \tcb{We speculate that even tighter constraints could possibly be derived by leveraging techniques recently proposed by Becker, Datta, and Jabbour~\cite{Becker2023}.}

\medskip
\textbf{Acknowledgements.} L.\ Lami\ is supported by the Alexander von Humboldt Foundation. He thanks Martin B.\ Plenio and Bartosz Regula for several interesting discussions about entanglement and infinite-dimensional resource theories. The work of M.\ Shirokov was performed at the Steklov International Mathematical Center and supported by the Ministry of Science and Higher Education of the Russian Federation (agreement no.\ 075-15-2019-1614). The authors are grateful to A.\ S.\ Holevo and G.\ G.\ Amosov for useful and motivating discussions. They also thank an anonymous referee at the `17th Conference on the Theory of Quantum Computation, Communication and Cryptography' (TQC 2022) for insightful comments.

\bibliographystyle{unsrt}
\bibliography{biblio}

\appendix

\section{On the differential of the operator logarithm in infinite dimensions} \label{app_differential_log}

Throughout this appendix we show how to rigorously derive~\eqref{inequality_differentiated} and~\eqref{Gamma_A} in the proof of Lemma~\ref{variational_technical_lemma}. We state the following:

\begin{lemma} \label{differential_log_lemma}
Let $\xi,\xi'\in \T_+(\HH)\cap B_1$ be two positive semi-definite trace class operators with trace at most $1$. Let $\rho\in \D(\HH)$ be a density operator such that $-\Tr \rho \ln \xi<\infty$, so that $-\Tr \rho \ln \left((1-\lambda)\xi + \lambda\xi'\right)<\infty$ for all $\lambda\in [0,1)$. Assume further that
\bb
\liminf_{\lambda\to 0^+} \frac{1}{\lambda}\left( \Tr \rho \ln \xi -\Tr \rho \ln \left((1-\lambda)\xi + \lambda\xi'\right) \right) \geq c > -\infty\, .
\label{assumption_further}
\ee
Then
\bb
\lim_{\lambda\to 0^+} \frac{1}{\lambda}\left( \Tr \rho \ln \xi -\Tr \rho \ln \left((1-\lambda)\xi + \lambda\xi'\right) \right) = \Tr \rho \,\Gamma_\xi(\xi - \xi') = 1 - \Tr \rho\, \Gamma_\xi (\xi') \geq c\, ,
\label{differential_log}
\ee
where as in~\eqref{Gamma_A} we convene to define $\Tr \rho\, \Gamma_\xi (X) \coloneqq \int_0^\infty ds\, \Tr \rho\, \frac{1}{\xi + s\id} X \frac{1}{\xi + s\id}$; the integral on the right-hand side is absolutely convergent when either $X=\xi-\xi'$ or $X=\xi'$.
\end{lemma}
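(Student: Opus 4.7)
The plan is to derive an operator integral representation for $\ln\xi - \ln\xi_\lambda$ (where $\xi_\lambda \coloneqq (1-\lambda)\xi+\lambda\xi'$), take the trace against $\rho$, and pass to the limit $\lambda\to 0^+$ via a monotonicity argument based on the operator convexity of the resolvent. Starting from the scalar identity $\ln\mu = \int_0^\infty \bigl[(1+s)^{-1}-(\mu+s)^{-1}\bigr]\, ds$ for $\mu>0$, extended by spectral calculus, and applying the resolvent identity $(B+s)^{-1}-(A+s)^{-1} = (B+s)^{-1}(A-B)(A+s)^{-1}$ to $A=\xi_\lambda$ and $B=\xi$, one obtains
\[
\ln\xi - \ln\xi_\lambda \;=\; \lambda\int_0^\infty (\xi_\lambda+s)^{-1}(\xi-\xi')(\xi+s)^{-1}\, ds\,.
\]
Taking $\Tr\rho$ and exchanging with the integral via Fubini (justified after, if necessary, a finite-rank truncation of $\xi,\xi'$ and a passage to the limit) yields
\[
\frac{\Tr\rho\ln\xi - \Tr\rho\ln\xi_\lambda}{\lambda} \;=\; \int_0^\infty h_\lambda(s)\, ds\,,\qquad h_\lambda(s) \coloneqq \Tr\rho\,(\xi_\lambda+s)^{-1}(\xi-\xi')(\xi+s)^{-1}\,.
\]

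The key structural observation is that, for each fixed $s>0$, the function $F_s(\lambda) \coloneqq \Tr\rho\,(\xi_\lambda+s)^{-1}$ is convex in $\lambda$: indeed, $\xi_\lambda$ is affine in $\lambda$, $y\mapsto (y+s)^{-1}$ is operator convex on the cone of positive operators, and $\Tr\rho(\cdot)$ preserves convexity. Hence the difference quotient $h_\lambda(s) = [F_s(\lambda)-F_s(0)]/\lambda$ is non-decreasing in $\lambda>0$, and decreases monotonically, as $\lambda\downarrow 0$, to $F_s'(0^+) = \Tr\rho\,(\xi+s)^{-1}(\xi-\xi')(\xi+s)^{-1} \eqqcolon h_0(s)$. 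Since $\xi$ commutes with $(\xi+s)^{-1}$, the pointwise limit splits into non-negative pieces $h_0(s) = A(s) - B(s)$, with $A(s) \coloneqq \Tr\rho\,\xi(\xi+s)^{-2}$ and $B(s) \coloneqq \Tr\rho\,(\xi+s)^{-1}\xi'(\xi+s)^{-1}$. The elementary identity $\int_0^\infty \xi(\xi+s)^{-2}\, ds = P_{\supp\xi}$ together with $\supp\rho\subseteq\supp\xi$ (a consequence of $-\Tr\rho\ln\xi<\infty$) yields $\int_0^\infty A(s)\, ds = 1$.

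To exchange the limit and the integral, fix $\lambda_0\in(0,1)$ and apply the monotone convergence theorem to the non-negative family $h_{\lambda_0}(s) - h_\lambda(s) \geq 0$, which increases pointwise to $h_{\lambda_0}(s) - h_0(s)$ as $\lambda\downarrow 0$. This gives
\[
\int_0^\infty h_\lambda(s)\, ds \;=\; \int_0^\infty h_{\lambda_0}(s)\, ds - \int_0^\infty [h_{\lambda_0}(s) - h_\lambda(s)]\, ds \;\xrightarrow[\lambda\downarrow 0]{}\; \int_0^\infty h_{\lambda_0}(s)\, ds - \int_0^\infty [h_{\lambda_0}(s) - h_0(s)]\, ds\,.
\]
Were $\int_0^\infty [h_{\lambda_0} - h_0]\, ds = +\infty$, the left-hand side would tend to $-\infty$, contradicting the hypothesis~\eqref{assumption_further}; this integral must therefore be finite, and together with the pointwise identity $B(s) = A(s) + [h_{\lambda_0}(s) - h_0(s)] - h_{\lambda_0}(s)$ this shows that $\Tr\rho\,\Gamma_\xi(\xi') = \int_0^\infty B(s)\, ds$ is finite as well. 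Consequently, the limit equals $\int_0^\infty h_0(s)\, ds = 1 - \Tr\rho\,\Gamma_\xi(\xi') = \Tr\rho\,\Gamma_\xi(\xi-\xi')$, with the bound $\geq c$ inherited from~\eqref{assumption_further}. The main technical obstacle is the rigorous justification of the Fubini interchange and the convergence of the operator integral in infinite dimensions, since individual pieces like $(\xi_\lambda+s)^{-1}\xi(\xi+s)^{-1}$ need not be trace-norm integrable in $s$; the convexity-based monotone decomposition is what allows one to sidestep absolute integrability and extract the limit through purely non-negative differences.
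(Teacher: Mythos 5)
Your argument is correct and follows essentially the same route as the paper's proof: the resolvent integral representation of the logarithm, the monotonicity in $\lambda$ of the difference quotient $h_\lambda(s)$ coming from the operator convexity of $y\mapsto (y+s\id)^{-1}$, a monotone-convergence step that uses the $\liminf$ hypothesis~\eqref{assumption_further} to rule out divergence to $-\infty$, and the final splitting of the limit integrand into $\Tr\rho\,\xi(\xi+s\id)^{-2}$ (integrating to $1$ via $\supp\rho\subseteq\supp\xi$) minus $\Tr\rho\,(\xi+s\id)^{-1}\xi'(\xi+s\id)^{-1}$. The only differences are cosmetic --- you reduce to the standard monotone convergence theorem by subtracting a fixed $h_{\lambda_0}$, whereas the paper cites a Beppo Levi statement for monotone families directly, and the paper additionally records an explicit operator-norm rate $\lambda/s^3$ for the convergence of the difference quotient, which you may wish to include to fully justify identifying the pointwise limit.
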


Note that a quick application of Lemma~\ref{differential_log_lemma} with $c\coloneqq \Tr[\xi-\xi']$ yields precisely~\eqref{inequality_differentiated} and~\eqref{Gamma_A}.

\begin{proof}[Proof of Lemma~\ref{differential_log_lemma}]
The first claim is just~\eqref{boundedness_for_all_lambda}, so we proceed to prove~\eqref{differential_log}. For some $C\in \T_+(\HH)\cap B_1$ (see~\eqref{B_1}) with spectral decomposition $C=\sum_j c_j \ketbra{c_j}$, denoting with $\rho=\sum_i p_i \ketbra{e_i}$ the spectral decomposition of $\rho$, we see that
\bb
-\Tr \rho \ln C &= \sum_{i,j} p_i \left|\braket{e_i|c_j}\right|^2 (- \ln c_j) \\
&\eqt{(i)} \sum_{i,j} p_i \left|\braket{e_i|c_j}\right|^2 \int_0^\infty ds\, \left( \frac{1}{c_j+s} - \frac{1}{1+s} \right) \\
&\eqt{(ii)} \int_0^\infty ds \sum_{i,j} p_i \left|\braket{e_i|c_j}\right|^2 \left( \frac{1}{c_j+s} - \frac{1}{1+s} \right) \\
&= \int_0^\infty ds \Tr \rho\left(\frac{1}{C+s\id} - \frac{\id}{1+s}\right) ,
\ee
where in~(i) we used a well-known integral representation of the logarithm, and in~(ii) we applied Tonelli's theorem to exchange integral and sum --- this is possible because $c_j\leq 1$ for all $j$. Clearly, since $\frac{1}{C+s\id} \geq \frac{\id}{1+s}$ for all $s>0$, the function to be integrated is non-negative; thus, if $-\Tr \rho \ln C<\infty$ the integral is absolutely converging. We now apply this insight to manipulate the difference between two such integrals, obtained by setting $C=\xi$ and $C=(1-\lambda)\xi + \lambda \xi'$. We have that
\bb
\frac{1}{\lambda}\left( \Tr \rho \ln \xi -\Tr \rho \ln \left((1-\lambda)\xi + \lambda\xi'\right) \right) = \int_0^\infty ds\, f_\lambda(s)\, ,
\label{integral_f_lambda}
\ee
where
\bb
f_\lambda(s) \coloneqq \Tr \rho\, \frac{C_s(\lambda) - C_s(0)}{\lambda}\, ,\qquad C_s(\lambda) \coloneqq \frac{1}{(1-\lambda)\xi + \lambda\xi'+s\id}\, .
\label{f_lambda_C_lambda}
\ee
From how it was obtained it is clear that the integral in~\eqref{integral_f_lambda} is absolutely converging.

Now, fix $s>0$; since $C_s(\cdot)$ defined by~\eqref{f_lambda_C_lambda} is convex in the operator sense~\cite{SIMON-LOEWENER}, we see that $\frac{C_s(\lambda) - C_s(0)}{\lambda}$ is monotonically non-decreasing in $\lambda\in [0,1]$ as an operator. Consequently, $f_\lambda(s)$ is also monotonically non-decreasing in $\lambda$. We now claim that
\bb
\lim_{\lambda\to 0^+} \left\| \frac{C_s(\lambda) - C_s(0)}{\lambda} - \frac{1}{\xi+s\id}\, (\xi - \xi')\, \frac{1}{\xi+s\id}\right\|_\infty = 0\, ,
\label{operator_norm_convergence}
\ee
i.e.\ $\frac{C_s(\lambda) - C_s(0)}{\lambda}$ converges in operator norm to $\frac{1}{\xi+s\id}\, (\xi - \xi')\, \frac{1}{\xi+s\id}$. To establish~\eqref{operator_norm_convergence}, write first
\bb
\frac{C_s(\lambda) - C_s(0)}{\lambda} &= \frac{1}{\lambda} \left(\frac{1}{(1-\lambda)\xi + \lambda\xi'+s\id} - \frac{1}{\xi +s\id} \right) \\
&= \frac{1}{\lambda} \frac{1}{(1-\lambda)\xi + \lambda\xi'+s\id}\left(\xi + s\id - \left((1-\lambda)\xi + \lambda\xi'+s\id \right) \right)\frac{1}{\xi +s\id} \\
&= \frac{1}{(1-\lambda)\xi + \lambda\xi'+s\id}\left(\xi - \xi' \right)\frac{1}{\xi +s\id}\, ,
\ee
so that
\bb
\frac{C_s(\lambda) - C_s(0)}{\lambda} - \frac{1}{\xi+s\id}\, (\xi - \xi')\, \frac{1}{\xi+s\id} &= \left(\frac{1}{(1-\lambda)\xi + \lambda\xi'+s\id} - \frac{1}{\xi +s\id} \right) \left(\xi - \xi' \right) \frac{1}{\xi +s\id} \\
&= \lambda\, \frac{1}{(1-\lambda)\xi + \lambda\xi'+s\id}\left(\xi - \xi'\right) \frac{1}{\xi +s\id} \left(\xi - \xi' \right) \frac{1}{\xi +s\id}\, ,
\ee
and finally, remembering that $\|\xi-\xi'\|_\infty\leq 1$ because $\xi,\xi'\in \T_+(\HH)\cap B_1$,
\bb
\left\| \frac{C_s(\lambda) - C_s(0)}{\lambda} - \frac{1}{\xi+s\id}\, (\xi - \xi')\, \frac{1}{\xi+s\id}\right\|_\infty \leq \frac{\lambda}{s^3} \ctends{}{\lambda\to 0^+}{-0.1ex} 0\, .
\ee
This proves~\eqref{operator_norm_convergence}. Now, looking at the definition of $f_\lambda(s)$ in~\eqref{f_lambda_C_lambda} and putting all together, one sees that
\bb
\lim_{\lambda \to 0^+} f_\lambda(s) = \inf_{\lambda>0} f_\lambda(s) = \Tr \rho\, \frac{1}{\xi+s\id}\, (\xi - \xi')\, \frac{1}{\xi+s\id} \eqqcolon f(s)\, . 
\ee
We are now ready to use Beppo Levi's monotone convergence theorem (see e.g.~\cite[Theorem~11.1(ii)]{SCHILLING}), applicable thanks to~\eqref{assumption_further}, which yields the absolute integrability of $f$ and the identities
\bb
\lim_{\lambda\to 0^+} \int_0^\infty ds\, f_\lambda(s) = \inf_{\lambda>0} \int_0^\infty ds\, f_\lambda(s) = \int_0^\infty ds\, \inf_{\lambda>0} f_\lambda(s) = \int_0^\infty ds\, \lim_{\lambda\to 0^+} f_\lambda(s) = \int_0^\infty ds\, f(s) \geq c\, .
\ee
Again thanks to absolute integrability we can split the last integral, arriving at
\bb
c \leq \int_0^\infty ds\, f(s) = \int_0^\infty ds\, \Tr \rho\,\frac{\xi}{(\xi+s)^2} - \int_0^\infty ds\, \Tr \rho\, \frac{1}{\xi+s\id}\, \xi'\, \frac{1}{\xi+s\id} = 1 - \int_0^\infty ds\, \Tr \rho\, \frac{1}{\xi+s\id}\, \xi'\, \frac{1}{\xi+s\id}\, ,
\ee
which concludes the proof.
\end{proof}

\section{Lieb's three-operator inequality} \label{Lieb_3_app}

Junge and LaRacuente~\cite{Junge2021} (see also~\cite{Hollands2021}) have recently established a version of the multivariate Golden--Thompson inequality from~\cite{Sutter2017} that works in infinite dimensions as well. However, their result is expressed in a `unitarily rotated' form that is not prima facie equivalent to the generalised Lieb's three-matrix inequality that we need here. Namely, Junge and LaRacuente prove that~\cite[Theorem~1.2]{Junge2021}
\bb
\ln \left\| \exp\left[\frac{\ln A}{p} + \sumno_k X_k \right] \right\|_p \leq \int_{-\infty}^{+\infty} dt\, \beta_0(t)\, \ln \left\|\left( \prod\nolimits_k \exp\left[(1+it) X_k\right] \right) A^{1/p} \right\|_p
\label{Junge_LaRacuente}
\ee
for all trace class operators $0<A\in \T(\HH)$ and finite collections $\{X_k\}_k$ of bounded self-adjoint operators $X_k=X_k^\dag \in \B(\HH)$. Here, $\beta_0:\R \to \R_+$, given by $\beta_0(t)\coloneqq \frac{\pi}{2\left( \cosh(\pi t)+1\right)}$, is a fixed probability density function on $\R$. It is not obvious how to deduce an inequality of the form~\cite[Theorem~7]{lieb73c} from~\eqref{Junge_LaRacuente}. The way to do so is detailed in~\cite[Appendix~E]{Sutter2017} for the finite-dimensional case. The purpose of this appendix is to extend this derivation to the infinite-dimensional case as well.

\begin{lemma} \label{Lieb_3_matrix_lemma}
Let $A\in \T_+(\HH)$ be a positive semi-definite trace class operator, and for some $\delta>0$ let $\delta \id \leq B,C\in \B(\HH)$ be two positive semi-definite bounded operators. Then
\bb
\Tr e^{\ln A + \ln B - \ln C} \leq \int_0^\infty ds\, \Tr A\, \frac{1}{C+s\id}\, B\, \frac{1}{C+s\id} < \infty\, ,
\label{Lieb_3_matrix}
\ee
where the left-hand side is interpreted as in~\eqref{interpretation} when $A$ is not strictly positive definite, and the integral on the right-hand side is absolutely converging.
\end{lemma}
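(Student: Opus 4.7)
The plan is to derive \eqref{Lieb_3_matrix} from the infinite-dimensional multivariate Golden--Thompson inequality of Junge--LaRacuente, namely \eqref{Junge_LaRacuente}, specialised to the Hilbert--Schmidt regime $p=2$. Concretely, I would apply \eqref{Junge_LaRacuente} with $X_1 \coloneqq \tfrac12 \ln B$ and $X_2 \coloneqq -\tfrac12 \ln C$, both of which are bounded self-adjoint operators by virtue of the assumption $B, C \geq \delta \id$. Since the resulting exponent $M \coloneqq \tfrac12(\ln A + \ln B - \ln C)$ is Hermitian (interpreted via~\eqref{interpretation} on $\supp A$ if $A$ is not strictly positive), the identity $\|e^M\|_2^2 = \Tr e^{2M}$ converts the left-hand side of \eqref{Junge_LaRacuente} into $\tfrac12 \ln \Tr e^{\ln A + \ln B - \ln C}$. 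On the right-hand side, a direct computation via cyclicity of the trace gives
\bb
\left\| B^{(1+it)/2} C^{-(1+it)/2} A^{1/2} \right\|_2^2 = \Tr \left[ C^{-(1+it)/2} A\, C^{-(1-it)/2} B \right] ,
\ee
a non-negative real number because $C^{-(1+it)/2} A\, C^{-(1-it)/2} = \big(C^{-(1-it)/2}\big)^\dag A\, C^{-(1-it)/2} \geq 0$ and $B \geq 0$. Multiplying the resulting inequality by two, then applying Jensen's inequality (concavity of $\ln$, $\beta_0$ a probability density on $\R$) and exponentiating, I obtain
\bb
\Tr e^{\ln A + \ln B - \ln C} \leq \int_{-\infty}^{+\infty} \beta_0(t)\, \Tr\!\left[ C^{-(1+it)/2} A\, C^{-(1-it)/2} B \right] dt\, .
\label{plan_intermediate}
\ee

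From \eqref{plan_intermediate}, the conclusion would follow from the operator identity
\bb
\int_{-\infty}^{+\infty} \beta_0(t)\, C^{-(1+it)/2} X\, C^{-(1-it)/2}\, dt = \int_0^\infty (C + s\id)^{-1} X\, (C + s\id)^{-1}\, ds\, ,
\label{plan_spec_id}
\ee
applied with $X = A$ and coupled with $B$ under a trace. The exchange of the $t$-integration with the trace is justified by Fubini's theorem, since $\|C^{-(1\pm it)/2}\|_\infty \leq \delta^{-1/2}$ yields the uniform bound $\big|\Tr[C^{-(1+it)/2} A\, C^{-(1-it)/2} B]\big| \leq \delta^{-1} \|A\|_1 \|B\|_\infty$, which is integrable against the probability density $\beta_0$. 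Finally, the absolute convergence of the resulting $\int_0^\infty \Tr[A\, (C+s\id)^{-1} B\, (C+s\id)^{-1}] ds$---hence the claimed finiteness asserted in \eqref{Lieb_3_matrix}---follows immediately from the sharper bound $|\Tr[\cdots]| \leq \|A\|_1 \|B\|_\infty (\delta + s)^{-2}$, which integrates to $\|A\|_1 \|B\|_\infty / \delta < \infty$.

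The main technical hurdle will be establishing the spectral identity \eqref{plan_spec_id} rigorously for general $C \geq \delta \id$, which may have continuous spectrum. I would tackle it via functional calculus: writing $C = \int_\delta^\infty \lambda\, dE(\lambda)$, both sides of \eqref{plan_spec_id} reduce to double spectral integrals against the scalar kernels $f_t(\lambda, \mu) \coloneqq \lambda^{-(1+it)/2}\mu^{-(1-it)/2}$ and $g_s(\lambda, \mu) \coloneqq (\lambda+s)^{-1}(\mu+s)^{-1}$, respectively. It therefore suffices to verify the scalar identity $\int \beta_0(t)\, f_t(\lambda, \mu)\, dt = \int_0^\infty g_s(\lambda, \mu)\, ds$ pointwise for $\lambda, \mu \in [\delta, \infty)$. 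The right-hand side is elementary and equals $(\ln \lambda - \ln \mu)/(\lambda - \mu)$ (with the limiting value $1/\lambda$ when $\lambda = \mu$); the left-hand side produces the same value by a short calculation based on the Fourier transform $\hat\beta_0(y) \coloneqq \int \beta_0(t) e^{-ity} dt = y/\sinh y$, itself a classical consequence of $\int_{-\infty}^\infty e^{-i\xi u}/\cosh^2 u\, du = \pi\xi/\sinh(\pi\xi/2)$. Promoting the scalar identity to the operator identity \eqref{plan_spec_id} will be done by testing against finite-rank $Y$ supported in the range of a bounded spectral projector of $C$, invoking Fubini together with the uniform $\delta^{-1/2}$-bounds on $C^{-(1\pm it)/2}$, and finally extending to arbitrary bounded $Y$ by weak density.
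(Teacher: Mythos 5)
Your first half coincides with the paper's: both apply the Junge--LaRacuente inequality \eqref{Junge_LaRacuente} at $p=2$ with $X_1=\frac12\ln B$ and $X_2=-\frac12\ln C$, pass from $\exp\big[2\int dt\,\beta_0(t)\ln\|\cdot\|_2\big]$ to $\int dt\,\beta_0(t)\|\cdot\|_2^2$ by convexity of the exponential, and use cyclicity of the trace to reach the intermediate bound $\Tr e^{\ln A+\ln B-\ln C}\leq \int dt\,\beta_0(t)\Tr\big[C^{-\frac{1+it}{2}}A\,C^{-\frac{1-it}{2}}B\big]$. Where you genuinely diverge is in converting the $t$-integral into the $s$-integral. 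The paper never proves your operator identity in infinite dimensions: it compresses $C$ to $C'_n=P_nCP_n^\dag\oplus\delta Q_nQ_n^\dag$ along finite-dimensional $A$-invariant subspaces, applies the scalar kernel identity of Sutter et al.\ entrywise in an eigenbasis of the resulting finite matrix, and passes to the limit twice by dominated convergence. Your route --- reducing directly to the scalar identity $\int dt\,\beta_0(t)\,\lambda^{-\frac{1+it}{2}}\mu^{-\frac{1-it}{2}}=\int_0^\infty \frac{ds}{(\lambda+s)(\mu+s)}=\frac{\ln\lambda-\ln\mu}{\lambda-\mu}$ via the spectral theorem --- is viable and arguably cleaner, but the ``double spectral integral'' step needs more care than your sketch suggests: the product spectral measure $dE(\lambda)\,X\,dE(\mu)$ does not have finite total variation in general, so one cannot integrate the kernel against it naively (this is exactly the classical difficulty of double operator integrals). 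The repair is to decompose $A=\sum_i a_i\ketbra{e_i}$, note that each complex measure $\Delta\mapsto\braket{u|E(\Delta)|e_i}$ has total variation at most $\|u\|$ by Cauchy--Schwarz, so that the matrix elements of both candidate operators become absolutely convergent double integrals of the scalar kernels to which Fubini applies (using the uniform bounds $\delta^{-1}$ and $(\delta+s)^{-2}$), and finally observe that both sides are Bochner integrals converging in trace norm, so equality of matrix elements upgrades to equality of trace class operators and the pairing with the non-compact $B$ commutes with the integrals. With these additions your argument goes through; the paper's discretisation buys the avoidance of all of this at the price of the bookkeeping with $C'_n$.

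The one genuine gap is the case of non-faithful $A$. The inequality \eqref{Junge_LaRacuente} is stated for $0<A$, and your parenthetical remark that the exponent is ``interpreted via \eqref{interpretation} on $\supp A$'' does not license applying it when $\ker A\neq\{0\}$: the convention \eqref{interpretation} replaces $\ln B-\ln C$ by its compression $P(\ln B-\ln C)P^\dag$ to $\supp A$, and this compression is \emph{not} of the form $\ln B'-\ln C'$ for positive operators $B',C'$ on $\supp A$ (the logarithm does not commute with compression), so the three-operator structure required by \eqref{Junge_LaRacuente} is lost on the subspace. You need the regularisation the paper carries out at the outset: apply the strictly positive case to $A_\epsilon\coloneqq A+\epsilon\Delta$ with $\Delta>0$ trace class, note that the right-hand side of \eqref{Lieb_3_matrix} is affine in $A$ and finite for $\Delta$, and invoke the monotonicity of $\xi\mapsto\Tr e^{\ln\xi+X}$ established in Lemma~\ref{Petz_amended_lemma} to get $\Tr e^{\ln A+\ln B-\ln C}\leq\Tr e^{\ln A_\epsilon+\ln B-\ln C}$ before letting $\epsilon\to0^+$. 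Without this step the lemma is only proved for strictly positive $A$, which is insufficient for its use in Lemma~\ref{variational_technical_lemma}, where it is applied with a possibly singular $\xi'$ in the role of $A$.
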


\begin{proof}
We start by arguing that it suffices to prove the claim when $A>0$ is strictly positive definite. Indeed, assume that we have addressed that case; given some $A\geq 0$ that is not strictly positive definite, we can pick any trace class $\Delta>0$ and for $\epsilon>0$ define $A_\epsilon \coloneqq A + \epsilon \Delta$. Clearly, $A_\epsilon>0$ and also $A_\epsilon \geq A$, so we would obtain that
\bb
\Tr e^{\ln A + \ln B - \ln C} &\leq \Tr e^{\ln A_\epsilon + \ln B - \ln C} \\
&\leq \int_0^\infty ds\, \Tr A_\epsilon\, \frac{1}{C+s\id}\, B\, \frac{1}{C+s\id} \\
&= \int_0^\infty ds\, \Tr A\, \frac{1}{C+s\id}\, B\, \frac{1}{C+s\id} + \epsilon \int_0^\infty ds\, \Tr \Delta\, \frac{1}{C+s\id}\, B\, \frac{1}{C+s\id}\, ,
\ee
where the first inequality comes from the monotonicity of the function in~\eqref{trace_perturbed_function}, as established by Lemma~\ref{Petz_amended_lemma}.\footnote{There is nothing circular here, as the result of the present Lemma~\ref{Lieb_3_matrix_lemma} is used in the proof of Lemma~\ref{variational_technical_lemma} but not in that of Lemma~\ref{Petz_amended_lemma}.} Since the term proportional to $\epsilon$ in the last line of the above inequality is finite (this will follow from the $A>0$ case of~\eqref{Lieb_3_matrix}), taking the limit $\epsilon \to 0^+$ yields the general case of~\eqref{Lieb_3_matrix}.

Therefore, in what follows we assume that $A>0$. Let $(P_n)_{n\in \N}$ be a sequence of orthogonal projectors $P_n:\HH\to V_n\simeq \C^n$, where (a)~for all $n$ the subspace $V_n$ is invariant under $A$ --- for example, it may be the linear span of $n$ eigenvectors; and (b)~$\Pi_n \coloneqq P_n^\dag P_n:\HH\to \HH$ converges strongly to the identity, which we write $\Pi_n \tendsn{s} \id$. Set
\bb
C'_n \coloneqq \delta\id + P_n (C-\delta \id) P_n^\dag \oplus 0 = P_n C P_n^\dag \oplus \big(\delta Q_n Q_n^\dag\big) \eqqcolon C_n \oplus \big(\delta Q_n Q_n^\dag\big)\, ,
\label{approximating_C}
\ee
where the direct sum is with respect to the decomposition $\HH=V_n\oplus V_n^\perp$, we denoted with $Q_n$ the orthogonal projector onto $V_n^\perp$, and we set $C_n \coloneqq P_n C P^\dag_n$. Then clearly $C'_n\geq \delta \id$ for all $n$, and moreover $C'_n\tendsn{s} C$.
Putting all together:
\bb
\Tr e^{\ln A + \ln B - \ln C} &= %\frac{1}{\delta} \Tr e^{\ln A + \ln B - \sum_{i=1}^N \ln\left(\frac{c'_i}{\delta}+1\right)} \\ \frac{1}{\delta} \left\| \exp\left[ \frac12 \ln A + \frac12 \ln B - \frac12 \sum_{i=1}^N \ln\left(\frac{c'_i}{\delta}+1\right) \right] \right\|_2^2 \\
\left\| \exp\left[ \frac12 \ln A + \frac12 \ln B - \frac12 \ln C \right] \right\|_2^2 \\
&\leqt{(i)} \exp \left[ 2\int_{-\infty}^{+\infty} dt\, \beta_0(t)\, \ln \left\| B^{\frac{1+it}{2}} C^{-\frac{1+it}{2}} A^{1/2} \right\|_2 \right] \\
&\leqt{(ii)} \int_{-\infty}^{+\infty} dt\, \beta_0(t) \left\| B^{\frac{1+it}{2}} C^{-\frac{1+it}{2}} A^{1/2} \right\|_2^2 \\
&= \int_{-\infty}^{+\infty} dt\, \beta_0(t) \Tr B^{\frac{1+it}{2}} C^{-\frac{1+it}{2}} A\, C^{-\frac{1-it}{2}} B^{\frac{1-it}{2}} \\
&\eqt{(iii)} \int_{-\infty}^{+\infty} dt\, \beta_0(t) \Tr B\, C^{-\frac{1+it}{2}} A\, C^{-\frac{1-it}{2}} \\
&\eqt{(iv)} \lim_{n\to\infty} \int_{-\infty}^{+\infty} dt\, \beta_0(t) \Tr B\, (C'_n)^{-\frac{1+it}{2}} A\, (C'_n)^{-\frac{1-it}{2}} \\
&\eqt{(v)} \lim_{n\to\infty} \left( \int_{-\infty}^{+\infty} dt\, \beta_0(t) \Tr B_n\, C_n^{-\frac{1+it}{2}} A_n\, C_n^{-\frac{1-it}{2}} + \frac1\delta \Tr B(\id-\Pi_n) A (\id-\Pi_n) \right) \\
&\eqt{(vi)} \lim_{n\to\infty} \left( \int_{0}^{+\infty} ds\, \Tr B_n\, \frac{1}{C_n+s\id_n} \, A_n\, \frac{1}{C_n+s\id_n} + \frac1\delta \Tr B(\id-\Pi_n) A (\id-\Pi_n) \right) \\
&\eqt{(vii)} \lim_{n\to\infty} \int_{0}^{+\infty} ds\, \Tr B\, \frac{1}{C'_n+s\id} \, A\, \frac{1}{C'_n+s\id} \\
&\eqt{(viii)} \int_{0}^{+\infty} ds\, \Tr B\, \frac{1}{C +s\id}\, A\, \frac{1}{C+s\id}
\label{long_chain_C_prime}
\ee
Here, (i)~is an application of Junge and LaRacuente's result~\eqref{Junge_LaRacuente} with $p=2$, $k=1,2$, $X_1=\frac12 \ln B$, and $X_2=-\frac12 \ln C$; (ii)~follows from the convexity of the exponential $\exp:\R\to \R$ (remember that $\beta_0$ is a probability density function); and in~(iii) we leveraged the cyclicality of the trace. The justification of~(iv) is slightly more complex. Since $C'_n \tendsn{s} C$ and $C'_n, C\geq \delta\id$, thanks to~\cite[Propositions~10.1.9 and~10.1.13(a)]{OLIVEIRA} we see that $(C'_n)^{-\frac{1\pm it}{2}} \tendsn{s} C^{-\frac{1\pm it}{2}}$; as we did in~\eqref{convergence_MM_n}, this can be shown to imply that
\bb
(C'_n)^{-\frac{1+it}{2}} A\, (C'_n)^{-\frac{1-it}{2}} \tendsn{tn} C^{-\frac{1+it}{2}} A\, C^{-\frac{1-it}{2}}\, ,
\ee
in turn guaranteeing that
\bb
\Tr B\, (C'_n)^{-\frac{1+it}{2}} A\, (C'_n)^{-\frac{1-it}{2}} \tends{}{n\to +\infty} \Tr B\, C^{-\frac{1+it}{2}} A\, C^{-\frac{1-it}{2}}
\label{pointwise_convergence_C_prime}
\ee
because $B$ is bounded. Now, given that
\bb
\beta_0(t) \left| \Tr B\, (C'_n)^{-\frac{1+it}{2}} A\, (C'_n)^{-\frac{1-it}{2}} \right| \leq \beta_0(t) \|B\|_\infty \left\|(C'_n)^{-\frac{1+it}{2}} \right\|_\infty^2 \|A\|_1 \leq \frac{\|B\|_\infty \|A\|_1}{\delta}\, \beta_0(t)
\ee
%in light of~\eqref{estimate_norm_C_prime},
and moreover $\int_{-\infty}^{+\infty} dt\, |\beta_0(t)| = 1$, the identity~(iv) follows from~\eqref{pointwise_convergence_C_prime} thanks to Lebesgue's dominated convergence theorem.
Continuing with the justification of the derivation in~\eqref{long_chain_C_prime}: in~(v) we decomposed the trace exploiting the fact that $V_n$ is invariant under the action of $A$, introducing the operators $B_n \coloneqq P_n B P_n^\dag$ and $A_n\coloneqq P_n A P_n^\dag$ on $V_n$; in~(vi) we massaged the first term, which is the trace of an $n\times n$ matrix, by means of the identity~\cite[Eq.~(96)]{Sutter2017}
\bb
\int_{-\infty}^{+\infty} dt\, \beta_0(t)\, x^{-\frac{1+it}{2}} y^{-\frac{1-it}{2}} = \int_0^\infty ds\, \frac{1}{(x+s)(y+s)}\, ,
\ee
valid for all $x,y>0$ (this step is the same as in~\cite[Appendix~E]{Sutter2017}, where it is explained in more detail); (vii)~descends from the chain of equalities
\bb
&\int_{0}^{+\infty} ds\, \Tr B\, \frac{1}{C'_n+s\id} \, A\, \frac{1}{C'_n+s\id} \\
&\qquad = \int_{0}^{+\infty} ds\, \Tr B_n\, \frac{1}{C_n+s\id_n} \, A_n\, \frac{1}{C_n+s\id_n} + \int_{0}^{+\infty} ds\, \frac{1}{(\delta+s)^2}\, \Tr B\, (\id-\Pi_n)\, A\, (\id-\Pi_n) \\
&\qquad = \int_{0}^{+\infty} ds\, \Tr B_n\, \frac{1}{C_n+s\id_n} \, A_n\, \frac{1}{C_n+s\id_n} + \frac1\delta \Tr B\, (\id-\Pi_n)\, A\, (\id-\Pi_n) \, ,
\ee
where $\id_n$ denotes the identity operator on $V_n$ (essentially, the identity matrix of size $n$); finally, (viii)~can be deduced once again thanks to Lebesgue's dominated convergence theorem, because (a)~due to $C'_n\geq \delta \id$ we have
\bb
\left| \Tr B\, \frac{1}{C'_n +s\id}\, A\, \frac{1}{C'_n+s\id} \right| \leq \|B\|_\infty \|A\|_1 \left\|\frac{1}{C'_n+s\id}\right\|_\infty^2 \leq \frac{\|B\|_\infty \|A\|_1}{(\delta+s)^2}\, ,
\ee
(b)~$\int_0^\infty \frac{1}{(\delta+s)^2}=\frac{1}{\delta}<
\infty$, and (c)~thanks to~\cite[Propositions~10.1.9 and~10.1.13(a)]{OLIVEIRA}, from $C'_n\tendsn{s} C$ it follows that $\frac{1}{C'_n+s\id}\tendsn{s} \frac{1}{C+s\id}$ and in turn that $\frac{1}{C'_n+s\id}\, A\, \frac{1}{C'_n+s\id} \tendsn{tn} \frac{1}{C+s\id}\, A\, \frac{1}{C+s\id}$, from which we infer that
\bb
\Tr B\, \frac{1}{C'_n+s\id}\, A\, \frac{1}{C'_n+s\id} \tendsn{} \Tr B\, \frac{1}{C+s\id}\, A\, \frac{1}{C+s\id}\, ,
\ee
precisely as in~\eqref{pointwise_convergence_C_prime}. This concludes the proof of the first inequality in~\eqref{Lieb_3_matrix}. As for the second, we see as in~(a) and~(b) above that the integral in~\eqref{Lieb_3_matrix} is upper bounded by $1/\delta$.
\end{proof}

\section{On the proof of an inequality involving the multi-partite relative entropy of entanglement} \label{inequality_app}

To prove the second inequality in~\eqref{ER-UB} in the infinite-dimensional setting, assume that $\rho$ is a state in $\D(\HH_{A_1\ldots A_m})$ with spectral decomposition $\rho=\sum_{i=1}^{\infty} p_i \varphi_i$, where each $\varphi_i = \varphi_i^{A_1\ldots A_m} = \ketbra{\varphi_i}_{A_1\ldots A_m}$ is a pure state. By Lemma~\ref{Omega} below, for each $i$ there is a fully separable state $\omega_i$ in $\D(\HH_{A_1\ldots A_m})$ such that $D(\varphi_i\|\omega_i)\leq \sum_{k=1}^{m-1}S\big(\varphi_i^{A_k}\big)$. %, where we denote the pure state $|\varphi_i\rangle\langle\varphi_i|$ by $\varphi_i$ for brevity.
Let $\sigma\coloneqq \sum_{i=1}^{\infty} p_i \omega_i$, $\rho_n\coloneqq c_n^{-1} \sum_{i=1}^{n} p_i\varphi_i$, and $\sigma_n \coloneqq c^{-1}_n \sum_{i=1}^{n} p_i \omega_i$, where $c_n\coloneqq \sum_{i=1}^{n} p_i$, for each positive integer $n$. By the joint convexity of the relative entropy (see~\eqref{joint_convexity}) we have
\bb
D(\rho_n\|\sigma_n)\leq c_n^{-1}\sum_{i=1}^n p_i D(\varphi_i\|\omega_i)\qquad \forall\ n\, .
\ee
By using the lower semi-continuity of the relative entropy we obtain
\bb
E_R(\rho) \leq D(\rho\|\sigma) \leq \liminf_{n\to+\infty} D(\rho_n\|\shs\sigma_n) \leq \sum_{i=1}^{\infty} p_i D(\varphi_i\|\omega_i)\leq\sum_{k=1}^{m-1}\sum_{i=1}^{\infty} p_i S\big(\varphi_i^{A_k}\big) \leq \sum_{k=1}^{m-1} S(\rho_{A_k})\, ,
\ee
where the first inequality follows from the full separability of $\sigma$, while the last one is due to the concavity %subadditivity
of the entropy. This concludes the proof of the second inequality in~\eqref{ER-UB}.

The following lemma is a $m$-partite generalisation of the observation in Ref.~\cite{Vedral1998}.

\begin{lemma}\label{Omega}
For any pure state $\ket{\Psi}$ on $\HH_{A_1\ldots A_m}$ there is a countably decomposable separable state $\sigma$ in $\D(\HH_{A_1\ldots A_m})$ such that $\sigma_{A_k}=\Psi_{A_k}$ for $k=1,\ldots, m$ and
\bb
D(\Psi\|\sigma) \leq\sum_{k=1}^{m-1} S(\Psi_{A_k}).
\ee
\end{lemma}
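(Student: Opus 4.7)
The plan is to induct on $m$. The case $m=1$ is trivial, and the base case $m=2$ is the Vedral--Plenio construction~\cite{Vedral1998}: Schmidt-decompose $\ket{\Psi}=\sum_i\sqrt{\lambda_i}\ket{e_i}\ket{f_i}$ and set $\sigma=\sum_i\lambda_i\ketbra{e_i}\otimes\ketbra{f_i}$; the marginals match by design and a short computation exploiting the block-diagonal structure of $\sigma$ gives $D(\Psi\|\sigma)=S(\Psi_{A_1})$, which is exactly the claimed bound. For the inductive step, fix $m\geq 3$ and assume the statement for $(m-1)$-partite pure states. Schmidt-decompose $\ket{\Psi}\in\HH_{A_1\ldots A_m}$ across the cut $A_1:(A_2\ldots A_m)$, obtaining $\ket{\Psi}=\sum_i\sqrt{\lambda_i}\ket{e_i}_{A_1}\ket{f_i}_{A_2\ldots A_m}$. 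For each $i$ with $\lambda_i>0$, the inductive hypothesis applied to the pure state $\ket{f_i}$ on $m-1$ parties supplies a countably decomposable separable state $\tau^{(i)}\in\D(\HH_{A_2\ldots A_m})$ with marginals $(\tau^{(i)})_{A_k}=(\ketbra{f_i})_{A_k}$ for $k=2,\ldots,m$ and $D(\ketbra{f_i}\|\tau^{(i)})\leq\sum_{k=2}^{m-1}S((\ketbra{f_i})_{A_k})$.

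Define
\bb
\sigma\coloneqq\sum_i\lambda_i\,\ketbra{e_i}_{A_1}\otimes\tau^{(i)}.
\ee
This is visibly a countably decomposable separable state; its $A_1$-marginal equals $\sum_i\lambda_i\ketbra{e_i}=\Psi_{A_1}$, and for $k\geq 2$ the identity $\sigma_{A_k}=\sum_i\lambda_i(\ketbra{f_i})_{A_k}=\Psi_{A_k}$ follows from the Schmidt expansion $\Psi_{A_2\ldots A_m}=\sum_i\lambda_i\ketbra{f_i}$. One may assume $\sum_{k=1}^{m-1}S(\Psi_{A_k})<\infty$, since otherwise the claim is vacuous (and e.g.\ $\sigma=\bigotimes_k \Psi_{A_k}$ trivially works). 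Under this finiteness hypothesis, concavity of the entropy forces $\sum_{k=2}^{m-1}S((\ketbra{f_i})_{A_k})<\infty$ for every $i$ with $\lambda_i>0$, so that the inductive bound gives $D(\ketbra{f_i}\|\tau^{(i)})<\infty$, hence $\ket{f_i}\in\supp(\tau^{(i)})$, and in turn $\ket{\Psi}\in\supp(\sigma)$. Since the summands defining $\sigma$ are pairwise orthogonal along the $A_1$-register, $\ln\sigma$ acts block-diagonally as $\sum_i\ketbra{e_i}_{A_1}\otimes(\ln\lambda_i+\ln\tau^{(i)})$ on $\supp(\sigma)$; evaluating $-\braket{\Psi|\ln\sigma|\Psi}$ against the Schmidt decomposition and using $D(\ketbra{f_i}\|\tau^{(i)})=-\braket{f_i|\ln\tau^{(i)}|f_i}$ (since $\ketbra{f_i}$ is pure and $\tau^{(i)}$ is normalised) produces the key identity
\bb
D(\Psi\|\sigma)=S(\Psi_{A_1})+\sum_i\lambda_i\, D\!\left(\ketbra{f_i}\,\big\|\,\tau^{(i)}\right),
\ee
at which point applying the inductive bound term by term and then the concavity inequality $\sum_i\lambda_i S((\ketbra{f_i})_{A_k})\leq S(\Psi_{A_k})$ for $k=2,\ldots, m-1$ closes the induction.

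The main obstacle I anticipate is the rigorous justification of the block-diagonal functional-calculus identity for $\ln\sigma$ in the countably-infinite-rank regime. The cleanest workaround is to truncate the Schmidt sum at level $N$, work with $\ket{\Psi_N}\propto\sum_{i\leq N}\sqrt{\lambda_i}\ket{e_i}\ket{f_i}$ and $\sigma_N\propto\sum_{i\leq N}\lambda_i\ketbra{e_i}\otimes\tau^{(i)}$ (where the block structure reduces to a finite direct sum and the spectral decomposition is unambiguous), verify the identity there, and then pass to the limit $N\to\infty$ using Lemma~\ref{weak*_lsc_relent_lemma} (lower semi-continuity of $D$) for the $\leq$ direction and monotone convergence for the non-negative series $\sum_i\lambda_i D(\ketbra{f_i}\|\tau^{(i)})$ in the $\geq$ direction.
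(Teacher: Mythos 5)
Your proposal is correct and is essentially the paper's own argument in recursive form: unrolling your induction (Schmidt decomposition across $A_1:A_2\ldots A_m$, then recursing on each $\ket{f_i}$) reproduces exactly the paper's iterated Schmidt decomposition and the identical state $\sigma$ as a sum of mutually orthogonal product pure states, with the same block-diagonal evaluation of $-\braket{\Psi|\ln\sigma|\Psi}$ and the same use of concavity of the entropy. If anything, you make explicit the truncation/lower-semi-continuity details that the paper compresses into ``one can show that the state $\sigma$ has the required properties''.
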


\begin{proof}
The Schmidt decomposition of $\ket{\Psi}$ with respect to the bi-partition $A_1:A_2\ldots A_m$ implies that
\bb
\ket{\Psi} =\sum_{i_1}\sqrt{p^1_{i_1}} \ket{\varphi^1_{i_1}} \otimes \ket{\psi^1_{i_1}}\, ,
\ee
where $\left(\ket{\varphi^1_{i_1}}\right)_{i_1}$ and $\left(\ket{\psi^1_{i_1}}\right)_{i_1}$ are orthogonal sets of unit vectors in $\HH_{A_1}$ and $\HH_{A_2\ldots A_m}$, respectively, and $\left(p^1_{i_1}\right)_{i_1}$ is a probability distribution. Applying the Schmidt decomposition with respect to the bi-partition $A_2:A_3\ldots A_m$ to any of the vectors $\psi^1_{i_1}$, we obtain
\bb
\ket{\Psi} = \sum_{i_1,i_2}\sqrt{p^1_{i_1}p^2_{i_1i_2}} \ket{\varphi^1_{i_1}} \otimes \ket{\varphi^2_{i_1i_2}} \otimes \ket{\psi^2_{i_1i_2}}\, ,
\ee
where $\left(\ket{\varphi^2_{i_1i_2}}\right)_{i_2}$ and $\left(\ket{\psi^2_{i_1i_2}}\right)_{i_2}$ are orthogonal sets of unit vectors in $\HH_{A_2}$ and $\HH_{A_3\ldots A_m}$, respectively, and $\left(p^2_{i_1i_2}\right)_{i_2}$ is a probability distribution for any given $i_1$.

By repeating this process we get
%\begin{equation*}%\label{omega-k}
%\ket{\Psi} = \sum_{i_1,i_2,...,i_{k-1}}\sqrt{p^1_{i_1}p^2_{i_1i_2}...p^{k-1}_{i_1i_2...i_{k-1}}}|\varphi^1_{i_1}\rangle\otimes...\otimes|\varphi^{k-1}_{i_1i_2...i_{k-1}}\rangle\otimes |\psi^{k-1}_{i_1i_2...i_{k-1}}\rangle,
%\end{equation*}
%for any $k\leq m$, where $\{p^{s}_{i_1i_2...i_{s}}\}_{i_{s}}$ and $\{\varphi^s_{i_1i_2...i_{s}}\}_{i_{s}}$, $s<k$, are, respectively, a probability distribution and an orthogonal set of unit vectors in $\HH_{A_s}$ for any $i_1,i_2,...,i_{s-1}$ and $\{\psi^{k-1}_{i_1i_2...i_{k-1}}\}_{i_{k-1}}$ is an orthogonal set of unit vectors in $\HH_{A_k...A_m}$ for any $i_1,i_2,...,i_{k-2}$.
\begin{equation*}
\ket{\Psi} = \sum_{i_1,i_2,\ldots ,i_{m-1}}\sqrt{p^1_{i_1}p^2_{i_1i_2}\ldots p^{m-1}_{i_1i_2\ldots i_{m-1}}} \ket{\varphi^1_{i_1}} \otimes\ldots \otimes \ket{\varphi^{m-1}_{i_1i_2\ldots i_{m-1}}} \otimes \ket{\psi^{m-1}_{i_1i_2\ldots i_{m-1}}}\, ,
\end{equation*}
where $\left(p^{s}_{i_1i_2\ldots i_{s}}\right)_{i_{s}}$ and $\left(\ket{\varphi^s_{i_1i_2\ldots i_{s}}}\right)_{i_{s}}$ ($s=1,\ldots, m-1$) are, respectively, a probability distribution and an orthogonal set of unit vectors in $\HH_{A_s}$ for any fixed $i_1,i_2,\ldots ,i_{s-1}$, and $\left( \ket{\psi^{m-1}_{i_1i_2\ldots i_{m-1}}}\right)_{i_{m-1}}$ is an orthogonal set of unit vectors in $\HH_{A_m}$ for any fixed $i_1,i_2,\ldots ,i_{m-2}$.

Consider the countably decomposable separable state
\begin{equation}\label{sigma-def}
\sigma=\sum_{i_1,i_2,\ldots ,i_{m-1}} p^1_{i_1}p^2_{i_1i_2}\ldots p^{m-1}_{i_1i_2\ldots i_{m-1}} \varphi_{i_1}^1 \otimes \ldots \otimes \varphi^{m-1}_{i_1i_2\ldots i_{m-1}} \otimes \psi^{m-1}_{i_1i_2\ldots i_{m-1}}
\end{equation}
in $\D(\HH_{A_1\ldots A_m})$, %where $\rho^s_{i_1i_2...i_{s}}=|\varphi^{s}_{i_1i_2...i_{s}}\rangle\langle\varphi^{s}_{i_1i_2...i_{s}}|$, $s<m$, is a state in $\D(\HH_{A_s})$ for any $i_1,i_2,...,i_{s}$ and $\varrho^{m-1}_{i_1i_2...i_{m-1}}=|\psi^{m-1}_{i_1i_2...i_{m-1}}\rangle\langle\psi^{m-1}_{i_1i_2...i_{m-1}}|$ is a state in $\D(\HH_{A_m})$ for any $i_1,i_2,...,i_{m-1}$.
By noting that all the addends %summands
in~\eqref{sigma-def} are mutually orthogonal pure states one can show that the state $\sigma$ has the required properties.
\end{proof}

\end{document}